\documentclass[12pt]{article}
\usepackage[utf8]{inputenc}
\usepackage{xcolor}
\usepackage[normalem]{ulem} 
\usepackage{colortbl}
\usepackage{booktabs}
\usepackage{multirow}
\usepackage{threeparttable}
\usepackage{enumitem}
\usepackage{bm}
\usepackage{dsfont}
\usepackage{amsmath}
\usepackage{amsthm}
\usepackage{amssymb}
\usepackage[letterpaper,margin=1in]{geometry}
\usepackage{setspace}
\usepackage[authoryear]{natbib}
\usepackage[bookmarks=true,bookmarksnumbered=false,bookmarksopen=false,
 breaklinks=true,pdfborder={0 0 1},backref=false,colorlinks=true]
 {hyperref}
\hypersetup{
 citecolor=blue,linkcolor=blue,urlcolor=blue}

\makeatletter

\numberwithin{equation}{section}

\usepackage{graphicx}
\usepackage{subcaption}
\usepackage{placeins}
\usepackage{algorithm}
\usepackage{algpseudocode}
 
\usepackage[toc,page]{appendix}
\usepackage{booktabs}
\usepackage[dvipsnames]{xcolor}
\usepackage{bbm}
\usepackage{mathrsfs}
\usepackage{tikz}
\usetikzlibrary{calc}

\tikzset{%
  nodeBase/.style={circle, minimum size=0.92cm, text width=0.92cm, align=center,
                   inner sep=0pt, font=\scriptsize},%
  inTBorder/.style={draw=black, dashed, line width=0.8pt},%
  notInTBorder/.style={draw=black, line width=0.8pt},%
  sPivotColor/.style={draw=red,  fill=red!15,  text=red},%
  sPeriphColor/.style={draw=blue, fill=blue!15, text=blue},%
  otherColor/.style={draw=black, fill=white, text=black},%
  sPivot_inT/.style={nodeBase, inTBorder, sPivotColor},%
  sPeriph_inT/.style={nodeBase, inTBorder, sPeriphColor},%
  sPivot_notInT/.style={nodeBase, notInTBorder, sPivotColor},%
  sPeriph_notInT/.style={nodeBase, notInTBorder, sPeriphColor},%
  other_inT/.style={nodeBase, inTBorder, otherColor},%
  other_notInT/.style={nodeBase, notInTBorder, otherColor},%
  eShared/.style={draw=black, line width=1.3pt},%
  eSonly/.style={draw=blue,  line width=0.6pt},%
  eTonly/.style={draw=red, dashed, line width=0.6pt}%
}%

\usepackage{amsthm}
\usepackage{array}
\usepackage{tabularx}
\usepackage{booktabs}

\newcolumntype{L}[1]{>{\raggedright\arraybackslash}p{#1}}
\newcounter{ovcase}
\newcounter{ovsubcase}[ovcase]

\makeatother

\theoremstyle{plain}
\newtheorem{assumption}{\protect\assumptionname}[section]
\newtheorem{prop}{\protect\propositionname}[section]
\newtheorem{lem}{\protect\lemmaname}[section]
\theoremstyle{remark}

\theoremstyle{plain}
\newtheorem{thm}{\protect\theoremname}[section]
\newtheorem{cor}{\protect\corollaryname}[section]
\providecommand{\assumptionname}{Assumption}
\providecommand{\corollaryname}{Corollary}
\providecommand{\lemmaname}{Lemma}
\providecommand{\propositionname}{Proposition}
\providecommand{\remarkname}{Remark}
\providecommand{\theoremname}{Theorem}

\begin{document}
\hypersetup{linkcolor=black}
\renewcommand{\thefootnote}{\fnsymbol{footnote}}
\begin{center}
{\Large \bfseries Moment Restrictions for Dyadic Network Formation Models with Nontransferable Utility\par}
\vspace{1.2em}
{\large Hanping Chen\textsuperscript{a}\footnote{Email: \href{mailto:hanpingchen@link.cuhk.edu.cn}{\texttt{hanpingchen@link.cuhk.edu.cn}}.}\quad Zeqi Wu\textsuperscript{b}\footnote{Email: \href{mailto:wuzeqi@ruc.edu.cn}{\texttt{wuzeqi@ruc.edu.cn}}.}\par}
\vspace{0.6em}
{\small \textsuperscript{a}\,School of Management and Economics, The Chinese University of Hong Kong, Shenzhen\par}
{\small \textsuperscript{b}\,Institute of Statistics and Big Data, Renmin University of China\par}
\end{center}
\renewcommand{\thefootnote}{\arabic{footnote}}
\setcounter{footnote}{0}
\hypersetup{linkcolor=blue}
\vspace{1em}
\global\long\def\arraystretch{1.4}%
\global\long\def\Pr{\mathbb{P}}%
\global\long\def\1{\mathds{1}}%
\global\long\def\E{\mathbb{E}}%
\global\long\def\sgn{\mathrm{sgn}}%
\global\long\def\bs#1{\boldsymbol{#1}}%
\global\long\def\bf#1{\mathbf{#1}}%
\global\long\def\Cov{\mathrm{Cov}}%
\global\long\def\Var{\mathrm{Var}}%
\global\long\def\tp{\xrightarrow{P}}%
\global\long\def\td{\xrightarrow{D}}%
\global\long\def\piv{\mathrm{piv}}%
\global\long\def\per{\mathrm{per}}%
\global\long\def\supp{\mathrm{supp}}%
\providecommand{\Pbb}{\mathbb{P}}%
\providecommand{\norm}[1]{\left\lVert #1 \right\rVert}%
\providecommand{\abs}[1]{\left\lvert #1 \right\rvert}%
\providecommand{\op}{\mathrm{o}_{p}}%
\providecommand{\plim}{\operatorname*{plim}}%
\providecommand{\iid}{\stackrel{\mathrm{i.i.d.}}{\sim}}%
\providecommand{\La}{\Lambda}%
\providecommand{\argmin}{\operatorname*{argmin}}%
\providecommand{\as}{\text{a.s.}}%
\providecommand{\bbeta}{\widehat{\bs{\beta}}}%
\providecommand{\bbetaone}{\widehat{\bs{\beta}}_{1}}%
\providecommand{\bOmegaUniv}{\widehat{\bf\Omega}_{N}^{\mathrm{univ}}}%
\providecommand{\tr}{\operatorname{tr}}%

\begin{abstract}
This paper investigates the construction of moment restrictions in dyadic
network formation models with unobserved individual heterogeneity under nontransferable utility.
Using observed links and covariates from five-node pentads, we construct moment restrictions
that do not depend on individual fixed effects.
For a broad class of covariate specifications, the construction is minimal
in the sense that it uses the least possible number of nodes and dyads.
Based on these moment restrictions, we propose  the pentad-GMM estimator.
We establish asymptotic normality of the pentad-GMM estimator in dense, sparse,
and ultra-sparse network regimes, with regime-specific convergence rates and asymptotic variances. These results provide a basis for inference across all
three regimes.
To make our method computationally efficient, we develop an algorithm that reduces the
computational cost of the estimator from na\"ive \(O(N^5)\) to \(O(N^3)\).
We apply the proposed method to an academic-discussion network, and find academic
homophily and a positive association between  link formation and a potential partner's openness
to different perspectives.
\end{abstract}

\noindent \textit{JEL Classification:} C10, C13, C25, D85

\noindent \textit{Keywords:} Network Formation, Nontransferable Utility, Fixed Effects,  Sparse Networks, Functional Differencing

\section{Introduction}

Networks are pervasive in both economic and social contexts. The relationships that
individuals form with each other reflect their characteristics and preferences. Lenders, for
example, decide whom to lend to based on borrowers' credit and repayment histories,
while researchers and firms seek collaborators with complementary expertise
and resources. Understanding how these characteristics and preferences shape
link formation is central to explaining social and economic
interactions.

An important distinction in network formation modeling is whether utility
is transferable between the two agents.
Under transferable utility (TU), utilities or payoffs can
be redistributed through compensating mechanisms, and a link forms when
the joint surplus of the two individuals is nonnegative.
The transferable-utility assumption, however, is restrictive and may not
hold in many applications. In friendship networks, for instance, a link requires
mutual recognition by both individuals, but there is typically no
transfer that can compensate one party for
entering a relationship they would otherwise decline.
This setting is more naturally described by
non-transferable-utility (NTU) models. Under NTU, each individual makes a
unilateral decision about whether to form the relationship, and a link
is established if and only if both agents agree
(\citealp{myerson1991game}, \citealp{jackson1996strategic}).

These linking decisions depend not only on observed characteristics
but also on unobserved traits such as sociability, productivity, and risk tolerance.
A highly extroverted individual, for example,
forms relationships across a broad range of observed types, a pattern that
could otherwise be attributed to weak homophily \citep{graham2017econometric}.
Moreover, ignoring such heterogeneity when it is correlated with observed
characteristics can bias estimates of homophily and other covariate effects.

One way to account for this heterogeneity is to estimate the
parameter of interest and the individual fixed effects by joint maximum likelihood.
Treating individual heterogeneity as fixed effects to be estimated, however, 
leads to the well-known incidental-parameter problem of \citet{neyman1948consistent},
which can induce asymptotic bias in the estimate of the parameter of interest.
Several methods have been proposed to correct incidental-parameter bias in
nonlinear panel models with individual fixed effects
(e.g., \citealp{hahn2004jackknife}; \citealp{dhaene2015split};
\citealp{fernandez2016individual}), and related work extends these methods to
network models with individual fixed effects
(e.g., \citealp{hughes2022estimating,li2024estimation}). 
However, a limitation of these methods is their reliance on
compactness restrictions on the individual fixed effects.
With bounded fixed effects, such restrictions typically exclude 
sparse-network sequences in which link
probabilities vanish as the network grows.
Yet sparsity is a common feature of empirical networks
\citep{graham2024sparse}.
An alternative way is to eliminate and difference out the individual fixed effects.

For nonlinear panel models, approaches to eliminating fixed effects
include methods based on sufficient statistics, functional differencing,
and maximum score restrictions
(e.g., \citealp{chamberlain1980analysis}; \citealp{manski1987semiparametric};
\citealp{bonhomme2012functional}).
Analogous approaches are available for TU network formation
(e.g., \citealp{graham2017econometric}; \citealp{toth2017semiparametric};
\citealp{bonhomme2023functional}).
The joint-surplus structure in TU models preserves additive separability in
the fixed effects, making these differencing methods comparatively
straightforward to apply.
However, these arguments do not carry over directly to NTU models.
The bilateral structure of NTU models poses a different challenge.
Under NTU, researchers usually observe only the realized undirected link,
not the two underlying unilateral decisions. A realized link reveals mutual
consent, whereas an absence of a link does not reveal whether one or both individuals
refuse to link. Moreover, the two individual fixed effects enter the observed link probability
through separate asymmetric latent utilities. As emphasized by
\citet{gao2023logical} and \citet{li2024estimation}, this asymmetry, combined
with partial observability, breaks the additive separability exploited by
standard sufficient-statistic and differencing arguments.

To address these challenges, we propose a pentad differencing
strategy for logistic NTU network formation models with individual
unobserved heterogeneity. The construction is analogous in spirit to
functional differencing in nonlinear panel models
(\citealp{bonhomme2012functional,honore2024moment,bonhomme2025feedback}).
The main idea is to fix two pivotal nodes and form three triangles sharing
the pivotal dyad, using three peripheral nodes and seven dyads in total.
Eliminating each peripheral node's fixed effect gives a homogeneous linear equation in two pivotal nodes'
fixed effects and a constant term. Together, stacking the
three equations implies that a \(3\times3\) coefficient matrix is singular, and thus has zero
determinant. Using this zero-determinant restriction, we construct a moment
function involving only observed links and covariates that has zero conditional
expectation at \(\bs\beta_0\) and forms the basis for GMM estimation.
With the individual fixed effects eliminated, the method does not
require compactness restrictions on their support and accommodates sparse
networks in which link probabilities vanish as the network size grows.
We also show that, for a broad class of covariate specifications, five
nodes and seven dyads are necessary for constructing a nontrivial fixed-effect-free
moment restriction. Among constructions using five nodes and seven dyads,
ours is unique up to relabeling, and the resulting moment restrictions are unique up to rescaling.

We then establish asymptotic normality of the resulting pentad-GMM estimator across
dense, sparse, and ultra-sparse networks.
Conditional on the observed characteristics and unobserved individual
heterogeneity, we apply a Hoeffding decomposition over dyads
(\citealp{hoeffding1948class}) and show that the leading projection varies
with network sparsity.
These three sparsity regimes are characterized by how the average
link probability \(\rho_N\) scales with network size \(N\).
In dense and mildly sparse networks, with
\(N\rho_N\to\infty\), the one-dyad projection dominates. In the
sparse regime, with
\(\rho_N\asymp N^{-1}\), several graph projections contribute
at the same order. In the ultra-sparse regime, with
\(N\rho_N\to0\) and \(N^5\rho_N^4\to\infty\),
the leading variance contribution comes from projections associated
with some five-node, four-edge trees.
The change in the leading projection contrasts with the tetrad-logit estimator of
\citet{graham2017econometric}, whose asymptotic representation is
driven by dyad-level projections under both the dense and sparse
networks considered there.

To accommodate the different leading projections across sparsity regimes, we
propose a tractable covariance estimator. The estimator is
consistent under the appropriate normalization in all three regimes and
requires no prior knowledge of \(\rho_N\) or the sparsity regime.
For implementation, we exploit the determinant structure to reduce the
computational cost of the estimator from na\"ive  \(O(N^5)\)
to \(O(N^3)\) operations.
The empirical application demonstrates the proposed method using data from an
academic-discussion network among Master of Social Work (MSW) students.
The asymmetric NTU estimates reveal academic homophily and a tendency to
form discussion links within the same cohort. They also suggest that students
who are more open to different perspectives are more attractive discussion
partners.

\textit{Related Literature}. This paper contributes to the literature
on econometric models of dyadic link formation in a single large network with
unobserved heterogeneity. A major
strand of this literature studies dyadic link formation models in which
unobserved heterogeneity enters through individual-specific fixed effects.
An incomplete list includes 
\citet{chatterjee2011random}, \citet{yan2013central},
\citet{graham2017econometric},
\citet{charbonneau2017multiple}, \citet{toth2017semiparametric},
\citet{jochmans2018semiparametric}, \citet{dzemski2019empirical},
\citet{gao2020nonparametric}, \citet{gao2023logical},
\citet{hughes2022estimating}, \citet{candelaria2024semiparametric} and
\citet{zeleneev2020identification}.
For comprehensive reviews, see
\citet{de2020econometric} and \citet{graham2020network}.
Much of this work focuses on transferable-utility (TU) network formation
models with individual fixed effects. Under logistic errors, fixed
effects can be eliminated by conditioning on sufficient statistics such
as the degree sequence (e.g., \citealp{graham2017econometric};
\citealp{charbonneau2017multiple}). Semiparametric approaches also
eliminate fixed effects without imposing a parametric distribution on the
idiosyncratic shocks. For example, \citet{toth2017semiparametric} uses rank-based arguments,
while \citet{candelaria2024semiparametric} combines a special-regressor
transformation with arithmetic differencing.
Related work also studies fixed-effect estimation and bias correction in
nonlinear dyadic models, including the jackknife approach developed by
\citet{hughes2022estimating}. More recently, \citet{yan2026penalized} develop a penalized-likelihood
approach that allows the fixed effects to diverge at a logarithmic rate and
accommodates sparse degree sequences. Their primary specification observes
both directed links within each dyad.
Our paper differs from this work in focusing on undirected NTU link formation
with individual fixed effects,
giving rise to partial observability.

The papers closest to our setting are \citet{gao2023logical} and \citet{li2024estimation}.
\citet{gao2023logical} study undirected network formation under
nontransferable utility (NTU) and propose \textit{logical differencing},
which uses logical contraposition to eliminate fixed effects without imposing
parametric assumptions on idiosyncratic shocks (see also
\citealp{gao2020robust}). Their analysis focuses on identification and
consistent estimation rather than large-network inference. Furthermore, their analysis is based 
on dense-network asymptotics, with conditional linking probabilities 
remaining nonvanishing as the network grows.
\citet{li2024estimation} develop
estimation and inference methods for NTU models with individual fixed effects
by combining a joint method-of-moments initial estimator, a Le Cam one-step
refinement, and a split-network jackknife with bagging. Their procedure
corrects incidental-parameter bias and delivers asymptotic normality for the
homophily estimator. Nevertheless, they impose compactness restrictions on the fixed
effects, which excludes sparse networks. We instead construct moment conditions that 
eliminate the fixed effects and develop asymptotic normality for both dense and 
sparse networks without imposing compactness on the fixed effects.

Our work also relates to the literature on sparse network asymptotics
and inference under dyadic dependence. A key insight from this literature
is that the leading variance component can change with network sparsity.
\citet{graham2024sparse} shows that, in logistic regression with dyadic
dependence, variance components that are negligible in dense networks can
become first order in sparse networks. Relatedly,
\citet{chandrasekhar2025network} study subgraph generated models and
develop asymptotic theory for estimators based on higher-order network
structures. This paper shares the emphasis on variance decompositions
generated by overlapping network configurations and extends it to 
regime-adaptive inference for NTU link formation models with fixed effects.

The paper also builds on the panel data literature on eliminating fixed
effects in nonlinear models. In static and dynamic logit panels, fixed
effects can be eliminated by conditioning on sufficient statistics or
state-switching events, as in \citet{chamberlain1980analysis} and
\citet{honore2000panel}. Functional differencing provides a more general
route by constructing moment restrictions that are free of fixed effects
(\citealp{bonhomme2012functional}). Recent contributions extend and
systematize this logic in binary choice and dynamic panel settings,
including \citet{honore2019panel}, \citet{honore2021identification},
\citet{kitazawa2022transformations},
\citet{davezies2023fixed}, \citet{dano2023transition},
\citet{bonhomme2023functional}, \citet{honore2024moment},
\citet{bonhomme2025feedback}, and \citet{dobronyi2021identification}. This paper
follows the same broad principle by constructing moment restrictions that
do not contain the fixed effects.
While our approach builds on the idea of functional differencing, we develop
an explicit and minimal construction for NTU networks, which is, to the best of our knowledge, new to the literature.

Another line of the network formation literature studies 
strategic link interdependence, 
where the payoff from a link may depend on other links in the network. 
This literature often models the observed network as an equilibrium object using pairwise stability or related network-game concepts, 
with pairwise stability tracing back to \citet{jackson1996strategic} 
(see, e.g., \citealp{miyauchi2016structural}, \citealp{mele2017structural}, 
\citealp{de2018identifying}, \citealp{sheng2020structural}, 
and \citealp{menzel2024strategic}). 
Recent work also incorporates unobserved heterogeneity into
strategic network formation models; for example,
\citet{gao2026tractable} study strategic link interdependence 
with individual fixed effects and develop tractable 
subnetwork-based identifying restrictions under TU models. 
The present paper abstracts from strategic link interdependence
and focuses instead on the fixed-effect problem in NTU link formation.

The remainder of the paper is organized as follows.
Sections~\ref{sec:model} and~\ref{sec:identification} present the NTU model,
the pentad moment function, and the GMM estimator. Section~\ref{sec:asymptotics}
develops the regime-specific asymptotic theory, while
Sections~\ref{sec:implementation} and~\ref{sec:empirical} present the
implementation details and empirical application. Section~\ref{sec:conclusion}
concludes. Appendix~\ref{app:graph_notation} introduces graph notation and terminology.
Appendices~\ref{sec:feasible_covariance} and~\ref{sec:simulation}
develop a tractable covariance estimator for inference and report the simulations, respectively.
The remaining appendices contain all proofs.
\section{Network Formation Model}

\label{sec:model} In this paper, the researcher
observes $(\bf L,\bf X)$, where $\bf L$ is the $N\times N$
adjacency matrix with binary $(i,j)$th entry $L_{ij}\in\{0,1\}$
indicating the presence of a link between individuals $i$ and $j$, and
$\bf X:=(\bs X_{1},\ldots,\bs X_{N})^\top$ is the $N\times d$
matrix of observed characteristics, with $\bs X_{i}\in\mathbb{R}^{d}$
for all $i\in[N]:=\{1,\ldots,N\}$. We focus on an undirected
non-transferable utility (NTU) network, so that $L_{ij}=L_{ji}$,
formed among $N$ individuals (nodes) in a single large network. Self-links 
are ruled out by convention (i.e., $L_{ii}=0$). The parameter of interest,
$\bs{\beta}_{0}\in\mathbb{R}^{d}$, captures how observed dyadic covariates
affect each individual's unilateral linking decision.
Let $\bs\Gamma:=(\Gamma_{1},\ldots,\Gamma_{N})^{\top}\in \mathbb{R}^N$
denote the vector of individual-specific unobserved effects, which capture
heterogeneity in individuals' latent propensities to form links. Let the binary random variable $G_{ij}\in\{0,1\}$
indicate whether individual $i$ proposes to form a link with
individual $j$. Individual \(i\)'s unilateral linking decision is specified as:
\[
G_{ij}=\1\left\{ \Gamma_{i}+\bs X_{ij}^\top\bs{\beta}_{0}-\epsilon_{ij}\geq 0\right\} ,\quad1\leq i\neq j\leq N.
\]
In this specification, individual $i$'s latent utility depends on individual-specific
unobserved heterogeneity $\Gamma_{i}$ and observed dyad-level
covariates $\bs X_{ij}=w(\bs X_i,\bs X_j)$, where
$w:\mathbb R^d\times\mathbb R^d\to\mathbb R^d$. The map $w$ need not be
symmetric, so in general $\bs X_{ij}\neq\bs X_{ji}$. These
dyadic covariates capture homophily or complementarity between
individuals $i$ and $j$ and may enter their respective latent utilities
asymmetrically, even though the realized link is undirected.
In addition to the observed and unobserved
covariates, individual $i$'s utility also depends on the idiosyncratic
shock $\epsilon_{ij}$. Under the NTU model, link formation 
requires the consent of both individuals:
\[
L_{ij}=\underbrace{G_{ij}}_{\text{\ensuremath{i} proposes to \ensuremath{j}}}\cdot\underbrace{G_{ji}}_{\text{\ensuremath{j} proposes to \ensuremath{i}}},\quad1\leq i\neq j\leq N.
\]
That is, a symmetric link $L_{ij}=L_{ji}$ is formed between $i$
and $j$ if and only if both individuals propose to link
and each obtains nonnegative utility from the connection. Therefore,
the link indicator $L_{ij}$ is the product of two binary indicators.
Each indicator equals one when that individual’s latent utility is nonnegative:
\begin{equation}
L_{ij}=\1\left\{ \Gamma_{i}+\bs X_{ij}^\top\bs{\beta}_{0}-\epsilon_{ij}\geq0\right\} 
\cdot\1\left\{ \Gamma_{j}+\bs X_{ji}^\top\bs{\beta}_{0}-\epsilon_{ji}\geq0\right\} .\label{mod:mod2}
\end{equation}

The mutual-consent structure in \eqref{mod:mod2} captures the key economic
distinction between NTU and TU. It parallels the bilateral-consent requirement
in the canonical pairwise-stability framework, under which both agents must
favor a new link (\citealp{jackson1996strategic}). This distinction matters when
consent cannot be replaced by compensation. Friendships, peer
ties, and research collaborations often lack a price or complete contract that
can freely redistribute the gains from the relationship, so a large benefit to
one party does not offset a negative payoff to the other. By contrast, models
with transfers allow players to bargain over side payments
(\citealp{bloch2007formation}), and econometric TU models reduce link formation
to the sign of joint surplus (e.g., \citealp{graham2017econometric}). The NTU
formulation therefore better reflects relationships in which consent is
individual and compensation is limited.

We complete the model with the following baseline assumption.
\begin{assumption}[Random Sampling and Conditional Independence]\label{ass:dyad_ind}
  For each \(N\), the node characteristics $\{(\bs X_i,\Gamma_i)\}_{i=1}^N$
  are i.i.d. across \(i\), and their common distribution may depend on \(N\).
  The conditional likelihood of the network $\bf L$, given
  $\bf X$ and $\bs\Gamma$, is
  \[
  \Pr\left(\bf L=\bf l\mid\bf X,\bs\Gamma\right)=\prod_{i<j}\Pr\left(L_{ij}=l_{ij}\mid\bs X_{i},\bs X_{j},\Gamma_{i},\Gamma_{j}\right),
  \]
  with 
  \[
  \begin{aligned}
  \Pr\left(L_{ij}=l_{ij}\mid\bf X,\bs\Gamma\right)
  &=
  \left[
  F\left(\Gamma_{i}+\bs X_{ij}^\top\bs{\beta}_{0}\right)
  F\left(\Gamma_{j}+\bs X_{ji}^\top\bs{\beta}_{0}\right)
  \right]^{l_{ij}}\\
  &\quad\times
  \left[
  1-F\left(\Gamma_{i}+\bs X_{ij}^\top\bs{\beta}_{0}\right)
  F\left(\Gamma_{j}+\bs X_{ji}^\top\bs{\beta}_{0}\right)
  \right]^{1-l_{ij}}
  \end{aligned}
  \]
  for all $i<j$, where $F(\cdot)$ is the standard logistic cumulative distribution function (cdf).
  \end{assumption}
  Assumption~\ref{ass:dyad_ind} imposes the random sampling and conditional
  dyadic independence structure of the model. First, within each \(N\), the
  node characteristics \((\bs X_i,\Gamma_i)\) are sampled i.i.d. across
  individuals. This condition applies to the joint distribution of
  \((\bs X_i,\Gamma_i)\) and does not require \(\bs X_i\) and \(\Gamma_i\)
  to be independent. Second,
  conditional on the full collection of node characteristics, unordered dyads are
  independent. For each \(i<j\), the conditional probability of a realized
  link has the mutual-consent product-logit form
  \[
  P_{ij}(\bs\beta_0)
  := \Pr(L_{ij}=1\mid \bf X, \bs\Gamma)=
  \Pr(L_{ij}=1\mid \bs X_i,\bs X_j,\Gamma_i,\Gamma_j)
  =
  F(\Gamma_i+\bs X_{ij}^{\top}\bs\beta_0)
  F(\Gamma_j+\bs X_{ji}^{\top}\bs\beta_0).
  \]
The product-logit form is consistent with conditionally independent standard
  logistic shocks \(\epsilon_{ij}\) and \(\epsilon_{ji}\) for the two directed
  decisions within each dyad, while the likelihood factorization separately
  imposes conditional independence across unordered dyads.
  This sampling and conditional link-independence structure parallels
  Assumptions 1 and 3 in  \citet{graham2017econometric} and
  \citet{li2024estimation}.

\section{Moment Restrictions and Estimation}

\label{sec:identification}

\subsection{Differencing via Pentads}

The mutual-consent rule in \eqref{mod:mod2} captures the bilateral nature of
link formation across a broad range of networks, as discussed in
Section~\ref{sec:model}. However, this structure makes the individual fixed
effects $\bs\Gamma$ difficult to difference out. As implied by the model in \eqref{mod:mod2},
both agents' decisions are revealed only when $L_{ij}=1$, in which case
$\left(G_{ij},G_{ji}\right)=(1,1)$. An absent link ($L_{ij}=0$), however,
can arise from three possible outcomes:
$\left(G_{ij},G_{ji}\right)\in\{(0,0),(1,0),(0,1)\}$.
The researcher therefore cannot tell whether $i$, $j$, or both declined to
form the link. If the directed proposal decisions $G_{ij}$ were observed, the
individual fixed effects could be eliminated using arguments analogous to
those used in TU models
(\citealp{graham2017econometric}).
Moreover, under Assumption~\ref{ass:dyad_ind}, the fixed effects
\(\Gamma_i\) and \(\Gamma_j\) do not enter the conditional link probability
\(P_{ij}(\bs\beta_0)\) through a single additive index
\citep{gao2023logical,li2024estimation}.
This further complicates the direct application of arithmetic-differencing arguments
developed for TU models (e.g., \citealp{toth2017semiparametric};
\citealp{candelaria2024semiparametric}).

We therefore develop a new differencing argument for the NTU model. Under
Assumption~\ref{ass:dyad_ind}, we construct a nontrivial moment function\footnote{A
moment function \(\psi(\bf L,\bf X;\bs{\beta})\) is trivial if
\(\psi(\cdot,\bf X;\bs\beta_0)\equiv0\) almost surely.}
$\psi(\bf L,\bf X;\bs{\beta})$ of the observed links and covariates that
satisfies
\begin{equation}
\E\big[\psi(\bf L,\bf X;\bs{\beta}_{0})\,\big|\,\bf X,\bs\Gamma\big]=0.
\label{eq:cond_moment_A-1}
\end{equation}
By the law of iterated expectations, \eqref{eq:cond_moment_A-1} implies
\(\E[\psi(\bf L,\bf X;\bs{\beta}_{0})\mid\bf X]=0\).
The resulting restriction, which conditions only on $\bf X$, forms the basis
for GMM estimation without requiring an estimator of $\bs\Gamma$. The
conditional moment in \eqref{eq:cond_moment_A-1} is in the spirit of
functional differencing (\citealp{bonhomme2012functional}), but the
construction developed here for the NTU model is new. It uses a five-node,
seven-dyad configuration with two pivot nodes, \(i,j\), and three peripheral
nodes, \(k,l,m\), as illustrated in
Figure~\ref{fig:five-node-triangles}.

\begin{figure}[htbp!]
\centering
\begin{tikzpicture}[scale=0.4]
        \def\R{2.5}

        \node[circle, draw=blue, fill=blue!15, text=blue, minimum size=0.8cm] (l) at (90:\R) {$l$};
        \node[circle, draw=blue, fill=blue!15, text=blue, minimum size=0.8cm] (m) at (162:\R) {$m$};
        \node[circle, draw=red,  fill=red!15,  text=red,  minimum size=0.8cm] (i) at (234:\R) {$i$};
        \node[circle, draw=red,  fill=red!15,  text=red,  minimum size=0.8cm] (j) at (306:\R) {$j$};
        \node[circle, draw=blue, fill=blue!15, text=blue, minimum size=0.8cm] (k) at (18:\R) {$k$};

        \draw (i) -- (k);
        \draw (i) -- (l);
        \draw (i) -- (m);
        \draw (i) -- (j);
        \draw (j) -- (k);
        \draw (j) -- (l);
        \draw (j) -- (m);
\end{tikzpicture}
\caption{Five-node pivot-peripheral configuration.}
\label{fig:five-node-triangles}
\end{figure}

We first derive a linear relation from one pivot--peripheral triangle and
then combine the three relations generated by \(k,l,m\). Throughout this
construction, probabilities are conditional on \((\bf X,\bs\Gamma)\) and
evaluated at \(\bs{\beta}_{0}\). Write \(\alpha_u:=e^{-\Gamma_u}\) for a
node \(u\), and write
\(W_{uv}(\bs{\beta}_{0}):=\exp(-\bs X_{uv}^\top\bs{\beta}_{0})\). For any dyad
\((u,v)\) among these five nodes, write
\(P_{uv}:=P_{uv}(\bs{\beta}_{0})\) for the conditional link probability
defined in Section~\ref{sec:model}.
Under Assumption~\ref{ass:dyad_ind}, the product-logit probability satisfies
\[
\begin{aligned}
P_{uv}
&=\frac{1}
{(1+W_{uv}(\bs{\beta}_{0})\alpha_u)
(1+W_{vu}(\bs{\beta}_{0})\alpha_v)},\\
\frac{1-P_{uv}}{P_{uv}} &=W_{uv}(\bs{\beta}_{0})W_{vu}(\bs{\beta}_{0})\alpha_u\alpha_v
+W_{uv}(\bs{\beta}_{0})\alpha_u
+W_{vu}(\bs{\beta}_{0})\alpha_v.
\end{aligned}
\]
For the triangle \(\{i,j,k\}\), multiplying \((1-P_{uv})/P_{uv}\) by
\(P_{uv}\) for \((u,v)\in\{(i,k),(j,k),(i,j)\}\) gives
\begin{equation}
\label{eq:triangle-p-identities}
\begin{aligned}
1-P_{ik}
&=P_{ik}W_{ik}(\bs{\beta}_{0})W_{ki}(\bs{\beta}_{0})\alpha_i\alpha_k
+P_{ik}W_{ik}(\bs{\beta}_{0})\alpha_i
+P_{ik}W_{ki}(\bs{\beta}_{0})\alpha_k,\\
1-P_{jk}
&=P_{jk}W_{jk}(\bs{\beta}_{0})W_{kj}(\bs{\beta}_{0})\alpha_j\alpha_k
+P_{jk}W_{jk}(\bs{\beta}_{0})\alpha_j
+P_{jk}W_{kj}(\bs{\beta}_{0})\alpha_k,\\
1-P_{ij}
&=P_{ij}W_{ij}(\bs{\beta}_{0})W_{ji}(\bs{\beta}_{0})\alpha_i\alpha_j
+P_{ij}W_{ij}(\bs{\beta}_{0})\alpha_i
+P_{ij}W_{ji}(\bs{\beta}_{0})\alpha_j.
\end{aligned}
\end{equation}
Solving the identities for dyads \((i,k)\) and \((j,k)\) in
\eqref{eq:triangle-p-identities} for the peripheral fixed effect
\(\alpha_k\) and equating the two expressions eliminates \(\alpha_k\).
Substituting the identity for dyad \((i,j)\) then eliminates the product
\(\alpha_i\alpha_j\). The resulting equation is linear in
\(P_{ij}\alpha_i\), \(P_{ij}\alpha_j\), and a constant term.
The following lemma records the resulting relation for any
pivot--peripheral triangle.
\begin{lem}
\label{lem:lem1}
Suppose Assumption~\ref{ass:dyad_ind} holds. For any distinct nodes
\(i,j,r\), there exist coefficients \(b_{ijr}(\bs{\beta}_{0})\),
\(c_{ijr}(\bs{\beta}_{0})\), and \(d_{ijr}(\bs{\beta}_{0})\), given
explicitly in \eqref{eq:triangle-coefficients-app}, such that
\(
b_{ijr}(\bs{\beta}_{0})P_{ij}\alpha_i
+c_{ijr}(\bs{\beta}_{0})P_{ij}\alpha_j
+d_{ijr}(\bs{\beta}_{0})=0.
\)
\end{lem}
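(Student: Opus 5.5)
Write \(W_{uv}\) for \(W_{uv}(\bs\beta_0)\) and work with the three identities in \eqref{eq:triangle-p-identities} for the triangle \(\{i,j,r\}\) (with \(k\) replaced by \(r\)). The plan follows the sketch in the text: eliminate \(\alpha_r\) using the two pivot--peripheral dyads, then eliminate \(\alpha_i\alpha_j\) using the pivotal dyad.

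\textbf{Step 1: eliminate \(\alpha_r\).} The identity for dyad \((i,r)\) is linear in \(\alpha_r\):
\[
1-P_{ir}-P_{ir}W_{ir}\alpha_i=P_{ir}W_{ri}(1+W_{ir}\alpha_i)\alpha_r .
\]
Since all \(W\)'s and \(\alpha\)'s are strictly positive and \(P_{ir}\in(0,1)\), the coefficient of \(\alpha_r\) is nonzero, so
\[
\alpha_r=\frac{1-P_{ir}-P_{ir}W_{ir}\alpha_i}{P_{ir}W_{ri}(1+W_{ir}\alpha_i)} .
\]
The same argument applied to dyad \((j,r)\) gives the analogous expression with \(j\) in place of \(i\).

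To avoid division, equate the two expressions in cross-multiplied form:
\[
(1-P_{ir}-P_{ir}W_{ir}\alpha_i)\,P_{jr}W_{rj}(1+W_{jr}\alpha_j)
=(1-P_{jr}-P_{jr}W_{jr}\alpha_j)\,P_{ir}W_{ri}(1+W_{ir}\alpha_i).
\]
Expanding both sides gives a bilinear polynomial in \((\alpha_i,\alpha_j)\) with terms \(1,\alpha_i,\alpha_j,\alpha_i\alpha_j\). Its coefficients depend only on \(P_{ir},P_{jr}\) and the \(W\)'s.

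\textbf{Step 2: eliminate \(\alpha_i\alpha_j\).} Multiply the bilinear equation through by \(P_{ij}W_{ij}W_{ji}\). The \(\alpha_i\alpha_j\) term then appears as a multiple of \(P_{ij}W_{ij}W_{ji}\alpha_i\alpha_j\). The third identity in \eqref{eq:triangle-p-identities} gives
\[
P_{ij}W_{ij}W_{ji}\alpha_i\alpha_j=1-P_{ij}-P_{ij}W_{ij}\alpha_i-P_{ij}W_{ji}\alpha_j .
\]
Substituting this leaves an equation in \(P_{ij}\alpha_i\), \(P_{ij}\alpha_j\) and a constant. Collecting terms defines \(b_{ijr}\), \(c_{ijr}\) and \(d_{ijr}\), which should match \eqref{eq:triangle-coefficients-app}. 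These coefficients depend on \(\bs X\), \(\bs\Gamma\) only through the observable-covariate weights \(W\) and the probabilities \(P_{ir},P_{jr},P_{ij}\).

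The main obstacle is bookkeeping: expanding the product in Step 1 correctly and checking the final coefficients against the appendix. I would double-check the result by verifying symmetry under swapping \(i\leftrightarrow j\), which should map \(b_{ijr}\leftrightarrow c_{ijr}\). I would also substitute the explicit product-logit form of each \(P\) and confirm that the identity holds.
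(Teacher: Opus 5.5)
Your proposal is correct and follows the paper's proof essentially step for step. The cross-multiplied relation obtained from the $(i,r)$ and $(j,r)$ identities is exactly the paper's; its $\alpha_i\alpha_j$ coefficient is $P_{ir}P_{jr}\Delta_{ijr}(\bs{\beta}_{0})$, and the pivot identity then removes $\alpha_i\alpha_j$.

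Two bookkeeping remarks. Multiplying through by $P_{ij}W_{ij}W_{ji}$, rather than substituting the quotient for $\alpha_i\alpha_j$ and multiplying by $P_{ij}$ as the paper does, yields $W_{ij}W_{ji}$ times the coefficients in \eqref{eq:triangle-coefficients-app}, so divide by this positive factor at the end. Also, your $i\leftrightarrow j$ sanity check should read $b_{jir}=-c_{ijr}$, with a sign, as in Lemma~\ref{lem:pivsym}.
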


Applying Lemma~\ref{lem:lem1} to \(r=k,l,m\) gives three linear
relations in \(P_{ij}\alpha_i\), \(P_{ij}\alpha_j\), and a constant term.
Collecting the three relations gives
\[
\mathbf M_{ijklm}(\bs{\beta}_{0})
\begin{bmatrix}
P_{ij}\alpha_i\\
P_{ij}\alpha_j\\
1
\end{bmatrix}
=0,
\qquad
\mathbf M_{ijklm}(\bs{\beta}_{0}):=
\begin{bmatrix}
b_{ijk}(\bs{\beta}_{0}) & c_{ijk}(\bs{\beta}_{0}) & d_{ijk}(\bs{\beta}_{0})\\
b_{ijl}(\bs{\beta}_{0}) & c_{ijl}(\bs{\beta}_{0}) & d_{ijl}(\bs{\beta}_{0})\\
b_{ijm}(\bs{\beta}_{0}) & c_{ijm}(\bs{\beta}_{0}) & d_{ijm}(\bs{\beta}_{0})
\end{bmatrix}.
\]
Since \([P_{ij}\alpha_i,P_{ij}\alpha_j,1]^\top\) is nonzero,
\(\mathbf M_{ijklm}(\bs{\beta}_{0})\) is singular, so
\(\det(\mathbf M_{ijklm}(\bs{\beta}_{0}))=0\).

The equality \(\det(\mathbf M_{ijklm}(\bs{\beta}_{0}))=0\) is not yet a
feasible moment condition, since the entries of
\(\mathbf M_{ijklm}(\bs{\beta}_{0})\) still involve the unobserved
conditional link probabilities \(P_{uv}\).
For a generic parameter value \(\bs\beta\), define the feasible matrix
\(\widetilde{\mathbf M}_{ijklm}(\bs\beta)\) by the explicit entries in
\eqref{eq:feasible-coefficients-app} and \eqref{eq:feasible-matrix-app}.
At \(\bs{\beta}_{0}\), this construction replaces \(P_{uv}\) and
\(1-P_{uv}\) in \(\mathbf M_{ijklm}(\bs{\beta}_{0})\) with
\(L_{uv}\) and \(1-L_{uv}\), respectively.
To verify that this substitution preserves the conditional mean of the
determinant, Lemma~\ref{lem:det-row-supports} in
Appendix~\ref{app:detconstruction} shows that each dyad enters every
monomial in the determinant expansion at most once, so the expansion is
multilinear in the dyad probabilities.
Assumption~\ref{ass:dyad_ind} gives
\(\E[L_{uv}\mid\bf X,\bs\Gamma]=P_{uv}\) and, for distinct dyads \((u,v)\)
and \((a,b)\),
\(\E[L_{uv}L_{ab}\mid\bf X,\bs\Gamma]=P_{uv}P_{ab}\). More generally, the
same factorization holds for finite products over distinct dyads; the
corresponding identities for \(1-L_{uv}\) follow by linearity.
Consequently,
\[
\E\!\left[\det(\widetilde{\mathbf M}_{ijklm}(\bs{\beta}_{0}))
\mid \bf X,\bs\Gamma\right]
=\det(\mathbf M_{ijklm}(\bs{\beta}_{0}))
=0.
\]
The following theorem summarizes the resulting fixed-effect-free moment
restriction.
\begin{thm}[Fixed-Effect-Free Moment Restriction]\label{thm:moment_restriction}
For distinct nodes \(i,j,k,l,m\), define the pentad moment function 
\(\psi_{ijklm}(\bs\beta):=\det(\widetilde{\mathbf M}_{ijklm}(\bs\beta))\).
Under Assumption~\ref{ass:dyad_ind}, the following conditional moment
restrictions hold at the true value \(\bs{\beta}_0\):
\begin{equation}
\label{eq:moment_cond-free}
\begin{aligned}
\E\Bigl[\psi_{ijklm}(\bs{\beta}_{0})
\,\big|\,\bf X,\bs\Gamma\Bigr]&=0,\quad 
\E\Bigl[\psi_{ijklm}(\bs{\beta}_{0})
\,\big|\,\bf X\Bigr]&=0.
\end{aligned}
\end{equation}
\end{thm}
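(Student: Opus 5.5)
The argument has three steps. First I establish the population identity \(\det(\mathbf M_{ijklm}(\bs\beta_0))=0\). Then I show that replacing probabilities by links leaves the conditional mean unchanged. Finally I integrate out \(\bs\Gamma\).

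\emph{Step 1.} Fix distinct \(i,j,k,l,m\) and condition on \((\bf X,\bs\Gamma)\). Lemma~\ref{lem:lem1} with \(r=k,l,m\) gives three linear relations. Stacking them yields \(\mathbf M_{ijklm}(\bs\beta_0)\,v=0\) with \(v=[P_{ij}\alpha_i,\,P_{ij}\alpha_j,\,1]^\top\). Because the last entry of \(v\) equals one, \(v\neq 0\), so the matrix has a nontrivial kernel and \(\det(\mathbf M_{ijklm}(\bs\beta_0))=0\) pointwise in \((\bf X,\bs\Gamma)\). This step needs no integrability argument: it is an algebraic identity in the conditional probabilities.

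\emph{Step 2.} This is the substantive step. By the explicit entries in \eqref{eq:feasible-coefficients-app}--\eqref{eq:feasible-matrix-app}, each entry of \(\widetilde{\mathbf M}_{ijklm}(\bs\beta_0)\) is the corresponding entry of \(\mathbf M_{ijklm}(\bs\beta_0)\) with \(P_{uv}\) and \(1-P_{uv}\) replaced by \(L_{uv}\) and \(1-L_{uv}\). The weights \(W_{uv}\) depend only on \(\bf X\) and are unchanged.

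Expand the determinant by the Leibniz formula:
\[
\det(\widetilde{\mathbf M})=\sum_{\sigma}\sgn(\sigma)\prod_{r}\widetilde M_{r\sigma(r)}.
\]
Next expand each entry into monomials in \(\{L_{uv},1-L_{uv}\}\) with \(\bf X\)-measurable coefficients. By Lemma~\ref{lem:det-row-supports}, every resulting monomial contains each dyad at most once. Row \(r\) involves only the dyads \((i,r),(j,r)\) and the shared pivotal dyad \((i,j)\), and the lemma guarantees that \((i,j)\) is not repeated across rows within a monomial.

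Hence each monomial has the form \(c(\bf X)\prod_{(u,v)\in D}f_{uv}(L_{uv})\), where \(D\) is a set of distinct dyads and \(f_{uv}(x)\in\{x,1-x\}\). Assumption~\ref{ass:dyad_ind} says the dyads are conditionally independent given \((\bf X,\bs\Gamma)\) with \(\E[L_{uv}\mid\bf X,\bs\Gamma]=P_{uv}\). Therefore
\[
\E\Bigl[\prod_{(u,v)\in D}f_{uv}(L_{uv})\,\Big|\,\bf X,\bs\Gamma\Bigr]=\prod_{(u,v)\in D}f_{uv}(P_{uv}).
\]
Summing over monomials by linearity of conditional expectation gives \(\E[\psi_{ijklm}(\bs\beta_0)\mid\bf X,\bs\Gamma]=\det(\mathbf M_{ijklm}(\bs\beta_0))=0\) by Step 1.

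The main obstacle is exactly the multilinearity check supplied by Lemma~\ref{lem:det-row-supports}. If \(L_{ij}\) appeared squared, we would have \(\E[L_{ij}^2\mid\cdot]=P_{ij}\neq P_{ij}^2\), and the substitution would break. I would therefore invoke that lemma and verify that the matrix identity of Step 1 is the one in which \(P\) enters multilinearly. Concretely, the identity should be written with the same polynomial form as the feasible matrix, and not after any cancellation that uses \(P_{ij}^2\).

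\emph{Step 3.} The entries are polynomials in bounded variables \(L_{uv}\in\{0,1\}\) with \(\bf X\)-measurable coefficients. So \(\psi_{ijklm}(\bs\beta_0)\) is conditionally integrable given \(\bf X\) whenever these coefficients are finite, which holds since they are exponentials of finite covariate indices. The law of iterated expectations, with \(\sigma(\bf X)\subset\sigma(\bf X,\bs\Gamma)\), then gives
\[
\E[\psi_{ijklm}(\bs\beta_0)\mid\bf X]=\E\bigl[\E[\psi_{ijklm}(\bs\beta_0)\mid\bf X,\bs\Gamma]\mid\bf X\bigr]=0,
\]
which completes \eqref{eq:moment_cond-free}.
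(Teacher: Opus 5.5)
Your proposal is correct and follows the paper's proof essentially step for step. The paper likewise stacks Lemma~\ref{lem:lem1} over $r=k,l,m$ to obtain $\det(\mathbf M_{ijklm}(\bs\beta_0))=0$. It then expands $\det(\widetilde{\mathbf M}_{ijklm}(\bs\beta_0))$ along the third column and notes that the three entries in each product depend on disjoint dyad sets and are multilinear in their own links. Conditional independence therefore turns each product's conditional mean into the corresponding product of population entries, and iterated expectations finish the argument. The only cosmetic difference is that you push the expansion down to individual monomials and cite Lemma~\ref{lem:det-row-supports} for square-freeness, whereas the paper argues at the level of whole entries.
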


Theorem~\ref{thm:moment_restriction} gives an exact fixed-effect-free moment
restriction for every five-node configuration. The conditional restriction
in \eqref{eq:moment_cond-free} accommodates unrestricted support for
\(\bs\Gamma\) and arbitrary dependence between \(\bf X\) and
\(\bs\Gamma\). The pentad construction thus provides a new algebraic method
for obtaining fixed-effect-free moment restrictions in NTU network formation
models.

The key idea in our construction is to difference out fixed effects by
exploiting the singularity of a linear-system coefficient matrix. This idea
extends beyond the NTU model. As a simple illustration, consider the TU logit
model in \citet{graham2017econometric}, written in our notation as
\begin{equation}
\label{eq:tu-model}
L_{uv}
=\1\left\{\Gamma_u+\Gamma_v+\bs X_{uv}^{\top}\bs{\beta}_{0}
-\epsilon_{uv}\geq0\right\},
\qquad 1\leq u<v\leq N.
\end{equation}
Here \(\bs X_{uv}=\bs X_{vu}\) is a symmetric vector of observed dyadic
covariates and \(\epsilon_{uv}=\epsilon_{vu}\). Conditional on
\((\bf X,\bs\Gamma)\), the shocks \(\{\epsilon_{uv}:u<v\}\) are independent
and have the standard logistic cdf \(F\). The conditional link probability is
therefore
\(
P_{uv}:=\Pr(L_{uv}=1\mid\bf X,\bs\Gamma)
=F\!\left(\Gamma_u+\Gamma_v
+\bs X_{uv}^{\top}\bs{\beta}_{0}\right)
\).
Set \(\alpha_u:=e^{-\Gamma_u}\) and
\(W_{uv}(\bs{\beta}_{0}):=\exp(-\bs X_{uv}^\top\bs{\beta}_{0})\). Then
\(
(1-P_{uv})/P_{uv}
=W_{uv}(\bs{\beta}_{0})\alpha_u\alpha_v.
\)
Fixing the pivotal pair \((i,j)\), each peripheral node eliminates its own
fixed effect and yields a homogeneous linear equation in
\((\alpha_i,\alpha_j)^\top\). For example, the dyads \((i,k)\) and \((j,k)\) imply
\(\alpha_k=(1-P_{ik})/[P_{ik}W_{ik}(\bs{\beta}_{0})\alpha_i]\) and
\(\alpha_k=(1-P_{jk})/[P_{jk}W_{jk}(\bs{\beta}_{0})\alpha_j]\), respectively;
equating these expressions eliminates \(\alpha_k\) and produces the first row
of the system below. The two peripheral nodes \(k\) and \(l\) therefore give
\begin{equation}
\label{eq:tu-homogeneous-system}
\underbrace{\begin{bmatrix}
P_{ik}(1-P_{jk})W_{ik}(\bs{\beta}_{0}) &
-P_{jk}(1-P_{ik})W_{jk}(\bs{\beta}_{0})\\
P_{il}(1-P_{jl})W_{il}(\bs{\beta}_{0}) &
-P_{jl}(1-P_{il})W_{jl}(\bs{\beta}_{0})
\end{bmatrix}}_{\mathbf M_{ijkl}}
\underbrace{\begin{bmatrix}
\alpha_i\\[4pt]
\alpha_j
\end{bmatrix}}_{\bs\alpha}
=
\begin{bmatrix}
0\\[4pt]
0
\end{bmatrix}.
\end{equation}
Since \(\bs\alpha\) is nonzero, \(\mathbf M_{ijkl}\) is singular, implying
\[
P_{ik}P_{jl}(1-P_{jk})(1-P_{il})
W_{ik}(\bs{\beta}_{0})W_{jl}(\bs{\beta}_{0})
-P_{jk}P_{il}(1-P_{ik})(1-P_{jl})
W_{jk}(\bs{\beta}_{0})W_{il}(\bs{\beta}_{0})=0.
\]
Multiplying \(\det(\mathbf M_{ijkl})=0\) by
\(\exp\{(\bs X_{ik}+\bs X_{jk})^\top\bs{\beta}_{0}\}\) and using the
conditional independence of the four dyads gives
\(
\E[\psi_{ijkl}(\bf L,\bf X;\bs{\beta}_{0})
\mid\bf X,\bs\Gamma]=0,
\)
where 
\[
\begin{aligned}
\psi_{ijkl}(\bf L,\bf X;\bs\beta)
&=\1\{L_{ik}=1,L_{il}=0,L_{jk}=0,L_{jl}=1\}
\exp\big((\bs X_{jk}-\bs X_{jl})^\top\bs\beta\big)\\
&\quad-\1\{L_{ik}=0,L_{il}=1,L_{jk}=1,L_{jl}=0\}
\exp\big((\bs X_{ik}-\bs X_{il})^\top\bs\beta\big).
\end{aligned}
\]
This construction uses the same four-node tetrad as
\citet{graham2017econometric}, but it produces a different moment function.

The TU tetrad also clarifies why our NTU construction requires one more node.
Both constructions fix the pivotal pair \((i,j)\), with each peripheral
node supplying one equation after its fixed effect is eliminated.
Under TU, these equations are homogeneous in
\((\alpha_i,\alpha_j)^\top\), so two peripheral nodes give the
\(2\times2\) system and tetrad restriction in
\eqref{eq:tu-homogeneous-system}. Under NTU, eliminating the peripheral
fixed effect and then using the pivotal-dyad identity in
\eqref{eq:triangle-p-identities} to remove \(\alpha_i\alpha_j\) leaves
an equation in \(P_{ij}\alpha_i\) and \(P_{ij}\alpha_j\) with a constant
term; see Lemma~\ref{lem:lem1}. Two such equations generally determine
the two nuisance terms. Substituting these values into the equation from
a third peripheral node yields a restriction that no longer contains
either nuisance term.
Thus, augmenting the two nuisance terms with the constant coordinate \(1\)
converts the three affine equations into a \(3\times3\) homogeneous system
in \((P_{ij}\alpha_i,P_{ij}\alpha_j,1)^\top\), giving the
five-node determinant restriction in
Theorem~\ref{thm:moment_restriction}.

We next show that our pentad construction is \emph{minimal} in the sense that it uses
the minimum possible numbers of dyads and nodes for a nontrivial
fixed-effect-free moment and, among five-node, seven-dyad constructions, is
unique up to relabeling and scale. To state this result, we first introduce
notation for moments based on an arbitrary finite graph.
For a finite simple graph \(G=(V,E)\) with node set \(V\subseteq[N]\) and
dyad set \(E\), write
\(\mathbf L_G:=(L_{uv})_{(u,v)\in E}\) and
\(\mathbf X_G:=(\bs X_u)_{u\in V}\). For each such graph \(G\), consider a real-valued
measurable function
\(\psi_G(\mathbf L_G,\mathbf X_G;\bs\beta)\) such that
\begin{equation}
\E\!\left[\psi_G(\mathbf L_G,\mathbf X_G;\bs\beta_0)
\mid\bf X,\bs\Gamma\right]=0\quad\text{almost surely.}
\label{eq:graph-local-moment}
\end{equation}
In this notation, the pentad moment in
Theorem~\ref{thm:moment_restriction} corresponds to
\(V=\{i,j,k,l,m\}\),
\(E=\{(i,j),(i,k),(j,k),(i,l),(j,l),(i,m),(j,m)\}\).

\begin{thm}[Minimality of the Pentad Moment]
\label{thm:pentad_minimality}
Recall that \(\bs X_{ij}=w(\bs X_i,\bs X_j)\), and suppose that
\(w(\bs x,\bs y)
=\bigl(w_1(x_1,y_1),\allowbreak\ldots,\allowbreak
w_d(x_d,y_d)\bigr)^\top\).
Each \(w_r:\mathbb R^2\to\mathbb R\) satisfies one of the following:
\textnormal{(i)} \(w_r\) is real analytic on
\(\mathbb R^2\), with \(\frac{\partial w_r}{\partial y_r}(x_r^\circ,x_r^\circ)\neq0\) for some
\(x_r^\circ\in\mathbb R\); or \textnormal{(ii)} there exist
\(\varepsilon_r>0\), \(\kappa_r\in\{1,2\}\), and real-analytic functions
\(f_{r,+},f_{r,-}:(-\varepsilon_r,\infty)\to\mathbb R\) such that
\(w_r(s,t)=f_{r,+}(t-s)\) for \(s\leq t\),
\(w_r(s,t)=f_{r,-}(s-t)\) for \(s>t\),
\(f_{r,+}(0)=f_{r,-}(0)\),
\(f_{r,+}^{(\kappa_r)}(0)f_{r,-}^{(\kappa_r)}(0)\neq0\),
and, if \(\kappa_r=2\), \(f'_{r,+}(0)=f'_{r,-}(0)=0\).
For any \(\bs\beta_0\in\mathbb R^d\setminus\{\mathbf 0\}\) and any
data-generating process satisfying Assumption~\ref{ass:dyad_ind} with an
absolutely continuous distribution of \((\bs X_i,\Gamma_i)\), the
following statements hold for every finite simple graph \(G=(V,E)\) and every measurable \(\psi_G\) satisfying
\eqref{eq:graph-local-moment}:
\begin{enumerate}[label=\textnormal{(\alph*)},leftmargin=2em]
\item If \(\lvert E\rvert<7\) or \(\lvert V\rvert<5\), then
\(\psi_G\) is trivial, i.e., \(\psi_G(\cdot,\mathbf X_G;\bs\beta_0)\equiv0\) almost surely.
\item Suppose \(\lvert V\rvert=5\) and \(\lvert E\rvert=7\). Then
\(\psi_G\) is nontrivial only if there exist distinct
\(i,j,k,l,m\in V\) such that
\(E=\{(i,j),(i,k),(j,k),\allowbreak(i,l),(j,l),\allowbreak
(i,m),(j,m)\}\). For each such ordered tuple \((i,j,k,l,m)\),
\(\psi_G(\mathbf L_G,\mathbf X_G;\bs\beta_0)
\allowbreak=C(\mathbf X_G)\psi_{ijklm}(\bs\beta_0)\) almost surely for
some measurable function \(C\), where
\(\psi_{ijklm}(\bs\beta_0)\) denotes the pentad moment function defined in
Theorem~\ref{thm:moment_restriction}; moreover,
\(\psi_{ijklm}(\bs\beta_0)\) is nontrivial.
\end{enumerate}
\end{thm}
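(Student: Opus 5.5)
Fix \(\mathbf X\) and view a graph-local moment as a function on \(\{0,1\}^E\). Condition \eqref{eq:graph-local-moment} then becomes a polynomial identity in the fixed-effect variables. Conditional on \((\mathbf X,\bs\Gamma)\), the dyads are independent, so
\[
\E[\psi_G\mid\mathbf X,\bs\Gamma]=\sum_{\mathbf l\in\{0,1\}^E}\psi_G(\mathbf l)\prod_{(u,v)\in E}P_{uv}^{l_{uv}}(1-P_{uv})^{1-l_{uv}} .
\]
Multiplying by \(\prod_{(u,v)\in E}(1+W_{uv}\alpha_u)(1+W_{vu}\alpha_v)\) turns this into a polynomial \(Q(\bs\alpha)\) in \(\alpha_u=e^{-\Gamma_u}\), \(u\in V\). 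Each edge contributes either \(1\) (link) or \(W_{uv}W_{vu}\alpha_u\alpha_v+W_{uv}\alpha_u+W_{vu}\alpha_v\) (no link). Absolute continuity of \((\bs X_i,\Gamma_i)\) means the identity must hold for \(\bs\alpha\) in a set of positive measure. Hence \(Q\equiv0\) as a polynomial for almost every \(\mathbf X_G\). The moment space is therefore the kernel of a linear map from \(\mathbb R^{2^{|E|}}\) to polynomial coefficients, and the matrix of that map has entries that are monomials in the \(W_{uv}\).

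For the dimension count in (a), apply triangular elimination edge by edge. Consider a node \(u\) of degree one, with edge \(uv\). The coefficient of the top power of \(\alpha_u\) isolates the configurations with \(l_{uv}=0\), and the \(\alpha_u\)-free part isolates \(l_{uv}=1\). This forces \(\psi\) to be invariant in that coordinate, so \(\psi\) effectively lives on \(G\) minus that edge, and we induct. I will do the same for degree-two nodes and for small-vertex graphs: expand in \(\alpha_u\) and compare coefficients. The result is a reduction showing that any nontrivial moment must be supported on a subgraph whose minimum degree and structure force at least five nodes and seven edges. Graphs with \(|V|\le4\) have at most six edges. For these I would check directly that the coefficient matrix has full column rank generically in \(\mathbf X\), reducing to the finitely many graphs up to isomorphism (notably \(K_4\)). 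Full rank means exhibiting one \(\mathbf X\) configuration with nonzero minor. The covariate conditions (i)/(ii) are used exactly here. They guarantee that the \(W_{uv}\) and \(W_{vu}\), for \(\bs\beta_0\neq0\), vary locally independently enough that a real-analytic minor cannot vanish identically. Analytic continuation then gives nonvanishing almost everywhere. Condition (ii) covers \(|x_r-y_r|\)-type kinks via one-sided expansions at \(s=t\) of order \(\kappa_r\).

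For (b), enumerate the five-node, seven-edge graphs with all degrees at least \(2\), which is what survives the degree-one reduction. Show that every graph other than \(K_{2,3}\) plus the pivot edge \((i,j)\) gives full-rank coefficient matrices, again by a generic-point minor computation. Examples are the bowtie-type graphs and a 5-cycle with two chords. For the pentad graph, Lemma~\ref{lem:lem1} and Theorem~\ref{thm:moment_restriction} supply a kernel element \(\psi_{ijklm}(\bs\beta_0)\). I then show the kernel is exactly one-dimensional by exhibiting a \((2^7-1)\)-minor that is generically nonzero, giving uniqueness up to the scalar \(C(\mathbf X_G)\). Measurability of \(C\) follows because it is a ratio of entries of \(\psi_G\) and \(\psi_{ijklm}\), taken at a coordinate where the latter is nonzero. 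Nontriviality of \(\psi_{ijklm}\) is shown by evaluating one of its entries, for instance at \(\mathbf l\) with all seven links present. This gives an explicit nonzero analytic function of \(\mathbf X_G\) whenever \(\bs\beta_0\neq0\).

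The main obstacle is the rank computations. They must hold generically over covariate values but only through the restricted structure \(W_{uv}=\exp(-w(\bs X_u,\bs X_v)^\top\bs\beta_0)\), not with free \(W\)'s. My first attempt would be to prove each needed minor is a nonzero polynomial in free variables \(W_{uv}\), and then argue that the map \(\mathbf X_G\mapsto(W_{uv})\) under (i)/(ii) has image not contained in its zero set. For that I would use a local expansion near the diagonal \(\bs X_u\approx x^\circ\) to leading order, where the \(W\)'s behave like independent perturbations. The pentad graph is the most delicate case, since its kernel must be shown to be exactly one-dimensional rather than just nonzero.
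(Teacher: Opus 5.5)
Your reduction of \eqref{eq:graph-local-moment} to a null-space condition on a coefficient matrix matches the paper's architecture. So does the plan of proving full column rank, or corank one for the pentad with $\psi_{ijklm}$ as the null vector, via a generically nonzero minor. The gap is in the step you flag as the main obstacle.

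Near the diagonal the weights are \emph{not} independent perturbations. Take $x_{u,r}=x_r^\circ+tz_u$ with the other coordinates fixed. Then $W_{uv}=c\{1-t\beta_{0r}(\partial_x w_r\,z_u+\partial_y w_r\,z_v)\}+O(t^2)$, so $\lvert V\rvert$ numbers drive all $2\lvert E\rvert$ weights, and a symmetric $w$ forces $W_{uv}=W_{vu}$ exactly. Worse, at $t=0$ the matrix is, up to row scaling, $\mathbf C_G(\mathbf 1)$. Its rank is only $r_G\in\{108,112\}$, the number of distinct degree vectors $(d_A(u))_{u\in V}$. A leading-order expansion therefore cannot certify a minor of size $127$ or $128$. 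Also, a minor that is a nonzero polynomial in free weights may still vanish on the image of $\mathbf X_G\mapsto(W_{uv})$.

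Three ingredients are missing:
\begin{enumerate}
\item Rank invariance under $W_{uv}\mapsto s_uW_{uv}$. The substitution $\alpha_u\mapsto s_u\alpha_u$ is a diagonal row scaling, and it strips the source-node term. What remains is the structured direction $H_{uv}=z_v$ in case (i) or $H_{uv}=a_\pm\lvert z_v-z_u\rvert^{\kappa_r}$ in case (ii).
\item A Schur-complement argument (Lemma~\ref{lem:weight-rank-lifting}). It gives $\operatorname{rank}\mathbf C_G(\mathbf 1+t\mathbf H+o(t))\ge r_G+\operatorname{rank}\mathbf T_G$ for small $t\ne0$, where $\mathbf T_G$ is the derivative compressed to the complement of the rank-$r_G$ part of $\mathbf C_G(\mathbf 1)$.
\item An exact, computer-assisted check that these particular directions give $\operatorname{rank}\mathbf T_G=128-r_G$, or $127-r_G$ for the pentad (Lemma~\ref{lem:five-node-directions}).
\end{enumerate}
In case (ii), ``analytic continuation'' also does not cross the kinks. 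The minors are analytic only on cells where the ordering of the five nodes' $r$th coordinates is fixed, so you need a nonzero minor in every cell. This is why all $5!$ orderings of $z$ and every nonzero ratio $a_-/a_+$ must be checked.

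Part (a) has a second gap. Pruning a pendant edge $uv$ works: comparing the $\alpha_u$-coefficient and then the $\alpha_u$-free part shows that both slices $\psi(\cdot,l_{uv}=0)$ and $\psi(\cdot,l_{uv}=1)$ are moments on $G$ minus $uv$. That is what you need, though it is not ``invariance'' in $l_{uv}$. Degree-two nodes, however, cannot be eliminated by comparing coefficients. At such a node only the slice with both incident links absent is killed, and indeed the pentad's peripheral nodes have degree two yet carry a nonzero moment.

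After pruning you are left with $C_5$, $C_6$, $K_{2,3}$, the bowtie, $C_5$ plus a chord, and $C_3\sqcup C_3$. Your $\lvert V\rvert\le4$ check does not reach these. The paper covers them in three ways:
\begin{enumerate}
\item By monotonicity: every graph with at most five nodes and six dyads is a subgraph of a non-pentad five-node, seven-dyad graph, and full column rank passes to column subsets.
\item By a triangular evaluation for tree and unicyclic components: assign each dyad to one endpoint and set $\alpha_u=-1/W_{uv}$, using $W_{uv}\ne W_{uv'}$ almost everywhere.
\item By the Kronecker structure of disjoint unions.
\end{enumerate}

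Finally, your nontriviality check fails as stated. With all seven links present every $\widetilde d_{ijr}$ vanishes, so $\psi_{ijklm}=0$ there. The pattern $l_{jk}=l_{il}=l_{im}=l_{jm}=1$, $l_{ij}=l_{ik}=l_{jl}=0$ instead gives a value proportional to $W_{mi}-W_{mj}$, which is nonzero almost everywhere.
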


Theorem~\ref{thm:pentad_minimality} establishes that our pentad construction
is minimal in both dyads and nodes for a broad class of covariate
specifications. The restrictions on \(w\) are convenient sufficient
conditions for minimality, but are not necessary. Their role is to exclude
special algebraic degeneracies among the dyadic covariates \(\bs X_{ij}\).
We also exclude the degenerate case \(\bs\beta_0=\mathbf0\), in which all
covariate indices \(\bs X_{ij}^{\top}\bs\beta_0\) vanish.
The conditions on \(w\) cover symmetric maps such as
\(w_r(x_r,y_r)=\lvert x_r-y_r\rvert\) and
\(w_r(x_r,y_r)=\lvert x_r-y_r\rvert^2\), as well as asymmetric maps such as
\(w_r(x_r,y_r)=y_r\) and
\(w_r(x_r,y_r)=2\lvert x_r-y_r\rvert\1\{x_r>y_r\}
\allowbreak+\lvert x_r-y_r\rvert\1\{x_r\leq y_r\}\).
The minimality conclusion may also hold for other choices of \(w\),
but the proof of Theorem~\ref{thm:pentad_minimality} needs to be checked
for each such choice.

\subsection{Pentad-GMM Estimator}\label{sec:pentad_gmm}

We estimate \(\bs{\beta}_{0}\) by GMM using the fixed-effect-free pentad
moment restrictions in \eqref{eq:moment_cond-free}. The sample
moments used below keep the five roles \((i,j,k,l,m)\) ordered. Define
\(\mathcal S_N:=\{(i,j,k,l,m)\in[N]^5:i,j,k,l,m\text{ are all distinct}\}\);
then \(\abs{\mathcal S_N}=N(N-1)(N-2)(N-3)(N-4)\).
For \(s=(i,j,k,l,m)\in\mathcal S_N\), let
\(\psi_s(\bs\beta):=\psi_{ijklm}(\bs\beta)\). To obtain enough moment
conditions for a multidimensional parameter, let $q\ge d$ be fixed and let
\(\bs Z_s:=z(\bf X_s)\in\mathbb{R}^{q}\) be an  instrument
vector constructed from the observed
covariates in configuration $s=(i,j,k,l,m)$, where $\bf X_s$ collects
the node covariates in the order \((i,j,k,l,m)\).
The instrument may depend on the ordered roles, including the
order of the peripheral labels \(k,l,m\). Define the pentad moment function
\(\bs\Psi_s(\bs{\beta}):= \bs Z_s\psi_s(\bs{\beta})\) and the
population moment \(\bs h(\bs{\beta}):=\E[\bs\Psi_s(\bs{\beta})]\). The
fixed-effect-free conditional moment restriction \eqref{eq:moment_cond-free}
and the law of iterated expectation imply
\(
\bs h(\bs{\beta}_{0})
=\E\!\left[\bs Z_s\,\psi_s(\bs{\beta}_{0})\right]
=0.
\)
The sample moment vector is then given by
\begin{equation}
\bs h_{N}(\bs{\beta})\;:=\;\frac{1}{\abs{\mathcal S_N}}\sum_{s\in\mathcal S_N}\bs\Psi_s(\bs{\beta})\;\in\;\mathbb{R}^{q}.\label{eq:sample_moment-multi}
\end{equation}
The pentad-GMM estimator for \(\bs\beta_0\) is then defined as the minimizer:
\begin{equation}
\widehat{\bs{\beta}}_{\mathrm{GMM}}\in
\arg\min_{\bs{\beta}\in\mathcal B}
\bs h_N(\bs\beta)^\top\widehat{\bf W}_N\bs h_N(\bs\beta),
\label{eq:beta_gmm}
\end{equation}
where \(\widehat{\bf W}_{N}\in\mathbb R^{q\times q}\) is a feasible
symmetric positive definite weight matrix, and the parameter space
\(\mathcal B\subset\mathbb R^d\) is compact, excludes the origin
\(\mathbf 0\notin\mathcal B\), and satisfies
\(\bs{\beta}_{0}\in\operatorname{int}(\mathcal B)\),
where \(\operatorname{int}(\mathcal B)\) denotes the interior of \(\mathcal B\). We exclude the origin
because \(\psi_s(\mathbf 0)=0\) for every \(s\in\mathcal S_N\) and every
realization of the network. Hence,
\(\bs h_N(\mathbf 0)=\bs h(\mathbf 0)=0\), making \(\mathbf 0\) a degenerate
root of both the sample and population moment equations. The GMM objective in
\eqref{eq:beta_gmm} therefore always equals zero at the origin.\footnote{The
need to exclude the origin arises from the pentad-GMM criterion rather than
from the underlying network model.
When \(\bs\beta_0=\mathbf 0\), Assumption~\ref{ass:dyad_ind} implies that the
conditional link probability factors as
\(\Pr(L_{ij}=1\mid\bs X_i,\bs X_j)
=\E[F(\Gamma_i)\mid\bs X_i]\E[F(\Gamma_j)\mid\bs X_j]\).
This factorization provides a testable implication for examining the zero
coefficient vector separately. The distinction between zero and nonzero
coefficient vectors also arises in \citet{davezies2023fixed}, who formally
characterize the zero vector in binary choice panels through a different
conditional probability restriction.}

\section{Asymptotic Analysis}
\label{sec:asymptotics}

In this section, we derive asymptotic properties for the sample moment $\bs{h}_N(\bs \beta_0)$ and for the pentad-GMM estimator
\(\widehat{\bs{\beta}}_{\mathrm{GMM}}\) based on
\eqref{eq:sample_moment-multi}. We begin with basic regularity conditions for
the asymptotic analysis. The first imposes boundedness on the dyadic
design and instruments. Similar restrictions on covariates and
instruments are common in the network-formation literature;
see, for example, \citet{graham2017econometric},
\citet{gao2023logical}, and \citet{li2024estimation}.

\begin{assumption}[Bounded dyadic design and instruments]
\label{ass:bounded_design}
The parameter dimension \(d\) and instrument dimension \(q\) are fixed.
The induced dyadic covariates and instruments satisfy
\(\sup_{i\neq j}\norm{\bs X_{ij}}\le C_X\) and
\(\sup_{s\in\mathcal S_N}\norm{\bs Z_s}\le C_Z\) a.s. for
deterministic constants \(C_X,C_Z<\infty\).
\end{assumption}

Throughout this paper, let $\mathcal{E}_{N}:=\{(u,v):1\le u<v\le N\}$ denote the set of all unordered dyads in the network.
For an unordered dyad $e=(a,b)\in\mathcal{E}_{N}$, recall
that $P_{e}(\bs{\beta}_{0}):=\E[L_{e}\mid\bf X,\bs\Gamma]$
is the conditional link probability.

\begin{assumption}[Average link probability and relative expected degree]
\label{ass:sparse_envelope}
Let \(\rho_N:=\E[P_{12}(\bs\beta_0)]\). For each \(i\in[N]\), define
individual \(i\)'s relative expected degree by
\(\Lambda_{i,N}:=\rho_N^{-1}\E[P_{ij}(\bs\beta_0)\mid \bs X_i,\Gamma_i]\),
where \(j\ne i\) is arbitrary.
Assume that \(\sup_N\E[\Lambda_{i,N}^{16}]<\infty\).
\end{assumption}

Assumption~\ref{ass:sparse_envelope} defines \(\rho_N\) as the average
link probability and \(\Lambda_{i,N}\) as individual \(i\)'s relative
expected degree, namely, the expected degree conditional on
\(\bs X_i,\Gamma_i\) divided by the average expected degree: indeed,
\(\E[\Lambda_{i,N}]=1\) and
\(\E[\sum_{j\ne i}L_{ij}\mid \bs X_i,\Gamma_i]
=(N-1)\rho_N\Lambda_{i,N}\). The moment restriction
\(\sup_N\E[\Lambda_{i,N}^{16}]<\infty\) permits some individuals to have
expected degrees far above the average, provided that individuals with such
large relative expected degrees are sufficiently rare. This
restriction is the single-population counterpart of the third-moment
conditions used by \citet[Assumption~2(iii)]{graham2024sparse} to rule
out extreme skewness in the consumer and product degree distributions.
Although the moment restriction \(\sup_N\E[\Lambda_{i,N}^{16}]<\infty\)
restricts the tail behavior of the relative expected degree, it does not
restrict the asymptotic scale of the average expected degree
\((N-1)\rho_N\).

Since \(\rho_N\) is the average link probability, its
asymptotic scale determines network sparsity and the growth of the average
expected degree: a nonvanishing \(\rho_N\) yields a dense network sequence,
whereas \(\rho_N\to0\) yields a sparse sequence whose degree of sparsity
depends on how quickly \(\rho_N\) vanishes (see
\citealp{graham2020network,graham2024sparse}). Our analysis covers three sparsity
regimes: dense or mildly sparse networks with
\(N\rho_N\to\infty\); sparse networks with finite average expected degree,
\(N\rho_N\asymp1\); and ultra-sparse networks with
\(N\rho_N\to0\) and \(N^5\rho_N^4\to\infty\).

We organize the asymptotic analysis through the projection structure of
the sample moment. Conditional on \(\bf X\) and \(\bs\Gamma\), applying the
Hoeffding decomposition over dyads (obtained by Möbius inversion on the
lattice of dyad subsets; see \citealp{rota1964foundations}) to
\(\bs h_N(\bs{\beta}_{0})\) gives
\begin{equation}
\label{eq:main_projection_decomp}
\bs h_N(\bs{\beta}_{0})
=\sum_{A\subseteq\mathcal E_N:1\le\abs{A}\le7}\bs\phi_{A,N},
\quad
\bs\phi_{A,N}:=\sum_{B\subseteq A}(-1)^{\abs{A}-\abs{B}}
\E[\bs h_N(\bs{\beta}_{0})\mid \bf X,\bs\Gamma,\{L_e\}_{e\in B}].
\end{equation}
Here \(A\) is a labeled dyad set. For
\(s=(i,j,k,l,m)\in\mathcal S_N\), write
\[
E(s):=\{(i,j),(i,k),(i,l),(i,m),(j,k),(j,l),(j,m)\}
\]
for the seven dyads entering the five-node configuration. The five-node
kernel \(\bs\Psi_s(\bs{\beta}_{0})\) admits the finite polynomial representation
\begin{equation}
\label{eq:kernel_poly_rep}
\bs\Psi_s(\bs{\beta}_{0})
=\sum_{\nu=1}^{\nu_{\max}}\bs c_{s,\nu}(\bf X)m_{s,\nu}({\bf L}),
\end{equation}
Here the constant \(\nu_{\max}<\infty\) does not depend on \(N\),
\(\bs c_{s,\nu}(\bf X)\in\mathbb R^q\) is a coefficient vector, and
\(m_{s,\nu}({\bf L})=\prod_{e\in E_{s,\nu}}L_e\) is a square-free monomial
with support \(E_{s,\nu}\subseteq E(s)\); see Lemma~\ref{lem:poly}. Applying the
projection formula in
\eqref{eq:main_projection_decomp} term by term to
\eqref{eq:kernel_poly_rep} gives
\[
\bs\phi_{A,N}
=\abs{\mathcal S_N}^{-1}\bs C_{A,N}(\bf X,\bs\Gamma)\prod_{e\in A}\xi_e,
\]
where  $\xi_{e}:=L_{e}-P_{e}(\bs{\beta}_{0})$
is the centered link indicator and 
\[
\bs C_{A,N}(\bf X,\bs\Gamma)
:=
\sum_{s\in\mathcal S_N:\,A\subseteq E(s)}
\sum_{\nu:A\subseteq E_{s,\nu}}
\bs c_{s,\nu}(\bf X)
\prod_{f\in E_{s,\nu}\setminus A}P_f(\bs{\beta}_{0}).
\]
Thus the dyads in \(A\) enter the projection through the product of the
centered link indicators \(\{\xi_e:e\in A\}\). All other dyads appearing
in the same square-free monomial are replaced by their conditional link
probabilities. Since \(\{\xi_e:e\in\mathcal E_N\}\) are conditionally
independent, distinct dyad-set projections are conditionally orthogonal:
\[
\E[\bs\phi_{A,N}\bs\phi_{A',N}^{\top}\mid\bf X,\bs\Gamma]=0
\qquad\text{whenever }A\neq A'.
\]
Together with \eqref{eq:main_projection_decomp}, this orthogonality
implies that the conditional covariance, and hence the variance scale,
of \(\bs h_N(\bs{\beta}_{0})\) is obtained by adding the covariance
contributions of the dyad-set projections \(\bs\phi_{A,N}\). We therefore
work with the dyad-set projections rather than the five-node summands
directly. Since the variance order of \(\bs\phi_{A,N}\) depends on the
unlabeled graph shape represented by \(A\), rather than by the
particular node labels, we group the projections according to that
shape.

For a finite dyad set \(A\subseteq\mathcal E_N\), let
\(V(A)\) be the set of nodes that appear in the dyads in \(A\), and let
\(G_A:=(V(A),A)\) be the corresponding labeled graph. 
Let \([G_A]\) denote its isomorphism class, equivalently the unlabeled
graph shape determined by the dyads present among the nodes in \(A\).
Following the graph-isomorphism terminology in \citet{graham2020network},
each such isomorphism class can be used as an element of a graph class:
two dyad sets \(A\) and \(A'\) have the same isomorphism class if and only if
\(G_A\) and \(G_{A'}\) are isomorphic (see Appendix~\ref{app:graph_notation}).
For example, all singleton dyad sets are labeled copies of the one-dyad
class, and all two-edge paths are labeled copies of the two-star
class.

More generally, we use the term \emph{graph class} for a finite
collection of such unlabeled graph shapes. For such a class \(g\), let
\(\mathscr C_N(g):=\{A\subseteq\mathcal E_N:[G_A]\in g\}\) denote the
collection of all labeled dyad sets whose unlabeled isomorphism class is
one of the elements of \(g\). For example, the graph class
\(g_{\mathrm{3tree}}\) contains the three-edge path
(Figure~\ref{fig:projection_classes_main}(c)) and the three-star
(Figure~\ref{fig:projection_classes_main}(d)), so
\(\mathscr C_N(g_{\mathrm{3tree}})\) collects all labeled dyad sets with
either shape. For such a graph class \(g\), define the projection sum
\(
\bs\Pi_N(g):=\sum_{A\in\mathscr C_N(g)}\bs\phi_{A,N},
\)
which collects the projection terms whose dyad sets have a shape in \(g\).
The graph classes that can arise from the five-node determinant kernel
are characterized in Lemma~\ref{lem:kerneltaxonomy} and summarized in
Table~\ref{tab:regimemap}.

We now define four graph classes whose projection sums can be asymptotically
leading under the sparsity regimes considered in this paper.
The one-dyad class is
\(g_{\mathrm{dyad}}:=\{[G_{\{(1,2)\}}]\},\)
so that \(\bs\Pi_N(g_{\mathrm{dyad}})\) is the one-dyad projection sum.
The two-star class is
\(g_{\text{2-star}}:=\{[G_{\{(1,2),(1,3)\}}]\},\)
the two-edge path on three nodes. We also use two combined graph
classes. Let
\(g_{\mathrm{3tree}}
:=
\{
[G_{\{(1,2),(2,3),(3,4)\}}],
[G_{\{(1,2),(1,3),(1,4)\}}]
\big\},\)
which collects the three-edge path on four nodes and the three-star on
four nodes. Let
\(g_{\mathrm{4tree}}
:=
\big\{
[G_{\{(1,2),(2,3),(3,4),(4,5)\}}],
[G_{\{(1,2),(1,3),(1,4),(2,5)\}}]
\big\},\)
which collects the four-edge path on five nodes and the fork-shaped
four-edge tree with degree sequence \((3,2,1,1,1)\). The associated
projection sums are
\(\bs\Pi_N(g_{\mathrm{dyad}})\), \(\bs\Pi_N(g_{\text{2-star}})\),
\(\bs\Pi_N(g_{\mathrm{3tree}})\), and \(\bs\Pi_N(g_{\mathrm{4tree}})\), each
defined by summing \(\bs\phi_{A,N}\) over the labeled copies in the
corresponding class. Figure~\ref{fig:projection_classes_main} provides a
graphical illustration of these classes.

\begin{figure}[!htbp]
\centering
\captionsetup{font=small}
\captionsetup[subfigure]{font=small}
\begin{subfigure}[t]{0.30\textwidth}
\centering
\begin{tikzpicture}[scale=0.60,graphNode/.style={circle,fill=black,inner sep=1.35pt},graphEdge/.style={line width=0.6pt}]
  \node[graphNode] (d1) at (0,0) {};
  \node[graphNode] (d2) at (1.6,0) {};
  \draw[graphEdge] (d1)--(d2);
\end{tikzpicture}
\caption{\(g_{\mathrm{dyad}}\)}
\end{subfigure}
\begin{subfigure}[t]{0.30\textwidth}
\centering
\begin{tikzpicture}[scale=0.60,graphNode/.style={circle,fill=black,inner sep=1.35pt},graphEdge/.style={line width=0.6pt}]
  \node[graphNode] (w1) at (0.80,0.75) {};
  \node[graphNode] (w2) at (0,0) {};
  \node[graphNode] (w3) at (1.60,0) {};
  \draw[graphEdge] (w1)--(w2);
  \draw[graphEdge] (w1)--(w3);
\end{tikzpicture}
\caption{\(g_{\text{2-star}}\)}
\end{subfigure}
\hfill
\begin{subfigure}[t]{0.30\textwidth}
\centering
\begin{tikzpicture}[scale=0.60,graphNode/.style={circle,fill=black,inner sep=1.35pt},graphEdge/.style={line width=0.6pt}]
  \node[graphNode] (p1) at (0,0) {};
  \node[graphNode] (p2) at (0.72,0) {};
  \node[graphNode] (p3) at (1.44,0) {};
  \node[graphNode] (p4) at (2.16,0) {};
  \draw[graphEdge] (p1)--(p2)--(p3)--(p4);
\end{tikzpicture}
\caption{\(g_{\mathrm{3tree}}\): three-edge path}
\end{subfigure}

\vspace{0.9em}

\begin{subfigure}[t]{0.30\textwidth}
\centering
\begin{tikzpicture}[scale=0.60,graphNode/.style={circle,fill=black,inner sep=1.35pt},graphEdge/.style={line width=0.6pt}]
  \node[graphNode] (s1) at (0.80,0.45) {};
  \node[graphNode] (s2) at (0.80,1.10) {};
  \node[graphNode] (s3) at (0,0) {};
  \node[graphNode] (s4) at (1.60,0) {};
  \draw[graphEdge] (s1)--(s2);
  \draw[graphEdge] (s1)--(s3);
  \draw[graphEdge] (s1)--(s4);
\end{tikzpicture}
\caption{\(g_{\mathrm{3tree}}\): three-star}
\end{subfigure}
\hfill
\begin{subfigure}[t]{0.30\textwidth}
\centering
\begin{tikzpicture}[scale=0.60,graphNode/.style={circle,fill=black,inner sep=1.35pt},graphEdge/.style={line width=0.6pt}]
  \node[graphNode] (q1) at (0,0) {};
  \node[graphNode] (q2) at (0.62,0) {};
  \node[graphNode] (q3) at (1.24,0) {};
  \node[graphNode] (q4) at (1.86,0) {};
  \node[graphNode] (q5) at (2.48,0) {};
  \draw[graphEdge] (q1)--(q2)--(q3)--(q4)--(q5);
\end{tikzpicture}
\caption{\(g_{\mathrm{4tree}}\): four-edge path}
\end{subfigure}
\hfill
\begin{subfigure}[t]{0.30\textwidth}
\centering
\begin{tikzpicture}[scale=0.60,graphNode/.style={circle,fill=black,inner sep=1.35pt},graphEdge/.style={line width=0.6pt}]
  \node[graphNode] (t1) at (0.70,0.45) {};
  \node[graphNode] (t2) at (1.35,0.45) {};
  \node[graphNode] (t3) at (0.10,1.00) {};
  \node[graphNode] (t4) at (0.10,-0.10) {};
  \node[graphNode] (t5) at (1.95,0.45) {};
  \draw[graphEdge] (t1)--(t2);
  \draw[graphEdge] (t1)--(t3);
  \draw[graphEdge] (t1)--(t4);
  \draw[graphEdge] (t2)--(t5);
\end{tikzpicture}
\caption{\(g_{\mathrm{4tree}}\): fork-shaped four-edge tree}
\end{subfigure}
\caption{Graph classes whose projection sums can be asymptotically leading
under the sparsity regimes considered in this paper. Panels~(a) and~(b)
display \(g_{\mathrm{dyad}}\) and \(g_{\text{2-star}}\), respectively;
panels~(c) and~(d) display the two shapes in \(g_{\mathrm{3tree}}\); and
panels~(e) and~(f) display the two shapes in \(g_{\mathrm{4tree}}\).}
\label{fig:projection_classes_main}
\end{figure}

The projection construction above expresses
\(\bs h_N(\bs\beta_0)\) in terms of conditionally orthogonal projection
sums \(\bs\Pi_N(g)\). To derive the asymptotic distribution of
\(\bs h_N(\bs\beta_0)\), we first identify which of these sums form the
leading approximation by comparing their second-moment orders.

\subsection{Regime-Specific Asymptotics of the Sample Moment}

We partition the unlabeled graph shapes arising from dyad subsets
\(A\subseteq E_{s,\nu}\subseteq E(s)\) into graph classes according to
the features that determine the second-moment orders of their projection
sums. Two shapes
belong to the same graph class exactly when they
have the same number of dyads, the same number of distinct nodes, and
the same minimum of \(\abs{E_{s,\nu}}\) over the square-free monomials
\(m_{s,\nu}({\bf L})\) in \eqref{eq:kernel_poly_rep} for which the
support \(E_{s,\nu}\) contains a labeled copy of the shape. Fix one such
graph class \(g\), and let \(r(g)\) and \(v(g)\) denote, respectively,
the common number of dyads and the common number of distinct nodes among
its shapes. Define the common minimum degree of a square-free monomial
whose support contains a labeled copy of a shape in \(g\) by
\(
m(g):=\min\{\abs{E_{s,\nu}}:\exists A\subseteq E_{s,\nu}
\text{ with }[G_A]\in g\},
\)
where the minimum ranges over the square-free monomials
\(m_{s,\nu}({\bf L})\) in \eqref{eq:kernel_poly_rep}.
Proposition~\ref{prop:classbound} in
Appendix~\ref{sec:regime_complete_asymptotic} shows that
\(\E\|\bs\Pi_N(g)\|^2
=O\big(N^{-v(g)}\rho_N^{2m(g)-r(g)}\big)\).
Table~\ref{tab:regimemap} lists all graph classes $g$ arising from dyad
subsets \(A\subseteq E_{s,\nu}\) and the corresponding
values of \(r(g)\), \(v(g)\), \(m(g)\), and
\(N^{-v(g)}\rho_N^{2m(g)-r(g)}\). 

\newcommand{\taxgraph}[2][]{%
\begin{tikzpicture}[
  baseline=(current bounding box.center),
  scale=0.33,
  taxNode/.style={circle,fill=black,inner sep=0.9pt,outer sep=0pt},
  taxEdge/.style={line width=0.6pt},
  #1
]
#2
\end{tikzpicture}%
}
\newcommand{\taxsep}{\hspace{0.18em}}

\begin{table}[H]
\centering
\caption{Complete list of graph classes and variance scales induced by dyad
subsets \(A\subseteq E_{s,\nu}\) in the projection decomposition. Here
\(E_{s,\nu}\) is the dyad support of the square-free monomial
\(m_{s,\nu}({\bf L})\) in \eqref{eq:kernel_poly_rep}.}
\label{tab:regimemap}
{\scriptsize
\renewcommand{\arraystretch}{1.25}%
\renewcommand{\tabularxcolumn}[1]{m{#1}}%
\setlength{\tabcolsep}{2.2pt}
\begin{tabularx}{\textwidth}{@{}>{\raggedright\arraybackslash}m{2.05cm}*{7}{>{\centering\arraybackslash}X}@{}}
\toprule
Graph class \(g\) &
\taxgraph{\node[taxNode] (a) at (0,0) {}; \node[taxNode] (b) at (1,0) {}; \draw[taxEdge] (a)--(b);} &
\taxgraph{\node[taxNode] (a) at (0,0) {}; \node[taxNode] (b) at (.7,.65) {}; \node[taxNode] (c) at (1.4,0) {}; \draw[taxEdge] (a)--(b)--(c);} &
\taxgraph{\node[taxNode] (a) at (0,.35) {}; \node[taxNode] (b) at (.9,.35) {}; \node[taxNode] (c) at (0,-.35) {}; \node[taxNode] (d) at (.9,-.35) {}; \draw[taxEdge] (a)--(b); \draw[taxEdge] (c)--(d);} &
\taxgraph{\node[taxNode] (a) at (0,.75) {}; \node[taxNode] (b) at (-.65,0) {}; \node[taxNode] (c) at (.65,0) {}; \draw[taxEdge] (a)--(b)--(c)--(a);} &
\mbox{\taxgraph{\node[taxNode] (a) at (0,0) {}; \node[taxNode] (b) at (.7,0) {}; \node[taxNode] (c) at (1.4,0) {}; \node[taxNode] (d) at (2.1,0) {}; \draw[taxEdge] (a)--(b)--(c)--(d);}
\taxsep
\taxgraph{\node[taxNode] (a) at (0,0) {}; \node[taxNode] (b) at (0,.8) {}; \node[taxNode] (c) at (-.7,-.45) {}; \node[taxNode] (d) at (.7,-.45) {}; \draw[taxEdge] (a)--(b); \draw[taxEdge] (a)--(c); \draw[taxEdge] (a)--(d);}} &
\taxgraph{\node[taxNode] (a) at (0,.35) {}; \node[taxNode] (b) at (.55,.75) {}; \node[taxNode] (c) at (1.1,.35) {}; \node[taxNode] (d) at (.1,-.35) {}; \node[taxNode] (e) at (1,-.35) {}; \draw[taxEdge] (a)--(b)--(c); \draw[taxEdge] (d)--(e);} &
\mbox{\taxgraph{\node[taxNode] (a) at (-.55,-.45) {}; \node[taxNode] (b) at (.55,.45) {}; \node[taxNode] (c) at (-.55,.45) {}; \node[taxNode] (d) at (.55,-.45) {}; \draw[taxEdge] (a)--(b); \draw[taxEdge] (b)--(c); \draw[taxEdge] (c)--(d); \draw[taxEdge] (d)--(a);}
\taxsep
\taxgraph{\node[taxNode] (a) at (0,.75) {}; \node[taxNode] (b) at (-.65,0) {}; \node[taxNode] (c) at (.65,0) {}; \node[taxNode] (d) at (1.35,0) {}; \draw[taxEdge] (a)--(b)--(c)--(a); \draw[taxEdge] (c)--(d);}} \\
\midrule
\(r(g)\) & \(1\) & \(2\) & \(2\) & \(3\) & \(3\) & \(3\) & \(4\) \\
\(v(g)\) & \(2\) & \(3\) & \(4\) & \(3\) & \(4\) & \(5\) & \(4\) \\
\(m(g)\) & \(4\) & \(4\) & \(4\) & \(5\) & \(4\) & \(4\) & \(5\) \\
variance scale & \(N^{-2}\rho_N^7\) & \(N^{-3}\rho_N^6\) &
\(N^{-4}\rho_N^6\) & \(N^{-3}\rho_N^7\) & \(N^{-4}\rho_N^5\) &
\(N^{-5}\rho_N^5\) & \(N^{-4}\rho_N^6\) \\
\bottomrule
\end{tabularx}

\vspace{1.2ex}

\begin{tabularx}{\textwidth}{@{}>{\raggedright\arraybackslash}m{2.05cm}*{6}{>{\centering\arraybackslash}X}@{}}
\toprule
Graph class \(g\) &
\mbox{\taxgraph{\node[taxNode] (a) at (0,0) {}; \node[taxNode] (b) at (.55,0) {}; \node[taxNode] (c) at (1.1,0) {}; \node[taxNode] (d) at (1.65,0) {}; \node[taxNode] (e) at (2.2,0) {}; \draw[taxEdge] (a)--(b)--(c)--(d)--(e);}
\taxsep
\taxgraph{\node[taxNode] (a) at (0,0) {}; \node[taxNode] (b) at (.65,0) {}; \node[taxNode] (c) at (-.55,.55) {}; \node[taxNode] (d) at (-.55,-.55) {}; \node[taxNode] (e) at (1.25,0) {}; \draw[taxEdge] (a)--(b); \draw[taxEdge] (a)--(c); \draw[taxEdge] (a)--(d); \draw[taxEdge] (b)--(e);}} &
\taxgraph{\node[taxNode] (a) at (0,0) {}; \node[taxNode] (b) at (0,.85) {}; \node[taxNode] (c) at (0,-.85) {}; \node[taxNode] (d) at (-.85,0) {}; \node[taxNode] (e) at (.85,0) {}; \draw[taxEdge] (a)--(b); \draw[taxEdge] (a)--(c); \draw[taxEdge] (a)--(d); \draw[taxEdge] (a)--(e);} &
\taxgraph{\node[taxNode] (a) at (-.75,0) {}; \node[taxNode] (b) at (0,.65) {}; \node[taxNode] (c) at (.75,0) {}; \node[taxNode] (d) at (0,-.65) {}; \draw[taxEdge] (a)--(b)--(c)--(d)--(a); \draw[taxEdge] (a)--(c);} &
\mbox{\taxgraph{\node[taxNode] (a) at (-.55,-.45) {}; \node[taxNode] (b) at (.55,.45) {}; \node[taxNode] (c) at (-.55,.45) {}; \node[taxNode] (d) at (.55,-.45) {}; \node[taxNode] (e) at (1.2,-.45) {}; \draw[taxEdge] (a)--(b); \draw[taxEdge] (b)--(c); \draw[taxEdge] (c)--(d); \draw[taxEdge] (d)--(a); \draw[taxEdge] (d)--(e);}
\taxsep
\taxgraph{\node[taxNode] (a) at (0,.65) {}; \node[taxNode] (b) at (-.55,0) {}; \node[taxNode] (c) at (.55,0) {}; \node[taxNode] (d) at (-.45,1.2) {}; \node[taxNode] (e) at (.45,1.2) {}; \draw[taxEdge] (a)--(b)--(c)--(a); \draw[taxEdge] (a)--(d); \draw[taxEdge] (a)--(e);}
\taxsep
\taxgraph{\node[taxNode] (a) at (0,.65) {}; \node[taxNode] (b) at (-.55,0) {}; \node[taxNode] (c) at (.55,0) {}; \node[taxNode] (d) at (-1.1,0) {}; \node[taxNode] (e) at (1.1,0) {}; \draw[taxEdge] (a)--(b)--(c)--(a); \draw[taxEdge] (b)--(d); \draw[taxEdge] (c)--(e);}} &
\mbox{\taxgraph{\node[taxNode] (a) at (-.45,-.45) {}; \node[taxNode] (b) at (.45,-.45) {}; \node[taxNode] (c) at (-.9,.55) {}; \node[taxNode] (d) at (0,.9) {}; \node[taxNode] (e) at (.9,.55) {}; \draw[taxEdge] (a)--(c); \draw[taxEdge] (a)--(d); \draw[taxEdge] (a)--(e); \draw[taxEdge] (b)--(c); \draw[taxEdge] (b)--(d); \draw[taxEdge] (b)--(e);}
\taxsep
\taxgraph{\node[taxNode] (a) at (-.65,0) {}; \node[taxNode] (b) at (0,.6) {}; \node[taxNode] (c) at (.65,0) {}; \node[taxNode] (d) at (0,-.6) {}; \node[taxNode] (e) at (-1.15,0) {}; \draw[taxEdge] (a)--(b)--(c)--(d)--(a); \draw[taxEdge] (a)--(c); \draw[taxEdge] (a)--(e);}} &
\taxgraph{\node[taxNode] (a) at (-.45,-.45) {}; \node[taxNode] (b) at (.45,-.45) {}; \node[taxNode] (c) at (-.9,.55) {}; \node[taxNode] (d) at (0,.9) {}; \node[taxNode] (e) at (.9,.55) {}; \draw[taxEdge] (a)--(b); \draw[taxEdge] (a)--(c); \draw[taxEdge] (a)--(d); \draw[taxEdge] (a)--(e); \draw[taxEdge] (b)--(c); \draw[taxEdge] (b)--(d); \draw[taxEdge] (b)--(e);} \\
\midrule
\(r(g)\) & \(4\) & \(4\) & \(5\) & \(5\) & \(6\) & \(7\) \\
\(v(g)\) & \(5\) & \(5\) & \(4\) & \(5\) & \(5\) & \(5\) \\
\(m(g)\) & \(4\) & \(5\) & \(6\) & \(5\) & \(6\) & \(7\) \\
variance scale & \(N^{-5}\rho_N^4\) & \(N^{-5}\rho_N^6\) &
\(N^{-4}\rho_N^7\) & \(N^{-5}\rho_N^5\) & \(N^{-5}\rho_N^6\) &
\(N^{-5}\rho_N^7\) \\
\bottomrule
\end{tabularx}
}
\end{table}

The variance scales in Table~\ref{tab:regimemap} show that only the four
graph classes displayed in Figure~\ref{fig:projection_classes_main} can
enter a leading approximation in at least one of the three sparsity regimes.
Using the one-dyad scale
\(N^{-2}\rho_N^7\) for \(\bs\Pi_N(g_{\mathrm{dyad}})\) as the benchmark,
the scales for
\(\bs\Pi_N(g_{\text{2-star}})\), \(\bs\Pi_N(g_{\mathrm{3tree}})\), and
\(\bs\Pi_N(g_{\mathrm{4tree}})\) are, respectively,
\((N\rho_N)^{-1}\), \((N\rho_N)^{-2}\), and
\((N\rho_N)^{-3}\) times the benchmark. Thus, when
\(N\rho_N\to\infty\), only
the one-dyad projection is retained in the leading approximation; when
\(\rho_N\asymp N^{-1}\), all four variance scales are of order
\(N^{-9}\), so all four projections are retained; and when
\(N\rho_N\to0\), the four-tree scale dominates the other three.
Accordingly, define the leading projection sums for the three cases by
\(\bs\Pi_N^{\mathrm{D}}:=\bs\Pi_N(g_{\mathrm{dyad}})\),
\(\bs\Pi_N^{\mathrm{S}}:=\bs\Pi_N(g_{\mathrm{dyad}})
+\bs\Pi_N(g_{\text{2-star}})+\bs\Pi_N(g_{\mathrm{3tree}})
+\bs\Pi_N(g_{\mathrm{4tree}})\), and
\(\bs\Pi_N^{\mathrm{US}}:=\bs\Pi_N(g_{\mathrm{4tree}})\).
Under Assumptions~\ref{ass:dyad_ind}, \ref{ass:bounded_design}, and
\ref{ass:sparse_envelope}, the order bounds in
Table~\ref{tab:regimemap} yield the leading approximations.
If \(N\rho_N\to\infty\), then
\(N\rho_N^{-7/2}\bs h_N(\bs\beta_0)
=N\rho_N^{-7/2}\bs\Pi_N^{\mathrm{D}}+o_p(1)\).
If \(\rho_N\asymp N^{-1}\), then
\(N^{9/2}\bs h_N(\bs\beta_0)
=N^{9/2}\bs\Pi_N^{\mathrm{S}}+o_p(1)\).
If \(N\rho_N\to0\), then
\(N^{5/2}\rho_N^{-2}\bs h_N(\bs\beta_0)
=N^{5/2}\rho_N^{-2}\bs\Pi_N^{\mathrm{US}}+o_p(1)\).

For these approximations to yield nontrivial Gaussian limits, the normalized
leading projection sum in each case must have a finite, nonzero limiting
covariance. Assumption~\ref{ass:regimeSigma} requires convergence of the
normalized conditional covariance of the relevant projection sum. The
positive-trace condition rules out the zero covariance matrix as a limit, but
does not require the limiting covariance matrix to be positive definite.

\begin{assumption}
\label{ass:regimeSigma}
\textnormal{(i)} If \(N\rho_N\to\infty\), there exists a positive
semidefinite matrix \(\bf\Sigma_{\mathrm{D}}\) such that
\(\Var(N\rho_N^{-7/2}\bs\Pi_N^{\mathrm{D}}\mid\bf X,\bs\Gamma)\xrightarrow{P}\bf\Sigma_{\mathrm{D}}\) and
\(\tr(\bf\Sigma_{\mathrm{D}})>0\). \textnormal{(ii)} If
\(\rho_N\asymp N^{-1}\), there exists a positive
semidefinite matrix \(\bf\Sigma_{\mathrm{S}}\) such that
\(\Var(N^{9/2}\bs\Pi_N^{\mathrm{S}}\mid\bf X,\bs\Gamma)\xrightarrow{P}\bf\Sigma_{\mathrm{S}}\) and
\(\tr(\bf\Sigma_{\mathrm{S}})>0\). \textnormal{(iii)} If \(N\rho_N\to0\),
there exists a positive semidefinite matrix
\(\bf\Sigma_{\mathrm{US}}\) such that
\(\Var(N^{5/2}\rho_N^{-2}\bs\Pi_N^{\mathrm{US}}\mid\bf X,\bs\Gamma)\xrightarrow{P}\bf\Sigma_{\mathrm{US}}\) and
\(\tr(\bf\Sigma_{\mathrm{US}})>0\).
\end{assumption}

\begin{thm}[Asymptotic normality of the sample moment]
\label{thm:regimeCLT}
Under Assumptions~\ref{ass:dyad_ind}, \ref{ass:bounded_design},
\ref{ass:sparse_envelope}, and \ref{ass:regimeSigma}, the sample moment
\(\bs h_N(\bs{\beta}_{0})\) has the following Gaussian limits.
\textnormal{(i)} In the dense or mildly sparse regime
(\(N\rho_N\to\infty\)),
\(N\rho_N^{-7/2}\bs h_N(\bs{\beta}_{0})\xrightarrow{D} N(0,\bf\Sigma_{\mathrm{D}})\).
\textnormal{(ii)} In the sparse regime (\(\rho_N\asymp N^{-1}\)),
\(N^{9/2}\bs h_N(\bs{\beta}_{0})\xrightarrow{D} N(0,\bf\Sigma_{\mathrm{S}})\).
\textnormal{(iii)} In the ultra-sparse regime
(\(N\rho_N\to0\) and \(N^5\rho_N^4\to\infty\)),
\(N^{5/2}\rho_N^{-2}\bs h_N(\bs{\beta}_{0})
\xrightarrow{D} N(0,\bf\Sigma_{\mathrm{US}})\).
\end{thm}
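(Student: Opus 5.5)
The proof has two steps: reduce \(\bs h_N(\bs\beta_0)\) to the leading projection sum of each regime, then prove a conditional central limit theorem for that sum given \((\bf X,\bs\Gamma)\). The conditional CLT uses a martingale CLT over dyads, or equivalently a fourth-moment (de Jong type) criterion for degenerate multilinear forms in the independent centered variables \(\{\xi_e\}\).

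\textbf{Reduction.} The reduction is stated just before Assumption~\ref{ass:regimeSigma}, and I take it from the decomposition \eqref{eq:main_projection_decomp}, conditional orthogonality, and the bound \(\E\|\bs\Pi_N(g)\|^2=O(N^{-v(g)}\rho_N^{2m(g)-r(g)})\) of Proposition~\ref{prop:classbound}. For each regime I compare every row of Table~\ref{tab:regimemap} with the leading scale (\(N^{-2}\rho_N^7\), \(N^{-9}\), or \(N^{-5}\rho_N^4\)) and check that the ratio vanishes.

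\begin{itemize}
\item In the dense regime every non-dyad ratio carries a factor \((N\rho_N)^{-1}\) or \(\rho_N\) times such a factor. For example, the triangle gives \(N^{-3}\rho_N^7/N^{-2}\rho_N^7=N^{-1}\).
\item In the ultra-sparse regime every other class contributes at least a factor \(N\rho_N\to0\) or \(\rho_N\) relative to \(N^{-5}\rho_N^4\). Here the condition \(N^5\rho_N^4\to\infty\) only ensures the normalization is meaningful.
\item In the sparse regime, the excluded classes are of order at most \(N^{-10}\).
\end{itemize}

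Since \(\E[\bs\phi_{A,N}\mid\bf X,\bs\Gamma]=0\) for nonempty \(A\), Chebyshev's inequality conditional on \((\bf X,\bs\Gamma)\) gives the \(o_p(1)\) remainders.

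\textbf{Conditional CLT.} Writing \(T_N\) for the normalized leading sum, I fix \(\bs t\) and apply Cramér--Wold to \(\bs t^\top T_N\). This is a sum of terms \(\bs t^\top\bs C_{A,N}\prod_{e\in A}\xi_e\) with \(|A|\le 4\) and \(A\) a tree. Conditionally it is a finite sum of homogeneous degenerate polynomial chaoses in independent, non-identically distributed variables. Order the dyads in \(\mathcal E_N\) arbitrarily. Then \(\bs t^\top T_N=\sum_e D_e\), where \(D_e\) collects the terms whose maximal dyad is \(e\), and this is a martingale difference array with respect to the filtration generated by the dyads.

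The martingale CLT (Hall--Heyde) requires two conditions.
\begin{enumerate}
\item A conditional Lindeberg condition, for which I use the Lyapunov bound \(\sum_e\E[D_e^4\mid\bf X,\bs\Gamma]\to_p0\).
\item Convergence of \(\sum_e\E[D_e^2\mid\mathcal F_{e-1}]\) to \(\bs t^\top\bf\Sigma\bs t\).
\end{enumerate}
For the second condition, Assumption~\ref{ass:regimeSigma} supplies the limit of the unconditional-in-\(\xi\) variance. It then remains to show that the fluctuation \(\sum_e(\E[D_e^2\mid\mathcal F_{e-1}]-\E D_e^2)\) has vanishing second moment. Equivalently, by de Jong's theorem for generalized quadratic/multilinear forms, it suffices that the conditional fourth cumulant of \(\bs t^\top T_N\) vanishes and that no single dyad has asymptotically non-negligible influence. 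When \(\Sigma\) is singular along \(\bs t\), the limit is degenerate and the claim follows from the vanishing variance. Convergence conditional on \((\bf X,\bs\Gamma)\) in probability then yields unconditional convergence by dominated convergence of characteristic functions.

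\textbf{Main obstacle.} The main difficulty is the fourth-moment and fluctuation bound, which requires a combinatorial counting of overlapping configurations. One must enumerate quadruples (for fourth moments) or pairs of pairs (for the variance of the conditional variance) of labeled trees \(A_1,\dots,A_4\) such that every dyad appears at least twice. For each, bound \(\E\prod_k|\bs C_{A_k,N}|\prod\E|\xi_e|^{m_e}\) using \(\E|\xi_e|^{m}\lesssim P_e\) and \(P_e\lesssim\rho_N\Lambda\)-type envelopes. The moments of \(\Lambda_{i,N}\) are controlled up to order 16 by Assumption~\ref{ass:sparse_envelope}, which is presumably why that exponent appears: a node can appear in up to about sixteen factors across four copies of a four-tree coefficient.

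The key bound to aim for is that each overlapping configuration has order strictly smaller than the square of the leading variance scale, by a factor \(N^{-1}\) or \((N\rho_N)^{\pm1}\) as appropriate to the regime. Joining configurations share fewer free nodes, and in the ultra-sparse case this is the delicate part, since trees sharing dyads lose node factors \(N\) faster than they gain \(\rho_N^{-1}\) factors. I would first verify this node/edge exponent accounting for the four-tree class, then treat the mixed classes in the sparse regime by the same bookkeeping.
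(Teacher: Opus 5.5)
Your reduction step and overall architecture match the paper: remainders via Proposition~\ref{prop:classbound} and conditional orthogonality, then a conditional CLT for degenerate multilinear forms in the $\xi_e$, Cram\'er--Wold, and unconditioning through characteristic functions. The CLT step as you plan it, however, has two concrete problems.

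\textbf{The de Jong criterion does not apply to the mixed-order sum.} Your ``equivalently, by de Jong's theorem'' step fails in the sparse regime, where $\bs t^\top T_N$ mixes orders $1$ through $4$. De Jong's fourth-moment criterion concerns a single homogeneous order. For a sum of different orders, a vanishing fourth cumulant together with negligible influences does not imply normality. For example, with $X\sim N(0,1)$, $F=\alpha X+X^3$ has $\kappa_4(F)=24(\alpha^3+18\alpha^2+135\alpha+405)$, which vanishes at $\alpha\approx-6.97$. Yet $F$ is not Gaussian, and it arises as the limit of small-influence Rademacher forms of orders one and three. The paper therefore proceeds component by component:
\begin{itemize}
\item it verifies the fourth-moment condition separately for each homogeneous component $g_{\mathrm{dyad}},g_{\text{2-star}},g_{\mathrm{3tree}},g_{\mathrm{4tree}}$ (Lemma~\ref{lem:sparse_dyadset_fourth});
\item it checks the maximal-influence condition through the rooted bound \eqref{eq:weighted_rooted_sum};
\item it obtains joint normality of the standardized components from the multivariate de Jong theorem of D\"obler and Peccati, using conditional orthogonality across orders (Lemma~\ref{lem:dejong_multilinear_form}).
\end{itemize}
If you keep the martingale route instead, you must genuinely bound $\Var\bigl(\sum_e\E[D_e^2\mid\mathcal F_{e-1}]\bigr)$, including the cross-order products inside $\E[D_e^2\mid\mathcal F_{e-1}]$. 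That is a separate computation, not a consequence of a single fourth-cumulant bound on the sum.

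\textbf{Your overlap bound is wrong in the ultra-sparse regime.} You also misplace the role of $N^5\rho_N^4\to\infty$. Since $m(g_{\mathrm{4tree}})=4$, a labeled four-tree has normalized coefficient of order $N^{5/2}\rho_N^{-2}\cdot N^{-5}$. The fully overlapping quadruples $A_1=A_2=A_3=A_4$ therefore contribute about $N^5\cdot(N^{-5/2}\rho_N^{-2})^4\cdot\rho_N^4=(N^5\rho_N^4)^{-1}$ to the fourth moment, against a squared variance of order one. The same term sits in your Lyapunov sum $\sum_e\E[D_e^4\mid\mathbf X,\bs\Gamma]$.

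More generally, a connected overlap pattern with total probability exponent $p\ge4$ is $O\bigl((N\rho_N)^{p-4}/(N^5\rho_N^4)\bigr)$. When $N\rho_N\to0$, sharing one more node and dyad trades a factor $N^{-1}$ for a factor $\rho_N^{-1}$, a net increase. Maximal overlap is thus the worst case, the opposite of your heuristic.

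These terms are therefore not smaller by a factor $N^{-1}$ or $N\rho_N$. For $\rho_N=N^{-5/4}\log N$ one has $(N^5\rho_N^4)^{-1}=(\log N)^{-4}$, whereas $N\rho_N=N^{-1/4}\log N$. They vanish precisely because $N^5\rho_N^4\to\infty$. That condition is the hypothesis driving the ultra-sparse CLT in the paper's Lemma~\ref{lem:sparse_dyadset_fourth}, not merely a condition that makes the normalization meaningful.
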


The leading projection approximations and
Theorem~\ref{thm:regimeCLT} show that network sparsity determines both the
normalization of the sample moment and the graph classes entering its limiting
covariance. Table~\ref{tab:boundaryregimes} summarizes the corresponding
sparsity conditions, leading graph classes, and variance orders.

\begin{table}[!ht]
  \centering
  \caption{Leading graph classes and variance orders in the three Gaussian
  regimes. Graph labels follow Figure~\ref{fig:projection_classes_main}.}
  \label{tab:boundaryregimes}
  {\small
  \renewcommand{\arraystretch}{1.22}%
  \setlength{\extrarowheight}{1.5pt}%
  \setlength{\tabcolsep}{0.45pt}%
  \renewcommand{\tabularxcolumn}[1]{m{#1}}%
  \begin{tabularx}{\textwidth}{@{}>{\centering\arraybackslash}m{3.50cm}
  @{\hspace{0.10cm}}>{\centering\arraybackslash}m{4.15cm}
  >{\centering\arraybackslash}m{2.30cm}
  >{\centering\arraybackslash}X
  >{\centering\arraybackslash}m{1.50cm}@{}}
  \toprule
  Regime & Condition & Leading graph classes & Graph shape(s) & \shortstack{Variance\\order} \\
  \midrule
  \mbox{Dense/mildly sparse} &
  \(N\rho_N\to\infty\) &
  \(g_{\mathrm{dyad}}\) &
  \makebox[\linewidth][c]{\taxgraph[scale=0.90]{\node[taxNode] (a) at (0,0) {}; \node[taxNode] (b) at (1,0) {}; \draw[taxEdge] (a)--(b);}} &
  \(N^{-2}\rho_N^7\) \\
  \addlinespace[0.55ex]
  \parbox[c][2.15cm][c]{\linewidth}{\centering Sparse} &
  \parbox[c][2.15cm][c]{\linewidth}{\centering \(\rho_N\asymp N^{-1}\)} &
  \parbox[c][2.15cm][c]{\linewidth}{\centering
  \shortstack[c]{\(g_{\mathrm{dyad}}\); \(g_{\text{2-star}}\)\\[1.75ex]
  \(g_{\mathrm{3tree}}\); \(g_{\mathrm{4tree}}\)}} &
  \parbox[c][2.15cm][c]{\linewidth}{\centering
  \raisebox{1.50ex}{%
  \begin{tabular}[c]{@{}c@{\hspace{0.12em}}c@{\hspace{0.12em}}c@{}}
  \taxgraph[scale=0.90]{\node[taxNode] (a) at (0,0) {}; \node[taxNode] (b) at (1,0) {}; \draw[taxEdge] (a)--(b);}
  &
  \taxgraph[scale=0.90]{\node[taxNode] (a) at (0,0) {}; \node[taxNode] (b) at (.7,.65) {}; \node[taxNode] (c) at (1.4,0) {}; \draw[taxEdge] (a)--(b)--(c);}
  &
  \taxgraph[scale=0.90]{\node[taxNode] (a) at (0,0) {}; \node[taxNode] (b) at (.7,0) {}; \node[taxNode] (c) at (1.4,0) {}; \node[taxNode] (d) at (2.1,0) {}; \draw[taxEdge] (a)--(b)--(c)--(d);}
  \\[1.05ex]
  \taxgraph[scale=0.90]{\node[taxNode] (a) at (0,0) {}; \node[taxNode] (b) at (0,.8) {}; \node[taxNode] (c) at (-.7,-.45) {}; \node[taxNode] (d) at (.7,-.45) {}; \draw[taxEdge] (a)--(b); \draw[taxEdge] (a)--(c); \draw[taxEdge] (a)--(d);}
  &
  \taxgraph[scale=0.90]{\node[taxNode] (a) at (0,0) {}; \node[taxNode] (b) at (.55,0) {}; \node[taxNode] (c) at (1.1,0) {}; \node[taxNode] (d) at (1.65,0) {}; \node[taxNode] (e) at (2.2,0) {}; \draw[taxEdge] (a)--(b)--(c)--(d)--(e);}
  &
  \taxgraph[scale=0.90]{\node[taxNode] (a) at (0,0) {}; \node[taxNode] (b) at (.65,0) {}; \node[taxNode] (c) at (-.55,.55) {}; \node[taxNode] (d) at (-.55,-.55) {}; \node[taxNode] (e) at (1.25,0) {}; \draw[taxEdge] (a)--(b); \draw[taxEdge] (a)--(c); \draw[taxEdge] (a)--(d); \draw[taxEdge] (b)--(e);}
  \end{tabular}}} &
  \parbox[c][2.15cm][c]{\linewidth}{\centering \(N^{-9}\)} \\
  \addlinespace[0.55ex]
  Ultra-sparse &
  \(N\rho_N\to0,\;N^5\rho_N^4\to\infty\) &
  \(g_{\mathrm{4tree}}\) &
  \makebox[\linewidth][c]{%
  \begin{tabular}[c]{@{}c@{\hspace{0.30em}}c@{}}
  \taxgraph[scale=0.90]{\node[taxNode] (a) at (0,0) {}; \node[taxNode] (b) at (.55,0) {}; \node[taxNode] (c) at (1.1,0) {}; \node[taxNode] (d) at (1.65,0) {}; \node[taxNode] (e) at (2.2,0) {}; \draw[taxEdge] (a)--(b)--(c)--(d)--(e);}
  &
  \taxgraph[scale=0.90]{\node[taxNode] (a) at (0,0) {}; \node[taxNode] (b) at (.65,0) {}; \node[taxNode] (c) at (-.55,.55) {}; \node[taxNode] (d) at (-.55,-.55) {}; \node[taxNode] (e) at (1.25,0) {}; \draw[taxEdge] (a)--(b); \draw[taxEdge] (a)--(c); \draw[taxEdge] (a)--(d); \draw[taxEdge] (b)--(e);}
  \end{tabular}} &
  \(N^{-5}\rho_N^4\) \\
  \bottomrule
  \end{tabularx}
  }
\end{table}
\FloatBarrier

The condition \(N^5\rho_N^4\to\infty\) in
Theorem~\ref{thm:regimeCLT}(iii) has an effective sample size interpretation.
The leading variation in this regime comes from configurations of five nodes
connected by four links (see Figure~\ref{fig:projection_classes_main}(e)--(f)).
The number of possible configurations is of order
\(N^5\), and a proportion of order \(\rho_N^4\) contains the required link
pattern. Thus, \(N^5\rho_N^4\) is the effective number of informative
configurations, and \(N^5\rho_N^4\to\infty\) is the corresponding effective
sample size condition for asymptotic normality in the ultra-sparse regime.
In the sparse Erd\H{o}s--R\'enyi model, the classical central limit condition
of \citet{rucinski1988when} for the count of a fixed tree with five nodes and
four edges reduces to \(N\rho_N^{4/5}\to\infty\), which is equivalent to
\(N^5\rho_N^4\to\infty\).

\subsection{Asymptotic Properties of the Pentad-GMM Estimator}
\label{sec:gmm_asymptotics}

We now use the limit theory for the sample moment to obtain the asymptotic
distribution of the pentad-GMM estimator.  The following conditions impose a
population limit, identification, the rank of the limiting Jacobian, and
weight convergence.

\begin{assumption}
\label{ass:consistency}
(i) There exists a deterministic continuous function
\(\bs h_0:\mathcal B\to\mathbb{R}^{q}\) such that
\(\rho_N^{-4}\E[\bs h_N(\bs{\beta})]\to\bs h_0(\bs{\beta})\)
for every \(\bs{\beta}\in\mathcal B\), and
\(\bs h_0(\bs{\beta})=0\iff \bs{\beta}=\bs{\beta}_{0}\).
(ii) The matrix
\(\dot{\bs h}_0:=
\partial\bs h_0(\bs\beta_0)/\partial\bs\beta^\top
\in\mathbb{R}^{q\times d}\)
has full column rank.
(iii) The GMM weight matrix $\widehat{\bf W}_N$ used in \eqref{eq:beta_gmm} satisfies
$\widehat{\bf W}_N\xrightarrow{P} \bf W$, where $\bf W$ is deterministic and positive definite.
\end{assumption}

Assumption~\ref{ass:consistency} collects standard GMM
regularity conditions; see \citet{hansen1982large} and
\citet{newey1994large}. Assumption~\ref{ass:consistency}(i) identifies
\(\bs\beta_0\) as the unique zero of the limiting moment. For the five-node
determinant moments entering \eqref{eq:sample_moment-multi}, the smallest
nonzero link degree is four, so the population moment has the natural scale
\(\rho_N^4\), which explains the factor \(\rho_N^{-4}\) in
Assumption~\ref{ass:consistency}(i). Assumption~\ref{ass:consistency}(ii), together with
the positive definiteness in
Assumption~\ref{ass:consistency}(iii), makes
\(\dot{\bs h}_0^\top\bf W\dot{\bs h}_0\) nonsingular for the local
linearization. Assumption~\ref{ass:consistency}(iii) accommodates a data-dependent
weight matrix. In our implementation, we use the inverse empirical-Gram weight
\(\widehat{\bf W}_N=
\{\abs{\mathcal S_N}^{-1}\sum_{s\in\mathcal S_N}
\bs Z_s\bs Z_s^\top\}^{-1}\). If
\(\E[\bs Z_s\bs Z_s^\top]\) is a fixed positive definite matrix, a law of
large numbers for \(U\)-statistics
(see \citealp[Section~5.4, Theorem~A]{serfling1980approximation}) implies
that the inverse
empirical-Gram weight satisfies Assumption~\ref{ass:consistency}(iii).

\begin{thm}[Asymptotic properties of Pentad-GMM]
\label{thm:gmm}
Suppose Assumptions~\ref{ass:dyad_ind}, \ref{ass:bounded_design},
\ref{ass:sparse_envelope}, \ref{ass:regimeSigma}, and
\ref{ass:consistency} hold, and
\(N^5\rho_N^4\to\infty\). For the regime-specific normalization and
covariance, set
\[
(a_N,\bf\Sigma)
=
\begin{cases}
(N\rho_N^{-7/2},\bf\Sigma_{\mathrm{D}}),
& N\rho_N\to\infty,\\
(N^{9/2},\bf\Sigma_{\mathrm{S}}),
& \rho_N\asymp N^{-1},\\
(N^{5/2}\rho_N^{-2},\bf\Sigma_{\mathrm{US}}),
& N\rho_N\to0 \ \text{and}\ N^5\rho_N^4\to\infty,
\end{cases}
\]
where \(\bf\Sigma_{\mathrm{D}},\bf\Sigma_{\mathrm{S}}\), and
\(\bf\Sigma_{\mathrm{US}}\) are the limits in
Assumption~\ref{ass:regimeSigma}. Let
\(\widehat{\bs{\beta}}_{\mathrm{GMM}}\) denote the minimizer in
\eqref{eq:beta_gmm}. 
Define
\(
\mathbf{V}_0
:=
\big(\dot{\bs h}_0^\top\bf W\dot{\bs h}_0\big)^{-1}
\dot{\bs h}_0^\top\bf W\bf\Sigma\bf W\dot{\bs h}_0
\big(\dot{\bs h}_0^\top\bf W\dot{\bs h}_0\big)^{-1}
\). Then
\[
a_N\rho_N^4(\widehat{\bs{\beta}}_{\mathrm{GMM}}-\bs{\beta}_{0})
\xrightarrow{D}N(0,\mathbf{V}_0).
\]
\end{thm}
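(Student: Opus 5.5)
The plan is the standard two-step GMM argument: consistency first, then a local linearization. The rescaled criterion is \(\widetilde Q_N(\bs\beta):=\rho_N^{-8}\bs h_N(\bs\beta)^\top\widehat{\bf W}_N\bs h_N(\bs\beta)\), and its natural limit is \(Q_0(\bs\beta):=\bs h_0(\bs\beta)^\top\bf W\bs h_0(\bs\beta)\).

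\emph{Step 1 (consistency).} I will show \(\sup_{\bs\beta\in\mathcal B}\|\rho_N^{-4}\bs h_N(\bs\beta)-\bs h_0(\bs\beta)\|\xrightarrow{P}0\). The argument splits into three parts.
\begin{enumerate}[label=(\alph*)]
\item Pointwise variance. For fixed \(\bs\beta\), the kernel \(\bs\Psi_s(\bs\beta)\) still has the square-free polynomial form \eqref{eq:kernel_poly_rep}, with coefficients bounded uniformly on \(\mathcal B\) by Assumption~\ref{ass:bounded_design}. The Hoeffding/projection bounds behind Table~\ref{tab:regimemap} (Proposition~\ref{prop:classbound}) do not use \(\bs\beta=\bs\beta_0\) beyond centering. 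Each class therefore has variance at most of order \(N^{-v}\rho_N^{2m-r}\). Dividing by \(\rho_N^{8}\), the worst case is \(N^{-5}\rho_N^{-4}\to0\), using \(N^5\rho_N^4\to\infty\).
\item Node-level fluctuation. The additional variation of \(\E[\bs h_N(\bs\beta)\mid\bf X,\bs\Gamma]\) around its mean is a \(U\)-statistic in the node characteristics. Its normalized variance is \(O(N^{-1})\); the moment condition \(\E\Lambda^{16}<\infty\) controls the \(\rho_N^{-4}\) scaling. Combined with Assumption~\ref{ass:consistency}(i), this gives pointwise convergence.
\item Uniformity. \(\rho_N^{-4}\bs h_N(\bs\beta)\) is Lipschitz in \(\bs\beta\) with an \(O_p(1)\) Lipschitz constant. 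The derivative of each \(\exp(\pm\bs X^\top\bs\beta)\) factor is bounded, and \(\rho_N^{-4}\) times the average of the relevant monomials is \(O_p(1)\) by the same first-moment calculation. Uniform convergence on the compact \(\mathcal B\) then follows.
\end{enumerate}
With \(\widehat{\bf W}_N\xrightarrow{P}\bf W\), uniqueness of the zero of \(\bs h_0\), and continuity, the standard argmin theorem (Newey–McFadden Thm.~2.1) gives \(\widehat{\bs\beta}_{\mathrm{GMM}}\xrightarrow{P}\bs\beta_0\).

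\emph{Step 2 (linearization).} Since \(\bs\beta_0\in\operatorname{int}(\mathcal B)\), with probability tending to one the first-order condition holds:
\[
\dot{\bs h}_N(\widehat{\bs\beta})^\top\widehat{\bf W}_N\bs h_N(\widehat{\bs\beta})=0 .
\]
A mean-value expansion of \(\bs h_N(\widehat{\bs\beta})\) around \(\bs\beta_0\), done row by row, then gives
\[
\rho_N^{-4}\dot{\bs h}_N(\widehat{\bs\beta})^\top\widehat{\bf W}_N\rho_N^{-4}\dot{\bs h}_N(\bar{\bs\beta})\,a_N\rho_N^4(\widehat{\bs\beta}-\bs\beta_0)=-\rho_N^{-4}\dot{\bs h}_N(\widehat{\bs\beta})^\top\widehat{\bf W}_N\,a_N\bs h_N(\bs\beta_0).
\]

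\emph{Step 3 (Jacobian and conclusion).} The needed fact is \(\sup_{\|\bs\beta-\bs\beta_0\|\le\delta_N}\|\rho_N^{-4}\dot{\bs h}_N(\bs\beta)-\dot{\bs h}_0\|\xrightarrow{P}0\). I expect this to be the main obstacle, because Assumption~\ref{ass:consistency}(i) only gives convergence of \(\rho_N^{-4}\E\bs h_N\), not of its derivative. The plan has three parts.
\begin{enumerate}[label=(\alph*)]
\item Apply the Step 1 argument to the derivative kernel \(\partial\bs\Psi_s/\partial\bs\beta^\top\), which has the same square-free polynomial structure with bounded coefficients. This yields uniform convergence of \(\rho_N^{-4}\dot{\bs h}_N\) to \(\rho_N^{-4}\partial\E\bs h_N\).
\item Identify that limit with \(\dot{\bs h}_0\). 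The functions \(\rho_N^{-4}\E\bs h_N(\bs\beta)\) have equicontinuous, uniformly bounded derivatives: the second derivatives are bounded by the same \(\rho_N^{4}\)-scale first-moment bound. So by Arzelà–Ascoli their derivatives converge to \(\partial\bs h_0/\partial\bs\beta^\top\).
\item Conclude. With this, \(\rho_N^{-4}\dot{\bs h}_N\to\dot{\bs h}_0\) at both \(\widehat{\bs\beta}\) and \(\bar{\bs\beta}\), and \(\dot{\bs h}_0^\top\bf W\dot{\bs h}_0\) is invertible.
\end{enumerate}
Theorem~\ref{thm:regimeCLT} gives \(a_N\bs h_N(\bs\beta_0)\xrightarrow{D}N(0,\bf\Sigma)\) in each regime. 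Slutsky's lemma and the sandwich form then yield \(a_N\rho_N^4(\widehat{\bs\beta}_{\mathrm{GMM}}-\bs\beta_0)\xrightarrow{D}N(0,\mathbf V_0)\).

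In all three regimes \(a_N\) is the square root of the inverse leading variance order. It is therefore consistent with the \(\rho_N^4\) Jacobian scale, and no separate regime treatment is needed beyond invoking the corresponding case of Theorem~\ref{thm:regimeCLT}.
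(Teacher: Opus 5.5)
Your proposal is correct and follows essentially the same route as the paper: uniform convergence of \(\rho_N^{-4}\bs h_N\) and local uniform convergence of \(\rho_N^{-4}\dot{\bs h}_N\) (the content of Lemma~\ref{lem:uniform-convergence-population}), then argmin consistency, then the first-order condition with a mean-value expansion, and finally Slutsky combined with Theorem~\ref{thm:regimeCLT}. The differences are minor. You bound the centered fluctuation by the law of total variance (dyad-level projections plus a node-level \(U\)-statistic), whereas the paper directly counts pentad pairs sharing \(v\) nodes, each with covariance \(O(\rho_N^{9-v})\). You also use Arzel\`a--Ascoli where the paper proves convergence of the derivatives via difference quotients in Lemma~\ref{lem:uniform-convergence-differentiation}(ii).
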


The normalization in Theorem~\ref{thm:gmm} implies that
\(
\widehat{\bs{\beta}}_{\mathrm{GMM}}-\bs{\beta}_{0}
=
O_p\big((N^2\rho_N)^{-1/2}\big)
\)
in the dense/mildly sparse and sparse regimes, whereas
\(
\widehat{\bs{\beta}}_{\mathrm{GMM}}-\bs{\beta}_{0}
=
O_p\big((N^5\rho_N^4)^{-1/2}\big)
\)
in the ultra-sparse regime. Since the average link probability is \(\rho_N\),
the average expected degree is \((N-1)\rho_N\), and the expected total number
of links is of order \(N^2\rho_N\). Thus, in both the dense or mildly sparse
regime and the sparse regime, the convergence rate \(N\rho_N^{1/2}\) is of the
same order as the square root of the expected number of links. It is of order
\(N\) when \(\rho_N\) is bounded away from zero and \(\sqrt{N}\) when
\(\rho_N\asymp N^{-1}\), with the intermediate rate \(N\rho_N^{1/2}\) in mildly
sparse networks. The \(N\)-rate in dense
networks and the \(\sqrt{N}\)-rate in sparse networks coincide with those
obtained for the TU tetrad-logit estimator by \citet{graham2017econometric}.
Although the convergence rates coincide, the leading variance components differ.
In the sparse regime, the tetrad-logit estimator in \cite{graham2017econometric} is driven by the
one-dyad projection, whereas our pentad-GMM estimator generally requires
higher-order projections as well.
In the ultra-sparse regime, the rate
\(N^{5/2}\rho_N^2=(N^5\rho_N^4)^{1/2}\) is the square root of the effective
sample size \(N^5\rho_N^4\), which corresponds to the number of informative
five-node, four-link configurations discussed after
Theorem~\ref{thm:regimeCLT}.

Although Theorem~\ref{thm:gmm} yields different normalizations and covariance
limits across the three regimes, the same inference procedure can be used in
all three, provided that a single covariance estimator is consistent under
every regime.

\begin{cor}[Unified inference]
\label{cor:generic_gmm_inference}
Under the conditions of Theorem~\ref{thm:gmm}, let
\(\widehat{\bf\Omega}_N\) be a covariance estimator that does
not depend on the sparsity regime and satisfies, under each of the three
regimes in Theorem~\ref{thm:gmm},
\(a_N^2\widehat{\bf\Omega}_N\xrightarrow{P}\bf\Sigma\).
Set \(\widehat{\dot{\bs h}}_N
:=\partial\bs h_N(\widehat{\bs\beta}_{\mathrm{GMM}})/
\partial\bs\beta^\top\) and
\(\widehat{\mathbf{V}}_N
:=(\widehat{\dot{\bs h}}_N^\top\widehat{\bf W}_N
\widehat{\dot{\bs h}}_N)^{-1}\allowbreak
\widehat{\dot{\bs h}}_N^\top\widehat{\bf W}_N\allowbreak
\widehat{\bf\Omega}_N\allowbreak
\widehat{\bf W}_N\widehat{\dot{\bs h}}_N\allowbreak
(\widehat{\dot{\bs h}}_N^\top\widehat{\bf W}_N
\widehat{\dot{\bs h}}_N)^{-1}\).
Then \(a_N^2\rho_N^8\widehat{\mathbf{V}}_N\xrightarrow{P}\mathbf{V}_0\). Consequently,
for every fixed \(\bs c\in\mathbb R^d\) such that
\(\bs c^\top \mathbf{V}_0\bs c>0\),
\[
\frac{\bs c^\top(\widehat{\bs\beta}_{\mathrm{GMM}}-\bs\beta_0)}
{\sqrt{\bs c^\top\widehat{\mathbf{V}}_N\bs c}}
\xrightarrow{D}N(0,1).
\]
\end{cor}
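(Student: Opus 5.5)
The plan is to reduce the corollary to Theorem~\ref{thm:gmm} and the assumed consistency \(a_N^2\widehat{\bf\Omega}_N\xrightarrow{P}\bf\Sigma\) through a rescaled Jacobian argument, and then apply the continuous mapping theorem and Slutsky's lemma. The central normalization is that the Jacobian of the sample moment has scale \(\rho_N^4\), matching the factor \(\rho_N^{-4}\) in Assumption~\ref{ass:consistency}(i). The first and main step is therefore to show
\[
\rho_N^{-4}\widehat{\dot{\bs h}}_N\xrightarrow{P}\dot{\bs h}_0 .
\]
This step splits into three claims.

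\begin{enumerate}
\item \emph{Pointwise convergence of the rescaled Jacobian.} For each fixed \(\bs\beta\) in a neighborhood of \(\bs\beta_0\), the derivative \(\partial\bs h_N(\bs\beta)/\partial\bs\beta^\top\) is again a pentad \(U\)-statistic. Its kernel is \(\bs Z_s\,\partial\psi_s(\bs\beta)/\partial\bs\beta^\top\), which by Lemma~\ref{lem:poly} is a square-free polynomial in the links of minimal degree four. Its coefficients are smooth in \(\bs\beta\) and bounded under Assumption~\ref{ass:bounded_design}. The projection bound of Proposition~\ref{prop:classbound} applies verbatim to this kernel, because it uses only the polynomial structure, bounded coefficients and Assumption~\ref{ass:sparse_envelope}. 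It gives a centered variance of order at most \(a_N^{-2}\), and \(a_N^{-2}=o(\rho_N^8)\) because \(N^5\rho_N^4\to\infty\). Hence the rescaled derivative minus its mean is \(o_p(1)\). For the mean, I differentiate the limit in Assumption~\ref{ass:consistency}(i) under the expectation. This is justified because \(\rho_N^{-4}\E[\bs h_N(\bs\beta)]\) has uniformly bounded derivatives in \(\bs\beta\): each monomial has expectation \(O(\rho_N^{4})\) by the \(\Lambda\)-moment bound. The limit of the derivatives is then \(\partial\bs h_0/\partial\bs\beta^\top\).
\item \emph{Stochastic equicontinuity.} The second derivatives of \(\rho_N^{-4}\bs h_N\) are bounded in probability uniformly on a neighborhood of \(\bs\beta_0\). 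The same monomial structure gives this, since the number of monomials is fixed and each is dominated by bounded coefficient functions times link products with mean \(O(\rho_N^4)\).
\item \emph{Plugging in the estimator.} Theorem~\ref{thm:gmm} gives consistency of \(\widehat{\bs\beta}_{\mathrm{GMM}}\). Combined with the first two claims, this yields the displayed convergence.
\end{enumerate}

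I expect this Jacobian step to be the main obstacle. The difficulty is the uniformity argument under vanishing \(\rho_N\), where one must control suprema of rescaled sparse \(U\)-statistics. I would try the crude route first: bound the sup of the second derivative by a single nonnegative \(U\)-statistic with monomial kernel (absolute values of coefficients) and apply Markov's inequality.

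Given this, write
\[
a_N^2\rho_N^8\widehat{\mathbf V}_N
=(\widehat D^\top\widehat{\bf W}_N\widehat D)^{-1}\widehat D^\top\widehat{\bf W}_N\,(a_N^2\widehat{\bf\Omega}_N)\,\widehat{\bf W}_N\widehat D\,(\widehat D^\top\widehat{\bf W}_N\widehat D)^{-1},
\]
where \(\widehat D:=\rho_N^{-4}\widehat{\dot{\bs h}}_N\). The \(\rho_N^{\pm4}\) factors cancel exactly: the two outer inverses contribute \(\rho_N^{8}\cdot\rho_N^{8}\) and the middle factor \(\rho_N^{-8}\). By Assumption~\ref{ass:consistency}(ii)--(iii), \(\dot{\bs h}_0^\top\bf W\dot{\bs h}_0\) is nonsingular, so matrix inversion is continuous there. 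The continuous mapping theorem with \(\widehat D\xrightarrow{P}\dot{\bs h}_0\), \(\widehat{\bf W}_N\xrightarrow{P}\bf W\) and \(a_N^2\widehat{\bf\Omega}_N\xrightarrow{P}\bf\Sigma\) then gives \(a_N^2\rho_N^8\widehat{\mathbf V}_N\xrightarrow{P}\mathbf V_0\). This argument is the same in all three regimes, because only the generic pair \((a_N,\bf\Sigma)\) enters.

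For the final display, fix \(\bs c\) with \(\bs c^\top\mathbf V_0\bs c>0\) and write the ratio as
\[
\frac{a_N\rho_N^4\,\bs c^\top(\widehat{\bs\beta}_{\mathrm{GMM}}-\bs\beta_0)}{\sqrt{\bs c^\top(a_N^2\rho_N^8\widehat{\mathbf V}_N)\bs c}} .
\]
By Theorem~\ref{thm:gmm} and the Cram\'er--Wold device, the numerator converges in distribution to \(N(0,\bs c^\top\mathbf V_0\bs c)\). By the previous step, the denominator converges in probability to \(\sqrt{\bs c^\top\mathbf V_0\bs c}>0\). Slutsky's lemma then gives the \(N(0,1)\) limit. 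Finally, \(a_N\) and \(\rho_N\) cancel in the ratio, so the statistic is feasible without knowledge of the regime.
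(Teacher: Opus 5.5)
Your overall structure coincides with the paper's proof, and your last two paragraphs are correct: obtain $\rho_N^{-4}\widehat{\dot{\bs h}}_N\xrightarrow{P}\dot{\bs h}_0$, absorb the $\rho_N^{\pm4}$ and $a_N$ factors as you do, then finish with the continuous mapping theorem and Slutsky's lemma. The paper does not reprove the Jacobian step. It cites \eqref{eq:gmm_jacobian_limits}, already established under the conditions of Theorem~\ref{thm:gmm} via Lemma~\ref{lem:uniform-convergence-population]}. Your claims~2 and~3 (a Markov bound on the supremum of second derivatives, then plugging in a consistent estimator) are how that lemma proceeds.

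The step that fails as written is the variance argument in claim~1. Proposition~\ref{prop:classbound} bounds only the dyad-level projections $\bs\phi_{A,N}$ with $A\neq\varnothing$, taken conditionally on $\mathcal O_N$. For $\bs h_N(\bs\beta_0)$ the empty-set term vanishes by the conditional moment restriction, which is why \eqref{eq:exactdecomp} starts at $\abs{A}=1$. The derivative kernel $\bs Z_s\,\partial\psi_s(\bs\beta_0)/\partial\bs\beta^\top$ satisfies no such restriction. Its conditional mean $\E[\dot{\bs h}_N(\bs\beta_0)\mid\mathcal O_N]$ has nonzero expectation $\rho_N^4\{\dot{\bs h}_0+o(1)\}$ and is a random order-five $U$-statistic in the i.i.d.\ node characteristics. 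Whether you center conditionally or unconditionally, its fluctuation is not controlled by Proposition~\ref{prop:classbound}.

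This matters quantitatively. The variance of that node-level term comes from pentad pairs sharing a single node. It is of order $N^{-1}\rho_N^8$ whenever its first node-level projection is nondegenerate, which exceeds $a_N^{-2}=N^{-2}\rho_N^{7}$ by the factor $N\rho_N\to\infty$ in the dense regime. So your claimed $O(a_N^{-2})$ bound is false there. The conclusion you need still holds, since $N^{-1}\rho_N^8=o(\rho_N^8)$, but it requires a separate argument.

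The paper's route, \eqref{eq:moment_centered_second_moment}, bounds the unconditional variance directly by grouping pentad pairs by the number $v$ of shared nodes. Pairs with $v=0$ are independent. Pairs with $v\ge1$ have covariance $O(\rho_N^{9-v})$, because the union of two connected monomial supports is connected on $10-v$ nodes. This gives $\rho_N^{-8}\E\norm{\dot{\bs h}_N(\bs\beta_0)-\E[\dot{\bs h}_N(\bs\beta_0)]}^2\lesssim\sum_{v=1}^{5}N^{-v}\rho_N^{1-v}\to0$ under $N^5\rho_N^4\to\infty$.

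A smaller point: to differentiate the limit in Assumption~\ref{ass:consistency}(i), you need uniform bounds on the \emph{second} derivatives of $\rho_N^{-4}\E[\bs h_N]$, which make the first derivatives equicontinuous, as in Lemma~\ref{lem:uniform-convergence-differentiation}(ii). Bounded first derivatives alone do not justify interchanging the limit and the derivative.
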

Corollary~\ref{cor:generic_gmm_inference} is stated for a generic covariance
estimator. Appendix~\ref{sec:feasible_covariance} provides a tractable
regime-adaptive covariance estimator
\(\bOmegaUniv(\widehat{\bs\beta}_{\mathrm{GMM}})\) and verifies the required
covariance consistency under all three regimes. This yields unified
inference without preliminary estimation of \(\rho_N\) or regime selection.

\section{Implementation}
\label{sec:implementation}

This section describes the practical implementation of our estimator,
including the construction of the instruments, the normalization of the GMM
criterion using pivotal-pair moments, and an algorithm that reduces the
computational cost of evaluating the pentad moments.

The conditional restriction in \eqref{eq:moment_cond-free} implies, by iterated
expectations, that any bounded instrument \(\bs Z_s=z(\bf X_s)\) is valid:
\(\E[\bs Z_s\psi_s(\bs\beta_0)]=\mathbf 0\).\footnote{The usual optimal
instrument based on the conditional expectation of the derivative of the moment function 
\citep{chamberlain1987asymptotic} requires knowledge of the conditional
distribution of the individual fixed effects given the observed covariates.
We avoid estimating this distribution. Substituting preliminary estimates of
all \(N\) individual fixed effects into the derivative would instead produce a
generated instrument and require a separate asymptotic analysis.}
We construct an instrument \(\bs Z_s\) to exploit the relabeling properties of
the pentad kernel
\(\psi_{ijklm}(\bs\beta)=\det(\widetilde{\mathbf M}_{ijklm}(\bs\beta))\).
Lemma~\ref{lem:pivsym} shows that this kernel is
invariant under interchange of the pivotal nodes \(i\) and \(j\) and
alternating under permutations of the peripheral nodes \(k,l,m\). Because
\(\bs h_N(\bs\beta)\) averages over all ordered pentads, the contributions from
odd and even peripheral permutations cancel whenever \(\bs Z_s\) is invariant
under permutations of \(k,l,m\), so that
\(\bs h_N(\bs\beta)=\mathbf 0\) for every \(\bs\beta\). Such an instrument cannot
satisfy the identification condition in Assumption~\ref{ass:consistency}(i). More
generally, only the part of an instrument that is alternating under peripheral
permutations contributes to \(\bs h_N(\bs\beta)\); all remaining parts cancel in
the ordered average. We therefore choose \(\bs Z_s\) to be alternating under
permutations of \(k,l,m\) and invariant under interchange of \(i\) and \(j\). A
determinant-based construction naturally provides both properties, as shown
below.

For the construction below, suppose \(d\geq2\), and let
\(X_{uv,1},\ldots,X_{uv,d}\) denote the coordinates of
\(\bs X_{uv}\). To include interactions between covariate coordinates, for
each pair \(1\leq a<b\leq d\), define their normalized sum and difference by
\[
X_{uv}^{ab,+}:=
\frac{X_{uv,a}+X_{uv,b}}{\sqrt 2},
\qquad
X_{uv}^{ab,-}:=
\frac{X_{uv,a}-X_{uv,b}}{\sqrt 2}.
\]
For two scalar dyadic arrays \(\mathbf{A}=(A_{uv})_{u\ne v}\) and
\(\mathbf{B}=(B_{uv})_{u\ne v}\), define the determinant difference
\[
\mathcal D_{ij;klm}(\mathbf{A},\mathbf{B})
:=
\det
\begin{pmatrix}
1 & A_{ik} & B_{jk}\\
1 & A_{il} & B_{jl}\\
1 & A_{im} & B_{jm}
\end{pmatrix}
-
\det
\begin{pmatrix}
1 & B_{ik} & A_{jk}\\
1 & B_{il} & A_{jl}\\
1 & B_{im} & A_{jm}
\end{pmatrix}.
\]
Interchanging any two of \(k,l,m\) changes the sign of both determinants and
hence of \(\mathcal D_{ij;klm}(\mathbf{A},\mathbf{B})\).
After interchanging \(i\) and \(j\), the first determinant equals the negative
of the original second determinant, while the second equals the negative of
the original first. Hence
\(\mathcal D_{ji;klm}(\mathbf{A},\mathbf{B})=\mathcal D_{ij;klm}(\mathbf{A},\mathbf{B})\).
For either sign, write \(\mathbf{X}^{ab,\pm}:=(X_{uv}^{ab,\pm})_{u\ne v}\) for the
corresponding scalar dyadic array and define its pointwise square by
\((\mathbf{X}^{ab,\pm})^2:=((X_{uv}^{ab,\pm})^2)_{u\ne v}\).
For each \(1\leq a<b\leq d\), define the three-dimensional block
\[
\bs Z_s^{ab}
:=
\Bigl(
\mathcal D_{ij;klm}\bigl(\mathbf{X}^{ab,+},(\mathbf{X}^{ab,+})^2\bigr),
\mathcal D_{ij;klm}\bigl(\mathbf{X}^{ab,-},(\mathbf{X}^{ab,-})^2\bigr),
\mathcal D_{ij;klm}\bigl(\mathbf{X}^{ab,+},\mathbf{X}^{ab,-}\bigr)
\Bigr)^\top.
\]
Stacking these blocks gives
\begin{equation}
\label{eq:covariate_only_instrument}
\bs Z_s
:=
\Bigl(
(\bs Z_s^{12})^\top,\ldots,(\bs Z_s^{1d})^\top,
(\bs Z_s^{23})^\top,\ldots,(\bs Z_s^{2d})^\top,
\ldots,(\bs Z_s^{(d-1)d})^\top
\Bigr)^\top
\in\mathbb R^{3d(d-1)/2}.
\end{equation}
Because each entry of \(\bs Z_s\) is a determinant difference of the form
\(\mathcal D_{ij;klm}(\cdot,\cdot)\), \(\bs Z_s\) changes sign whenever two of
\(k,l,m\) are interchanged and is unchanged when \(i\) and \(j\) are
interchanged, as required.
Within each block \(\bs Z_s^{ab}\), the first two entries pair the normalized
sum and difference with their respective squares, while the third forms a
cross term between the two. Thus
\eqref{eq:covariate_only_instrument} contains three instruments for each of the
\(\binom d2\) coordinate pairs, for a total of \(3d(d-1)/2\).

To improve finite-sample stability, we normalize the GMM criterion by the
average weighted squared norm of the pivotal-pair moments, with a ridge term
that keeps the denominator positive.
For \(i<j\), define the average moment for pivotal pair \(\{i,j\}\) by
\begin{equation}
\bs m_{ij,N}(\bs\beta)
:=\frac{1}{(N-2)(N-3)(N-4)}
\sum_{k,l,m\in[N]\setminus\{i,j\}:\,k,l,m\text{ distinct}}
\bs\Psi_{(i,j,k,l,m)}(\bs\beta).
\label{eq:pivotal_pair_moment}
\end{equation}
Set \(\widehat{\rho}_N:=\binom{N}{2}^{-1}\sum_{i<j}L_{ij}+N^{-2}\) and
\(\widehat c_N:=\widehat{\rho}_N^7+N^{-1}\widehat{\rho}_N^6+N^{-2}\widehat{\rho}_N^5+N^{-3}\widehat{\rho}_N^4\).
Here \(\widehat{\rho}_N\) is the empirical link density with an \(N^{-2}\) adjustment
that keeps it strictly positive even when no links are observed, and
\(\widehat c_N\) is a strictly positive data-driven ridge term constructed from
\(\widehat{\rho}_N\). The normalized GMM criterion is
\begin{equation}
Q_N^\dagger(\bs\beta)
:=
\frac{\bs h_N(\bs\beta)^\top\widehat{\bf W}_N\bs h_N(\bs\beta)}
{\binom{N}{2}^{-1}\sum_{i<j}
\bs m_{ij,N}(\bs\beta)^\top\widehat{\bf W}_N
\bs m_{ij,N}(\bs\beta)
+\widehat c_N}.
\label{eq:pair_normalized_criterion}
\end{equation}
Because \(\bs\Psi_s(\mathbf 0)=0\) for every pentad,
\(\bs h_N(\mathbf 0)=\bs m_{ij,N}(\mathbf 0)=0\). Consequently, the
unnormalized criterion in \eqref{eq:beta_gmm} may become artificially small
near the excluded origin $\bs 0$ and lead the numerical optimizer toward values close
to zero. The average weighted squared norm of the pivotal-pair moments in
the denominator of
\eqref{eq:pair_normalized_criterion} also tends to zero as
\(\bs\beta\) approaches \(\mathbf 0\) and, like the numerator, is quadratic in
averages of the same pentad moments. This normalization reduces the tendency of the criterion to fall simply
because the pentad moments shrink near the origin, thereby improving
numerical stability.
The ridge \(\widehat c_N\) adds a strictly
positive term to the denominator in finite samples. Its magnitude is calibrated
to the sparsity-dependent scale of
\(\binom{N}{2}^{-1}\sum_{i<j}
\bs m_{ij,N}(\bs\beta)^\top\widehat{\bf W}_N
\bs m_{ij,N}(\bs\beta)\), keeping the normalization well behaved across the
three sparsity regimes; see the proof of
Lemma~\ref{lem:pair_normalization} for the derivation.
Related rescalings appear
in fixed-effects binary-choice estimation. \citet{honore2024moment} normalize
each fixed-effect-free moment to bound the moment and its gradient uniformly
and improve small-sample GMM performance; the
conditional-likelihood scores in \citet{honore2000panel} have a similar form.

\begingroup
\emergencystretch=2em
The estimator used in our simulation study and empirical application is
\(\widehat{\bs\beta}^{\dagger}\in
\arg\min_{\bs\beta\in\mathcal B}Q_N^\dagger(\bs\beta)\).
Lemma~\ref{lem:pair_normalization} shows that the resulting estimator
\(\widehat{\bs\beta}^{\dagger}\) is asymptotically equivalent to the
pentad-GMM estimator \(\widehat{\bs\beta}_{\mathrm{GMM}}\) in all three regimes, i.e.,
\(a_N\rho_N^4(\widehat{\bs\beta}^{\dagger}
-\widehat{\bs\beta}_{\mathrm{GMM}})\xrightarrow{P}0\), where \(a_N\) is the regime-specific normalization factor defined in
Theorem~\ref{thm:gmm}. Consequently, under each regime, the scaled
estimators \(a_N\rho_N^4(\widehat{\bs\beta}^{\dagger}-\bs\beta_0)\) and
\(a_N\rho_N^4(\widehat{\bs\beta}_{\mathrm{GMM}}-\bs\beta_0)\) both
converge in distribution to \(N(0,\mathbf{V}_0)\).
In the simulation study and empirical application, we refer to
\(\widehat{\bs\beta}^{\dagger}\) as the pentad-GMM estimator.
\par\endgroup

Brute-force evaluation of \eqref{eq:sample_moment-multi} over all ordered
five-node configurations requires \(O(N^5)\) operations. The resulting
computational burden increases rapidly with network size \(N\). However, exploiting
the determinant structure reduces the computational complexity from
\(O(N^5)\) to \(O(N^3)\). The reduction begins by fixing an ordered pivotal
pair \((i,j)\) and ordering the remaining nodes by their labels.
Fix \(t\in\{1,\ldots,q\}\). The \(t\)-th coordinate of
\(\bs\Psi_s(\bs\beta)\) is \(Z_{s,t}\psi_s(\bs\beta)\). Both factors are
alternating under permutations of \(k,l,m\), so their product is invariant.
Hence, for any three distinct peripheral nodes, the six ways of assigning them
to the roles \((k,l,m)\) produce the same value of
\(Z_{s,t}\psi_s(\bs\beta)\). The sum over ordered peripheral triples therefore
equals six times the sum over triples with \(k<l<m\). After expanding the
determinants defining \(Z_{s,t}\) and \(\psi_s(\bs\beta)\), the contribution of
\((i,j)\) to the \(t\)th coordinate of \(\bs h_N(\bs\beta)\) is a linear
combination of a fixed number of sums of the form
\begin{equation}
\label{eq:cumulative_triple_sum}
\sum_{k<l<m}x_k y_l z_m
=
\sum_l\left(\sum_{k<l}x_k\right)y_l
\left(\sum_{m>l}z_m\right).
\end{equation}
Because \(Z_{s,t}\) does not depend on \(\bs\beta\), differentiating
\(\psi_s(\bs\beta)\) with respect to \(\bs\beta\) likewise expresses every
entry in the \(t\)th row of \(\dot{\bs h}_N(\bs\beta)\) as a linear
combination of a fixed number of sums \(\sum_{k<l<m}x_k y_l z_m\).
All prefix sums \(\sum_{k<l}x_k\) can be computed in \(O(N)\) operations, and
all suffix sums \(\sum_{m>l}z_m\) can likewise be computed in \(O(N)\)
operations; computing both collections therefore costs \(O(N)\) operations in
total. Substituting these precomputed sums into
\eqref{eq:cumulative_triple_sum} leaves a single sum over \(l\), which requires
another \(O(N)\) operations. Thus, each triple sum of the form in
\eqref{eq:cumulative_triple_sum} can be evaluated in \(O(N)\) operations.
Since \(q\) and \(d\) are fixed and each coordinate
involves only a fixed number of such terms, the contributions of a fixed
ordered pair \((i,j)\) to \(\bs h_N(\bs\beta)\) and
\(\dot{\bs h}_N(\bs\beta)\) can both be evaluated in \(O(N)\) operations.
There are \(N(N-1)=O(N^2)\) ordered pivotal pairs \((i,j)\), so evaluating both
\(\bs h_N(\bs\beta)\) and \(\dot{\bs h}_N(\bs\beta)\) requires
\(O(N^3)\) operations.
The resulting \(O(N^3)\) algorithm substantially improves the practical
feasibility of pentad-GMM for large networks.

Appendix~\ref{sec:simulation} evaluates the finite-sample behavior of the
implementation described in this section. Across dense and sparse networks
with symmetric and asymmetric dyadic covariates, root mean squared error (RMSE) declines as \(N\)
increases, while marginal coverage based on the universal covariance estimator
is broadly close to its nominal level. These findings support the practical
feasibility of the proposed estimation and inference procedure, although
sampling variability remains larger in the sparse designs.

\section{Empirical Illustration}
\label{sec:empirical}

This section applies the pentad-GMM estimator to study the
\emph{Peer Relationships among Master of Social Work Students} data
\citep{mauldin2020peer}, a longitudinal study of one entering master of social work (MSW) class.
The data combine four waves of network surveys with demographic and academic
records. We focus on the wave-4 academic-discussion network and examine how
academic similarity, shared cohort membership, and a potential partner's
universal--diverse orientation are associated with link formation after allowing
for individual-specific unobserved heterogeneity. We compare the estimates under NTU  logit models 
with those under TU logit models (\citealp{graham2017econometric}).

\subsection{Data and Variables}

The data cover MSW students at a large public university in the southern
United States. Students were surveyed at orientation in July--August 2014,
twice during their first semester in October and November--December 2014, and
at the end of the Fall 2015 semester in December 2015. At waves 2--4,
respondents used a roster of 145 students to report academic-discussion,
friendship, and professional-influence links. We use the wave-4 academic
discussion network as the outcome. Students separately nominate classmates
with whom they report studying, discussing coursework, exchanging feedback on
assignments, or asking questions about homework. We construct an
undirected measure of these bilateral interactions. For each pair \(i<j\), we
record \(L_{ij}=1\) if at least one student reports academic discussion with
the other and set \(L_{ij}=0\) otherwise. The data show only who reported
the interaction, but we do not observe who initiated or proposed it.
The final sample contains 109 students across six educational cohorts with complete
wave-4 academic-discussion information, grade point average (GPA), and scores on the
15-item short form of the Miville--Guzman Universality--Diversity Scale
(M-GUDS-S; hereafter MGUDS). These students form
5,886 unordered pairs, of which 1,246 are academic-discussion links, yielding a
link density of 0.2117.

GPA is calculated over the most recent 60 credit hours and measures prior
academic performance. MGUDS scores measure universal--diverse orientation.
The scale covers diversity of contact,
appreciation of similarities and differences, and comfort with differences.
Higher scores indicate a stronger universal--diverse orientation, reflecting
greater openness to engaging with people from different backgrounds and
perspectives \citep{fuertes2000factor}. Raw GPA and MGUDS values are measured on different
scales and are difficult to compare directly. In the subsequent analysis, we therefore use inverse-normal
rank scores \citep[see][for a similar transformation]{karadja2017richer}, which preserve the ordering of
observations within each measure and map the two measures onto a common
normal-score scale.
Table~\ref{tab:msw-estimation-summary} reports descriptive statistics for the
variables used in the analysis.

\begin{table}[H]
\centering
\caption{Sample summary statistics}
\label{tab:msw-estimation-summary}
\footnotesize
\begin{threeparttable}
\begin{tabular}{lrrrrr}
\toprule
Variable & Observations & Mean & SD & Min & Max \\
\midrule
\multicolumn{6}{l}{\textit{Panel A: Student characteristics}}\\
Raw GPA & 109 & 3.459 & 0.330 & 2.630 & 4.000 \\
Raw MGUDS score & 109 & 72.128 & 6.948 & 54.000 & 89.000 \\
GPA inverse-normal rank & 109 & -0.004 & 0.985 & -2.605 & 1.919 \\
MGUDS inverse-normal rank & 109 & 0.000 & 0.996 & -2.605 & 2.605 \\
\addlinespace
\multicolumn{6}{l}{\textit{Panel B: Network and dyadic characteristics}}\\
Academic-discussion link & 5,886 & 0.212 & 0.409 & 0.000 & 1.000 \\
GPA inverse-normal rank distance & 5,886 & 1.123 & 0.824 & 0.000 & 4.524 \\
MGUDS inverse-normal rank distance & 5,886 & 1.130 & 0.840 & 0.000 & 5.211 \\
Same cohort & 5,886 & 0.170 & 0.375 & 0.000 & 1.000 \\
\bottomrule
\end{tabular}
\par\smallskip
\begin{minipage}{\linewidth}
\footnotesize
\textit{Notes:} SD denotes standard deviation. ``Raw'' and ``inverse-normal rank'' denote values before and after the inverse-normal rank transformation, respectively.
\end{minipage}
\end{threeparttable}
\end{table}

\subsection{Results and Discussion}

We consider three specifications of the TU model in \eqref{eq:tu-model}
and the NTU model in \eqref{mod:mod2}, all of which include
GPA inverse-normal rank distance and a same-cohort indicator.
The TU specification and the symmetric
NTU specification use MGUDS inverse-normal rank distance to examine whether students with similar universal--diverse
orientations are more likely to form academic-discussion links.
We also examine whether students with higher MGUDS are more attractive
discussion partners. In the TU model, however, additive MGUDS terms
are absorbed by the individual fixed effects, so their coefficient
cannot be separately identified. This limitation motivates the
asymmetric NTU specification, in which partner MGUDS inverse-normal
rank \(\mathrm{MGUDS}_j\) enters student \(i\)'s utility.
Since \(\mathrm{MGUDS}_j\) varies across potential partners,
it cannot be absorbed by student \(i\)'s fixed effect \(\Gamma_i\).
The resulting regressors are
\begingroup
\renewcommand{\arraystretch}{1.1}
\[
\begin{aligned}
\bs X_{ij}^{\mathrm{TU}}=\bs X_{ij}^{\mathrm{NTU,S}}
&:=\begin{pmatrix}
\abs{\mathrm{MGUDS}_i-\mathrm{MGUDS}_j}\\
\abs{\mathrm{GPA}_i-\mathrm{GPA}_j}\\
\1\{C_i=C_j\}
\end{pmatrix},\  
\bs X_{ij}^{\mathrm{NTU,A}}
:=\begin{pmatrix}
\mathrm{MGUDS}_j\\
\abs{\mathrm{GPA}_i-\mathrm{GPA}_j}\\
\1\{C_i=C_j\}
\end{pmatrix},
\end{aligned}
\]
\endgroup
where \(\mathrm{GPA}_i\) and \(\mathrm{MGUDS}_i\) denote student \(i\)'s
inverse-normal rank scores, and \(C_i\) denotes the student's educational cohort.

For the TU specification, we use the tetrad-logit estimator of
\citet{graham2017econometric} and its dyad-projection covariance. We estimate
both NTU specifications using the procedures in Section~\ref{sec:implementation}.
With three regressors,
\eqref{eq:covariate_only_instrument} stacks three determinant blocks into nine
moment conditions. As the GMM objective function may have local optima, we
follow \citet{davezies2023fixed}, use 200 random starting values, and retain the
parameter vector that yields the lowest value of
\eqref{eq:pair_normalized_criterion}.

\begin{table}[!b]
\centering
\caption{TU and NTU estimates of academic-discussion link formation}
\label{tab:empirical-pentad-gmm}
\footnotesize
\renewcommand{\arraystretch}{1.1}
\begin{threeparttable}
\begin{tabular}{l@{\hspace{1.5em}}ccc}
\toprule
Variable & TU tetrad logit & \multicolumn{2}{c}{NTU pentad-GMM} \\
\cmidrule(lr){2-2}\cmidrule(l){3-4}
& (1) & (2) & (3) \\
\midrule
Partner MGUDS inverse-normal rank
    & -- & -- & \(0.352^{***}\) \\
    & & & \((0.052)\) \\
\addlinespace[0.2em]
MGUDS inverse-normal rank distance
    & \(0.036\) & \(-0.134\) & -- \\
    & \((0.076)\) & \((0.188)\) & \\
\addlinespace[0.2em]
GPA inverse-normal rank distance
    & \(-0.091\) & \(0.366\) & \(-0.339^{**}\) \\
    & \((0.078)\) & \((0.321)\) & \((0.144)\) \\
\addlinespace[0.2em]
Same cohort
    & \(3.538^{***}\) & \(3.298^{***}\) & \(2.508^{**}\) \\
    & \((0.159)\) & \((0.522)\) & \((1.134)\) \\
\bottomrule
\end{tabular}
\par\smallskip
\begin{minipage}{\linewidth}
\footnotesize
\textit{Notes:} Standard errors are in parentheses. GPA and MGUDS ranks use the
inverse-normal transformation. NTU standard errors use the
positive-semidefinite projection of the universal covariance estimator in
\eqref{eq:universalOmega}; TU standard errors use Graham's dyad-projection
covariance estimator. Significance levels are denoted by \(^{***}p<0.01\),
\(^{**}p<0.05\), and \(^{*}p<0.10\).
\end{minipage}
\end{threeparttable}
\end{table}

Columns (1) and (2) of Table~\ref{tab:empirical-pentad-gmm} compare the TU
and symmetric NTU models using the same regressors. Both show that students
in the same cohort are more likely to form academic-discussion links.
Their same-cohort coefficients are
3.538 and 3.298,
respectively, both significant at the \(1\%\) level, whereas neither GPA nor
MGUDS rank distance is statistically significant. In column (3), however,
the asymmetric NTU specification provides evidence of strong academic homophily
alongside a positive association between partner MGUDS rank and willingness
to form discussion links. These findings emerge when partner MGUDS rank
replaces MGUDS rank distance. The coefficient on partner MGUDS rank is 0.352,
significant at the \(1\%\) level. The GPA
rank-distance and same-cohort coefficients are \(-0.339\) and 2.508,
respectively, both significant at the \(5\%\) level.

Together, these findings suggest that students value both a partner's openness
to different perspectives and similarity in academic achievement. Students who are more
receptive to different perspectives may be more approachable partners for
asking questions and exchanging feedback. This interpretation accords with
evidence linking MGUDS scores to teamwork interest and aptitude among college
students \citep{kottke2011additional}.

\section{Conclusion}
\label{sec:conclusion}

This paper develops a pentad differencing strategy for logistic NTU network
formation models with individual fixed effects. Using observed links and
covariates from five-node pentads, we construct moment restrictions that do
not depend on these fixed effects. We then show that, for a broad class of
covariate specifications, this five-node, seven-dyad construction is minimal
and unique up to relabeling, and the corresponding moment restrictions are unique up to rescaling.

These moment restrictions form the basis of the pentad-GMM estimator. A
Hoeffding decomposition over dyads shows how the leading projection changes
with network sparsity, giving rise to different convergence rates and
asymptotic variances. Under the stated conditions, we establish asymptotic
normality in dense, sparse, and ultra-sparse networks. We also propose a
tractable covariance estimator that is consistent under the appropriate
normalization in all three regimes, allowing for unified inference without prior knowledge of sparsity regimes.
For implementation, we provide a computationally efficient algorithm to reduce
the computational cost of the estimator from na\"ive \(O(N^5)\) to \(O(N^3)\)
operations.

Finally, we apply our method to an academic-discussion network among Master
of Social Work students. The asymmetric NTU estimates show academic homophily and a positive association between partner openness to
different perspectives and link formation. These findings illustrate the
empirical value of our method. Future work could extend the proposed differencing strategy to network models
with strategic link interdependence and to other economic settings. 

\clearpage
\appendix
\centerline{\textbf{\LARGE{}Appendices}}

\section{Graph Notation and Terminology}
\label{app:graph_notation}

We collect the graph notation used in the appendices.
For a reference on graph terminology in network econometrics, see
\citet{graham2020network}. A finite simple
undirected graph \(G=(V,E)\) consists of a finite node set \(V\) and a
set \(E\) of unordered dyads joining distinct nodes, with no repeated
dyads. For nodes labeled in \([N]\), recall that
\(\mathcal E_N=\{(u,v):1\le u<v\le N\}\).
When a dyad is written as \((u,v)\), it denotes the ordered representative
\((\min\{u,v\},\max\{u,v\})\in\mathcal E_N\).
For \(A\subseteq E\), let \(d_A(u)\) denote the number of dyads in
\(A\) containing node \(u\). In particular, \(d_E(u)\) is the degree
of \(u\) in \(G=(V,E)\). A node of degree zero is isolated.

A subgraph of \(G=(V,E)\) has a node set contained in \(V\) and a
dyad set contained in \(E\). This containment allows additional dyads
of \(G\) among the selected nodes.
A \emph{path} consists of distinct nodes with a dyad joining each consecutive
pair. A graph is \emph{connected} if any two of its nodes can be joined by a
path. A \emph{connected component} of \(G\) is a subgraph
\(G_j=(V_j,E_j)\), with a nonempty node set \(V_j\subseteq V\) and
\(E_j=\{(u,v)\in E:u,v\in V_j\}\), satisfying two properties:
any two nodes in \(V_j\) can be joined by a path using dyads in
\(E_j\), and no dyad in \(E\) joins a node in \(V_j\) to a node in
\(V\setminus V_j\). An isolated node forms a connected component
by itself. A \emph{cycle} in \(G\) consists of dyads
\(\{(u_1,u_2),\ldots,(u_{\ell-1},u_\ell),(u_\ell,u_1)\}\subseteq E\),
where \(\ell\geq3\) and \(u_1,\ldots,u_\ell\) are distinct nodes.
A \emph{tree} is a connected graph with no cycle.
A nonempty connected graph has at least \(\lvert V\rvert-1\) dyads,
with equality if and only if it is a tree; a tree has a unique path
between any two distinct nodes.

The \emph{complement} of \(G\) has the same node set \(V\) and contains
exactly the dyads absent from \(E\).
For graphs \(G_1=(V_1,E_1)\) and \(G_2=(V_2,E_2)\), their \emph{union} has
node set \(V_1\cup V_2\) and dyad set \(E_1\cup E_2\). Their \emph{intersection}
has node set \(V_1\cap V_2\) and dyad set \(E_1\cap E_2\).
Every node belonging to both graphs is therefore retained in the
intersection. If no dyad containing that node belongs to both graphs,
the node is isolated in the intersection.

Two graphs \(G=(V,E)\) and \(G'=(V',E')\) are \emph{isomorphic},
written as \(G\cong G'\), if there exists a bijection from \(V\) to \(V'\)
such that two nodes are adjacent in \(G\) if and only if their images
are adjacent in \(G'\). A \emph{labeled copy} of a
graph is an isomorphic graph on specified node labels. A copy is
determined by its node set and dyad set and is counted once, regardless
of node ordering.
For any dyad set \(A\subseteq\mathcal E_N\), define
\(V(A):=\{i\in[N]: i\text{ is incident to some }e\in A\}\) and
\(G_A:=(V(A),A)\). Thus \(G_A\) has no isolated nodes, whereas a graph
whose node set is specified separately may include isolated nodes.
For a bijection \(\pi:V(A)\to V(B)\), write
\[
\pi(A):=\{(\min\{\pi(u),\pi(v)\},\max\{\pi(u),\pi(v)\}):(u,v)\in A\}.
\]
Then, formally,
\[
G_A\cong G_B
\quad\Longleftrightarrow\quad
\exists\text{ a bijection }\pi:V(A)\to V(B)
\text{ such that }\pi(A)=B.
\]
Let \([G_A]:=\{G_B:G_B\cong G_A\}\) denote this unlabeled isomorphism
class. Thus \([G_A]\) encodes only the shape of the dyad set \(A\),
not the numerical labels of its nodes.
For a finite graph class \(g\), meaning a finite collection of unlabeled
dyad-graph isomorphism classes, define the labeled dyad-set collection
\(\mathscr C_N(g):=\{A\subseteq\mathcal E_N:[G_A]\in g\}\).
Thus an element \(A\in\mathscr C_N(g)\) is called a labeled copy
of one of the shapes collected in \(g\).

In this paper, the term \emph{pentad} refers specifically to the graph on five distinct nodes
\(i,j,k,l,m\) with dyad set
\(\{(i,j),(i,k),\allowbreak(j,k),(i,l),(j,l),\allowbreak(i,m),(j,m)\}\), up to relabeling.
The nodes \(i,j\) are the pivot nodes, and \(k,l,m\) are the peripheral nodes.

\section{Tractable Covariance Estimator}
\label{sec:feasible_covariance}

We construct a single covariance estimator for the sample moment vector
\(\bs h_N(\bs\beta_0)\) that is consistent, under the appropriate
scaling, in each of the three regimes in
Theorems~\ref{thm:regimeCLT} and~\ref{thm:gmm}. The population target is
\(\Var\{\bs h_N(\bs\beta_0)\mid\bf X,\bs\Gamma\}\). Theorem~\ref{thm:moment_restriction}
and the \(\bf X\)-measurability of \(\bs Z_s\) give
\(\E[\bs\Psi_s(\bs\beta_0)\mid\bf X,\bs\Gamma]=\boldsymbol 0\) for every
\(s\in\mathcal S_N\), and hence
\begin{equation}
\label{eq:target_covariance}
\Var\{\bs h_N(\bs\beta_0)\mid\bf X,\bs\Gamma\}
=
\frac{1}{\abs{\mathcal S_N}^2}
\sum_{s,t\in\mathcal S_N}
\E\!\left[
\bs\Psi_s(\bs\beta_0)\bs\Psi_t(\bs\beta_0)^\top
\mid\bf X,\bs\Gamma
\right].
\end{equation}
When \(E(s)\cap E(t)=\varnothing\), conditional independence and
\(\E[\bs\Psi_s(\bs\beta_0)\mid\bf X,\bs\Gamma]=\boldsymbol 0\)
imply \(\E[\bs\Psi_s(\bs\beta_0)\bs\Psi_t(\bs\beta_0)^\top
\mid\bf X,\bs\Gamma]=\mathbf 0\). Hence, only pentad
pairs (\(s,t\)) sharing at least one dyad can contribute to the conditional covariance.

A natural starting point for estimation is to sum the moment products
\(\bs\Psi_s(\bs\beta_0)\bs\Psi_t(\bs\beta_0)^\top\) over pairs with
\(E(s)\cap E(t)\neq\varnothing\). Since
\(\abs{\mathcal S_N}\asymp N^5\), checking every ordered pentad pair
would involve \(O(N^{10})\) comparisons. To reduce the computational cost,
we first group pentads by a shared dyad. Recall that
\(\mathcal E_N=\{(u,v):1\le u<v\le N\}\) denotes the set of all dyads.
For each \(e\in\mathcal E_N\), we sum the moments of the pentads containing
\(e\) and multiply the sum by its transpose. Summing over
\(e\in\mathcal E_N\) and expanding gives
\(\sum_{e\in\mathcal E_N}
\left(\sum_{s\in\mathcal S_N:\,e\in E(s)}\bs\Psi_s(\bs\beta_0)\right)\allowbreak
\left(\sum_{t\in\mathcal S_N:\,e\in E(t)}\bs\Psi_t(\bs\beta_0)\right)^\top
\allowbreak=
\sum_{s,t\in\mathcal S_N}\abs{E(s)\cap E(t)}\allowbreak
\bs\Psi_s(\bs\beta_0)\bs\Psi_t(\bs\beta_0)^\top\).
Thus, summing over \(e\in\mathcal E_N\) counts each pentad pair \((s,t)\)
a total of \(\abs{E(s)\cap E(t)}\) times, whereas each ordered pair appears
only once in the target covariance \eqref{eq:target_covariance}.

We correct the repeated counting by extending the grouping to shared
subgraphs other than dyads. For example, for each triangle, we sum the
moments of all pentads containing that triangle and multiply the sum by its
transpose. Summing these matrices over all triangles includes the moment
product \(\bs\Psi_s(\bs\beta_0)\bs\Psi_t(\bs\beta_0)^\top\) once for
each triangle common to the two pentad graphs. Thus, the number of times
\(\bs\Psi_s(\bs\beta_0)\bs\Psi_t(\bs\beta_0)^\top\)
appears in the sum over all triangles equals the number of triangles
shared by the pentad graphs of \(s\) and \(t\). We apply this construction
to other subgraph types. For each type, the matrix sum counts each pair
\((s,t)\) once for every copy of that type shared by the pentad graphs
of \(s\) and \(t\). All these counts are determined by the
graph formed by their shared nodes and dyads, which we call their
\emph{common graph}.
For \(s=(i,j,k,l,m)\in\mathcal S_N\), write \(V_s:=\{i,j,k,l,m\}\), so
\((V_s,E(s))\) is its pentad graph and
\((V_s\cap V_t,E(s)\cap E(t))\) is the common graph of \(s\) and \(t\).
The common graph retains all shared nodes, including those incident to
no shared dyad.
Adding or subtracting the matrix sums therefore assigns each pair a weight
equal to the same signed combination of subgraph counts in its common graph.

When the common graph of a pentad pair is a tree, subtracting its dyad
count from its node count gives exactly the desired weight of one.
When the common graph contains cycles, additional corrections are needed;
for pentad graphs, these corrections can be expressed through triangle,
four-cycle, and diamond counts. We therefore
use the following five graphs, collected in \(\mathcal G\):
\mbox{\(G_{\mathrm{node}}\)
(\taxgraph[baseline=-0.5ex]{\node[taxNode] at (0,0) {};})},
\mbox{\(G_{\mathrm{dyad}}\)
(\taxgraph[baseline=-0.5ex]{
  \node[taxNode] (a) at (-.6,0) {};
  \node[taxNode] (b) at (.6,0) {};
  \draw[taxEdge] (a)--(b);
})},
\mbox{\(G_{\mathrm{triangle}}\)
(\taxgraph[baseline=-0.5ex]{
  \node[taxNode] (a) at (-.55,-.36) {};
  \node[taxNode] (b) at (.55,-.36) {};
  \node[taxNode] (c) at (0,.45) {};
  \draw[taxEdge] (a)--(b)--(c)--(a);
})},
\mbox{\(G_{\text{4-cycle}}\)
(\taxgraph[baseline=-0.5ex]{
  \node[taxNode] (a) at (-.55,-.4) {};
  \node[taxNode] (b) at (-.55,.4) {};
  \node[taxNode] (c) at (.55,.4) {};
  \node[taxNode] (d) at (.55,-.4) {};
  \draw[taxEdge] (a)--(b)--(c)--(d)--(a);
})}, and
\mbox{\(G_{\mathrm{diamond}}\)
(\taxgraph[baseline=-0.5ex]{
  \node[taxNode] (a) at (-.7,0) {};
  \node[taxNode] (b) at (0,.48) {};
  \node[taxNode] (c) at (.7,0) {};
  \node[taxNode] (d) at (0,-.48) {};
  \draw[taxEdge] (a)--(b)--(c)--(d)--(a);
  \draw[taxEdge] (a)--(c);
})}.
For \(G\in\mathcal G\) and \(s,t\in\mathcal S_N\), let
\(\zeta_G(s,t)\) denote the number of labeled copies of \(G\) contained in
their common graph \((V_s\cap V_t,E(s)\cap E(t))\); e.g.,
\(\zeta_{G_{\mathrm{dyad}}}(s,t)=\abs{E(s)\cap E(t)}\).
For example, under the subgraph convention in
Appendix~\ref{app:graph_notation}, a
\mbox{diamond
(\taxgraph[baseline=-0.5ex]{
  \node[taxNode] (a) at (-.7,0) {};
  \node[taxNode] (b) at (0,.48) {};
  \node[taxNode] (c) at (.7,0) {};
  \node[taxNode] (d) at (0,-.48) {};
  \draw[taxEdge] (a)--(b)--(c)--(d)--(a);
  \draw[taxEdge] (a)--(c);
})} contains a
\mbox{four-cycle
(\taxgraph[baseline=-0.5ex]{
  \node[taxNode] (a) at (-.55,-.4) {};
  \node[taxNode] (b) at (-.55,.4) {};
  \node[taxNode] (c) at (.55,.4) {};
  \node[taxNode] (d) at (.55,-.4) {};
  \draw[taxEdge] (a)--(b)--(c)--(d)--(a);
})} despite its extra diagonal dyad.

We choose these five graphs because the signed sum of their
counts assigns total weight one to every pentad pair \((s,t)\in\mathcal S_N^2\) whose
common graph \((V_s\cap V_t,E(s)\cap E(t))\) is connected and contains a dyad, as shown in the proof of
Lemma~\ref{lem:graph_count_covariance_approximation}:
\[
\zeta_{G_{\mathrm{node}}}(s,t)
-\zeta_{G_{\mathrm{dyad}}}(s,t)
+\zeta_{G_{\mathrm{triangle}}}(s,t)
+\zeta_{G_{\text{4-cycle}}}(s,t)
-\zeta_{G_{\mathrm{diamond}}}(s,t)=1.
\]
Pairs with no common dyad have zero conditional covariance, even if they
share nodes. For pairs whose common graph is disconnected and contains
a dyad, the signed combination need not assign weight one, but the
resulting aggregate error in conditional mean is negligible under the
scaling for each regime in Table~\ref{tab:boundaryregimes}, as shown in
the proof of Lemma~\ref{lem:graph_count_covariance_approximation}.

To construct the covariance estimator, we use the subgraph counts
\(\zeta_G(s,t)\) to weight products of pentad moments. We center each
moment at the sample mean by setting
\(\widehat{\bs\Psi}_s(\bs\beta)
:=\bs\Psi_s(\bs\beta)-\bs h_N(\bs\beta)\).
For each \(G\in\mathcal G\), define the matrix
\begin{equation}
\label{eq:graph_count_component}
\widehat{\bf\Omega}_N(G;\bs\beta)
:=\frac{1}{\abs{\mathcal S_N}^2}
\sum_{s,t\in\mathcal S_N}
\zeta_G(s,t)
\widehat{\bs\Psi}_s(\bs\beta)
\widehat{\bs\Psi}_t(\bs\beta)^\top.
\end{equation}
By the definition of \(\zeta_G(s,t)\), the pairwise sum in
\eqref{eq:graph_count_component} can be rearranged exactly as
\begin{equation}
\label{eq:graph_count_gram}
\widehat{\bf\Omega}_N(G;\bs\beta)
=
\frac{1}{\abs{\mathcal S_N}^2}
\sum_{\text{labeled copies }H\text{ of }G}
\left(
\sum_{s\in\mathcal S_N:\,H\subseteq(V_s,E(s))}
\widehat{\bs\Psi}_s(\bs\beta)
\right)
\left(
\sum_{t\in\mathcal S_N:\,H\subseteq(V_t,E(t))}
\widehat{\bs\Psi}_t(\bs\beta)
\right)^\top.
\end{equation}
Thus, we evaluate \eqref{eq:graph_count_component} by summing the centered
pentad moments for each labeled copy \(H\) of \(G\), multiplying each sum
by its transpose, and summing over \(H\). This computes
\(\widehat{\bf\Omega}_N(G;\bs\beta)\) without enumerating ordered pentad
pairs and shows that \(\widehat{\bf\Omega}_N(G;\bs\beta)\) is positive semidefinite.

We combine these five matrices to define the tractable universal covariance
estimator
\begin{equation}
\label{eq:universalOmega}
\begin{aligned}
\bOmegaUniv(\bs\beta)
&:=\widehat{\bf\Omega}_N(G_{\mathrm{node}};\bs\beta)
-\widehat{\bf\Omega}_N(G_{\mathrm{dyad}};\bs\beta)
+\widehat{\bf\Omega}_N(G_{\mathrm{triangle}};\bs\beta)\\
&\quad+\widehat{\bf\Omega}_N(G_{\text{4-cycle}};\bs\beta)
-\widehat{\bf\Omega}_N(G_{\mathrm{diamond}};\bs\beta)
+\frac{10}{N}\sum_{G\in\mathcal G}
\widehat{\bf\Omega}_N(G;\bs\beta).
\end{aligned}
\end{equation}
In \eqref{eq:universalOmega}, the dyad and diamond matrices enter with
negative signs, and \(\bOmegaUniv(\bs\beta)\) need not be positive
semidefinite. The term
\((10/N)\sum_{G\in\mathcal G}\widehat{\bf\Omega}_N(G;\bs\beta)\)
is a finite-sample correction that is asymptotically negligible under each
regime-specific scaling, as shown in the proof of
Lemma~\ref{lem:graph_count_covariance_approximation}.
The following theorem establishes the asymptotic validity of inference
based on the covariance estimator in \eqref{eq:universalOmega}.

\begin{thm}[Pentad-GMM inference]
\label{thm:tractable_gmm}
Suppose Assumptions~\ref{ass:dyad_ind}, \ref{ass:bounded_design},
\ref{ass:sparse_envelope}, \ref{ass:regimeSigma}, and
\ref{ass:consistency} hold, and
\(N^5\rho_N^4\to\infty\). Let \((a_N,\bf\Sigma)\) be defined as in
Theorem~\ref{thm:gmm} for the regime under consideration. For every fixed vector
\(\bs{c}\in\mathbb R^q\) such that
\(\bs{c}^\top\bf\Sigma \bs{c}>0\),
\[
\frac{\bs{c}^\top\bs h_N(\bs\beta_0)}
{\sqrt{\bs{c}^\top\bOmegaUniv(\bs\beta_0)\bs{c}}}
\xrightarrow{D}N(0,1).
\]
Define \(\widehat{\mathbf{V}}_N\) by the formula in
Corollary~\ref{cor:generic_gmm_inference} with
\(\widehat{\bf\Omega}_N
=\bOmegaUniv(\widehat{\bs\beta}_{\mathrm{GMM}})\).
Then \(a_N^2\rho_N^8\widehat{\mathbf{V}}_N\xrightarrow{P}\mathbf{V}_0\). Consequently, for every
fixed \(\bs c\in\mathbb R^d\) with \(\bs c^\top \mathbf{V}_0\bs c>0\),
\[
\frac{\bs c^\top(\widehat{\bs\beta}_{\mathrm{GMM}}-\bs\beta_0)}
{\sqrt{\bs c^\top\widehat{\mathbf{V}}_N\bs c}}
\xrightarrow{D}N(0,1).
\]

\end{thm}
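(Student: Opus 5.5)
The core task is to show that the universal estimator is consistent after scaling: \(a_N^2\bOmegaUniv(\bs\beta_0)\xrightarrow{P}\bf\Sigma\) in each of the three regimes. I would then extend this to \(\widehat{\bs\beta}_{\mathrm{GMM}}\). Once both facts are available, the two displayed results follow from Theorem~\ref{thm:regimeCLT}, Theorem~\ref{thm:gmm} and Corollary~\ref{cor:generic_gmm_inference} by Slutsky's lemma. For the first display, \(a_N\bs c^\top\bs h_N(\bs\beta_0)\xrightarrow{D}N(0,\bs c^\top\bf\Sigma\bs c)\), and the denominator times \(a_N\) converges in probability to \(\sqrt{\bs c^\top\bf\Sigma\bs c}>0\). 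For the second display, Corollary~\ref{cor:generic_gmm_inference} applies once \(a_N^2\widehat{\bf\Omega}_N\xrightarrow{P}\bf\Sigma\) with \(\widehat{\bf\Omega}_N=\bOmegaUniv(\widehat{\bs\beta}_{\mathrm{GMM}})\). So everything reduces to covariance consistency, which I would split into four steps.

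The first step is to replace centered moments by uncentered ones. Write \(\widehat{\bs\Psi}_s=\bs\Psi_s-\bs h_N\). The cross terms involving \(\bs h_N(\bs\beta_0)\) are bounded by \(\|\bs h_N\|^2\) times sums of the counts \(\zeta_G\). Since \(\bs h_N=O_p(a_N^{-1})\) and \(|\mathcal S_N|^{-2}\sum_{s,t}\zeta_G(s,t)=O(1)\), these terms are \(o_p(a_N^{-2})\). I would verify this carefully in the ultra-sparse regime, where the counts could be large relative to \(a_N^{-2}\). A refinement may be needed: for instance, only pairs sharing nodes carry \(\zeta_G\neq 0\), and the count \(\zeta_{G_{\mathrm{node}}}\) gives an extra factor \(N^{-1}\) that beats the \(\rho_N\) scale because \(N^5\rho_N^4\to\infty\).

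The second step is the conditional mean. By \eqref{eq:target_covariance}, only pairs with \(E(s)\cap E(t)\neq\varnothing\) contribute, and each of these has a common graph. Using the identity \(\zeta_{\mathrm{node}}-\zeta_{\mathrm{dyad}}+\zeta_{\mathrm{tri}}+\zeta_{4\mathrm{cyc}}-\zeta_{\mathrm{diam}}=1\) for connected common graphs containing a dyad, \(\E[\cdot\mid\bf X,\bs\Gamma]\) of the signed combination equals the target covariance plus two error terms. The first comes from pairs whose common graph is disconnected and contains a dyad. The second comes from pairs sharing nodes but no dyad; these have zero conditional mean. For the disconnected pairs I would bound the contribution with the Hoeffding expansion \eqref{eq:kernel_poly_rep} and the order calculus of Proposition~\ref{prop:classbound}. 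Such pairs share at most as many dyads as a connected configuration while requiring an extra shared node, so their total is \(O(N^{-1})\) relative to the leading scale in Table~\ref{tab:boundaryregimes}. The \((10/N)\) correction is handled the same way, since its mean is \(O(N^{-1})\) times quantities of order \(a_N^{-2}\). Assumption~\ref{ass:regimeSigma} then identifies the limit of \(a_N^2\Var(\bs h_N\mid\bf X,\bs\Gamma)\) as \(\bf\Sigma\), because the non-leading projections are negligible.

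The third step, which I expect to be the main obstacle, is concentration: showing that \(a_N^4\Var(\bs c^\top\bOmegaUniv\bs c\mid\bf X,\bs\Gamma)\to 0\) in probability. Each matrix \(\widehat{\bf\Omega}_N(G)\) is a quadruple sum over pentads \(s,t,s',t'\), weighted by \(\zeta_G(s,t)\zeta_G(s',t')\). A product contributes only when the dyad sets of \(\{s,t\}\) and \(\{s',t'\}\) overlap, since otherwise they are conditionally independent. I would enumerate the union graphs of the four pentads, reduce each product of moments to square-free monomials via Lemma~\ref{lem:poly}, and bound expectations by \(\rho_N^{\#\text{edges}}\). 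The fluctuations in the \(\Lambda_{i,N}\) are controlled by Hölder's inequality using \(\sup_N\E\Lambda_{i,N}^{16}<\infty\); sixteen moments suffice for up to four overlapping pentads. The combinatorics must be checked in each regime, and the binding case is the ultra-sparse one, where \(N^5\rho_N^4\to\infty\) is exactly what makes the fourth-order term smaller than the square of the second-order term. I would try first to dominate every configuration by a spanning-forest count, using the fact that each edge in a tree costs \(\rho_N\) while each extra node gains a factor of \(N\).

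The final step handles the plug-in \(\widehat{\bs\beta}_{\mathrm{GMM}}\). Each \(\bs\Psi_s(\bs\beta)\) is a polynomial in \(\exp(\pm\bs X^\top\bs\beta)\) with bounded derivatives on \(\mathcal B\) by Assumption~\ref{ass:bounded_design}. A mean-value expansion therefore bounds \(a_N^2\|\bOmegaUniv(\widehat{\bs\beta})-\bOmegaUniv(\bs\beta_0)\|\) by \(\|\widehat{\bs\beta}-\bs\beta_0\|\) times \(a_N^2\) times sums of the same counting form with the moments replaced by their derivatives. Those sums are \(O_p(a_N^{-2})\) by the bounds of the second step applied to absolute values. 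This term is \(o_p(1)\) by Theorem~\ref{thm:gmm}, which completes the argument.
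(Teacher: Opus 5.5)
Your four-step plan (exact centering, conditional mean via the signed count identity, concentration, plug-in) follows the paper's architecture. As written, however, two steps fail in the dense regime when $\rho_N$ does not tend to zero, and you never checked that case.

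\emph{Concentration.} The ultra-sparse regime is binding only at the small-node end. Let $p$ be the number of distinct nodes in $V_s\cup V_t\cup V_{s'}\cup V_{t'}$, and apply your counting bound ($N$ per node, $\rho_N$ per edge of a spanning tree of the union of the four supports). For quadruples with the maximal value $p=16$, this gives a contribution of order $\abs{\mathcal S_N}^{-4}N^{16}\rho_N^{15}\asymp N^{-4}\rho_N^{15}$ to $\E\norm{\widehat{\bf\Omega}_N^\circ-\E[\widehat{\bf\Omega}_N^\circ\mid\mathcal O_N]}^2$. Multiplying by $a_N^4=N^4\rho_N^{-14}$ leaves order $\rho_N$. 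So counting alone yields only $O_p(a_N^{-2})$, not $o_p(a_N^{-2})$. These quadruples do have overlapping dyad sets, so your independence criterion does not remove them.

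The missing observation is that their conditional covariance is exactly zero. Indeed, $p=16$ forces $\abs{V_s\cap V_t}=\abs{V_{s'}\cap V_{t'}}=1$ and exactly one dyad shared between $E(s)\cup E(t)$ and $E(s')\cup E(t')$. Then $E(s)\cap E(t)=\varnothing$ gives $\E[\bs\Psi_s\bs\Psi_t^\top\mid\mathcal O_N]=\mathbf 0$. Moreover, the shared dyad lies in only one of $E(s),E(t)$, so the other pentad's kernel is conditionally independent of the remaining three and conditionally centered; the fourth-order moment therefore vanishes too. Restricting to $5\le p\le15$, the quantity $a_N^4\sum_{p=5}^{15}N^{p-20}\rho_N^{p-1}$ is $O(N^{-1})$ in the dense and sparse regimes and $O((N^5\rho_N^4)^{-1})$ in the ultra-sparse regime.

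\emph{Plug-in.} This step breaks for a related reason. At the intermediate point the kernels are not conditionally centered, and once you put absolute values on both factors the mean-zero cancellation is lost anyway. Pairs $(s,t)$ that share one node but no dyad carry $\zeta_{G_{\mathrm{node}}}(s,t)=1$, so they contribute $\abs{\mathcal S_N}^{-2}N^{9}\rho_N^{8}\asymp N^{-1}\rho_N^8$ to your derivative sums. This exceeds $a_N^{-2}=N^{-2}\rho_N^7$ by the factor $N\rho_N\to\infty$. Hence those sums are $O_p(a_N^{-2}+N^{-1}\rho_N^8)$, not $O_p(a_N^{-2})$, and your linear bound becomes $a_N^2\norm{\widehat{\bs\beta}_{\mathrm{GMM}}-\bs\beta_0}N^{-1}\rho_N^8=O_p(a_N\rho_N^4/N)=O_p(\rho_N^{1/2})$. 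That is not $o_p(1)$ for dense networks.

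The paper avoids this with the Gram representation \eqref{eq:graph_count_gram}. It splits $\widehat{\bf\Omega}_N(G;\bs\beta)-\widehat{\bf\Omega}_N(G;\bs\beta_0)$ into two cross terms and a positive semidefinite term. Cauchy--Schwarz over the labeled copies $H$ bounds each cross term by $[\operatorname{tr}\widehat{\bf\Omega}_N(G;\bs\beta_0)]^{1/2}=O_p(a_N^{-1})$, which retains the cancellation, times the square root of an increment form of order $\norm{\bs\beta-\bs\beta_0}^2O_p(a_N^{-2}+N^{-1}\rho_N^8)$. This yields $a_N^2\norm{\bOmegaUniv(\widehat{\bs\beta}_{\mathrm{GMM}})-\bOmegaUniv(\bs\beta_0)}=O_p((a_N\rho_N^4)^{-1}+N^{-1/2})$.

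Your first two steps are sound in substance, with two corrections. In step 1, the operative facts are that the signed weight $w_{st}$ has a constant row sum $\varsigma_N=\sum_t w_{st}$ (so the centering correction is exactly $-\varsigma_N\abs{\mathcal S_N}^{-1}\bs h_N\bs h_N^\top$) and that $\abs{\mathcal S_N}^{-2}\sum_{s,t}\zeta_G(s,t)=O(N^{-1})$, not $O(1)$; $N^5\rho_N^4\to\infty$ plays no role there. In step 2, your ``extra shared node'' heuristic needs one more argument: a disconnected common graph forces the union of the two connected monomial supports to contain a cycle, hence to have at least $10-v$ dyads.
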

The proof of Theorem~\ref{thm:tractable_gmm} and the supporting lemmas
are given in Appendix~\ref{app:covariance_proofs}.
The corresponding asymptotic \(100(1-\alpha)\%\) confidence interval for
\(\bs c^\top\bs\beta_0\) is
\(\bs c^\top\widehat{\bs\beta}_{\mathrm{GMM}}
\pm z_{1-\alpha/2}\sqrt{\bs c^\top\widehat{\mathbf{V}}_N\bs c}\),
where \(z_p\) denotes the \(p\)th quantile of the standard normal distribution.
The estimator \(\bOmegaUniv(\widehat{\bs\beta}_{\mathrm{GMM}})\) is regime-adaptive and can be implemented
without prior knowledge of the sparsity regime, preliminary estimation of
\(\rho_N\), or selection of a regime-specific covariance estimator.
When a positive semidefinite estimate is required, one may project
\(\bOmegaUniv(\widehat{\bs\beta}_{\mathrm{GMM}})\) onto the positive semidefinite cone by replacing
its negative eigenvalues with zero \citep{cameron2011robust,politis2011higher}. Since the covariance limits in
Lemma~\ref{lem:cov_consistency} are positive semidefinite, this
projection preserves consistency.

Our implementation uses the
estimator
\(\widehat{\bs\beta}^{\dagger}\) defined in
Section~\ref{sec:implementation}.
Lemmas~\ref{lem:uniform-convergence-population},
\ref{lem:cov_equicont}, and~\ref{lem:pair_normalization} imply that the
conclusions of Theorem~\ref{thm:tractable_gmm} remain valid for
\(\widehat{\bs\beta}^{\dagger}\) when \(\widehat{\mathbf{V}}_N\) is constructed with the
universal covariance estimator
\(\bOmegaUniv(\widehat{\bs\beta}^{\dagger})\) and sample Jacobian
\(\dot{\bs h}_N(\widehat{\bs\beta}^{\dagger})\) in place of
\(\bOmegaUniv(\widehat{\bs\beta}_{\mathrm{GMM}})\) and
\(\dot{\bs h}_N(\widehat{\bs\beta}_{\mathrm{GMM}})\), respectively.

\section{Simulation}
\label{sec:simulation}

This section evaluates the finite-sample performance of the pentad-GMM
estimator. The designs vary network density, symmetry of the dyadic
covariates, and network size. We consider \(N\in\{100,200\}\) and use 500
Monte Carlo replications in each of the eight resulting designs. The table
reports mean and median bias, Monte Carlo standard deviation (SD), root mean
squared error (RMSE), and coverage probabilities. The nominal coverage level
is 95\%.
The simulations use the covariate-only instrument in
\eqref{eq:covariate_only_instrument} and the normalized criterion in
\eqref{eq:pair_normalized_criterion}. The confidence intervals use
the positive-semidefinite projection of the universal covariance estimator
in \eqref{eq:universalOmega}.

We use the same underlying data-generating process in all designs. For each
node \(i\), we draw \(X_{i1}\) and \(X_{i2}\) independently from a standard
normal distribution truncated to \([-2,2]\) and draw
\(\xi_i\sim \mathrm{Uniform}[-0.5,0.5]\). We set \(\bs\beta_0=(\beta_1,\beta_2)^\top=(1,1)^\top\) and
\(\Gamma_i=\mu_N+0.2X_{i2}+0.8\xi_i\), allowing the fixed effects to be
correlated with \(X_{i2}\). The parameter \(\mu_N\) will be determined by the  network density in each design.
The directed shocks \(\epsilon_{ij}\) are drawn
independently from the standard logistic distribution and are independent of
\((X_{i1},X_{i2},\Gamma_i)\). The observed links are then generated by the
mutual-consent model in \eqref{mod:mod2}.

In addition to network size, we vary covariate symmetry and network density.
First, within each density design, we compare symmetric and asymmetric dyadic
covariates. The asymmetric specification allows a given pair of covariates (\(\bs X_i,\bs X_j\)) to enter the two
unilateral consent indices differently. In the symmetric specification,
\(X_{ij,k}=X_{ji,k}=|X_{ik}-X_{jk}|\) for \(k=1,2\). In the asymmetric
specification, we set 
\(X_{ij,1}=2/3|X_{i1}-X_{j1}|\) if \(X_{i1}\leq X_{j1}\), \(X_{ij,1}=4/3|X_{i1}-X_{j1}|\) otherwise, and
\(X_{ij,2}=|X_{i2}-X_{j2}|\).  The
underlying distance between \(X_{i1}\) and \(X_{j1}\) is unchanged, but its contribution
to the two unilateral link indices differs by a factor of two. Second, we
generate dense and sparse networks by calibrating \(\mu_N\) while holding the
covariate design fixed. The dense design sets \(\mu_N=-1.185\), giving mean
link densities of 0.467 in the symmetric specification and 0.454 in the
asymmetric specification. The sparse design calibrates \(\mu_N\) to obtain
link density \(10/N\). At \(N=100\), \(\mu_N\) is approximately \(-3.289\) in the
symmetric specification and \(-3.217\) in the asymmetric specification. At
\(N=200\), the corresponding values are \(-3.944\) and \(-3.871\). Mean
realized density is 0.100 at \(N=100\) and 0.050 at \(N=200\), so average
degree remains close to ten.

Table~\ref{tab:simulation-results} reports the results for the dense and sparse
designs. In the dense designs, increasing \(N\) from 100
to 200 reduces RMSE from 0.112--0.158 to 0.062--0.089. Mean and median bias
decline in every setting; at \(N=200\), their largest absolute values are 0.019
and 0.023, respectively. Coverage probabilities range from 0.948 to 0.966
at \(N=100\) and from 0.922 to 0.948 at \(N=200\).
In the sparse designs, increasing \(N\) from 100 to 200 reduces RMSE
from 0.195--0.203 to 0.180--0.195. Coverage probabilities range from
0.958 to 0.992 at \(N=100\) and from 0.942 to 0.976 at \(N=200\).
As shown by the results, RMSE declines more slowly with \(N\) than in the dense designs.
To understand this pattern, note that
Lemma~\ref{lem:det-row-supports} shows that every monomial in
\(\psi_{ijklm}\) contains between four and seven link indicators.
For this reason, each term in the expansion of \(\psi_{ijklm}\) requires several links
to be present simultaneously.
In sparse networks, these informative pentad configurations occur less
frequently, and sampling variability is correspondingly larger, reducing
the precision of the pentad-GMM estimator.

\begin{table}[!t]
\centering
\caption{Simulation Results for the Pentad-GMM estimator in the dense and sparse designs}
\label{tab:simulation-results}
\footnotesize
\renewcommand{\arraystretch}{0.95}
\setlength{\tabcolsep}{3.5pt}
\begin{tabular*}{0.82\linewidth}{@{\extracolsep{\fill}}lcccccccc@{}}
\toprule
&\multicolumn{4}{c}{Dense designs}&\multicolumn{4}{c}{Sparse designs}\\
\cmidrule(lr){2-5}\cmidrule(lr){6-9}
&\multicolumn{2}{c}{Symmetric}&\multicolumn{2}{c}{Asymmetric}
&\multicolumn{2}{c}{Symmetric}&\multicolumn{2}{c}{Asymmetric}\\
\cmidrule(lr){2-3}\cmidrule(lr){4-5}\cmidrule(lr){6-7}\cmidrule(lr){8-9}
\(N=100\)&\(\beta_1\)&\(\beta_2\)&\(\beta_1\)&\(\beta_2\)
&\(\beta_1\)&\(\beta_2\)&\(\beta_1\)&\(\beta_2\)\\
\midrule
Mean Bias & 0.043 & 0.043 & 0.070 & 0.040 & 0.115 & 0.112 & 0.099 & 0.093 \\
Median Bias & 0.039 & 0.037 & 0.050 & 0.045 & 0.164 & 0.136 & 0.110 & 0.109 \\
SD & 0.123 & 0.125 & 0.141 & 0.105 & 0.168 & 0.159 & 0.169 & 0.175 \\
RMSE & 0.131 & 0.132 & 0.158 & 0.112 & 0.203 & 0.195 & 0.196 & 0.198 \\
Coverage Probability & 0.956 & 0.948 & 0.966 & 0.950 & 0.958 & 0.972 & 0.992 & 0.978 \\
\midrule
\(N=200\)&\(\beta_1\)&\(\beta_2\)&\(\beta_1\)&\(\beta_2\)
&\(\beta_1\)&\(\beta_2\)&\(\beta_1\)&\(\beta_2\)\\
\midrule
Mean Bias & 0.000 & -0.002 & 0.019 & 0.010 & 0.093 & 0.084 & 0.116 & 0.097 \\
Median Bias & 0.008 & 0.004 & 0.023 & 0.013 & 0.096 & 0.081 & 0.138 & 0.100 \\
SD & 0.089 & 0.088 & 0.086 & 0.061 & 0.160 & 0.159 & 0.157 & 0.160 \\
RMSE & 0.089 & 0.088 & 0.088 & 0.062 & 0.185 & 0.180 & 0.195 & 0.187 \\
Coverage Probability & 0.934 & 0.948 & 0.924 & 0.922 & 0.942 & 0.964 & 0.976 & 0.962 \\
\bottomrule
\end{tabular*}
\par\smallskip
\begin{minipage}{0.82\linewidth}
\footnotesize
\textit{Notes:} Mean and median bias are the componentwise mean and median of
\(\widehat{\bs\beta}-\bs\beta_0\) across replications. SD is the sample
standard deviation of the estimates, and RMSE is the square root of the
average squared estimation error. Coverage probability is the fraction of
replications in which the nominal 95\% confidence interval contains the true coefficient.
\end{minipage}
\end{table}
\FloatBarrier

\section{Construction and Properties of the Pentad Moment}
\label{app:finiteproofs}

\subsection{Determinant Construction and Proof of Theorem~\ref{thm:moment_restriction}}
\label{app:detconstruction}

We first prove the linear relation in Lemma~\ref{lem:lem1} and define
the feasible coefficients. We then prove
Theorem~\ref{thm:moment_restriction} by combining
\(\det(\mathbf M_{ijklm}(\bs{\beta}_{0}))=0\) with conditional dyad
independence. Finally, we record the monomial
supports used in the sparse-regime and covariance arguments.
Write \(\alpha_u:=e^{-\Gamma_u}\) and
\(P_{uv}:=P_{uv}(\bs{\beta}_{0})\) as in Section~\ref{sec:model}.
For any parameter value \(\bs\beta\), let
\(W_{uv}(\bs\beta):=\exp(-\bs X_{uv}^\top\bs\beta)\). For distinct nodes
\(i,j,r\), define
\begin{equation}
\label{eq:delta-app}
\Delta_{ijr}(\bs\beta)
:=W_{ir}(\bs\beta)W_{jr}(\bs\beta)
\{W_{rj}(\bs\beta)-W_{ri}(\bs\beta)\}.
\end{equation}
\begin{proof}[Proof of Lemma~\ref{lem:lem1}]
Fix distinct nodes \(i,j,r\). We eliminate \(\alpha_r\) using the
\((i,r)\) and \((j,r)\) identities in \eqref{eq:triangle-p-identities},
with \(k\) replaced by \(r\), and then eliminate
\(\alpha_i\alpha_j\) using the \((i,j)\) identity.
Solving the \((i,r)\) and \((j,r)\) identities for \(\alpha_r\) and
equating the resulting expressions gives
\[
\begin{aligned}
0={}&(1-P_{jr}-P_{jr}W_{jr}(\bs{\beta}_{0})\alpha_j)
\{P_{ir}W_{ir}(\bs{\beta}_{0})W_{ri}(\bs{\beta}_{0})\alpha_i
+P_{ir}W_{ri}(\bs{\beta}_{0})\}\\
&-(1-P_{ir}-P_{ir}W_{ir}(\bs{\beta}_{0})\alpha_i)
\{P_{jr}W_{jr}(\bs{\beta}_{0})W_{rj}(\bs{\beta}_{0})\alpha_j
+P_{jr}W_{rj}(\bs{\beta}_{0})\}.
\end{aligned}
\]
Expanding and collecting the terms in \(\alpha_i\), \(\alpha_j\),
\(\alpha_i\alpha_j\), and the constant term, with
\(\Delta_{ijr}(\bs{\beta}_{0})\) defined in \eqref{eq:delta-app}, gives
\begin{equation}
\label{eq:triangle-expanded-app}
\begin{aligned}
0={}&\{P_{ir}(1-P_{jr})W_{ir}(\bs{\beta}_{0})W_{ri}(\bs{\beta}_{0})
+P_{ir}P_{jr}W_{ir}(\bs{\beta}_{0})W_{rj}(\bs{\beta}_{0})\}\alpha_{i}\\
&-\{P_{jr}(1-P_{ir})W_{jr}(\bs{\beta}_{0})W_{rj}(\bs{\beta}_{0})
+P_{ir}P_{jr}W_{jr}(\bs{\beta}_{0})W_{ri}(\bs{\beta}_{0})\}\alpha_{j}\\
&+P_{ir}P_{jr}\Delta_{ijr}(\bs{\beta}_{0})\alpha_{i}\alpha_{j}+P_{ir}(1-P_{jr})W_{ri}(\bs{\beta}_{0})
-P_{jr}(1-P_{ir})W_{rj}(\bs{\beta}_{0}).
\end{aligned}
\end{equation}
The \((i,j)\) identity in \eqref{eq:triangle-p-identities} gives
\[
\alpha_i\alpha_j
=\frac{1-P_{ij}-P_{ij}W_{ij}(\bs{\beta}_{0})\alpha_i
-P_{ij}W_{ji}(\bs{\beta}_{0})\alpha_j}
{P_{ij}W_{ij}(\bs{\beta}_{0})W_{ji}(\bs{\beta}_{0})}.
\]
Substitute for \(\alpha_i\alpha_j\) in
\eqref{eq:triangle-expanded-app} and multiply by \(P_{ij}\).
Collecting the coefficients of \(P_{ij}\alpha_i\),
\(P_{ij}\alpha_j\), and the constant term gives
\begin{equation}
\label{eq:triangle-coefficients-app}
\begin{aligned}
b_{ijr}(\bs{\beta}_{0})
&:=P_{ir}\Bigg[(1-P_{jr})W_{ir}(\bs{\beta}_{0})W_{ri}(\bs{\beta}_{0})
 +P_{jr}\left\{W_{ir}(\bs{\beta}_{0})W_{rj}(\bs{\beta}_{0})
-\frac{\Delta_{ijr}(\bs{\beta}_{0})}{W_{ji}(\bs{\beta}_{0})}\right\}\Bigg],\\[1mm]
c_{ijr}(\bs{\beta}_{0})
&:=-P_{jr}\Bigg[(1-P_{ir})W_{jr}(\bs{\beta}_{0})W_{rj}(\bs{\beta}_{0})
 +P_{ir}\left\{W_{jr}(\bs{\beta}_{0})W_{ri}(\bs{\beta}_{0})
+\frac{\Delta_{ijr}(\bs{\beta}_{0})}{W_{ij}(\bs{\beta}_{0})}\right\}\Bigg],\\[1mm]
d_{ijr}(\bs{\beta}_{0})
&:=P_{ij}\{P_{ir}(1-P_{jr})W_{ri}(\bs{\beta}_{0})
-P_{jr}(1-P_{ir})W_{rj}(\bs{\beta}_{0})\}\\
&\quad +(1-P_{ij})P_{ir}P_{jr}
\frac{\Delta_{ijr}(\bs{\beta}_{0})}
{W_{ij}(\bs{\beta}_{0})W_{ji}(\bs{\beta}_{0})}.
\end{aligned}
\end{equation}
Thus \(b_{ijr}(\bs{\beta}_{0})P_{ij}\alpha_i
+c_{ijr}(\bs{\beta}_{0})P_{ij}\alpha_j
+d_{ijr}(\bs{\beta}_{0})=0\),
which proves the lemma.
\end{proof}

Writing the linear relation in Lemma~\ref{lem:lem1} in terms of
\(P_{ij}\alpha_i\) and
\(P_{ij}\alpha_j\) keeps \(b_{ijr}(\bs{\beta}_{0})\) and
\(c_{ijr}(\bs{\beta}_{0})\) free of the pivot probability
\(P_{ij}\), while \(d_{ijr}(\bs{\beta}_{0})\) is affine in \(P_{ij}\).
Consequently, replacing probabilities by link indicators puts the pivot
link \(L_{ij}\) only in the third column of the feasible matrix.

Fix distinct nodes \(i,j,k,l,m\). For a generic parameter value
\(\bs\beta\) and each \(r\in\{k,l,m\}\), define
\begin{equation}
\label{eq:feasible-coefficients-app}
\begin{aligned}
\widetilde b_{ijr}(\bs\beta)
&:=L_{ir}\Bigg[(1-L_{jr})W_{ir}(\bs\beta)W_{ri}(\bs\beta)
+L_{jr}\left\{W_{ir}(\bs\beta)W_{rj}(\bs\beta)
-\frac{\Delta_{ijr}(\bs\beta)}{W_{ji}(\bs\beta)}\right\}\Bigg],\\
\widetilde c_{ijr}(\bs\beta)
&:=-L_{jr}\Bigg[(1-L_{ir})W_{jr}(\bs\beta)W_{rj}(\bs\beta)
+L_{ir}\left\{W_{jr}(\bs\beta)W_{ri}(\bs\beta)
+\frac{\Delta_{ijr}(\bs\beta)}{W_{ij}(\bs\beta)}\right\}\Bigg],\\
\widetilde d_{ijr}(\bs\beta)
&:=L_{ij}\Big[(1-L_{jr})L_{ir}W_{ri}(\bs\beta)
-(1-L_{ir})L_{jr}W_{rj}(\bs\beta)\Big] +(1-L_{ij})
\frac{L_{ir}L_{jr}\Delta_{ijr}(\bs\beta)}
{W_{ij}(\bs\beta)W_{ji}(\bs\beta)}.
\end{aligned}
\end{equation}
Assemble these coefficients as
\begin{equation}
\label{eq:feasible-matrix-app}
\widetilde{\mathbf M}_{ijklm}(\bs\beta)
:=
\begin{pmatrix}
\widetilde b_{ijk}(\bs\beta) & \widetilde c_{ijk}(\bs\beta) & \widetilde d_{ijk}(\bs\beta)\\
\widetilde b_{ijl}(\bs\beta) & \widetilde c_{ijl}(\bs\beta) & \widetilde d_{ijl}(\bs\beta)\\
\widetilde b_{ijm}(\bs\beta) & \widetilde c_{ijm}(\bs\beta) & \widetilde d_{ijm}(\bs\beta)
\end{pmatrix}.
\end{equation}
At \(\bs{\beta}_{0}\), \eqref{eq:feasible-coefficients-app} replaces each
\(P_{uv}\) and \(1-P_{uv}\) in \eqref{eq:triangle-coefficients-app}
with \(L_{uv}\) and \(1-L_{uv}\), respectively.

\begin{proof}[Proof of Theorem~\ref{thm:moment_restriction}]
Applying Lemma~\ref{lem:lem1} to \(r=k,l,m\) gives
\begin{equation}
\underbrace{\begin{bmatrix}
b_{ijk}(\bs{\beta}_{0}) & c_{ijk}(\bs{\beta}_{0}) & d_{ijk}(\bs{\beta}_{0})\\
b_{ijl}(\bs{\beta}_{0}) & c_{ijl}(\bs{\beta}_{0}) & d_{ijl}(\bs{\beta}_{0})\\
b_{ijm}(\bs{\beta}_{0}) & c_{ijm}(\bs{\beta}_{0}) & d_{ijm}(\bs{\beta}_{0})
\end{bmatrix}}_{\mathbf M_{ijklm}(\bs{\beta}_{0})}
\begin{bmatrix}
P_{ij}\alpha_i\\[4pt]
P_{ij}\alpha_j\\[4pt]
1
\end{bmatrix}
=
\begin{bmatrix}
0\\[4pt]
0\\[4pt]
0
\end{bmatrix}.
\label{eq:coeff_matrix}
\end{equation}
Since \((P_{ij}\alpha_i,P_{ij}\alpha_j,1)^\top\) is nonzero,
\eqref{eq:coeff_matrix} implies
\(\det(\mathbf M_{ijklm}(\bs{\beta}_{0}))=0\).

To show that the feasible determinant has conditional mean
\(\det(\mathbf M_{ijklm}(\bs{\beta}_{0}))\), expand
\(\psi_{ijklm}(\bs{\beta}_{0})
=\det(\widetilde{\mathbf M}_{ijklm}(\bs{\beta}_{0}))\) along the third column:
\[
\begin{aligned}
\det(\widetilde{\mathbf M}_{ijklm}(\bs{\beta}_{0}))
&=\widetilde d_{ijk}(\bs{\beta}_{0})
(\widetilde b_{ijl}(\bs{\beta}_{0})\widetilde c_{ijm}(\bs{\beta}_{0})
-\widetilde b_{ijm}(\bs{\beta}_{0})\widetilde c_{ijl}(\bs{\beta}_{0}))\\
&\quad-\widetilde d_{ijl}(\bs{\beta}_{0})
(\widetilde b_{ijk}(\bs{\beta}_{0})\widetilde c_{ijm}(\bs{\beta}_{0})
-\widetilde b_{ijm}(\bs{\beta}_{0})\widetilde c_{ijk}(\bs{\beta}_{0}))\\
&\quad+\widetilde d_{ijm}(\bs{\beta}_{0})
(\widetilde b_{ijk}(\bs{\beta}_{0})\widetilde c_{ijl}(\bs{\beta}_{0})
-\widetilde b_{ijl}(\bs{\beta}_{0})\widetilde c_{ijk}(\bs{\beta}_{0})).
\end{aligned}
\]
Each of the six products selects one entry from each column and one
from each row. Conditional on \((\bf X,\bs\Gamma)\), the weights
\(W_{uv}(\bs{\beta}_{0})\) are fixed. For a permutation
\((r_B,r_C,r_D)\) of \((k,l,m)\),
\(\widetilde b_{ijr_B}(\bs{\beta}_{0})\) depends only on
\(L_{ir_B},L_{jr_B}\),
\(\widetilde c_{ijr_C}(\bs{\beta}_{0})\) depends only on
\(L_{ir_C},L_{jr_C}\), and
\(\widetilde d_{ijr_D}(\bs{\beta}_{0})\) depends only on
\(L_{ij},L_{ir_D},L_{jr_D}\).
The three dyad sets are disjoint, so Assumption~\ref{ass:dyad_ind}
makes the three selected entries independent conditional on
\((\bf X,\bs\Gamma)\).

Each entry in
\eqref{eq:feasible-coefficients-app} is multilinear in its link indicators.
Using \(\E[L_{uv}\mid\bf X,\bs\Gamma]=P_{uv}\) and
\(\E[1-L_{uv}\mid\bf X,\bs\Gamma]=1-P_{uv}\), conditional dyad
independence therefore gives
\[
\begin{aligned}
&\E\!\left[
\widetilde b_{ijr_B}(\bs{\beta}_{0})
\widetilde c_{ijr_C}(\bs{\beta}_{0})
\widetilde d_{ijr_D}(\bs{\beta}_{0})
\mid\bf X,\bs\Gamma
\right] =
b_{ijr_B}(\bs{\beta}_{0})
c_{ijr_C}(\bs{\beta}_{0})
d_{ijr_D}(\bs{\beta}_{0}).
\end{aligned}
\]
Summing over the six determinant products with their signs gives
\(\E[\psi_{ijklm}(\bs{\beta}_{0})\mid\bf X,\bs\Gamma]
=\det(\mathbf M_{ijklm}(\bs{\beta}_{0}))=0\),
where the zero determinant follows from \eqref{eq:coeff_matrix}.
Taking conditional expectation with respect to \(\bf X\) proves
both restrictions in \eqref{eq:moment_cond-free}.
\end{proof}

The following lemma records the possible supports of the determinant
monomials, their connectedness, and their numbers of dyads. These properties
are used in Lemma~\ref{lem:kerneltaxonomy} and in the covariance bounds.

\begin{lem}[Distinct-dyad supports]
\label{lem:det-row-supports}
Fix \(s=(i,j,k,l,m)\in\mathcal S_N\), with pivot dyad \((i,j)\) and
peripheral nodes \(k,l,m\). Label the columns of
\(\widetilde{\mathbf M}_{ijklm}(\bs{\beta}_{0})\) by \(B,C,D\), in order.
For \(r\in\{k,l,m\}\), the monomial supports of the three row-\(r\)
entries belong to \(\mathcal M^B_r\), \(\mathcal M^C_r\), and
\(\mathcal M^D_r\), respectively, where \((u,v)\) denotes an unordered dyad and
\[
\begin{aligned}
\mathcal M^B_r
&=\bigl\{\{(i,r)\},\{(i,r),(j,r)\}\bigr\}, \ 
\mathcal M^C_r
=\bigl\{\{(j,r)\},\{(i,r),(j,r)\}\bigr\},\\
\mathcal M^D_r
&=\bigl\{\{(i,j),(i,r)\},\{(i,j),(j,r)\},\{(i,r),(j,r)\},\{(i,j),(i,r),(j,r)\}\bigr\}.
\end{aligned}
\]
The kernel \(\psi_{ijklm}(\bs{\beta}_{0})\) is a polynomial in
\(L_{ij},L_{ik},L_{il},L_{im},L_{jk},L_{jl},L_{jm}\),
with the representation
\begin{equation}
\label{eq:det-support-expansion}
\psi_{ijklm}(\bs{\beta}_{0})
=\sum_{\nu=1}^{\nu_{\max}}c_{s,\nu}(\bf X)m_{s,\nu}({\bf L}).
\end{equation}
Here \(\nu_{\max}<\infty\) does not depend on \(N\), the scalar
coefficients \(c_{s,\nu}(\bf X)\) depend only on the weights
\(W_{uv}(\bs\beta_0)\) within \(s\) and may be zero, and
\(m_{s,\nu}({\bf L})=\prod_{e\in E_{s,\nu}}L_e\).
Each support has the form \(E_{s,\nu}=A_B\cup A_C\cup A_D\), where
\((r_B,r_C,r_D)\) is a permutation of \((k,l,m)\),
\(A_B\in\mathcal M^B_{r_B}\), \(A_C\in\mathcal M^C_{r_C}\), and
\(A_D\in\mathcal M^D_{r_D}\). The sets \(A_B,A_C,A_D\) are pairwise
disjoint. Each \(E_{s,\nu}\subseteq E(s)\) forms a connected graph
on all five nodes, and
\(\deg m_{s,\nu}=\abs{E_{s,\nu}}\in\{4,5,6,7\}\).
\end{lem}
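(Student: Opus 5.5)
Two things need checking: the entrywise supports read off from \eqref{eq:feasible-coefficients-app}, and then the structure of the determinant expansion used in the proof of Theorem~\ref{thm:moment_restriction}.

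\emph{Entrywise supports.} For each \(r\), I expand \(\widetilde b_{ijr}\), \(\widetilde c_{ijr}\), \(\widetilde d_{ijr}\) into square-free monomials in the link indicators, writing each factor \((1-L_{uv})\) as \(1-L_{uv}\).
\begin{itemize}
\item \(\widetilde b_{ijr}=L_{ir}W_{ir}W_{ri}+L_{ir}L_{jr}(\cdots)\). Its supports are \(\{(i,r)\}\) and \(\{(i,r),(j,r)\}\), which is \(\mathcal M^B_r\).
\item \(\widetilde c_{ijr}\) is symmetric to \(\widetilde b_{ijr}\) and gives \(\mathcal M^C_r\).
\item In \(\widetilde d_{ijr}\), expanding \(L_{ij}(1-L_{jr})L_{ir}\) yields the supports \(\{(i,j),(i,r)\}\) and \(\{(i,j),(i,r),(j,r)\}\). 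The term \(-L_{ij}(1-L_{ir})L_{jr}\) yields \(\{(i,j),(j,r)\}\) and the triple. The term \((1-L_{ij})L_{ir}L_{jr}\) yields \(\{(i,r),(j,r)\}\) and the triple. This is exactly \(\mathcal M^D_r\).
\end{itemize}
Some coefficients may cancel; for instance, the constant \(L_{ij}\)-only part never appears. That is allowed, since \(c_{s,\nu}\) may be zero. Every coefficient is a rational function of the weights \(W_{uv}(\bs\beta_0)\) among the five nodes, so \(c_{s,\nu}\) depends only on \(\bf X\) restricted to \(s\).

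\emph{Determinant expansion.} I use the Leibniz expansion. Each of the six terms picks the \(B\)-entry from row \(r_B\), the \(C\)-entry from row \(r_C\) and the \(D\)-entry from row \(r_D\), where \((r_B,r_C,r_D)\) is a permutation of \((k,l,m)\). The product of the three entries expands into products of monomials with supports \(A_B\in\mathcal M^B_{r_B}\), \(A_C\in\mathcal M^C_{r_C}\) and \(A_D\in\mathcal M^D_{r_D}\).

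\emph{Disjointness and square-freeness.} \(A_B\) involves only dyads incident to \(r_B\), and \(A_C\) only dyads incident to \(r_C\). \(A_D\) involves dyads incident to \(r_D\) together with possibly \((i,j)\). Since \(r_B,r_C,r_D\) are distinct and none of them is a pivot node, these dyad sets are pairwise disjoint. Hence the product monomial is \(\prod_{e\in A_B\cup A_C\cup A_D}L_e\) and is square-free.

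\emph{Collecting terms.} Collecting like monomials over the finitely many choices (6 permutations, at most \(2\cdot2\cdot4\) support choices each) gives \eqref{eq:det-support-expansion} with \(\nu_{\max}\le 96\), independent of \(N\). All supports lie inside \(E(s)\).

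\emph{Connectedness.} \(A_B\) contains \((i,r_B)\) and \(A_C\) contains \((j,r_C)\). Every element of \(\mathcal M^D_{r_D}\) joins \(r_D\) to \(i\) or \(j\). Moreover, every element of \(\mathcal M^D_{r_D}\) connects \(i\) and \(j\): either it contains \((i,j)\) directly, or it is \(\{(i,r_D),(j,r_D)\}\), which forms a path from \(i\) to \(j\) through \(r_D\). So within \(E_{s,\nu}\) the nodes \(i\) and \(j\) are connected, \(r_B\) attaches to \(i\), \(r_C\) attaches to \(j\), and \(r_D\) attaches to \(i\) or \(j\). The support is therefore a connected graph spanning all five nodes.

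\emph{Degree count.} \(|A_B|\in\{1,2\}\), \(|A_C|\in\{1,2\}\) and \(|A_D|\in\{2,3\}\), so by disjointness \(|E_{s,\nu}|\in\{4,\dots,7\}\). The lower bound 4 also follows from connectedness on five nodes.

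The only delicate step is the disjointness argument, since it is what guarantees square-freeness and hence the multilinearity used earlier in the paper. It rests entirely on the observation that each entry's dyads are incident to its own row's peripheral node, apart from the pivot dyad, which can appear only in column \(D\).
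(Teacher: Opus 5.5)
Your proposal is correct and follows essentially the same route as the paper's proof. Both arguments read the row-wise supports off the feasible coefficients and expand the determinant by choosing one entry per column from distinct peripheral rows. Both then get pairwise disjointness (hence square-freeness) because every non-pivot dyad is attached to its own peripheral node and the pivot dyad appears only in column \(D\). Connectedness on all five nodes and the degree range \(\{4,\ldots,7\}\) follow in both from the column-wise support sizes and the fact that every \(D\)-support links \(i\), \(j\) and \(r_D\).
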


\begin{proof}
The row-wise support sets follow by expanding
\(\widetilde b_{ijr}(\bs{\beta}_{0})\),
\(\widetilde c_{ijr}(\bs{\beta}_{0})\), and
\(\widetilde d_{ijr}(\bs{\beta}_{0})\) in
\eqref{eq:feasible-coefficients-app}. Specifically,
\(\widetilde b_{ijr}(\bs{\beta}_{0})\) contains
only \(L_{ir}\) and \(L_{ir}L_{jr}\);
\(\widetilde c_{ijr}(\bs{\beta}_{0})\) contains only \(L_{jr}\) and
\(L_{ir}L_{jr}\); and \(\widetilde d_{ijr}(\bs{\beta}_{0})\) contains
only \(L_{ij}L_{ir}\), \(L_{ij}L_{jr}\), \(L_{ir}L_{jr}\), and
\(L_{ij}L_{ir}L_{jr}\). The determinant expansion selects one entry from
each of the \(B\)-, \(C\)-, and \(D\)-columns, using three different
peripheral rows and the usual determinant sign. The pivot dyad \((i,j)\)
can occur only in \(A_D\), and all other dyads in \(A_B,A_C,A_D\) are
attached to distinct peripheral nodes. Thus \(A_B,A_C,A_D\) are pairwise
disjoint, so the product of the three selected monomials has support
\(A_B\cup A_C\cup A_D\), with no repeated dyad.
Every \(D\)-support connects \(i\), \(j\), and \(r_D\), either through
the pivot dyad \((i,j)\) or through the path \(i-r_D-j\).
Moreover, every \(B\)-support contains \((i,r_B)\), every \(C\)-support
contains \((j,r_C)\), and \(r_B,r_C,r_D\) are distinct. Hence
\(A_B\cup A_C\cup A_D\) is connected and contains all five nodes of the pentad.
Since \(\abs{A_B},\abs{A_C}\in\{1,2\}\) and
\(\abs{A_D}\in\{2,3\}\), pairwise disjointness of \(A_B,A_C,A_D\)
implies that the resulting monomial degree lies between four and seven.
Collecting terms with the same support gives
\eqref{eq:det-support-expansion}, with coefficients depending only on
the weights. The finite list of possible supports depends only on the
five node roles, so its size \(\nu_{\max}\) is independent of \(N\).
\end{proof}

\subsection{Proof of Theorem~\ref{thm:pentad_minimality}}

We first express conditional moments as vectors in the null space of a
coefficient matrix. We then establish rank properties for arbitrary
positive directed weights, before showing that the weights generated by
the covariates satisfy the required rank bounds almost everywhere.

Fix a finite simple graph \(G=(V,E)\), with the degree notation in
Appendix~\ref{app:graph_notation}. Associate two positive weights
\(W_{uv},W_{vu}\) with each dyad \((u,v)\in E\), and write
\(\mathbf W=(W_{uv},W_{vu})_{(u,v)\in E}\). For variables
\(\boldsymbol\alpha=(\alpha_u:u\in V)\), define
\(R_{uv}(\boldsymbol\alpha):=(1+W_{uv}\alpha_u)(1+W_{vu}\alpha_v)\)
and \(R_A(\boldsymbol\alpha):=\prod_{(u,v)\in A}R_{uv}(\boldsymbol\alpha)\)
for \(A\subseteq E\), with \(R_\varnothing=1\).
We abbreviate \(R_{uv}(\boldsymbol\alpha)\) and
\(R_A(\boldsymbol\alpha)\) as \(R_{uv}\) and \(R_A\), respectively.
Each \(R_A\) has degree at most \(d_E(u)\) in \(\alpha_u\).
Define \(\mathbf{C}_G(\mathbf W)\) to have columns indexed by \(A\subseteq E\)
and rows indexed by the monomials \(\prod_{u\in V}\alpha_u^{j_u}\),
where \(0\leq j_u\leq d_E(u)\).
The entry of \(\mathbf{C}_G(\mathbf W)\) in row \((j_u:u\in V)\) and column
\(A\) is the coefficient multiplying \(\prod_{u\in V}\alpha_u^{j_u}\)
when \(R_A(\boldsymbol\alpha)\) is expanded as a polynomial in
\((\alpha_u:u\in V)\).
Thus \(\mathbf{C}_G(\mathbf W)\) has \(\prod_{u\in V}(d_E(u)+1)\) rows and
\(2^{\lvert E\rvert}\) columns. Choose orders for the nodes, monomials,
and subsets once and keep them fixed when the weights vary.
The particular orders do not affect the rank of \(\mathbf{C}_G(\mathbf W)\),
since changing them only permutes its rows and columns.

For a real matrix \(\mathbf M\) with \(n\) columns,
\(\operatorname{rank}\mathbf M\) denotes the dimension of its column
space, equivalently the maximum number of linearly independent columns.
We write \(\operatorname{null}\mathbf M:=\{\mathbf v\in\mathbb R^n:
\mathbf M\mathbf v=\mathbf0\}\) for its \emph{null space}, and
\(\dim\) for the dimension of a vector space. The matrix has
\emph{full column rank} when \(\operatorname{rank}\mathbf M=n\),
equivalently when \(\operatorname{null}\mathbf M=\{\mathbf0\}\).

\begin{lem}\label{lem:local-moment-coefficients}
Under the assumptions of Theorem~\ref{thm:pentad_minimality}, fix
\(\bs\beta_0\neq\mathbf0\) and use the weights
\(W_{uv}=\exp\{-\bs\beta_0^\top w(\bs X_u,\bs X_v)\}\).
For a real-valued measurable \(\psi_G\), let
\(\bs c=(c_A(\mathbf X_G))_{A\subseteq E}\) be the coefficients
in the unique expansion
\(\psi_G(\mathbf L_G,\mathbf X_G;\bs\beta_0)
=\sum_{A\subseteq E}c_A(\mathbf X_G)
\prod_{(u,v)\in E\setminus A}L_{uv}\).
The map from
\(\bigl(\psi_G(\bs l,\mathbf X_G;\bs\beta_0)\bigr)_{
\bs l\in\{0,1\}^{\lvert E\rvert}}\) to \(\bs c\) is a linear
bijection. Moreover, \eqref{eq:graph-local-moment} holds if and only if
\(\bs c\in\operatorname{null}\mathbf{C}_G(\mathbf W)\) for almost every
\(\mathbf X_G=(\bs X_u:u\in V)\).
\end{lem}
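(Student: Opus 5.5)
The plan has three steps: establish the expansion and its uniqueness, compute the conditional mean as a rational function of \(\boldsymbol\alpha\), and then turn the almost-everywhere vanishing in \(\Gamma\) into the vanishing of polynomial coefficients.

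\textbf{Expansion and the linear bijection.} Any real function on \(\{0,1\}^{\lvert E\rvert}\) has a unique multilinear (square-free monomial) representation in the indicators \(L_{uv}\). The monomials \(\prod_{(u,v)\in B}L_{uv}\), \(B\subseteq E\), form a basis of the \(2^{\lvert E\rvert}\)-dimensional space of such functions. Indexing by the complement \(A=E\setminus B\) only relabels this basis. The coefficients are obtained from the values \(\psi_G(\bs l,\mathbf X_G;\bs\beta_0)\) by Möbius inversion over subsets, \(c_A=\sum_{B'\subseteq E\setminus A}(-1)^{\lvert E\setminus A\rvert-\lvert B'\rvert}\psi_G(\1_{B'})\). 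This map is linear and invertible, and the coefficients are measurable in \(\mathbf X_G\).

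\textbf{Conditional mean.} By Assumption~\ref{ass:dyad_ind}, conditional dyad independence, and multilinearity,
\[
\E[\psi_G\mid\mathbf X,\bs\Gamma]=\sum_{A\subseteq E}c_A\prod_{(u,v)\in E\setminus A}P_{uv}=\sum_{A}c_A\prod_{(u,v)\in E\setminus A}R_{uv}^{-1},
\]
using \(P_{uv}=1/R_{uv}(\boldsymbol\alpha)\) with \(\alpha_u=e^{-\Gamma_u}\). Multiplying by the strictly positive \(R_E(\boldsymbol\alpha)\) gives
\[
R_E\,\E[\psi_G\mid\mathbf X,\bs\Gamma]=\sum_A c_A R_A(\boldsymbol\alpha).
\]
This is a polynomial in \(\boldsymbol\alpha\) whose coefficient vector is exactly \(\mathbf C_G(\mathbf W)\bs c\). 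Each \(R_A\) has degree at most \(d_E(u)\) in \(\alpha_u\), so the rows of \(\mathbf C_G(\mathbf W)\) cover all monomials that can appear.

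\textbf{Both directions.} If \(\mathbf C_G(\mathbf W)\bs c=\mathbf0\) for a.e.\ \(\mathbf X_G\), the polynomial vanishes identically, so the conditional mean is zero a.s.; this gives \eqref{eq:graph-local-moment}. Conversely, suppose \eqref{eq:graph-local-moment} holds. Then for a.e.\ \(\mathbf X_G\) the polynomial vanishes for \(P_{\bs\Gamma_V\mid\mathbf X_G}\)-a.e.\ \(\bs\Gamma_V\). The main obstacle is passing from a.e.\ vanishing on the support to vanishing of all coefficients. For this I would use absolute continuity of \((\bs X_i,\Gamma_i)\). By Fubini, for a.e.\ \(\mathbf X_G\) the conditional law of \((\Gamma_u)_{u\in V}\) is absolutely continuous, and its positive-density set has positive Lebesgue measure in \(\mathbb R^{\lvert V\rvert}\). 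Its image under \(\Gamma\mapsto e^{-\Gamma}\) is then a positive-measure subset of \((0,\infty)^{\lvert V\rvert}\). A polynomial vanishing on a set of positive Lebesgue measure is identically zero, because the zero set of a nonzero polynomial is Lebesgue-null. Hence every coefficient vanishes, that is, \(\bs c\in\operatorname{null}\mathbf C_G(\mathbf W)\) for a.e.\ \(\mathbf X_G\). Some care is needed to handle the conditioning on the other nodes' variables, which do not enter \(\psi_G\), and to check that the null sets from Fubini compose correctly.
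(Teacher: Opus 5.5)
Your proposal is correct and follows essentially the same route as the paper. It uses M\"obius inversion over link patterns for the bijection, multiplies the conditional mean by $R_E(\boldsymbol\alpha)=\prod_{(u,v)\in E}P_{uv}^{-1}$ to obtain the polynomial whose coefficient vector is $\mathbf C_G(\mathbf W)\bs c$, and uses absolute continuity of the conditional law of $(\alpha_u)_{u\in V}$ together with the Lebesgue-null zero set of a nonzero polynomial for the converse. The differences are cosmetic: you cite the square-free monomial basis on $\{0,1\}^{\lvert E\rvert}$ where the paper checks injectivity and surjectivity by hand. The paper also avoids your flagged conditioning issue by conditioning on all of $\mathbf X$; under i.i.d.\ node sampling this gives the same conditional law of $(\Gamma_u)_{u\in V}$.
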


\begin{proof}
To prove the bijection, fix \(\mathbf X_G\) and \(\bs\beta_0\).
For each \(B\subseteq E\), define the link pattern
\(\bs l^B=(l^B_{uv})_{(u,v)\in E}
=(\1\{(u,v)\notin B\})_{(u,v)\in E}\),
so that \(l^B_{uv}=0\) on \(B\) and \(l^B_{uv}=1\) on \(E\setminus B\).
Every link pattern has
exactly one such representation. Then the product
\(\prod_{(u,v)\in E\setminus A}l^B_{uv}\) equals one if
\(B\subseteq A\), and zero otherwise.
Given the coefficient vector \(\bs c\), evaluating the expansion
at each pattern \(\bs l^B\) determines the function values by
\begin{equation}\label{eq:link-pattern-values}
\psi_G(\bs l^B,\mathbf X_G;\bs\beta_0)
=\sum_{A:B\subseteq A\subseteq E}c_A(\mathbf X_G),
\qquad B\subseteq E.
\end{equation}
First, we prove that the linear map from coefficients to function
values is injective. By linearity, it suffices to show that zero
function values imply \(\bs c=\mathbf0\). Taking \(B=E\)
gives \(c_E(\mathbf X_G)=0\). Suppose that
\(c_A(\mathbf X_G)=0\) whenever \(\lvert A\rvert>k\).
For every \(B\subseteq E\) with \(\lvert B\rvert=k\),
\eqref{eq:link-pattern-values} and the induction hypothesis give
\(0=c_B(\mathbf X_G)
+\sum_{A:B\subsetneq A\subseteq E}c_A(\mathbf X_G)
=c_B(\mathbf X_G)\).
Descending induction on \(k\) gives \(\bs c=\mathbf0\),
proving injectivity.

Next, we prove surjectivity. Given arbitrary function values
\(\bigl(\psi_G(\bs l^B,\mathbf X_G;\bs\beta_0)\bigr)_{B\subseteq E}\),
define
\begin{equation}\label{eq:link-pattern-coefficients}
c_A(\mathbf X_G)
=\sum_{B:A\subseteq B\subseteq E}(-1)^{\lvert B\setminus A\rvert}
\psi_G(\bs l^B,\mathbf X_G;\bs\beta_0),
\qquad A\subseteq E.
\end{equation}
For each \(B\subseteq E\), substituting these coefficients into
the expansion evaluated at \(\bs l^B\) gives
\[
\begin{aligned}
\sum_{A:B\subseteq A\subseteq E}c_A(\mathbf X_G)
&=\sum_{A:B\subseteq A\subseteq E}
  \sum_{B':A\subseteq B'\subseteq E}(-1)^{\lvert B'\setminus A\rvert}
  \psi_G(\bs l^{B'},\mathbf X_G;\bs\beta_0)\\
&=\sum_{B':B\subseteq B'\subseteq E}
  \psi_G(\bs l^{B'},\mathbf X_G;\bs\beta_0)
  \sum_{A:B\subseteq A\subseteq B'}(-1)^{\lvert B'\setminus A\rvert}\\
&=\psi_G(\bs l^B,\mathbf X_G;\bs\beta_0).
\end{aligned}
\]
The inner sum
\(\sum_{A:B\subseteq A\subseteq B'}(-1)^{\lvert B'\setminus A\rvert}\)
equals one when \(B'=B\), since only \(A=B\) occurs.
If \(B'\neq B\), choose a dyad \(e\in B'\setminus B\) and pair
each \(A\) not containing \(e\) with \(A\cup\{e\}\).
The paired terms have opposite signs, so the inner sum is zero.
Thus the constructed coefficients reproduce every prescribed function
value, proving surjectivity.

Together, injectivity and surjectivity establish existence and uniqueness
of the expansion. Equation~\eqref{eq:link-pattern-coefficients}
gives the inverse map from function values to coefficients.
This inverse is linear, proving the claimed linear bijection.
The coefficients \((-1)^{\lvert B\setminus A\rvert}\) multiplying
the function values in \eqref{eq:link-pattern-coefficients} do not
depend on \(\mathbf X_G\) or \(\bs\beta_0\).
Since these function values are measurable in \(\mathbf X_G\),
their finite linear combination \(c_A(\mathbf X_G)\) is also measurable
in \(\mathbf X_G\).

We now prove the equivalence with \eqref{eq:graph-local-moment}.
Write \(P_{uv}=P_{uv}(\bs\beta_0)\) and
\(\alpha_u=e^{-\Gamma_u}\). Recall that
\(R_{uv}(\boldsymbol\alpha)=(1+W_{uv}\alpha_u)(1+W_{vu}\alpha_v)\).
The product-logit formula in Assumption~\ref{ass:dyad_ind} therefore gives
\(R_{uv}(\boldsymbol\alpha)=P_{uv}^{-1}\).
Since \(L_{uv}\) is binary,
\(\E[L_{uv}\mid\bf X,\bs\Gamma]=P_{uv}\).
Conditional independence of the dyads given \((\bf X,\bs\Gamma)\)
therefore gives
\(\E[\prod_{(u,v)\in E\setminus A}L_{uv}\mid\bf X,\bs\Gamma]
=\prod_{(u,v)\in E\setminus A}P_{uv}\) for each \(A\subseteq E\).
Each \(c_A(\mathbf X_G)\) is measurable with respect to \(\bf X\),
so taking conditional expectations in the expansion of \(\psi_G\) yields
\(\E[\psi_G(\mathbf L_G,\mathbf X_G;\bs\beta_0)\mid\bf X,\bs\Gamma]
=\allowbreak\sum_{A\subseteq E}c_A(\mathbf X_G)
\prod_{(u,v)\in E\setminus A}P_{uv}\).
Thus \eqref{eq:graph-local-moment} is equivalent to
\(\sum_{A\subseteq E}c_A(\mathbf X_G)
\prod_{(u,v)\in E\setminus A}P_{uv}=0\) almost surely.
Dividing this equality by \(\prod_{(u,v)\in E}P_{uv}>0\)
replaces each factor \(\prod_{(u,v)\in E\setminus A}P_{uv}\) by
\(\frac{\prod_{(u,v)\in E\setminus A}P_{uv}}
{\prod_{(u,v)\in E}P_{uv}}
=\prod_{(u,v)\in A}P_{uv}^{-1}=R_A(\boldsymbol\alpha)\).
Hence \eqref{eq:graph-local-moment} is equivalent to
\begin{equation}
\sum_{A\subseteq E}c_A(\mathbf X_G)R_A(\boldsymbol\alpha)=0
\quad\text{a.s.}
\label{eq:exact-local-polynomial}
\end{equation}
Taking the conditional probability of the equality given \(\bf X\)
yields
\begin{equation}
\Pr\!\left(\sum_{A\subseteq E}c_A(\mathbf X_G)R_A(\boldsymbol\alpha)=0
\;\middle|\;\bf X\right)=1\quad\text{a.s.}
\label{eq:conditional-local-polynomial}
\end{equation}
Independence across nodes and the joint density of \((\bs X_i,\Gamma_i)\)
give a joint density for \((\bf X,(\Gamma_u:u\in V))\). Hence
\((\Gamma_u:u\in V)\) has a conditional density given \(\bf X\) for
almost every \(\bf X\). The change of variables
\(\Gamma_u=-\log\alpha_u\) implies that the conditional distribution
of \((\alpha_u:u\in V)\) given \(\bf X\) is absolutely continuous
with respect to Lebesgue measure on \((0,\infty)^{\lvert V\rvert}\)
for almost every \(\bf X\).

Fix \(\bf X\) for which the conditional density of
\((\alpha_u:u\in V)\) exists and
\eqref{eq:conditional-local-polynomial} holds. For these fixed covariates,
the sum in \eqref{eq:exact-local-polynomial} is a polynomial in
\((\alpha_u:u\in V)\). If the polynomial in
\eqref{eq:exact-local-polynomial} were not identically zero, its zero
set would have Lebesgue measure zero by
\citet[Proposition~1]{mityagin2020zero}, since polynomials are real analytic.
Absolute continuity of the conditional distribution of
\((\alpha_u:u\in V)\) given \(\bf X\) would then make the conditional
probability in \eqref{eq:conditional-local-polynomial} equal to zero,
contradicting the required value of one.
The polynomial in \eqref{eq:exact-local-polynomial} must therefore
be identically zero. Recall that the column of \(\mathbf{C}_G(\mathbf W)\)
indexed by \(A\subseteq E\) contains the coefficients of
\(R_A(\boldsymbol\alpha)\) in the fixed order of the monomials
\(\prod_{u\in V}\alpha_u^{j_u}\), \(0\leq j_u\leq d_E(u)\).
The monomial coefficient vector of
\(\sum_{A\subseteq E}c_A(\mathbf X_G)R_A(\boldsymbol\alpha)\)
is therefore \(\mathbf{C}_G(\mathbf W)\bs c\), so
\(\mathbf{C}_G(\mathbf W)\bs c=\mathbf0\).
Thus \eqref{eq:graph-local-moment} implies
\(\bs c\in\operatorname{null}\mathbf{C}_G(\mathbf W)\) for almost every
\(\mathbf X_G\).
Conversely, suppose \(\mathbf{C}_G(\mathbf W)\bs c=\mathbf0\) for almost
every \(\mathbf X_G\). The polynomial
\(\sum_{A\subseteq E}c_A(\mathbf X_G)R_A(\boldsymbol\alpha)\)
is then identically zero, and in particular vanishes at
\(\alpha_u=e^{-\Gamma_u}\), \(u\in V\).
At these values,
\(R_A(\boldsymbol\alpha)=\prod_{(u,v)\in A}P_{uv}^{-1}\).
Multiplying \eqref{eq:exact-local-polynomial} by
\(\prod_{(u,v)\in E}P_{uv}\) therefore gives
\(\sum_{A\subseteq E}c_A(\mathbf X_G)
\prod_{(u,v)\in E\setminus A}P_{uv}=0\) almost surely.
The expansion of \(\psi_G\) and conditional independence of the dyads
given \((\bf X,\bs\Gamma)\) now give
\(\E[\psi_G(\mathbf L_G,\mathbf X_G;\bs\beta_0)\mid\bf X,\bs\Gamma]
=\allowbreak\sum_{A\subseteq E}c_A(\mathbf X_G)
\prod_{(u,v)\in E\setminus A}P_{uv}=0\) almost surely,
which proves \eqref{eq:graph-local-moment}.
Finally, \(\mathbf{C}_G(\mathbf W)\) has \(2^{\lvert E\rvert}\) columns,
so rank--nullity gives
\begin{equation}
\dim\operatorname{null}\mathbf{C}_G(\mathbf W)
=2^{\lvert E\rvert}-\operatorname{rank}\mathbf{C}_G(\mathbf W).
\label{eq:local-moment-nullity}
\end{equation}
\end{proof}

By Lemma~\ref{lem:local-moment-coefficients}, a function \(\psi_G\)
satisfying \eqref{eq:graph-local-moment} is trivial if and only if
\(\bs c=\mathbf0\) for almost every \(\mathbf X_G\).
Thus, to determine whether nontrivial functions satisfying
\eqref{eq:graph-local-moment} exist, we study the nonzero solutions of
\(\mathbf{C}_G(\mathbf W)\bs c=\mathbf0\).
By \eqref{eq:local-moment-nullity},
\(\mathbf{C}_G(\mathbf W)\bs c=\mathbf0\) has a nonzero solution at a fixed
\(\mathbf X_G\) if and only if
\(\operatorname{rank}\mathbf{C}_G(\mathbf W)<2^{\lvert E\rvert}\).
In particular, full column rank for almost every \(\mathbf X_G\)
implies that every \(\psi_G\) satisfying \eqref{eq:graph-local-moment}
is trivial.

We next establish the  rank properties of \(\mathbf{C}_G(\mathbf W)\).
In the following algebraic lemmas, \(\mathbf W\) consists of arbitrary
positive numbers, and we treat \((\alpha_u:u\in V)\) as polynomial
variables. We may therefore evaluate \(R_A(\boldsymbol\alpha)\)
at any real values of \((\alpha_u:u\in V)\).
The proof of Theorem~\ref{thm:pentad_minimality} will apply these rank
results to \(W_{uv}=\exp\{-\bs\beta_0^\top w(\bs X_u,\bs X_v)\}\).

\begin{lem}\label{lem:coefficient-matrix-properties}
The coefficient matrices \(\mathbf{C}_G(\mathbf W)\) have the following properties.
\begin{enumerate}[label=\textnormal{(\alph*)},leftmargin=2em]
\item Assign distinct new labels to the nodes, and replace each old
label by its new label everywhere in \(E\) and in the subscripts of
\(W_{uv}\), without changing the numerical weights.
The coefficient matrix of the relabeled graph has the same rank as
\(\mathbf{C}_G(\mathbf W)\).
\item For positive numbers \((s_u:u\in V)\), set
\(\widetilde W_{uv}=s_uW_{uv}\) for every neighbor \(v\) of \(u\).
Then
\begin{equation}
\operatorname{rank}\mathbf{C}_G(\widetilde{\mathbf W})
=\operatorname{rank}\mathbf{C}_G(\mathbf W).
\label{eq:outgoing-weight-invariance}
\end{equation}
\item If \(G\) is a subgraph of a graph whose coefficient matrix has
full column rank, then \(\mathbf{C}_G\), using the same weights \((W_{uv},W_{vu})\) on its dyads,
has full column rank.
\item If \(G\) is the disjoint union of \(G_1=(V_1,E_1)\) and
\(G_2=(V_2,E_2)\), then, up to row and column permutations,
\(\mathbf{C}_G=\mathbf{C}_{G_1}\otimes \mathbf{C}_{G_2}\). In particular,
\(\operatorname{rank}\mathbf{C}_G=\operatorname{rank}\mathbf{C}_{G_1}\operatorname{rank}\mathbf{C}_{G_2}\).
\end{enumerate}
\end{lem}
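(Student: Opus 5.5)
Each part reduces to the definition of \(\mathbf{C}_G(\mathbf W)\) as the coefficient matrix of the polynomials \(R_A(\boldsymbol\alpha)\), \(A\subseteq E\), in the monomial basis \(\prod_u\alpha_u^{j_u}\), \(0\le j_u\le d_E(u)\). The rank of \(\mathbf{C}_G(\mathbf W)\) equals the dimension of the linear span of \(\{R_A:A\subseteq E\}\) inside the polynomial ring. Full column rank means exactly that these \(2^{\lvert E\rvert}\) polynomials are linearly independent. I will argue each part at this polynomial level, since rank is unaffected by row and column permutations and by invertible linear maps applied to the row space.

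For \textnormal{(a)}, relabeling nodes renames the variables \(\alpha_u\) and the dyads, so the new \(R_A\) are obtained from the old ones by a variable renaming. This maps monomials bijectively to monomials, with the degree bound \(d_E(u)\) transported to the renamed node. It also maps subsets \(A\) bijectively to subsets. Hence the new matrix is the old one with rows and columns permuted, and its rank is unchanged.

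For \textnormal{(b)}, the key observation is that \(\widetilde W_{uv}\) multiplies \(\alpha_u\) in every factor in which it appears. Therefore \(\widetilde R_A(\boldsymbol\alpha)=R_A(s\boldsymbol\alpha)\), where \((s\boldsymbol\alpha)_u=s_u\alpha_u\). The substitution \(\alpha_u\mapsto s_u\alpha_u\) multiplies the row indexed by \(\prod_u\alpha_u^{j_u}\) by \(\prod_u s_u^{j_u}>0\). So \(\mathbf{C}_G(\widetilde{\mathbf W})=\mathbf D\,\mathbf{C}_G(\mathbf W)\) for a positive diagonal matrix \(\mathbf D\), which gives \eqref{eq:outgoing-weight-invariance}.

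For \textnormal{(c)}, let \(G=(V,E)\) be a subgraph of \(H=(V',E')\), and suppose the polynomials \(\{R_A:A\subseteq E'\}\) are linearly independent. The polynomials \(R_A\) for \(A\subseteq E\subseteq E'\) form a subfamily of this independent family, and they are the same polynomials because the weights are the same. Nodes in \(V'\setminus V\) do not enter them. Hence the subfamily is linearly independent, and \(\mathbf{C}_G\) has full column rank. The only point needing care is the row index set. The monomials of \(R_A\), \(A\subseteq E\), have degree at most \(d_E(u)\le d_{E'}(u)\) in each \(\alpha_u\) and degree zero in \(\alpha_u\) for \(u\notin V\). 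So the columns of \(\mathbf{C}_G\) are exactly the corresponding columns of \(\mathbf{C}_H\) restricted to rows that contain all their nonzero entries. Deleting zero rows does not change rank.

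For \textnormal{(d)}, subsets of \(E=E_1\sqcup E_2\) correspond bijectively to pairs \((A_1,A_2)\), and \(R_{A_1\cup A_2}=R_{A_1}R_{A_2}\). Here the first factor involves only \((\alpha_u)_{u\in V_1}\) and the second only \((\alpha_u)_{u\in V_2}\). Monomials on \(V=V_1\sqcup V_2\) with the degree bounds factor uniquely as products of monomials on \(V_1\) and on \(V_2\). Since \(d_E(u)=d_{E_i}(u)\) for \(u\in V_i\), the coefficient of \(m_1m_2\) in \(R_{A_1}R_{A_2}\) is the product of the coefficient of \(m_1\) in \(R_{A_1}\) and the coefficient of \(m_2\) in \(R_{A_2}\). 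This is precisely the Kronecker product structure. The rank identity then follows from \(\operatorname{rank}(\mathbf A\otimes\mathbf B)=\operatorname{rank}\mathbf A\operatorname{rank}\mathbf B\). None of the four parts presents a real obstacle; the subtlest bookkeeping is the row-index compatibility in \textnormal{(c)}.
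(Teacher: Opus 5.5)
Your proposal is correct and follows the paper's proof essentially part by part. Variable renaming gives row and column permutations in (a); the substitution \(\alpha_u\mapsto s_u\alpha_u\) gives a positive diagonal row scaling in (b); keeping the columns \(A\subseteq E\) of the larger matrix and deleting its zero rows recovers \(\mathbf{C}_G\) in (c); and the factorization \(R_{A}=R_{A_1}R_{A_2}\) gives the Kronecker structure in (d). Phrasing rank as the dimension of the span of \(\{R_A\}\) is only a cosmetic reformulation of the same column-independence argument.
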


\begin{proof}
For \textnormal{(a)}, rename each variable \(\alpha_u\) using the new
label of node \(u\). Every polynomial \(R_A\) then has the same
coefficients as before, indexed by the renamed monomials and dyad subset.
Consequently, the coefficient matrix of the relabeled graph differs from
\(\mathbf{C}_G(\mathbf W)\) only by row and column permutations, which preserve rank.
For \textnormal{(b)}, replacing \(W_{uv}\) by \(s_uW_{uv}\) in every
\(R_A\) is exactly the substitution \(\alpha_u\mapsto s_u\alpha_u\).
The entire row indexed by \((j_u:u\in V)\) is multiplied by
\(\prod_{u\in V}s_u^{j_u}>0\), an invertible diagonal row transformation.

For \textnormal{(c)}, retain from the larger coefficient matrix the columns indexed by \(A\subseteq E\). Because the weights on the dyads of \(G\) are unchanged, these columns contain the coefficients of the same polynomials \(R_A\) that define \(\mathbf{C}_G\). Any monomial outside the row index set of \(\mathbf{C}_G\) has coefficient zero in every retained column. Deleting these extra zero rows and reordering rows and columns therefore gives \(\mathbf{C}_G\). The retained columns are linearly independent because the larger matrix has full column rank. Deleting zero rows and reordering rows and columns preserves this independence, so \(\mathbf{C}_G\) also has full column rank.

For \textnormal{(d)}, fix \(A\subseteq E\) and set \(A_j=A\cap E_j\)
for \(j=1,2\). Then \(A=A_1\cup A_2\) with
\(A_1\cap A_2=\varnothing\), so \(R_A=R_{A_1}R_{A_2}\).
The polynomial \(R_{A_j}\) involves only
\((\alpha_u:u\in V_j)\), and \(V_1\cap V_2=\varnothing\).
For each row index \((j_u:u\in V)\), the monomial
\(\prod_{u\in V}\alpha_u^{j_u}\) can therefore arise in the expansion
of \(R_{A_1}R_{A_2}\) only by multiplying
\(\prod_{u\in V_1}\alpha_u^{j_u}\) from \(R_{A_1}\) and
\(\prod_{u\in V_2}\alpha_u^{j_u}\) from \(R_{A_2}\).
The coefficient of \(\prod_{u\in V}\alpha_u^{j_u}\) in \(R_A\)
is therefore the product of the two component coefficients.
By the definition of the coefficient matrices,
\([\mathbf{C}_G]_{(j_u:u\in V),A}
=\allowbreak[\mathbf{C}_{G_1}]_{(j_u:u\in V_1),A_1}
[\mathbf{C}_{G_2}]_{(j_u:u\in V_2),A_2}\).
Thus, up to row and column permutations,
\(\mathbf{C}_G=\mathbf{C}_{G_1}\otimes \mathbf{C}_{G_2}\).
The rank of a Kronecker product is the product of the ranks, so
\(\operatorname{rank}\mathbf{C}_G
=\operatorname{rank}\mathbf{C}_{G_1}\operatorname{rank}\mathbf{C}_{G_2}\).
\end{proof}

The next lemma uses the following separation of weights with the same
first index:
\begin{equation}
W_{uv}\neq W_{uv'}
\quad\text{whenever }v\neq v'\text{ are neighbors of }u.
\label{eq:generic-outgoing-separation}
\end{equation}
We will prove that \eqref{eq:generic-outgoing-separation} holds for
almost every \((\bs X_u:u\in V)\) under the assumptions of the theorem.

\begin{lem}\label{lem:separated-weight-ranks}
Let \(G=(V,E)\) be a finite simple graph.
For positive weights (\(W_{uv},W_{vu}\)) satisfying \eqref{eq:generic-outgoing-separation}:
\textnormal{(a)} if every connected component of \(G\) is a tree or
has exactly one cycle, then \(\operatorname{rank}\mathbf{C}_G(\mathbf W)=2^{\lvert E\rvert}\);
\textnormal{(b)} for the pentad \(G=(V,E)\) with \(V=\{i,j,k,l,m\}\)
and \(E=\{(i,j),\allowbreak(i,k),\allowbreak(j,k),\allowbreak(i,l),\allowbreak(j,l),\allowbreak(i,m),\allowbreak(j,m)\}\),
\(\operatorname{rank}\mathbf{C}_G(\mathbf W)\leq127\).
Moreover, the moment function \(\psi_{ijklm}(\bs\beta_0)\)
defined in Theorem~\ref{thm:moment_restriction} is nontrivial.
\end{lem}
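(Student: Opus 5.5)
Part (a) is an induction on \(\lvert E\rvert\) that peels off leaves, and reduces to an explicit check for a single cycle. Part (b) combines the nonzero vector \(\bs c\) from the pentad determinant with a structural nonvanishing argument. Throughout I work with \(\mathbf{C}_G(\mathbf W)\) as the linear map \(\bs c\mapsto\sum_A c_AR_A(\boldsymbol\alpha)\). Full column rank means exactly that the polynomials \(\{R_A:A\subseteq E\}\) are linearly independent.

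By Lemma~\ref{lem:coefficient-matrix-properties}(d), the rank of a disjoint union is the product of the component ranks, so it suffices to treat connected \(G\). Suppose first that \(G\) has a leaf \(v\) with unique neighbour \(u\), and let \(e=(u,v)\). Write \(G'=G-v\). Every \(R_A\) has the form \(R_{A'}\) or \(R_{A'}(1+W_{uv}\alpha_u)(1+W_{vu}\alpha_v)\) with \(A'\subseteq E\setminus\{e\}\). Suppose \(\sum_{A'}(a_{A'}+b_{A'}R_e)R_{A'}=0\). The variable \(\alpha_v\) appears only through \(R_e\), and only linearly, so its coefficient gives \(\sum b_{A'}W_{vu}(1+W_{uv}\alpha_u)R_{A'}=0\). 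Since \(\mathbb R[\boldsymbol\alpha]\) is a domain, this yields \(\sum b_{A'}R_{A'}=0\). Then \(\sum a_{A'}R_{A'}=0\) as well, and independence for \(G'\) by induction gives \(a=b=0\). Leaves can therefore always be removed, and trees reduce to a single node, where the claim is trivial.

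For a connected unicyclic graph, leaf removal reduces the problem to a bare cycle \(u_1,\ldots,u_\ell\). On the cycle each node has degree \(2\), with two distinct outgoing weights by \eqref{eq:generic-outgoing-separation}. I would use Lemma~\ref{lem:coefficient-matrix-properties}(b) to normalize one outgoing weight at each node to \(1\). I then show independence by evaluating at special points \(\alpha_{u}=-1/W_{uv}\), which kill every \(R_A\) containing a dyad at \(u\) with that outgoing weight. The separation condition makes these zeros distinct for the two cycle edges at \(u\). Choosing such roots successively around the cycle, I run a triangular elimination showing that every \(c_A\) vanishes. This sign- and root-bookkeeping step on the cycle is where I expect the main difficulty. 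If it becomes messy, I would instead use leaf removal to reduce to the cycle and then argue by comparing top-degree coefficients in one \(\alpha_{u_1}\) at a time.

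For (b), rank \(\le127\) is equivalent to \(\operatorname{null}\mathbf{C}_G(\mathbf W)\neq\{\mathbf 0\}\). This follows from the construction already carried out. The computation proving Theorem~\ref{thm:moment_restriction} is a polynomial identity in \(\boldsymbol\alpha\) for arbitrary positive weights: after clearing the positive factor \(\prod_E P_{uv}\), the identity \(\det\mathbf M_{ijklm}=0\) holds for all \(\boldsymbol\alpha\in(0,\infty)^5\), hence identically. Therefore the coefficient vector \(\bs c\) of \(\psi_{ijklm}\) from Lemma~\ref{lem:local-moment-coefficients} lies in the null space. It remains to show \(\bs c\neq\mathbf 0\), that is, that \(\psi_{ijklm}\) is not identically zero as a function of \(\mathbf L\), and this also gives the stated nontriviality.

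I would exhibit a monomial of Lemma~\ref{lem:det-row-supports} that arises from only one permutation. The degree-\(4\) support \(\{(i,k),(j,l),(i,j),(i,m)\}\) is one candidate, coming from column \(B\) at \(k\), column \(C\) at \(l\), and column \(D\) at \(m\). Its coefficient is then a single product of weights, such as \(W_{ik}W_{ki}\cdot(-W_{jl}W_{lj})\cdot W_{mi}\), which is nonzero. To complete the argument I need to verify that no other permutation produces the same support. This holds because the \(B\)-, \(C\)- and \(D\)-supports attach to distinct peripheral nodes in a way that determines each node's role.
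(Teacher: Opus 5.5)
\textbf{Part (b).} Your null-space argument for $\operatorname{rank}\mathbf{C}_G(\mathbf W)\le 127$ is the paper's. The nontriviality step, however, rests on a false claim.

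The support $\{(i,j),(i,k),(i,m),(j,l)\}$ is produced by $(r_B,r_C,r_D)=(k,l,m)$, but also by $(r_B,r_C,r_D)=(m,l,k)$ with $A_B=\{(i,m)\}$, $A_C=\{(j,l)\}$, $A_D=\{(i,j),(i,k)\}$. A peripheral node carrying a single dyad to $i$ can play either the $B$ role or the $D$ role, so the roles are not determined. The second assignment differs from the first by a transposition, so its sign is opposite. The coefficient of $L_{ij}L_{ik}L_{im}L_{jl}$ in $\psi_{ijklm}(\bs\beta_0)$ is therefore
\[
W_{ki}W_{mi}W_{jl}W_{lj}\,(W_{im}-W_{ik}),
\]
not a single product of weights, and it is nonzero only because of \eqref{eq:generic-outgoing-separation} at node $i$.

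Your argument never invokes that hypothesis, and no argument can avoid it. At $\mathbf W=\mathbf 1$ one gets $\Delta_{ijr}=0$, $\widetilde b_{ijr}=L_{ir}$, $\widetilde c_{ijr}=-L_{jr}$, and $\widetilde d_{ijr}=L_{ij}(L_{ir}-L_{jr})$. The third column of $\widetilde{\mathbf M}_{ijklm}$ is then $L_{ij}$ times the sum of the first two, so $\psi_{ijklm}\equiv0$, whereas your claimed coefficient would equal $-1$ there.

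The repair is easy; any of the following works:
\begin{itemize}
\item use the corrected coefficient above;
\item use the support $\{(i,k),(j,l),(i,m),(j,m)\}$, which does come from a single permutation, with coefficient $-W_{ik}W_{ki}W_{jl}W_{lj}\Delta_{ijm}/(W_{ij}W_{ji})$, nonzero by separation at $m$;
\item as the paper does, evaluate $\psi_{ijklm}(\bs\beta_0)$ at the link pattern $L_{jk}=L_{il}=L_{im}=L_{jm}=1$, $L_{ij}=L_{ik}=L_{jl}=0$, where the determinant collapses to a single product proportional to $W_{mi}-W_{mj}$.
\end{itemize}

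\textbf{Part (a).} Your leaf-removal induction is correct and is a genuinely different route from the paper's. It even shows that trees give full column rank without \eqref{eq:generic-outgoing-separation}. But you leave the bare cycle as a sketch, and that is exactly where separation is indispensable: at $\mathbf W=\mathbf 1$ the two perfect matchings of a four-cycle give the same polynomial $\prod_u(1+\alpha_u)$.

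The missing ingredient is a bijection between cycle edges and cycle nodes. Assign $(u_j,u_{j+1})$ to $u_{j+1}$, with indices mod $\ell$. For $B\subseteq E$, set $\alpha^B_{u_{j+1}}=-1/W_{u_{j+1}u_j}$ if $(u_j,u_{j+1})\notin B$, and $\alpha^B_{u_{j+1}}=0$ otherwise. Then $R_{u_ju_{j+1}}(\boldsymbol\alpha^B)=0$ whenever $(u_j,u_{j+1})\notin B$. For $(u_j,u_{j+1})\in B$, the only factor that can differ from one is $1-W_{u_ju_{j+1}}/W_{u_ju_{j-1}}$, which is nonzero by separation. Hence $R_A(\boldsymbol\alpha^B)\neq0$ if and only if $A\subseteq B$. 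The evaluation matrix $[R_A(\boldsymbol\alpha^B)]_{B,A}$ is therefore triangular with nonzero diagonal, and no normalization via Lemma~\ref{lem:coefficient-matrix-properties}(b) is needed.

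The paper applies this same construction to every tree or unicyclic component at once, rooting trees and assigning pendant edges to the endpoint farther from the cycle, so it needs no induction. Your route buys a separation-free treatment of trees but still requires this construction for the cycle.
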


\begin{proof}
For \textnormal{(a)}, the definition of \(\mathbf{C}_G(\mathbf W)\) gives
\(\bigl(R_A(\boldsymbol\alpha)\bigr)_{A\subseteq E}
=\allowbreak\left(\prod_{u\in V}\alpha_u^{j_u}\right)_{(j_u:u\in V)}
\mathbf{C}_G(\mathbf W)\).
Both indexed vectors are row vectors, and \((j_u:u\in V)\) ranges
over the monomial indices \(0\leq j_u\leq d_E(u)\).
Choose \(2^{\lvert E\rvert}\) values of \(\boldsymbol\alpha\) and
fix an order for these choices. For each chosen value
\(\boldsymbol\alpha\), use
\(\bigl(R_A(\boldsymbol\alpha)\bigr)_{A\subseteq E}\) as the row of
\(\bs{\mathcal R}_G\) and
\(\left(\prod_{u\in V}\alpha_u^{j_u}\right)_{(j_u:u\in V)}\) as the
row of \(\bs{\mathcal M}_G\), placing both rows at the position assigned
to \(\boldsymbol\alpha\) in the chosen order.
For each chosen \(\boldsymbol\alpha\), the row of \(\bs{\mathcal R}_G\)
equals the row of \(\bs{\mathcal M}_G\) multiplied by \(\mathbf{C}_G(\mathbf W)\).
Since left multiplication cannot increase rank, we obtain
\begin{equation}\label{eq:polynomial-evaluation-matrix}
\bs{\mathcal R}_G=\bs{\mathcal M}_G \mathbf{C}_G(\mathbf W),
\qquad
\operatorname{rank}\bs{\mathcal R}_G
\leq\operatorname{rank}\mathbf{C}_G(\mathbf W)\leq2^{\lvert E\rvert}.
\end{equation}
Since \(\bs{\mathcal R}_G\) is \(2^{\lvert E\rvert}\times2^{\lvert E\rvert}\),
it suffices to choose the values of \(\boldsymbol\alpha\) so that
\(\det\bs{\mathcal R}_G\neq0\). We will make \(\bs{\mathcal R}_G\) triangular
with nonzero diagonal entries.
To obtain the required zero entries, we use
\(R_A=\prod_{(u,v)\in A}R_{uv}\): if \(R_{uv}=0\), then
\(R_A=0\) for every \(A\) containing \((u,v)\).
We will therefore choose which dyad factors \(R_{uv}\) vanish at
each evaluation, while keeping the remaining dyad factors nonzero.

The factorization
\(R_{uv}=(1+W_{uv}\alpha_u)(1+W_{vu}\alpha_v)\) allows us to make
\(R_{uv}=0\) by setting either \(\alpha_u=-1/W_{uv}\) or
\(\alpha_v=-1/W_{vu}\). We write \((u,v)\mapsto u\) when we assign
dyad \((u,v)\) to node \(u\) and select the substitution
\(\alpha_u=-1/W_{uv}\). If two distinct dyads \((u,v)\) and
\((u,v')\) were assigned to \(u\), the selected substitutions would
prescribe different values of \(\alpha_u\) by
\eqref{eq:generic-outgoing-separation}. We therefore first construct
an assignment in which each node receives at most one dyad, and
then use the assignment to choose values of \((\alpha_u:u\in V)\).

For each connected component of \(G\) that is a tree, choose a root
node. For every node \(u\) in this component other than the root,
let \(v\) be the next node on the unique path from \(u\) to the root,
and assign \((u,v)\mapsto u\).
Every dyad in this component joins a non-root node to its next node
toward the root,
so every dyad is assigned exactly once. Each non-root node \(u\)
receives only the dyad \((u,v)\) on its path to the root; the root
receives none.

For each connected component of \(G\) that has exactly one cycle,
list the distinct nodes of its unique cycle as \(u_1,\ldots,u_\ell\)
so that \((u_j,u_{j+1})\in E\) for \(j=1,\ldots,\ell\), where
\(u_{\ell+1}=u_1\).
Assign \((u_j,u_{j+1})\mapsto u_{j+1}\) for \(j=1,\ldots,\ell\),
so each cycle node receives exactly one cycle dyad.
For every node \(u\) in this component but outside the cycle, let
\(v\) be the next node on the unique path from \(u\) to the cycle,
and assign \((u,v)\mapsto u\). Every dyad of this component that is
not on the cycle is assigned exactly once by this rule. Each node outside the cycle receives only the
dyad \((u,v)\) on its path to the cycle, and no cycle node receives
a dyad outside the cycle. Thus every node in the component receives
exactly one dyad.

We now use the assignment to choose values of \((\alpha_u:u\in V)\).
For each \(B\subseteq E\), define the vector
\(\boldsymbol\alpha^B=(\alpha_u^B:u\in V)\) by
\[
\alpha_u^B=
\begin{cases}
-1/W_{uv},&\text{if }(u,v)\in E\setminus B\text{ is assigned to }u,\\
0,&\text{if no dyad in }E\setminus B\text{ is assigned to }u.
\end{cases}
\]
Because each node receives at most one dyad, \(\alpha_u^B\) is
uniquely specified for every \(u\in V\).
Using the vectors \(\boldsymbol\alpha^B\) as the chosen values of
\(\boldsymbol\alpha\), we obtain
\(\bs{\mathcal R}_G=\bigl(R_A(\boldsymbol\alpha^B)\bigr)_{B,A\subseteq E}\)
and
\(\bs{\mathcal M}_G=\left(\prod_{u\in V}(\alpha_u^B)^{j_u}\right)_{B,(j_u:u\in V)}\).
In particular, \([\bs{\mathcal R}_G]_{B,A}=R_A(\boldsymbol\alpha^B)\).
We now prove \(\det\bs{\mathcal R}_G\neq0\).

Fix \((u,v)\in E\). If \((u,v)\notin B\), assigning it to \(u\)
gives \(1+W_{uv}\alpha_u^B=0\), while assigning it to \(v\) gives
\(1+W_{vu}\alpha_v^B=0\). Thus \(R_{uv}(\boldsymbol\alpha^B)=0\)
in either case.
If \((u,v)\in B\), then \(1+W_{uv}\alpha_u^B=1\) when
\(\alpha_u^B=0\). If a dyad \((u,v')\in E\setminus B\) is assigned
to \(u\), then \(v'\neq v\) and
\(1+W_{uv}\alpha_u^B=1-W_{uv}/W_{uv'}\neq0\) by \eqref{eq:generic-outgoing-separation}.
At node \(v\), \(1+W_{vu}\alpha_v^B=1\) when \(\alpha_v^B=0\).
If a dyad \((v,u')\in E\setminus B\) is assigned to \(v\), then
\(u'\neq u\) and
\(1+W_{vu}\alpha_v^B=1-W_{vu}/W_{vu'}\neq0\) by \eqref{eq:generic-outgoing-separation}.
Thus \(R_{uv}(\boldsymbol\alpha^B)
=(1+W_{uv}\alpha_u^B)(1+W_{vu}\alpha_v^B)\neq0\) for \((u,v)\in B\).
We have proved \(R_{uv}(\boldsymbol\alpha^B)=0\iff(u,v)\notin B\).
Recall that \([\bs{\mathcal R}_G]_{B,A}\), the entry in row \(B\) and
column \(A\), equals
\(R_A(\boldsymbol\alpha^B)
=\prod_{(u,v)\in A}R_{uv}(\boldsymbol\alpha^B)\).
Hence \([\bs{\mathcal R}_G]_{B,A}\neq0\iff A\subseteq B\).

Use the same ordered list of subsets of \(E\) for the row index
\(B\) and the column index \(A\) of \(\bs{\mathcal R}_G\), placing subsets
with fewer dyads first. The condition
\([\bs{\mathcal R}_G]_{B,A}\neq0\iff A\subseteq B\) makes
\(\bs{\mathcal R}_G\) lower triangular, with
\([\bs{\mathcal R}_G]_{B,B}=R_B(\boldsymbol\alpha^B)\neq0\).
Hence \(\det\bs{\mathcal R}_G=\prod_{B\subseteq E}R_B(\boldsymbol\alpha^B)\neq0\),
so \(\operatorname{rank}\bs{\mathcal R}_G=2^{\lvert E\rvert}\).
Equation~\eqref{eq:polynomial-evaluation-matrix} now gives
\(2^{\lvert E\rvert}=\operatorname{rank}\bs{\mathcal R}_G
\leq\operatorname{rank}\mathbf{C}_G(\mathbf W)\leq2^{\lvert E\rvert}\),
proving \(\operatorname{rank}\mathbf{C}_G(\mathbf W)=2^{\lvert E\rvert}\).
An isolated node \(u\) belongs to no dyad, so no \(R_A\) depends on
\(\alpha_u\), and the construction sets \(\alpha_u^B=0\) for every
\(B\subseteq E\). Thus isolated nodes do not affect \(\bs{\mathcal R}_G\).
The conclusion \(\operatorname{rank}\mathbf{C}_G(\mathbf W)=2^{\lvert E\rvert}\)
therefore also holds when \(G\) contains isolated nodes.
If \(E=\varnothing\), its only subset is \(A=\varnothing\), and
\(R_\varnothing=1\). Hence \(\mathbf{C}_G(\mathbf W)=(1)\) has rank
\(1=2^{\lvert E\rvert}\), so \(\mathbf{C}_G(\mathbf W)\) also has full column
rank when the graph has no dyads.

For \textnormal{(b)}, consider the pentad \(G=(V,E)\) with
\(V=\{i,j,k,l,m\}\),
\(E=\{(i,j),\allowbreak(i,k),\allowbreak(j,k),\allowbreak(i,l),\allowbreak(j,l),\allowbreak(i,m),\allowbreak(j,m)\}\),
and pivot dyad \((i,j)\). Recall that Theorem~\ref{thm:moment_restriction}
defines \(\psi_{ijklm}(\bs\beta_0)
=\det(\widetilde{\mathbf M}_{ijklm}(\bs\beta_0))\), where the matrix
\(\widetilde{\mathbf M}_{ijklm}(\bs\beta_0)\) is given in
\eqref{eq:feasible-matrix-app}, with entries specified in
\eqref{eq:feasible-coefficients-app}.
In \eqref{eq:delta-app} and \eqref{eq:feasible-coefficients-app},
replace each \(W_{uv}(\bs\beta_0)\) by the given positive weight
\(W_{uv}\). We keep the notation \(\psi_{ijklm}(\bs\beta_0)\) for
the resulting determinant, which is a function of \(\mathbf L_G\)
and \(\mathbf W\).
To show that \(\psi_{ijklm}(\bs\beta_0)\) is nonzero, choose the link pattern
\(\bs l^*=(l^*_{uv})_{(u,v)\in E}\) with
\(l^*_{jk}=l^*_{il}=l^*_{im}=l^*_{jm}=1\) and
\(l^*_{ij}=l^*_{ik}=l^*_{jl}=0\).
Substituting these link values into
\eqref{eq:feasible-coefficients-app} and \eqref{eq:feasible-matrix-app}
gives
\[
\left.\psi_{ijklm}(\bs\beta_0)\right|_{\mathbf L_G=\bs l^*}
=-\frac{W_{il}W_{im}W_{jk}W_{jm}W_{kj}W_{li}}
{W_{ij}W_{ji}}(W_{mi}-W_{mj})\neq0.
\]
The last step follows from \eqref{eq:generic-outgoing-separation} at
node \(m\) and positivity of the other factors.

The monomial argument in Lemma~\ref{lem:det-row-supports} shows
that \(\psi_{ijklm}(\bs\beta_0)\) has degree at most one in each
\(L_{uv}\). For fixed \(\mathbf W\), write
\(\psi_{ijklm}(\bs\beta_0)
=\sum_{A\subseteq E}c_A^{\mathrm{pent}}
\prod_{(u,v)\in E\setminus A}L_{uv}\).
This expansion is a polynomial identity, valid for every
\(\mathbf L_G\in\mathbb R^7\). The coefficients are unique by the
algebraic argument in Lemma~\ref{lem:local-moment-coefficients}.
Write \(\bs c^{\mathrm{pent}}=(c_A^{\mathrm{pent}})_{A\subseteq E}\).
These coefficients depend only on \(\mathbf W\), not on
\(\boldsymbol\alpha\).

For any \(\boldsymbol\alpha\in(0,\infty)^5\), define the numbers
\(P_{uv}:=R_{uv}(\boldsymbol\alpha)^{-1}
=\{(1+W_{uv}\alpha_u)(1+W_{vu}\alpha_v)\}^{-1}\)
for \((u,v)\in E\).
Substituting \(L_{uv}=P_{uv}\) for every \((u,v)\in E\) into
\(\widetilde{\mathbf M}_{ijklm}(\bs\beta_0)\) gives
\(\mathbf M_{ijklm}(\bs\beta_0)\), with the entries in
\eqref{eq:triangle-coefficients-app} evaluated at
\(P_{uv}=R_{uv}(\boldsymbol\alpha)^{-1}\) and the given weights
\((W_{uv},W_{vu})\).
This follows by comparing \eqref{eq:feasible-coefficients-app}
with \eqref{eq:triangle-coefficients-app}.
The numbers \(P_{uv}\) satisfy \eqref{eq:triangle-p-identities},
so the algebraic calculation leading to \eqref{eq:coeff_matrix} gives
\(\mathbf M_{ijklm}(\bs\beta_0)
(P_{ij}\alpha_i,P_{ij}\alpha_j,1)^\top=\mathbf0\).
Since \((P_{ij}\alpha_i,P_{ij}\alpha_j,1)^\top\neq\mathbf0\),
we have \(\det(\mathbf M_{ijklm}(\bs\beta_0))=0\).
Substituting the same values \(L_{uv}=P_{uv}\) into the polynomial
expansion of \(\psi_{ijklm}(\bs\beta_0)\) therefore yields
\(\sum_{A\subseteq E}c_A^{\mathrm{pent}}
\prod_{(u,v)\in E\setminus A}P_{uv}
=\allowbreak\det(\mathbf M_{ijklm}(\bs\beta_0))=0\).
Dividing by \(\prod_{(u,v)\in E}P_{uv}>0\) and using
\(\prod_{(u,v)\in A}P_{uv}^{-1}=R_A(\boldsymbol\alpha)\) gives
\(\sum_{A\subseteq E}c_A^{\mathrm{pent}}R_A(\boldsymbol\alpha)=0\)
for every \(\boldsymbol\alpha\in(0,\infty)^5\).
Thus the polynomial
\(\sum_{A\subseteq E}c_A^{\mathrm{pent}}R_A(\boldsymbol\alpha)\)
vanishes on a nonempty open set and is identically zero.
By the definition of \(\mathbf{C}_G(\mathbf W)\), the monomial coefficient
vector of this polynomial is \(\mathbf{C}_G(\mathbf W)\bs c^{\mathrm{pent}}\).
Hence \(\mathbf{C}_G(\mathbf W)\bs c^{\mathrm{pent}}=\mathbf0\).

Recall that \(\psi_{ijklm}(\bs\beta_0)
=\sum_{A\subseteq E}c_A^{\mathrm{pent}}
\prod_{(u,v)\in E\setminus A}L_{uv}\).
Since \(\psi_{ijklm}(\bs\beta_0)\) is nonzero at
\(\mathbf L_G=\bs l^*\), at least one coefficient
\(c_A^{\mathrm{pent}}\) in this identity is nonzero.
Hence \(\bs c^{\mathrm{pent}}\neq\mathbf0\).
Consequently, \(\mathbf{C}_G(\mathbf W)\) has \(2^{\lvert E\rvert}=128\)
columns and a nonzero vector in its null space.
Equation~\eqref{eq:local-moment-nullity} therefore gives
\(\operatorname{rank}\mathbf{C}_G(\mathbf W)\leq127\).
\end{proof}

Lemma~\ref{lem:separated-weight-ranks}\textnormal{(b)} gives the upper
bound \(\operatorname{rank}\mathbf{C}_G(\mathbf W)\leq127\) for the pentad
under \eqref{eq:generic-outgoing-separation}.
We next construct, for each five-node, seven-dyad graph \(G\),
positive weights \(\mathbf W\) such that
\(\operatorname{rank}\mathbf{C}_G(\mathbf W)\geq127\) if \(G\) is the pentad
and \(\operatorname{rank}\mathbf{C}_G(\mathbf W)\geq128\) otherwise.
We begin by computing the rank when every directed weight equals
one, because the polynomials \(R_A\) then have a particularly simple
form. This computation will provide the starting point for studying
how the rank changes when the weights move away from one.

Fix \(G=(V,E)\) with \(\lvert V\rvert=5\) and \(\lvert E\rvert=7\).
Set \(W_{uv}=W_{vu}=1\) for every \((u,v)\in E\), denote this weight
vector by \(\mathbf1\), and let
\(r_G:=\operatorname{rank}\mathbf{C}_G(\mathbf1)\).
At \(\mathbf W=\mathbf1\), column \(A\) of \(\mathbf{C}_G(\mathbf1)\)
contains the coefficients of
\(R_A=\prod_{u\in V}(1+\alpha_u)^{d_A(u)}\), where \(d_A(u)\) is
the degree of \(u\) in \((V,A)\).
To compute \(r_G\), apply the substitution
\(\alpha_u\mapsto\alpha_u-1\) for every \(u\in V\).
This substitution acts by an invertible row transformation on the
monomial coefficients: its inverse is
\(\alpha_u\mapsto\alpha_u+1\). Neither substitution increases the
degree in any \(\alpha_u\) beyond \(d_E(u)\).
Under \(\alpha_u\mapsto\alpha_u-1\), column \(A\)
becomes the coefficient vector of the single monomial
\(\prod_{u\in V}\alpha_u^{d_A(u)}\).
Each column has a single entry equal to one in the row indexed
by its degree vector, with all other entries zero. Thus distinct
degree vectors give linearly independent columns, while equal
degree vectors give identical columns. Consequently,
\(r_G=\lvert\{(d_A(u):u\in V):A\subseteq E\}\rvert\).

A five-node graph with seven dyads has a complement with five nodes and
\(\binom{5}{2}-7=3\) dyads.
Up to relabeling, the complement is a
\mbox{three-dyad star
(\taxgraph[baseline=-0.5ex]{
    \node[taxNode] (a) at (0,0) {};
    \node[taxNode] (b) at (0,.65) {};
    \node[taxNode] (c) at (-.65,-.35) {};
    \node[taxNode] (d) at (.65,-.35) {};
    \node[taxNode] (e) at (1.4,.35) {};
    \draw[taxEdge] (a)--(b);
    \draw[taxEdge] (a)--(c);
    \draw[taxEdge] (a)--(d);
})}, a
\mbox{triangle
(\taxgraph[baseline=-0.5ex]{
    \node[taxNode] (a) at (-.55,-.36) {};
    \node[taxNode] (b) at (.55,-.36) {};
    \node[taxNode] (c) at (0,.45) {};
    \node[taxNode] (d) at (1.4,.4) {};
    \node[taxNode] (e) at (1.4,-.4) {};
    \draw[taxEdge] (a)--(b)--(c)--(a);
})}, a
\mbox{three-dyad path
(\taxgraph[baseline=-0.5ex]{
    \node[taxNode] (a) at (0,0) {};
    \node[taxNode] (b) at (.7,0) {};
    \node[taxNode] (c) at (1.4,0) {};
    \node[taxNode] (d) at (2.1,0) {};
    \node[taxNode] (e) at (3,0) {};
    \draw[taxEdge] (a)--(b)--(c)--(d);
})}, or a
\mbox{two-dyad path and a disjoint dyad
(\taxgraph[baseline=-0.5ex]{
    \node[taxNode] (a) at (-.55,.15) {};
    \node[taxNode] (b) at (0,.55) {};
    \node[taxNode] (c) at (.55,.15) {};
    \node[taxNode] (d) at (-.45,-.45) {};
    \node[taxNode] (e) at (.45,-.45) {};
    \draw[taxEdge] (a)--(b)--(c);
    \draw[taxEdge] (d)--(e);
})},
including isolated nodes where necessary. After deleting isolated
nodes, a connected three-dyad graph is a triangle or one of the two trees on four nodes;
a disconnected one consists of a two-dyad path and a disjoint dyad,
since three disjoint dyads would require six nodes.
Table~\ref{tab:five-node-ranks} specifies a representative of each
type by listing its missing dyads; \(E\) contains all other dyads
on the stated node set. The second representative is the pentad
with pivot \((i,j)\). We use the representatives specified in the
table throughout the calculations.

Since \(\sum_{u\in V} d_A(u)=2\lvert A\rvert\), subsets of different sizes
have different degree vectors. Also,
\(d_{E\setminus A}(u)=d_E(u)-d_A(u)\), so the numbers of distinct
vectors for sizes \(k\) and \(7-k\) agree. Enumerating subsets of
sizes \(0,1,2,3\) gives counts \((1,7,19,27)\) for the star
complement, \((1,7,18,28)\) for the triangle complement,
\((1,7,19,29)\) for the three-dyad path complement, and
\((1,7,18,28)\) for the two-dyad path with a disjoint dyad.
Doubling each total gives the ranks \(r_G\) in the table.
Our target is \(\operatorname{rank}\mathbf{C}_G(\mathbf W)\geq128\) for the
non-pentad graphs and \(\operatorname{rank}\mathbf{C}_G(\mathbf W)\geq127\)
for the pentad. Accordingly, set \(s_G=128-r_G\) for the former
and \(s_G=127-r_G\) for the latter. It then suffices to choose
positive weights \(\mathbf W\) such that
\(\operatorname{rank}\mathbf{C}_G(\mathbf W)\geq r_G+s_G\).
Table~\ref{tab:five-node-ranks} summarizes \(r_G\) and \(s_G\)
for the four graphs.

  \begin{table}[H]
  \centering
  \caption{Representative seven-dyad graphs on \(V=\{i,j,k,l,m\}\)
  and their ranks \(r_G\) and required rank increases \(s_G\).}
  \label{tab:five-node-ranks}
  {\renewcommand{\arraystretch}{1.35}%
  \setlength{\tabcolsep}{8pt}%
  \begin{tabular}{@{}ccccc@{}}
  \toprule
  Degrees & Complement & Missing dyads & \(r_G\) & \(s_G\)\\
  \midrule
  \((4,3,3,3,1)\) & \taxgraph[baseline=-0.5ex]{
    \node[taxNode] (a) at (0,0) {};
    \node[taxNode] (b) at (0,.65) {};
    \node[taxNode] (c) at (-.65,-.35) {};
    \node[taxNode] (d) at (.65,-.35) {};
    \node[taxNode] (e) at (1.4,.35) {};
    \draw[taxEdge] (a)--(b);
    \draw[taxEdge] (a)--(c);
    \draw[taxEdge] (a)--(d);
  } & \(\{(i,j),(i,k),(i,l)\}\) & 108 & 20\\
  \addlinespace[0.35ex]
  \((4,4,2,2,2)\) & \taxgraph[baseline=-0.5ex]{
    \node[taxNode] (a) at (-.55,-.36) {};
    \node[taxNode] (b) at (.55,-.36) {};
    \node[taxNode] (c) at (0,.45) {};
    \node[taxNode] (d) at (1.4,.4) {};
    \node[taxNode] (e) at (1.4,-.4) {};
    \draw[taxEdge] (a)--(b)--(c)--(a);
  } & \(\{(k,l),(k,m),(l,m)\}\) & 108 & 19\\
  \addlinespace[0.35ex]
  \((4,3,3,2,2)\) & \taxgraph[baseline=-0.5ex]{
    \node[taxNode] (a) at (0,0) {};
    \node[taxNode] (b) at (.7,0) {};
    \node[taxNode] (c) at (1.4,0) {};
    \node[taxNode] (d) at (2.1,0) {};
    \node[taxNode] (e) at (3,0) {};
    \draw[taxEdge] (a)--(b)--(c)--(d);
  } & \(\{(i,j),(j,k),(k,l)\}\) & 112 & 16\\
  \addlinespace[0.35ex]
  \((3,3,3,3,2)\) & \taxgraph[baseline=-0.5ex]{
    \node[taxNode] (a) at (-.55,.15) {};
    \node[taxNode] (b) at (0,.55) {};
    \node[taxNode] (c) at (.55,.15) {};
    \node[taxNode] (d) at (-.45,-.45) {};
    \node[taxNode] (e) at (.45,-.45) {};
    \draw[taxEdge] (a)--(b)--(c);
    \draw[taxEdge] (d)--(e);
  } & \(\{(i,j),(i,k),(l,m)\}\) & 108 & 20\\
  \bottomrule
  \end{tabular}
  }
  \end{table}

We now perturb the weights away from \(\mathbf1\) along a path
\(\mathbf W(t)=\mathbf1+t\mathbf{H}+o(t)\), where \(t\) is a scalar tending
to zero and \(\mathbf{H}=(H_{uv},H_{vu})_{(u,v)\in E}\) is a fixed real array.
Thus each directed weight satisfies \(W_{uv}(t)=1+tH_{uv}+o(t)\):
\(H_{uv}\) specifies its first-order change from one.
The aim is to increase the ranks \(r_G\) in
Table~\ref{tab:five-node-ranks} by choosing \(\mathbf{H}\) appropriately.
To study the resulting change in \(\mathbf{C}_G(\mathbf W(t))\), we use its
polynomial dependence on the weights:
expanding \(R_A\) shows that every entry of \(\mathbf{C}_G(\mathbf W)\)
is a polynomial in \(\mathbf W\) with integer coefficients.
Every minor of \(\mathbf{C}_G(\mathbf W)\) is also a polynomial in \(\mathbf W\), since it is a
determinant of these entries. Consequently, each minor is analytic
in the covariates wherever the weights are analytic in the covariates, a property we
will use later. The next lemma gives a sufficient condition
for \(\operatorname{rank}\mathbf{C}_G(\mathbf W(t))\geq r_G+s_G\).

\begin{lem}\label{lem:weight-rank-lifting}
Fix a graph \(G\), let \(r_G=\operatorname{rank}\mathbf{C}_G(\mathbf1)\), and
consider \(\mathbf W(t)=\mathbf1+t\mathbf{H}+o(t)\) as \(t\to0\). Write
\(\mathbf{C}_G(\mathbf W(t))=\mathbf{C}_G(\mathbf1)+t\dot{\mathbf{C}}_G+o(t)\), where
\(\dot{\mathbf{C}}_G:=\left.\frac{d}{dt}\mathbf{C}_G(\mathbf1+t\mathbf{H})\right|_{t=0}\)
is the derivative of \(\mathbf{C}_G\) at \(\mathbf1\) in direction \(\mathbf{H}\).
Choose fixed invertible rational matrices \(\mathbf U_G,\mathbf V_G\)
such that
\[
\mathbf U_G\mathbf{C}_G(\mathbf1)\mathbf V_G
=\begin{pmatrix}\mathbf{I}_{r_G}&0\\0&0\end{pmatrix},
\]
and let \(\mathbf{T}_G\) be the block of \(\mathbf U_G\dot{\mathbf{C}}_G\mathbf V_G\)
after its first \(r_G\) rows and columns. If
\(\operatorname{rank}\mathbf{T}_G\geq s_G\) for a nonnegative integer \(s_G\), then
\begin{equation}
\operatorname{rank}\mathbf{C}_G(\mathbf W(t))\geq r_G+s_G
\quad\text{for all nonzero }t\text{ sufficiently close to zero}.
\label{eq:rank-lifting}
\end{equation}
If the path is defined only for \(t>0\) and satisfies
\(\mathbf W(t)=\mathbf1+t\mathbf{H}+o(t)\) as \(t\downarrow0\), then
\eqref{eq:rank-lifting} holds for all sufficiently small \(t>0\).
\end{lem}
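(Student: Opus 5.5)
This is a standard perturbation-of-rank argument, and I would prove it by exhibiting a nonvanishing minor of size \(r_G+s_G\). Since rank is invariant under multiplication by invertible matrices, I first pass to
\[
\mathbf U_G\mathbf{C}_G(\mathbf W(t))\mathbf V_G
=\begin{pmatrix}\mathbf{I}_{r_G}+t\mathbf A_{11}+o(t)&t\mathbf A_{12}+o(t)\\ t\mathbf A_{21}+o(t)&t\mathbf{T}_G+o(t)\end{pmatrix},
\]
where \(\mathbf U_G\dot{\mathbf{C}}_G\mathbf V_G=\begin{pmatrix}\mathbf A_{11}&\mathbf A_{12}\\ \mathbf A_{21}&\mathbf{T}_G\end{pmatrix}\). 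For this to be legitimate, \(\mathbf{C}_G(\mathbf W(t))=\mathbf{C}_G(\mathbf1)+t\dot{\mathbf{C}}_G+o(t)\) must hold along the given path and not only along the straight line \(\mathbf1+t\mathbf H\). That follows because the entries of \(\mathbf{C}_G(\mathbf W)\) are polynomials in \(\mathbf W\), hence differentiable. The chain rule gives the derivative along the path as \(D\mathbf{C}_G(\mathbf1)[\mathbf H]=\dot{\mathbf{C}}_G\), with an \(o(t)\) remainder since \(\mathbf W(t)-\mathbf1=t\mathbf H+o(t)\).

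Next I choose index sets \(I,J\) of size \(s_G\) such that the \(s_G\times s_G\) submatrix \([\mathbf{T}_G]_{I,J}\) is nonsingular. This is possible because \(\operatorname{rank}\mathbf{T}_G\ge s_G\); if \(s_G=0\), the statement is trivial. I then consider the square submatrix of the transformed matrix formed by the first \(r_G\) rows together with the rows in \(I\) (offset by \(r_G\)), and the first \(r_G\) columns together with the columns in \(J\).

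The core step is to evaluate its determinant by the Schur complement. For \(t\) small, the upper-left block \(\mathbf{I}+O(t)\) is invertible with inverse \(\mathbf I+O(t)\). The Schur complement is
\[
t[\mathbf{T}_G]_{I,J}+o(t)-\bigl(t[\mathbf A_{21}]_{I,\cdot}+o(t)\bigr)(\mathbf I+O(t))\bigl(t[\mathbf A_{12}]_{\cdot,J}+o(t)\bigr)=t\bigl([\mathbf{T}_G]_{I,J}+o(1)\bigr),
\]
because the correction term is \(O(t^2)\). The determinant is therefore
\[
(1+O(t))\,t^{s_G}\bigl(\det[\mathbf{T}_G]_{I,J}+o(1)\bigr),
\]
which is nonzero for all nonzero \(t\) sufficiently close to zero. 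Hence the transformed matrix, and so \(\mathbf{C}_G(\mathbf W(t))\), has rank at least \(r_G+s_G\).

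For a path defined only for \(t>0\), the same asymptotic expansions hold as \(t\downarrow0\), and the argument goes through verbatim for small \(t>0\). The only delicate point is keeping track of the \(o(t)\) remainders in the off-diagonal blocks, and the Schur complement factorization is exactly the device that shows they contribute only at order \(t^2\). I do not expect any real obstacle beyond that bookkeeping.
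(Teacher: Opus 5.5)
Your proposal is correct and follows essentially the paper's argument. Both proofs use Schur-complement elimination around the invertible block \(\mathbf{I}_{r_G}+O(t)\), which turns the lower-right block into \(t(\mathbf{T}_G+o(1))\), and then keep a nonzero \(s_G\)-minor of \(\mathbf{T}_G\) by continuity. The differences are cosmetic: you restrict to an \((r_G+s_G)\)-square submatrix and compute its determinant directly (and justify the expansion along a general path via the chain rule), whereas the paper eliminates on the full matrix to obtain \(\operatorname{rank}\mathbf{C}_G(\mathbf W(t))=r_G+\operatorname{rank}(\mathbf{T}_G+o(1))\).
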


\begin{proof}
The entries of \(\mathbf{C}_G\) are polynomials in \(\mathbf W\), so the
expansion exists and \(\dot{\mathbf{C}}_G\) depends linearly on \(\mathbf{H}\).
Rational Gaussian elimination at the integer matrix \(\mathbf{C}_G(\mathbf1)\)
provides \(\mathbf U_G,\mathbf V_G\). Keep both matrices fixed as
\(t\) and \(\mathbf{H}\) vary. Then
\[
\mathbf U_G\mathbf{C}_G(\mathbf W(t))\mathbf V_G
=\begin{pmatrix}\mathbf{I}_{r_G}+O(t)&O(t)\\O(t)&t\mathbf{T}_G+o(t)\end{pmatrix}.
\]
For small \(t\), the upper-left block is invertible with bounded
inverse. Row operations eliminate the lower-left block and replace
the lower-right block by the Schur complement
\[
t\mathbf{T}_G+o(t)-O(t)(\mathbf{I}_{r_G}+O(t))^{-1}O(t)=t(\mathbf{T}_G+o(1)),
\]
since the subtracted product is \(O(t^2)=o(t)\).
Column operations then eliminate the upper-right block without
changing either diagonal block, giving
\[
\begin{pmatrix}\mathbf{I}_{r_G}+O(t)&0\\0&t(\mathbf{T}_G+o(1))\end{pmatrix}.
\]
These invertible row and column operations preserve rank. The
upper-left block has rank \(r_G\), and multiplication by \(t\neq0\)
does not change the rank of the lower-right factor. Hence
\(\operatorname{rank}\mathbf{C}_G(\mathbf W(t))=r_G+\operatorname{rank}(\mathbf{T}_G+o(1))\)
for all nonzero \(t\) sufficiently close to zero. If \(s_G\geq1\), a nonzero minor of \(\mathbf{T}_G\) of
size \(s_G\) remains nonzero in \(\mathbf{T}_G+o(1)\) by the continuity of determinant.
The case \(s_G=0\) is trivial.
For a path defined only for \(t>0\), the block elimination and
continuity argument apply using the expansion as \(t\downarrow0\).
\end{proof}

We now construct directions \(\mathbf{H}\) satisfying
\(\operatorname{rank}\mathbf{T}_G\geq s_G\), so that
Lemma~\ref{lem:weight-rank-lifting} yields the required rank bound for
\(\mathbf{C}_G(\mathbf W)\).

\begin{lem}\label{lem:five-node-directions}
For each representative graph \(G=(V,E)\) in
Table~\ref{tab:five-node-ranks}, with \(V=\{i,j,k,l,m\}\),
define \(\mathbf{T}_G\) from the direction \(\mathbf{H}\) as in
Lemma~\ref{lem:weight-rank-lifting}. Then:
\begin{enumerate}[label=\textnormal{(\alph*)},leftmargin=2em]
\item For \((z_i,z_j,z_k,z_l,z_m)=(0,1,3,7,12)\) and
\(H_{uv}=z_v\), we have \(\operatorname{rank}\mathbf{T}_G=s_G\).
\item Let \((z_i,z_j,z_k,z_l,z_m)\) be any permutation of
\((0,1,2,3,4)\), so that each node is assigned a different value.
For every \(\kappa\in\{1,2\}\) and every
\(a_+,a_-\in\mathbb R\) with \(a_+a_-\neq0\), the direction
\[
H_{uv}=\begin{cases}
a_+\lvert z_v-z_u\rvert^\kappa,&z_u<z_v,\\
a_-\lvert z_v-z_u\rvert^\kappa,&z_u>z_v
\end{cases}
\]
satisfies \(\operatorname{rank}\mathbf{T}_G\geq s_G\).
\end{enumerate}
\end{lem}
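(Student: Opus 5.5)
The plan is to turn both parts into an explicit finite linear-algebra computation, using the monomial basis that diagonalizes \(\mathbf{C}_G(\mathbf1)\) in the computation of \(r_G\) above. The substitution \(\alpha_u\mapsto\alpha_u-1\) is an invertible rational row transformation. It sends column \(A\) of \(\mathbf{C}_G(\mathbf1)\) to the single monomial \(\prod_u\alpha_u^{d_A(u)}\). So I will take \(\mathbf U_G\) to be this substitution followed by a permutation that puts the \(r_G\) realized degree vectors first. I will take \(\mathbf V_G\) to keep one representative subset \(A\) for each realized degree vector. The remaining columns of \(\mathbf V_G\) will be differences \(\mathbf e_{A'}-\mathbf e_{A}\) for subsets \(A'\) with the same degree vector as their representative \(A\). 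These differences span \(\operatorname{null}\mathbf{C}_G(\mathbf1)\), and with these choices \(\mathbf U_G\mathbf{C}_G(\mathbf1)\mathbf V_G\) has the required block form.

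Next I will compute \(\dot{\mathbf C}_G\) in closed form. Differentiating \(R_A=\prod_{(u,v)\in A}(1+W_{uv}\alpha_u)(1+W_{vu}\alpha_v)\) at \(\mathbf W=\mathbf1\) in direction \(\mathbf H\), and then shifting \(\alpha\mapsto\alpha-1\), gives the shifted column
\[
\sum_{u\in V}h_A(u)\,(\alpha_u-1)\,\alpha_u^{d_A(u)-1}\prod_{w\neq u}\alpha_w^{d_A(w)},
\qquad h_A(u):=\sum_{v:(u,v)\in A}H_{uv}.
\]
This vector is supported on \(\alpha^{d_A}\) and on the monomials \(\alpha^{d_A-\mathbf e_u}\). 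Hence \(\mathbf T_G\) has an explicit description. Its columns are indexed by pairs \((A,A')\) with \(d_A=d_{A'}\). Its rows are indexed by the unrealized degree vectors \(\mathbf j\). Its entries are \(h_{A}(u)-h_{A'}(u)\) up to sign, whenever \(\mathbf j=d_A-\mathbf e_u\) is not realized. All these entries are integer linear forms in \(\mathbf H\), so \(\mathbf T_G\) is an explicit integer (or polynomial) matrix of modest size for each of the four representatives in Table~\ref{tab:five-node-ranks}.

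For part (a), I will substitute \(H_{uv}=z_v\) with \(z=(0,1,3,7,12)\), so that \(h_A(u)\) is the sum of \(z\) over the \(A\)-neighbours of \(u\). I will then compute \(\operatorname{rank}\mathbf T_G\) exactly over \(\mathbb Q\) for each graph and check that it equals \(s_G\in\{20,19,16,20\}\). This step is routine computer-assisted exact elimination. Equality, rather than only \(\geq\), also serves as a consistency check on the bound \(\operatorname{rank}\leq127\) for the pentad from Lemma~\ref{lem:separated-weight-ranks}(b).

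For part (b), \(\mathbf H=a_+\mathbf H^++a_-\mathbf H^-\) is linear in \((a_+,a_-)\), and rank is invariant to scaling. I can therefore set \(a_+=1\) and \(a:=a_-\neq0\), so \(\mathbf T_G=\mathbf T_G^++a\mathbf T_G^-\). For each graph, each \(\kappa\in\{1,2\}\) and each of the \(120\) permutations, the number of cases can be reduced using the automorphism group of \(G\) together with Lemma~\ref{lem:coefficient-matrix-properties}(a). In each remaining case I will compute the \(s_G\times s_G\) minors of \(\mathbf T_G(a)\) as polynomials in \(a\), or more cheaply the gcd of a few selected minors, and show that this gcd has no nonzero real root. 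That gives \(\operatorname{rank}\geq s_G\) for every \(a\neq0\).

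I expect this uniformity in \(a\) to be the main obstacle. A single generic minor may vanish at some isolated negative \(a\). If it does, I would first try adding further minors until their common real zeros are only \(a=0\). Failing that, I would do a separate exact rank computation at each real root of the first minor.
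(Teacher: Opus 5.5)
Your proposal is correct and follows essentially the same route as the paper: an exact rational rank computation of $\mathbf{T}_G$ for part (a), and for part (b) the reduction $\operatorname{rank}\mathbf{T}_G=\operatorname{rank}(\mathbf{T}_{G,+}+\rho\,\mathbf{T}_{G,-})$ with $\rho=a_-/a_+$, followed by a nonzero $s_G\times s_G$ minor viewed as a polynomial in $\rho$ and separate exact rank checks at its nonzero real roots. Your explicit $\mathbf U_G,\mathbf V_G$ (the shift $\alpha_u\mapsto\alpha_u-1$ together with degree-class differences, which gives $\mathbf{T}_G$ entries $h_A(u)-h_{A'}(u)$) is a legitimate implementation choice, since $\operatorname{rank}\mathbf{T}_G$ does not depend on how $\mathbf U_G,\mathbf V_G$ are chosen; your optional automorphism reduction is also fine, provided you note that relabeling invariance holds for $\mathbf{T}_G$ itself (not only for $\mathbf{C}_G$ as in Lemma~\ref{lem:coefficient-matrix-properties}(a)), which follows from the same permutation argument.
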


\begin{proof}
We describe the exact finite calculations establishing both assertions.
For the path \(W_{uv}(t)=1+tH_{uv}\), column \(A\) of \(\dot{\mathbf{C}}_G\)
is obtained by collecting the coefficients in
\[
\left.\frac{d}{dt}R_A(\bs \alpha)\right|_{t=0}
=\sum_{(u,v)\in A}
\{H_{uv}\alpha_u+H_{vu}\alpha_v+(H_{uv}+H_{vu})\alpha_u\alpha_v\}
\prod_{(a,b)\in A\setminus\{(u,v)\}}(1+\alpha_a)(1+\alpha_b).
\]
For \(A=\varnothing\), the derivative is zero. Deleting the first
\(r_G\) rows and columns of \(\mathbf U_G\dot{\mathbf{C}}_G\mathbf V_G\)
gives \(\mathbf{T}_G\), which depends linearly on \(\mathbf{H}\).

For \textnormal{(a)}, use the four graphs in
Table~\ref{tab:five-node-ranks} in order, with
\((z_i,z_j,z_k,z_l,z_m)=(0,1,3,7,12)\). These values determine
\(H_{uv}=z_v\), allowing every entry of \(\mathbf{T}_G\) to be computed
exactly as a rational number. Exact rational elimination then
verifies that \(\operatorname{rank}\mathbf{T}_G\) equals \(20,19,16,20\),
respectively, matching \(s_G\) for each graph.
The script \texttt{verify\_pentad\_minimality.py}
performs these exact rank calculations.\footnote{Computer algebra has also been used to derive and verify
moment conditions in functional differencing; see, e.g.,
\citet{honore2024moment}. Our computer-assisted calculations throughout
this proof use exact rational and algebraic arithmetic, so no numerical
approximation is involved. In particular, all matrix ranks are computed exactly.
The script is available at
\url{https://www.zeqiwu.com/research/NTU/verify_pentad_minimality.py}.}

For \textnormal{(b)}, fix a representative graph and
\(\kappa\in\{1,2\}\), and let \((z_i,z_j,z_k,z_l,z_m)\) be any
permutation of \((0,1,2,3,4)\). There are \(5!=120\) such assignments.
Keep these values fixed throughout the calculation. Let
\(\mathbf{T}_{G,+}\) and \(\mathbf{T}_{G,-}\) denote the matrices \(\mathbf{T}_G\) computed
from the direction \(\mathbf{H}\) in \textnormal{(b)} with
\((a_+,a_-)=(1,0)\) and \((0,1)\), respectively.
Both constructions use the same fixed
elimination matrices \(\mathbf U_G,\mathbf V_G\). Linearity in \(\mathbf{H}\) gives
\(\mathbf{T}_G=a_+\mathbf{T}_{G,+}+a_-\mathbf{T}_{G,-}\). With \(\rho=a_-/a_+\neq0\),
\(\operatorname{rank}\mathbf{T}_G=\operatorname{rank}(\mathbf{T}_{G,+}+\rho \mathbf{T}_{G,-})\).
Both blocks have rational entries because the node values are integers
and the elimination matrices are rational.

At \(\rho=1\), every entry of \(\mathbf{T}_{G,+}+\mathbf{T}_{G,-}\) can be computed
exactly as a rational number. Exact rational elimination verifies
\(\operatorname{rank}(\mathbf{T}_{G,+}+\mathbf{T}_{G,-})\geq s_G\).
For each graph and each \(\kappa\in\{1,2\}\), we carry out this
calculation for all \(120\) assignments of the node values.
Scan the columns of \(\mathbf{T}_{G,+}+\mathbf{T}_{G,-}\) in their fixed order and retain a column
whenever it is independent of those already retained, until \(s_G\)
columns have been selected. Within these columns, select \(s_G\)
independent rows by the same rule. Denote the selected row and column
sets by \(I,J\), and set
\(p(\rho)=\det\{(\mathbf{T}_{G,+})_{I,J}+\rho(\mathbf{T}_{G,-})_{I,J}\}\).
The polynomial \(p(\rho)\) has rational coefficients, degree at most \(s_G\),
and \(p(1)\neq0\). For any \(\rho\neq0\) with \(p(\rho)\neq0\),
the selected \(s_G\times s_G\) submatrix of
\(\mathbf{T}_{G,+}+\rho \mathbf{T}_{G,-}\) is nonsingular, so
\(\operatorname{rank}\mathbf{T}_G=\operatorname{rank}(\mathbf{T}_{G,+}+\rho \mathbf{T}_{G,-})\geq s_G\).

Factor \(p(\rho)\) into irreducible polynomials with rational
coefficients, retaining the constant factor and the multiplicity
of each factor.
In every calculated case the nonconstant factors have degree one or
two. Their real roots are determined exactly: a linear factor
\(a\rho+b\) has root \(-b/a\); a quadratic factor
\(a\rho^2+b\rho+c\) has real roots
\((-b\pm\sqrt{b^2-4ac})/(2a)\) precisely when \(b^2-4ac\geq0\).
We check each distinct nonzero real root of \(p\).
For any nonzero real \(\rho\) that is not a root of \(p\), we have
\(p(\rho)\neq0\), so the selected \(s_G\times s_G\) submatrix of
\(\mathbf{T}_{G,+}+\rho \mathbf{T}_{G,-}\) is nonsingular and
\(\operatorname{rank}\mathbf{T}_G=\operatorname{rank}(\mathbf{T}_{G,+}+\rho \mathbf{T}_{G,-})\geq s_G\).

For each nonzero real root \(\rho_*\), keep the graph, \(\kappa\),
and node values fixed. We compute the rank of
\(\mathbf{T}_{G,+}+\rho_*\mathbf{T}_{G,-}\) using exact arithmetic and verify that
it is at least \(s_G\). Rational roots use exact rational arithmetic;
irrational roots use exact quadratic algebraic arithmetic, retaining
the square root symbolically.
Together with the argument for \(p(\rho)\neq0\), these calculations
verify \(\operatorname{rank}\mathbf{T}_G
=\operatorname{rank}(\mathbf{T}_{G,+}+\rho \mathbf{T}_{G,-})\geq s_G\)
for every real \(\rho\neq0\).

The script \texttt{verify\_pentad\_minimality.py} checks all
\(120\) assignments of \((z_i,z_j,z_k,z_l,z_m)\) obtained by
permuting \((0,1,2,3,4)\), for each of the four graphs and each
\(\kappa\in\{1,2\}\), giving \(4\times120\times2=960\) cases.
It verifies the initial ranks at \(\rho=1\),
finds only linear and quadratic factors, and finds at most four
distinct nonzero real roots per case. Counting roots separately across
cases gives \(1392\) roots, at each of which a nonzero minor of size
\(s_G\) is obtained.
The selected indices, factorizations, and exact
nonzero determinants can be reproduced by the script.
For each real \(\rho\neq0\), either \(p(\rho)\neq0\)
or the exact calculation at \(\rho\) supplies another nonzero minor
of size \(s_G\).
Thus \(\operatorname{rank}\mathbf{T}_G\geq s_G\) for every nonzero \(\rho\),
with the same node values throughout each case.
This proves \textnormal{(b)} for every permutation
\((z_i,z_j,z_k,z_l,z_m)\) of \((0,1,2,3,4)\).
\end{proof}

\begin{proof}[Proof of Theorem~\ref{thm:pentad_minimality}]
Fix \(\bs\beta_0\neq\mathbf0\). From now on, the weights are the
functions \(W_{uv}=\exp\{-\bs\beta_0^\top w(\bs X_u,\bs X_v)\}\)
of the node covariates. By Lemma~\ref{lem:local-moment-coefficients},
it suffices to prove that, for almost every \((\bs X_u:u\in V)\),
\(\mathbf{C}_G(\mathbf W)\) has full column rank for every graph with at
most six dyads and for each non-pentad five-node, seven-dyad graph,
and has rank \(127\) for the pentad.
We first establish analyticity and weight separation
\eqref{eq:generic-outgoing-separation}, then construct
the five-node rank bounds from Lemma~\ref{lem:five-node-directions}.

Write \(x_{u,r}\) for coordinate \(r\) of a possible value of
\(\bs X_u\). For each coordinate, fix one of conditions
\textnormal{(i)} and \textnormal{(ii)} that it satisfies.
Under \textnormal{(i)}, \(w_{r'}\) is analytic on \(\mathbb R^2\).
Under \textnormal{(ii)}, the analytic expression for
\(w_{r'}(x_{u,r'},x_{v,r'})\) depends on the sign of
\(x_{v,r'}-x_{u,r'}\). We therefore divide the covariate space
into regions on which these signs are fixed.

Let \(\mathcal P(V)
=\{(\pi_1,\ldots,\pi_{|V|})\in V^{|V|}:
\pi_a\neq\pi_b\ \text{for all }1\leq a<b\leq |V|\}\)
be the set of all permutations of the node labels in \(V\).
For each coordinate \(r'\) treated under
\textnormal{(ii)}, choose a permutation
\(\pi^{(r')}=(\pi^{(r')}_1,\ldots,\pi^{(r')}_{|V|})
\in\mathcal P(V)\).
Each combination of the permutations \(\pi^{(r')}\) defines a
region that is a subset of the covariate space
\(\mathbb R^{d\lvert V\rvert}\). It consists of all vectors
\((x_{u,r}:u\in V,\ r=1,\ldots,d)\) satisfying
\[
x_{\pi^{(r')}_1,r'}<\cdots<x_{\pi^{(r')}_{|V|},r'}
\qquad
\text{for every coordinate \(r'\) treated under \textnormal{(ii)}}.
\]
Coordinates treated under \textnormal{(i)} are unrestricted.
Each region is open and convex, hence connected, in
\(\mathbb R^{d\lvert V\rvert}\).
We consider all combinations of the permutations \(\pi^{(r')}\),
allowing different coordinates to use different permutations.
The resulting finitely many regions cover the space except for the hyperplanes
\(x_{u,r'}=x_{v,r'}\), where \(u\neq v\) and \(r'\) is treated
under \textnormal{(ii)}. If no coordinate is treated under
\textnormal{(ii)}, the entire space is one region.

Within each region, a fixed analytic branch \(f_{r',+}\) or
\(f_{r',-}\) applies to each term
\(w_{r'}(x_{u,r'},x_{v,r'})\) treated under \textnormal{(ii)}.
It follows that, with \(\bs\beta_0\) fixed,
\(W_{uv}=\exp\{-\sum_{r'=1}^d\beta_{0r'}w_{r'}(x_{u,r'},x_{v,r'})\}\)
is real analytic in all \(d\lvert V\rvert\) coordinates
\((x_{u,r}:u\in V,\ r=1,\ldots,d)\) on that region: sums and
composition with the exponential preserve analyticity. The entries
and minors of \(\mathbf{C}_G(\mathbf W)\) are polynomials in the weights (\(W_{uv},W_{vu}:\ u,v\in V\)),
so they are real analytic in
\((x_{u,r}:u\in V,\ r=1,\ldots,d)\) on each region.
The weights, and hence the minors, are continuous across coordinate
ties because \(f_{r',+}(0)=f_{r',-}(0)\).

We first prove \eqref{eq:generic-outgoing-separation} for almost every
\((\bs X_u:u\in V)\). Fix distinct nodes \(u,v,v'\), with \(v,v'\)
neighbors of \(u\), and define
\(D_{u;v,v'}=\bs\beta_0^\top\{w(\bs X_u,\bs X_v)-w(\bs X_u,\bs X_{v'})\}\).
The equality \(W_{uv}=W_{uv'}\) is equivalent to \(D_{u;v,v'}=0\).
It therefore suffices to show that \(D_{u;v,v'}\neq0\) for almost
every \((\bs X_u:u\in V)\).
Fix one of the regions and a coordinate \(r\) with \(\beta_{0r}\neq0\).
The function \(D_{u;v,v'}\) is analytic on the region. If \(r\) is
treated under \textnormal{(i)}, set \(x_{u,r}=x_{v',r}=x_r^\circ\),
vary \(x_{v,r}\) near \(x_r^\circ\), and hold all remaining
coordinates fixed at values satisfying the inequalities defining
the chosen region. This path stays in the region because coordinate
\(r\) is unrestricted under \textnormal{(i)}. At \(x_{v,r}=x_r^\circ\),
\(\partial D_{u;v,v'}/\partial x_{v,r}
=\beta_{0r}\partial_{y_r}w_r(x_r^\circ,x_r^\circ)\neq0\).
If \(r\) is treated under \textnormal{(ii)}, then within the region
\[
\frac{\partial^{\kappa_r}D_{u;v,v'}}{\partial x_{v,r}^{\kappa_r}}
=\begin{cases}
\beta_{0r}f_{r,+}^{(\kappa_r)}(x_{v,r}-x_{u,r}),&x_{u,r}<x_{v,r},\\
(-1)^{\kappa_r}\beta_{0r}f_{r,-}^{(\kappa_r)}(x_{u,r}-x_{v,r}),&x_{u,r}>x_{v,r}.
\end{cases}
\]
The derivative is nonzero for sufficiently small
\(\lvert x_{v,r}-x_{u,r}\rvert\), by continuity and the nonzero
derivatives at zero in the theorem. For every node \(a\in V\),
replace \(x_{a,r}\) by \(\varepsilon x_{a,r}\), keeping all other
coordinates fixed. For \(\varepsilon>0\), these covariates remain
in the chosen region because all strict inequalities are preserved.
As \(\varepsilon\downarrow0\), the difference
\(\lvert\varepsilon x_{v,r}-\varepsilon x_{u,r}\rvert
=\varepsilon\lvert x_{v,r}-x_{u,r}\rvert\) tends to zero.
Hence \(\partial^{\kappa_r}D_{u;v,v'}/\partial x_{v,r}^{\kappa_r}\)
is nonzero at the scaled covariates for sufficiently small
\(\varepsilon>0\). Under either condition, \(D_{u;v,v'}\)
therefore has a nonzero derivative somewhere in the region and is
not identically zero. Its zero set has Lebesgue measure zero by
\citet[Proposition~1]{mityagin2020zero}. Each hyperplane
\(x_{a,r'}=x_{b,r'}\), where \(a,b\in V\), \(a\neq b\), and
\(r'\) is treated under \textnormal{(ii)}, also has Lebesgue measure
zero. These hyperplanes consist of covariate vectors for which two
nodes have equal values in coordinate \(r'\).
Taking the finite union of these hyperplanes and the zero sets over
all regions and all triples \((u,v,v')\) of distinct nodes with
\(v,v'\) both neighbors of \(u\) therefore proves
\eqref{eq:generic-outgoing-separation} almost everywhere.

Now fix one of the four labeled graphs in
Table~\ref{tab:five-node-ranks} and one region in
\(\mathbb R^{5d}\). We construct covariates in that region for which
\(\operatorname{rank}\mathbf{C}_G(\mathbf W)\geq r_G+s_G\).
Choose \(r\) with \(\beta_{0r}\neq0\). Initially set
\(x_{u,r'}=\bar x_{r'}\) for every node \(u\) and each \(r'\neq r\),
where \(\bar x_{r'}\) are arbitrary fixed real numbers. These
coordinates will be separated after constructing a nonzero minor.

If \(r\) is treated under \textnormal{(i)}, use the labels fixed
in Table~\ref{tab:five-node-ranks}, take
\((z_i,z_j,z_k,z_l,z_m)=(0,1,3,7,12)\), and set
\(x_{u,r}=x_r^\circ+tz_u\) for small \(t>0\).
The original weights along this covariate path are
\(W_{uv}(t)=\exp\{
-\sum_{r'\neq r}\beta_{0r'}w_{r'}(\bar x_{r'},\bar x_{r'})
-\beta_{0r}w_r(x_r^\circ+tz_u,x_r^\circ+tz_v)\}\).
Set
\(\widetilde W_{uv}(t)=s_uW_{uv}(t)\), where
\(s_u=\exp\{\sum_{r'\neq r}\beta_{0r'}w_{r'}(\bar x_{r'},\bar x_{r'})
+\beta_{0r}w_r(x_r^\circ+tz_u,x_r^\circ)\}\).
The diagonal row transformation in
\eqref{eq:outgoing-weight-invariance} preserves rank. Taylor expansion
in the second argument gives
\[
\begin{aligned}
\widetilde W_{uv}(t)
&=s_uW_{uv}(t)=\exp\!\left[-\beta_{0r}
\{w_r(x_r^\circ+tz_u,x_r^\circ+tz_v)-w_r(x_r^\circ+tz_u,x_r^\circ)\}\right]\\
&=1-t\beta_{0r}\frac{\partial w_r}{\partial y_r}(x_r^\circ,x_r^\circ)z_v+O(t^2).
\end{aligned}
\]
The direction \(H_{uv}=z_v\) is covered by
Lemma~\ref{lem:five-node-directions}\textnormal{(a)}. The expansion
multiplies that direction by
\(-\beta_{0r}\partial_{y_r}w_r(x_r^\circ,x_r^\circ)\neq0\).
Since \(\mathbf{T}_G\) depends linearly on \(\mathbf{H}\), multiplying \(\mathbf{H}\) by
\(-\beta_{0r}\partial_{y_r}w_r(x_r^\circ,x_r^\circ)\neq0\) multiplies \(\mathbf{T}_G\)
by that scalar and leaves its rank equal to \(s_G\).
Lemma~\ref{lem:weight-rank-lifting} gives
\(\operatorname{rank}\mathbf{C}_G(\widetilde{\mathbf W}(t))\geq r_G+s_G\).
By \eqref{eq:outgoing-weight-invariance},
\(\operatorname{rank}\mathbf{C}_G(\mathbf W(t))\geq r_G+s_G\)
for all sufficiently small \(t>0\).

If \(r\) is treated under \textnormal{(ii)}, assign \(0,1,2,3,4\)
to \((z_u:u\in V)\) in the precise node order required by the chosen
region in coordinate \(r\). For example, if the chosen region requires
\(x_{k,r}<x_{i,r}<x_{m,r}<x_{j,r}<x_{l,r}\), take
\((z_k,z_i,z_m,z_j,z_l)=(0,1,2,3,4)\).
Lemma~\ref{lem:five-node-directions}\textnormal{(b)}
applies to this assignment of the node values. Set \(x_{u,r}=c+tz_u\) for any fixed
\(c\in\mathbb R\) and small \(t>0\), and put
\(a_\pm=f_{r,\pm}^{(\kappa_r)}(0)/\kappa_r!\).
Both coefficients are nonzero. Define \(\mathbf{H}\) by
Lemma~\ref{lem:five-node-directions}\textnormal{(b)} with
\(\kappa=\kappa_r\). Use the  scaling
\(\widetilde W_{uv}(t)=s_uW_{uv}(t)\), now with
\(s_u=\exp\{\sum_{r'\neq r}\beta_{0r'}w_{r'}(\bar x_{r'},\bar x_{r'})
+\beta_{0r}f_{r,+}(0)\}\) for every \(u\).
Taylor expansion of \(f_{r,+}\) or \(f_{r,-}\) at zero,
according to the sign of \(z_v-z_u\), gives
\(w_r(c+tz_u,c+tz_v)-f_{r,+}(0)=t^{\kappa_r}H_{uv}+o(t^{\kappa_r})\).
When \(\kappa_r=2\), the linear term vanishes because
\(f'_{r,+}(0)=f'_{r,-}(0)=0\). Consequently,
\[
\widetilde W_{uv}(t)
=\exp\!\left[-\beta_{0r}\{w_r(c+tz_u,c+tz_v)-f_{r,+}(0)\}\right]
=1-t^{\kappa_r}\beta_{0r}H_{uv}+o(t^{\kappa_r}).
\]
Since \(\mathbf{T}_G\) depends linearly on \(\mathbf{H}\), multiplying \(\mathbf{H}\) by
\(-\beta_{0r}\neq0\) multiplies \(\mathbf{T}_G\) by that scalar
and leaves its rank at least \(s_G\).
Set \(\tau=t^{\kappa_r}\). Then
\(\widetilde{\mathbf W}(\tau^{1/\kappa_r})
=\mathbf1+\tau(-\beta_{0r}\mathbf{H})+o(\tau)\) as \(\tau\downarrow0\).
Lemma~\ref{lem:weight-rank-lifting}, applied with parameter \(\tau\)
and direction \(-\beta_{0r}\mathbf{H}\), gives
\(\operatorname{rank}\mathbf{C}_G(\widetilde{\mathbf W}(\tau^{1/\kappa_r}))
\geq r_G+s_G\) for all sufficiently small \(\tau>0\).
Since \(t=\tau^{1/\kappa_r}\), \eqref{eq:outgoing-weight-invariance}
then yields \(\operatorname{rank}\mathbf{C}_G(\mathbf W(t))\geq r_G+s_G\)
for all sufficiently small \(t>0\).

In either case, fix one such \(t>0\) and a nonzero minor of size
\(r_G+s_G\) of the original matrix \(\mathbf{C}_G(\mathbf W(t))\).
Recall that we have held \(x_{u,r'}=\bar x_{r'}\) fixed for every
node \(u\in V\) and every coordinate \(r'\neq r\).
We now adjust the coordinates \(r'\neq r\) treated under
\textnormal{(ii)} so that the full covariate vector lies in the
chosen region.
For each coordinate \(r'\neq r\) treated under
\textnormal{(ii)}, choose a value \(x_{u,r'}\) near \(\bar x_{r'}\)
for each of the five nodes \(u\in V\), with the five values distinct
and ordered as required by the chosen region in coordinate \(r'\).
Keep coordinate \(r\) and all coordinates
treated under \textnormal{(i)} fixed. Continuity of the weights and
determinants across ties ensures that the selected minor stays nonzero
for sufficiently small changes. We have therefore constructed values of
\((x_{u,r'}:u\in V,\ r'=1,\ldots,d)\) inside the chosen region
where \(\operatorname{rank}\mathbf{C}_G(\mathbf W)\geq r_G+s_G\).

On the chosen region, the selected minor of size \(r_G+s_G\) is
real analytic in \((x_{u,r}:u\in V,\ r=1,\ldots,d)\) and is
nonzero at the constructed covariate vector. Its zero set therefore
has Lebesgue measure zero by \citet[Proposition~1]{mityagin2020zero}.
Outside this zero set, the selected \((r_G+s_G)\times(r_G+s_G)\)
submatrix is nonsingular, so
\(\operatorname{rank}\mathbf{C}_G(\mathbf W)\geq r_G+s_G\).
The construction applies to every region. The finite union of the
zero sets of the minors selected in the respective regions, together
with the hyperplanes
\(x_{a,r'}=x_{b,r'}\) for distinct nodes \(a,b\in V\) and coordinates
\(r'\) treated under \textnormal{(ii)}, has Lebesgue measure zero.
Thus, for each of the four graphs \(G\) specified in
Table~\ref{tab:five-node-ranks},
\(\operatorname{rank}\mathbf{C}_G(\mathbf W)\geq r_G+s_G\) for almost every
\((\bs X_u:u\in V)\).
Every other labeled five-node, seven-dyad graph is obtained by
relabeling the nodes of one of the four graphs in
Table~\ref{tab:five-node-ranks}. Apply the same relabeling to the
node covariates.
The relabeling preserves rank by
Lemma~\ref{lem:coefficient-matrix-properties}\textnormal{(a)}.
The coordinate permutation also preserves sets of Lebesgue measure zero,
so the rank bound holds almost everywhere for the relabeled graph.
Hence every non-pentad five-node, seven-dyad graph has rank \(128\)
almost everywhere. For the pentad, the lower bound is \(127\);
Lemma~\ref{lem:separated-weight-ranks}\textnormal{(b)}
gives the matching upper bound.
The pentad therefore has rank \(127\) almost everywhere.

We next show that \(\mathbf{C}_G(\mathbf W)\) has full column rank almost
everywhere for every graph \(G\) with at most six dyads.
First, we show that every graph on at most five nodes with at most
six dyads is contained in a non-pentad seven-dyad graph on five nodes.
To see this, take any graph \(G=(V,E)\) with \(\lvert V\rvert\leq5\)
and \(\lvert E\rvert\leq6\). Add \(5-\lvert V\rvert\) isolated
nodes to \(G\), then add \(7-\lvert E\rvert\) dyads between
previously unconnected pairs of the five nodes. The resulting graph
contains \(G\) and has exactly five nodes and seven dyads.
If the resulting graph is a pentad, at least one of its dyads was
added. Replace one added dyad by a dyad among the three peripheral
nodes. The complement of the modified five-node graph consists of
the two dyads still missing among the three peripheral nodes and
the removed pentad dyad. Since the removed dyad is not
between two peripheral nodes, the complement cannot be a triangle.
The modified graph is non-pentad and still contains \(G\).
Since \(G\) is a subgraph of the resulting non-pentad five-node
graph, Lemma~\ref{lem:coefficient-matrix-properties}\textnormal{(c)}
implies that \(\mathbf{C}_G(\mathbf W)\) has full column rank for almost
every choice of all five nodes' covariates.

The full-column-rank conclusion for \(\mathbf{C}_G(\mathbf W)\) above is with respect to
the covariates of all five nodes. If \(\lvert V\rvert<5\), we now
deduce that it holds for almost every choice of the covariates of
the original nodes in \(V\). Let
\(\mathcal N_G\subseteq\mathbb R^{d\lvert V\rvert}\) be the set of
covariate vectors \((x_{u,r}:u\in V,\ r=1,\ldots,d)\) for which
\(\mathbf{C}_G(\mathbf W)\) does not have full column rank.
Since \(\mathbf{C}_G(\mathbf W)\) does not depend on the added nodes'
covariates, the set where it fails to have full column rank in
\(\mathbb R^{5d}\) is exactly
\(\mathcal N_G\times\mathbb R^{d(5-\lvert V\rvert)}\).
The full-column-rank result for all five nodes' covariates shows
that this product has Lebesgue measure zero. Its subset
\(\mathcal N_G\times[0,1]^{d(5-\lvert V\rvert)}\) therefore also
has measure zero. Since the unit cube \([0,1]^{d(5-\lvert V\rvert)}\) has volume one in \(\mathbb R^{d(5-\lvert V\rvert)}\), Fubini's
theorem implies that \(\mathcal N_G\) has Lebesgue measure zero.
Hence \(\mathbf{C}_G(\mathbf W)\) has full column rank for almost every
\((\bs X_u:u\in V)\).

For an arbitrary graph \(G=(V,E)\) with at most six dyads, consider each connected
component \(G_j=(V_j,E_j)\). If \(\lvert V_j\rvert\leq5\),
the result for graphs on at most five nodes gives full column rank
of \(\mathbf{C}_{G_j}(\mathbf W)\) for almost every
\((\bs X_u:u\in V_j)\). If a component
has \(n\geq6\) nodes and \(m\) dyads, connectedness and the total
dyad bound give \(n-1\leq m\leq6\leq n\). Thus \(m=n-1\) or
\(m=n\). Since the component is connected, it is a tree when
\(m=n-1\) and has exactly one cycle when \(m=n\).
Since we have established \eqref{eq:generic-outgoing-separation}
for almost every \((\bs X_u:u\in V_j)\),
Lemma~\ref{lem:separated-weight-ranks}\textnormal{(a)} implies that
\(\mathbf{C}_{G_j}(\mathbf W)\) has full column rank almost everywhere.
For each connected component \(G_j\), the set of covariate vectors
\((\bs X_u:u\in V_j)\) for which \(\mathbf{C}_{G_j}(\mathbf W)\) does not
have full column rank has Lebesgue measure zero. Allowing the
covariates of nodes in \(V\setminus V_j\) to vary freely gives a
set of Lebesgue measure zero in \(\mathbb R^{d\lvert V\rvert}\),
by Fubini's theorem. Since \(G\) has finitely many connected
components, the union of the sets in \(\mathbb R^{d\lvert V\rvert}\)
where \(\mathbf{C}_{G_j}(\mathbf W)\) fails to have full column rank has Lebesgue measure zero.
Thus, for almost every \((\bs X_u:u\in V)\),
the matrices \(\mathbf{C}_{G_j}(\mathbf W)\) have full column rank
simultaneously for all connected components \(G_j\) of \(G\).
Lemma~\ref{lem:coefficient-matrix-properties}\textnormal{(d)}
then gives \(\operatorname{rank}\mathbf{C}_G=\prod_j2^{\lvert E_j\rvert}
=2^{\lvert E\rvert}\). An isolated node has coefficient matrix
\((1)\), so it does not change the rank product.

We have shown that \(\mathbf{C}_G(\mathbf W)\) has full column rank
\(2^{\lvert E\rvert}\) for almost every \((\bs X_u:u\in V)\)
whenever \(G\) has at most six dyads or is a non-pentad five-node,
seven-dyad graph. For a pentad, \(\mathbf{C}_G(\mathbf W)\) has rank
\(127\) for almost every \((\bs X_u:u\in V)\).

Since the node covariates have a joint density under our assumptions,
the above rank results hold with probability one for each fixed graph
and labeling. For fixed \(N\), only finitely many graphs and
labelings are under consideration, so the statements hold
simultaneously with probability one.
For any \(\psi_G\) satisfying \eqref{eq:graph-local-moment},
Lemma~\ref{lem:local-moment-coefficients} gives
\(\bs c\in\operatorname{null}\mathbf{C}_G(\mathbf W)\) almost surely. Full column rank
forces \(\bs c=\mathbf0\), so
\(\psi_G(\bs l,\mathbf X_G;\bs\beta_0)
=\sum_{A\subseteq E}c_A(\mathbf X_G)
\prod_{(u,v)\in E\setminus A}l_{uv}=0\)
for every link pattern \(\bs l\in\{0,1\}^{\lvert E\rvert}\).
This proves part~\textnormal{(a)}, since
a simple graph with fewer than five nodes has at most six dyads.
For a non-pentad five-node, seven-dyad graph, full column rank
of \(\mathbf{C}_G(\mathbf W)\) forces \(\psi_G\) to be trivial.
Thus, among five-node, seven-dyad graphs, only a pentad can
admit a nontrivial moment function satisfying
\eqref{eq:graph-local-moment}.

For arbitrary distinct node labels \(i,j,k,l,m\), let
\(G=(V,E)\) be the pentad with \(V=\{i,j,k,l,m\}\)
and pivot \((i,j)\).
Recall that \(\bs c^{\mathrm{pent}}
=(c_A^{\mathrm{pent}})_{A\subseteq E}\) denotes the coefficients
of \(\prod_{(u,v)\in E\setminus A}L_{uv}\) in
\(\psi_{ijklm}(\bs\beta_0)\).
Since \eqref{eq:generic-outgoing-separation} holds almost surely,
the proof of Lemma~\ref{lem:separated-weight-ranks}\textnormal{(b)}
gives \(\bs c^{\mathrm{pent}}\in
\operatorname{null}\mathbf{C}_G(\mathbf W)\setminus\{\mathbf0\}\)
almost surely.
For any \(\psi_G\) satisfying \eqref{eq:graph-local-moment}, let
\(\bs c=(c_A(\mathbf X_G))_{A\subseteq E}\) be the coefficients
of \(\prod_{(u,v)\in E\setminus A}L_{uv}\) in
\(\psi_G(\mathbf L_G,\mathbf X_G;\bs\beta_0)\).
Lemma~\ref{lem:local-moment-coefficients} gives
\(\mathbf{C}_G(\mathbf W)\bs c=\mathbf0\) almost surely.
Since \(\mathbf{C}_G(\mathbf W)\) has \(128\) columns and rank \(127\)
almost surely, its null space is one-dimensional. Both \(\bs c\)
and the nonzero vector \(\bs c^{\mathrm{pent}}\) lie in this
null space, so \(\bs c\) is a scalar multiple of
\(\bs c^{\mathrm{pent}}\) almost surely.
By \eqref{eq:link-pattern-values}, this proportionality implies
proportionality of the function values for every link pattern. Define the factor
directly from those values by
\[
C(\mathbf X_G)
=\frac{\displaystyle\sum_{\bs l\in\{0,1\}^7}
\psi_G(\bs l,\mathbf X_G;\bs\beta_0)
\left(\left.\psi_{ijklm}(\bs\beta_0)\right|_{\mathbf L_G=\bs l}\right)}
{\displaystyle\sum_{\bs l\in\{0,1\}^7}
\left(\left.\psi_{ijklm}(\bs\beta_0)\right|_{\mathbf L_G=\bs l}\right)^2}
\]
when the denominator is positive, and set \(C(\mathbf X_G)=0\)
otherwise. Each summand is measurable in \(\mathbf X_G\), and the
denominator is positive almost surely because the pentad moment is
nonzero. Thus \(C\) is measurable, and for every link pattern,
\(\psi_G(\bs l,\mathbf X_G;\bs\beta_0)
=C(\mathbf X_G)\left.\psi_{ijklm}(\bs\beta_0)\right|_{\mathbf L_G=\bs l}\)
almost surely. Taking \(\bs l=\mathbf L_G\) gives
\(\psi_G(\mathbf L_G,\mathbf X_G;\bs\beta_0)
=C(\mathbf X_G)\psi_{ijklm}(\bs\beta_0)\) almost surely,
proving part~\textnormal{(b)}. The argument applies to every fixed
\(\bs\beta_0\neq\mathbf0\). The proof is completed.
\end{proof}
\subsection{Relabeling identities}

\begin{lem}
\label{lem:pivsym}
\label{lem:periphantisym}
With ordered dyadic covariates, the pentad moment function \(\psi_{ijklm}\) has the following relabeling
properties. \textnormal{(i)} It is invariant under interchange of the pivotal
nodes $i$ and $j$:
$\psi_{ijklm}(\bs\beta)=\psi_{jiklm}(\bs\beta)$.
\textnormal{(ii)} It is alternating under permutations of the peripheral nodes
$k,l,m$: for any permutation $\sigma$ of $\{k,l,m\}$,
\(\psi_{ij,\sigma(k,l,m)}(\bs\beta)
=\sgn(\sigma)\psi_{ijklm}(\bs\beta)\), where \(\sgn(\sigma)=1\) if
\(\sigma\) is even and \(\sgn(\sigma)=-1\) if \(\sigma\) is odd.
\end{lem}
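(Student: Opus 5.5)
Both properties follow from how the rows and columns of \(\widetilde{\mathbf M}_{ijklm}(\bs\beta)\) in \eqref{eq:feasible-matrix-app} transform under relabeling. We then apply two facts about \(3\times3\) determinants: permuting rows multiplies the determinant by the sign of the permutation, and negating all three columns multiplies it by \((-1)^3=-1\). We work directly from the explicit entries in \eqref{eq:feasible-coefficients-app} and the definition of \(\Delta_{ijr}(\bs\beta)\) in \eqref{eq:delta-app}, at a generic \(\bs\beta\).

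For part (ii), the row of \(\widetilde{\mathbf M}_{ijklm}(\bs\beta)\) indexed by \(r\in\{k,l,m\}\) is \((\widetilde b_{ijr},\widetilde c_{ijr},\widetilde d_{ijr})\). Its entries depend only on \(i,j,r\), that is, on \(L_{ij},L_{ir},L_{jr}\) and the weights \(W_{uv}(\bs\beta)\) among \(\{i,j,r\}\), and not on the other two peripheral labels. Hence replacing \((k,l,m)\) by \(\sigma(k,l,m)\) produces the same matrix with its rows permuted by \(\sigma\). The determinant is therefore multiplied by \(\sgn(\sigma)\), which is the claim.

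For part (i), we compare the entries for \((j,i,r)\) with those for \((i,j,r)\). The key identity is
\[
\Delta_{jir}(\bs\beta)=W_{jr}W_{ir}\{W_{ri}-W_{rj}\}=-\Delta_{ijr}(\bs\beta),
\]
together with \(L_{ji}=L_{ij}\). Substituting these into \eqref{eq:feasible-coefficients-app}, we will check term by term three relations:
\[
\widetilde b_{jir}=-\widetilde c_{ijr},\qquad \widetilde c_{jir}=-\widetilde b_{ijr},\qquad \widetilde d_{jir}=-\widetilde d_{ijr}.
\]
For the first relation, the bracket of \(\widetilde b_{jir}\) contains \(-\Delta_{jir}/W_{ij}=+\Delta_{ijr}/W_{ij}\), which matches the bracket of \(\widetilde c_{ijr}\) up to the overall minus sign. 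The second relation is symmetric to the first. For the third, the two bracketed terms in \(\widetilde d\) swap places with opposite sign, and the \(\Delta\) term flips sign while its denominator \(W_{ij}W_{ji}\) is unchanged.

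Consequently \(\widetilde{\mathbf M}_{jiklm}\) equals \(-1\) times \(\widetilde{\mathbf M}_{ijklm}\) with its first two columns interchanged. The column interchange contributes a factor \(-1\) to the determinant, and the overall negation contributes \((-1)^3\). The total factor is \(+1\), so \(\psi_{jiklm}(\bs\beta)=\psi_{ijklm}(\bs\beta)\).

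The only step requiring care is the entrywise sign bookkeeping in part (i), especially for \(\widetilde d\) and the \(\Delta\)-terms. We will verify it by writing out each of the three relations above in full.
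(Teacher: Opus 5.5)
Your proposal is correct and follows the paper's proof essentially step for step. For part (ii) you use the sign change of a determinant under a row permutation. For part (i) you use $\Delta_{jir}(\bs\beta)=-\Delta_{ijr}(\bs\beta)$ to get $\widetilde b_{jir}=-\widetilde c_{ijr}$, $\widetilde c_{jir}=-\widetilde b_{ijr}$ and $\widetilde d_{jir}=-\widetilde d_{ijr}$, so the column swap and the global negation give the factor $(-1)(-1)^3=1$. You also state explicitly that the argument uses $L_{ji}=L_{ij}$, which the paper uses without saying so.
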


\begin{proof}
For part~(i), the definitions in \eqref{eq:delta-app} and
\eqref{eq:feasible-coefficients-app} imply that interchanging \(i\) and \(j\) gives
\(\Delta_{jir}(\bs\beta)
=W_{jr}(\bs\beta)W_{ir}(\bs\beta)
\{W_{ri}(\bs\beta)-W_{rj}(\bs\beta)\}
=-\Delta_{ijr}(\bs\beta)\),
while the pivot weights \(W_{ij}(\bs\beta)\) and \(W_{ji}(\bs\beta)\)
exchange roles. Substituting
these identities in the entries of
\(\widetilde{\mathbf M}_{jiklm}(\bs\beta)\) gives
$\widetilde b_{jik}(\bs\beta)=-\widetilde c_{ijk}(\bs\beta)$,
$\widetilde c_{jik}(\bs\beta)=-\widetilde b_{ijk}(\bs\beta)$, and
$\widetilde d_{jik}(\bs\beta)=-\widetilde d_{ijk}(\bs\beta)$
for every peripheral nodes $k,l,m$.
Thus $\widetilde{\mathbf M}_{jiklm}(\bs\beta)$ is obtained from
$\widetilde{\mathbf M}_{ijklm}(\bs\beta)$
by swapping the first two columns and multiplying all three columns
by $-1$. The determinant multiplier is $(-1)(-1)^3=1$, so
$\psi_{jiklm}(\bs\beta)=\psi_{ijklm}(\bs\beta)$.
For part~(ii), permuting the peripheral nodes only permutes the rows of
$\widetilde{\mathbf M}_{ijklm}(\bs\beta)$ by the same permutation.
Therefore the determinant
is multiplied by the sign of that permutation:
$\psi_{ij,\sigma(k,l,m)}(\bs\beta)=\sgn(\sigma)\psi_{ijklm}(\bs\beta)$.
\end{proof}

\section{Auxiliary Probability Tools}
\label{app:probtools}

\begin{lem}[Conditional Lindeberg central limit theorem for independent arrays]
\label{lem:conditional_triangular_array_clt}
For each \(N\), let \(\mathcal F_N\) be a sigma-field and let
\(\{X_{i,N}:i\in\mathcal I_N\}\) be conditionally independent random
variables given \(\mathcal F_N\), where \(\mathcal I_N\) is finite,
\(\E[X_{i,N}\mid\mathcal F_N]=0\), and
\(\E[X_{i,N}^2\mid\mathcal F_N]<\infty\). Set
\(S_N:=\sum_{i\in\mathcal I_N}X_{i,N}\) and
\(V_N:=\sum_{i\in\mathcal I_N}\E[X_{i,N}^2\mid\mathcal F_N]\).
Suppose that, for some deterministic \(\sigma^2\ge0\),
\(V_N\xrightarrow{P}\sigma^2\) and, for every \(\varepsilon>0\),
\(\sum_{i\in\mathcal I_N}
\E\!\left[X_{i,N}^2\1\{\abs{X_{i,N}}>\varepsilon\}
\mid\mathcal F_N\right]\xrightarrow{P}0\). Then
\(S_N\xrightarrow{D}N(0,\sigma^2)\).
\end{lem}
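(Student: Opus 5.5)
The plan is to prove the lemma by characteristic functions. I will first establish convergence of the \emph{conditional} characteristic function in probability, and then pass to the unconditional characteristic function by bounded convergence. Fix \(t\in\mathbb R\) and write \(\sigma_{i,N}^2:=\E[X_{i,N}^2\mid\mathcal F_N]\) and \(\varphi_N(t):=\E[e^{itS_N}\mid\mathcal F_N]\). By conditional independence given \(\mathcal F_N\), \(\varphi_N(t)=\prod_{i\in\mathcal I_N}\E[e^{itX_{i,N}}\mid\mathcal F_N]\) almost surely. The target is \(\varphi_N(t)\xrightarrow{P}e^{-t^2\sigma^2/2}\).

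\textbf{Step 1: negligible individual variances.} For every \(\varepsilon>0\),
\[
\max_i\sigma_{i,N}^2\le\varepsilon^2+\sum_i\E[X_{i,N}^2\1\{\abs{X_{i,N}}>\varepsilon\}\mid\mathcal F_N].
\]
The Lindeberg condition therefore gives \(\max_i\sigma_{i,N}^2\xrightarrow{P}0\). Hence, on events \(\Omega_N\) with \(\Pr(\Omega_N)\to1\), every \(b_i:=1-t^2\sigma_{i,N}^2/2\) lies in \([0,1]\), and \(V_N\le\sigma^2+1\).

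\textbf{Step 2: second-order expansion of each factor.} On \(\Omega_N\), all the complex numbers \(a_i:=\E[e^{itX_{i,N}}\mid\mathcal F_N]\) and the \(b_i\) have modulus at most one. The telescoping inequality \(\abs{\prod a_i-\prod b_i}\le\sum\abs{a_i-b_i}\) then applies. The conditional mean-zero property and the bound \(\abs{e^{ix}-1-ix+x^2/2}\le\min(\abs{x}^3/6,x^2)\) give
\[
\sum_i\abs{a_i-b_i}\le\frac{\abs t^3\varepsilon}{6}V_N+t^2\sum_i\E[X_{i,N}^2\1\{\abs{X_{i,N}}>\varepsilon\}\mid\mathcal F_N].
\]
The last term tends to zero in probability by the Lindeberg condition, while \(V_N=O_p(1)\). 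Since \(\varepsilon\) is arbitrary, \(\abs{\prod a_i-\prod b_i}\xrightarrow{P}0\).

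\textbf{Step 3: from the product to the Gaussian limit.} I compare \(\prod b_i\) with \(\prod e^{-t^2\sigma_{i,N}^2/2}=e^{-t^2V_N/2}\), again by telescoping. Using \(\abs{e^{-x}-(1-x)}\le x^2/2\) for \(x\ge0\), the difference is at most \(\tfrac{t^4}{8}\max_i\sigma_{i,N}^2\,V_N\xrightarrow{P}0\) by Step 1. Finally, \(V_N\xrightarrow{P}\sigma^2\) and continuity give \(e^{-t^2V_N/2}\xrightarrow{P}e^{-t^2\sigma^2/2}\). Combining the three steps yields \(\varphi_N(t)\xrightarrow{P}e^{-t^2\sigma^2/2}\).

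\textbf{Step 4: unconditioning.} Since \(\abs{\varphi_N(t)}\le1\), bounded convergence gives \(\E[e^{itS_N}]=\E[\varphi_N(t)]\to e^{-t^2\sigma^2/2}\) for every \(t\). Lévy's continuity theorem then gives \(S_N\xrightarrow{D}N(0,\sigma^2)\).

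The main obstacle I expect is the randomness of the conditional variances, which makes the moduli of the \(b_i\) uncontrolled in general. This is why I restrict to the high-probability events \(\Omega_N\) and exploit Step 1, the conditional version of Feller's negligibility. The factorization of \(\varphi_N\) is a routine consequence of conditional independence via regular conditional distributions.
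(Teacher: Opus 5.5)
Your proof is correct, and it takes a different route from the paper's.

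\textbf{The paper's route.} It does not re-derive the Lindeberg--Feller argument. It first disposes of \(\sigma^2=0\) by conditional Chebyshev. It then fixes an arbitrary subsequence and, by a diagonal argument over \(\varepsilon=1/m\), extracts a further subsequence along which \(V_N\to\sigma^2\) and every Lindeberg sum vanishes almost surely. Next it passes to regular conditional distributions, so that for almost every fixed \(\omega\) the \(X_{i,N}\) are genuinely independent and centered. It applies the classical Lindeberg--Feller theorem pathwise to get almost-sure convergence of \(\E[e^{\mathrm{i}tS_N}\mid\mathcal F_N]\). Finally, dominated convergence and the sub-subsequence criterion recover the full sequence.

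\textbf{Your route.} You run the characteristic-function proof of Lindeberg--Feller directly under the conditional expectation. The key ingredients are the factorization from conditional independence, the conditional Feller negligibility \(\max_i\sigma_{i,N}^2\xrightarrow{P}0\), and the two telescoping bounds restricted to the events \(\Omega_N\), which is where the randomness of the \(\sigma_{i,N}^2\) is absorbed. Together these give \(\varphi_N(t)\xrightarrow{P}e^{-t^2\sigma^2/2}\) with no subsequence extraction. You also need no regular conditional laws: the factorization follows from the definition of conditional independence applied to bounded functions.

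\textbf{What each buys.} Your version is self-contained and quantitative, since it bounds \(\abs{\varphi_N(t)-e^{-t^2V_N/2}}\) explicitly in terms of \(\max_i\sigma_{i,N}^2\), \(V_N\), and the Lindeberg sums. It also covers \(\sigma^2=0\) without a separate case. The paper's version treats the unconditional theorem as a black box. Its strategy of reducing to fixed-\(\omega\) regular conditional laws along subsequences is the one it reuses for the multilinear-form CLT in Lemma~\ref{lem:dejong_multilinear_form}. There the limit theorems of de Jong and D\"obler--Peccati provide no elementary characteristic-function bound that could be carried through the conditioning the way you do here.
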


\begin{proof}
Conditional independence and centering imply
\(\E[S_N\mid\mathcal F_N]=0\) and
\(\Var(S_N\mid\mathcal F_N)=V_N\).
When \(\sigma^2=0\), conditional Chebyshev's inequality gives
\(\Pr\{\abs{S_N}>\varepsilon\mid\mathcal F_N\}
\le V_N/\varepsilon^2\xrightarrow{P}0\) for every \(\varepsilon>0\).
Because these conditional probabilities are bounded by one, their
convergence to zero also holds in \(L^1\). Taking expectations gives
\(S_N\xrightarrow{P}0\).

Suppose \(\sigma^2>0\), and fix an arbitrary subsequence
\(N^{(0)}_1<N^{(0)}_2<\cdots\). For \(m\in\mathbb N\), write
\[
L_N(m):=\sum_{i\in\mathcal I_N}
\E\!\left[X_{i,N}^2\1\{\abs{X_{i,N}}>1/m\}
\mid\mathcal F_N\right].
\]
Convergence in probability implies that every subsequence has a further
subsequence converging almost surely. Hence one may first choose a
subsequence \(\{N^{(1)}_k\}_{k\ge1}\) of \(\{N^{(0)}_k\}_{k\ge1}\) such
that \(V_{N^{(1)}_k}\to\sigma^2\) almost surely. Recursively, after
\(\{N^{(m)}_k\}_{k\ge1}\) has been chosen, select a subsequence
\(\{N^{(m+1)}_k\}_{k\ge1}\) of it such that
\(L_{N^{(m+1)}_k}(m)\to0\) almost surely. Now define the diagonal
subsequence by \(\widetilde N_k:=N^{(k+1)}_k\). For every fixed
\(m\in\mathbb N\), the tail \(\{\widetilde N_k:k\ge m\}\) is a
subsequence of \(\{N^{(m+1)}_k\}_{k\ge1}\). Since the intersection of
countably many events of probability one also has probability one,
\(V_{\widetilde N_k}\to\sigma^2\) and
\(L_{\widetilde N_k}(m)\to0\) for every \(m\in\mathbb N\) hold
simultaneously almost surely. Relabelling
this diagonal subsequence by \(N\), we have, simultaneously for every
\(m\in\mathbb N\),
\[
\sum_{i\in\mathcal I_N}
\E\!\left[X_{i,N}^2\1\{\abs{X_{i,N}}>1/m\}
\mid\mathcal F_N\right]\to0
\qquad\text{almost surely}.
\]
For each \(N\), the finite real-valued vector
\((X_{i,N}:i\in\mathcal I_N)\) admits a regular conditional distribution
given \(\mathcal F_N\); see
\citet[Theorem~4.1.17]{durrett2019probability}. Take all conditional
expectations in the lemma to be the versions obtained by integration
with respect to this conditional distribution. Conditional independence
and centering imply that, outside a single set of probability zero,
the coordinates are independent and centered under these conditional distributions, and
their total variance is \(V_N\). On the same probability-one set, for
each \(\varepsilon>0\), choose \(m\) such that \(1/m<\varepsilon\). Then
\[
\sum_{i\in\mathcal I_N}
\E\!\left[X_{i,N}^2\1\{\abs{X_{i,N}}>\varepsilon\}
\mid\mathcal F_N\right]
\le
\sum_{i\in\mathcal I_N}
\E\!\left[X_{i,N}^2\1\{\abs{X_{i,N}}>1/m\}
\mid\mathcal F_N\right]\to0.
\]
The Lindeberg--Feller central limit theorem applied to these conditional laws
therefore gives, for every \(t\in\mathbb R\),
\(
\E[\exp(\mathrm{i}tS_N)\mid\mathcal F_N]
\to\exp(-t^2\sigma^2/2)
\) almost surely.
Conditional characteristic functions are bounded by one, so dominated
convergence and the law of iterated expectations give
\(\E[\exp(\mathrm{i}tS_N)]\to\exp(-t^2\sigma^2/2)\) along the further
subsequence for every fixed \(t\in\mathbb R\). To recover convergence
along the full sequence, fix \(t\in\mathbb R\). If
\(\E[\exp(\mathrm{i}tS_N)]\) did not converge to
\(\exp(-t^2\sigma^2/2)\), then there would exist \(\delta>0\) and a
subsequence along which
\(
\left|\E[\exp(\mathrm{i}tS_N)]-\exp(-t^2\sigma^2/2)\right|
\ge\delta
\).
Because the subsequence fixed at the beginning of the proof was arbitrary,
this subsequence contains a further subsequence along which
\(\E[\exp(\mathrm{i}tS_N)]\to\exp(-t^2\sigma^2/2)\). The absolute
difference between \(\E[\exp(\mathrm{i}tS_N)]\) and
\(\exp(-t^2\sigma^2/2)\) then converges to zero along the further
subsequence, contradicting its lower bound \(\delta\). Hence
\(\E[\exp(\mathrm{i}tS_N)]\to\exp(-t^2\sigma^2/2)\) for every
\(t\in\mathbb R\).
L\'evy's continuity theorem gives
\(S_N\xrightarrow{D}N(0,\sigma^2)\).
\end{proof}

\begin{lem}[Mixed-order multilinear-form central limit theorem under conditional independence]
\label{lem:dejong_multilinear_form}
Let \(R<\infty\) be fixed. For each \(N\), let
\(\mathcal F_N\) be a sigma-field, let
\(\{Y_{i,N}:i\in\mathcal I_N\}\) be conditionally independent random
variables given \(\mathcal F_N\), where each \(\mathcal I_N\) is finite and
\(\abs{\mathcal I_N}\to\infty\), and suppose
\(\E[Y_{i,N}\mid\mathcal F_N]=0\) and
\(\E[Y_{i,N}^2\mid\mathcal F_N]<\infty\) almost surely. Let
\(\mathcal A_{r,N}\) be a collection of subsets
\(A\subseteq\mathcal I_N\) with \(\abs A=r\), let \(a_{A,N}\) be
\(\mathcal F_N\)-measurable, and define
\(W_{r,N}:=\sum_{A\in\mathcal A_{r,N}}
a_{A,N}\prod_{i\in A}Y_{i,N}\) and
\(W_N:=\sum_{r=1}^{R}W_{r,N}\).
Write \(\sigma_{r,N}^2:=\Var(W_{r,N}\mid\mathcal F_N)\) and
\(\sigma_N^2:=\Var(W_N\mid\mathcal F_N)\).
Suppose that \(\E[W_{r,N}^4\mid\mathcal F_N]<\infty\) almost surely for
every \(r=1,\ldots,R\), \(\sigma_N^2\xrightarrow{P}\sigma^2\) for some deterministic
\(\sigma^2\ge0\), and, for every \(r=1,\ldots,R\),
\begin{equation}
\label{eq:mf-clt-fourth-condition}
\abs{\E[W_{r,N}^4\mid\mathcal F_N]-3\sigma_{r,N}^4}
\xrightarrow{P}0
\end{equation}
and
\begin{equation}
\label{eq:mf-clt-max-condition}
\max_{i\in\mathcal I_N}
\sum_{A\in\mathcal A_{r,N}:\,i\in A}
a_{A,N}^2\prod_{j\in A}\E[Y_{j,N}^2\mid\mathcal F_N]
\xrightarrow{P}0.
\end{equation}
Then \(W_N\xrightarrow{D}N(0,\sigma^2)\).
\end{lem}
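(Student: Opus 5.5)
The plan is to condition on \(\mathcal F_N\) and reduce to a deterministic-array statement, then prove the mixed-order statement by a martingale CLT in the style of de Jong's proof for homogeneous multilinear forms.

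\emph{Reduction to fixed conditional laws.} I would mimic the proof of Lemma~\ref{lem:conditional_triangular_array_clt}. Fix an arbitrary subsequence and extract a diagonal further subsequence along which three things hold almost surely: \(\sigma_N^2\to\sigma^2\), the quantities in \eqref{eq:mf-clt-fourth-condition} tend to zero for every \(r\), and those in \eqref{eq:mf-clt-max-condition} tend to zero for every \(r\). Working with a regular conditional distribution of \((Y_{i,N})_{i\in\mathcal I_N}\) given \(\mathcal F_N\), the problem becomes a CLT for multilinear forms in independent centered variables with deterministic coefficients. It suffices to show that the conditional characteristic function converges to \(\exp(-t^2\sigma^2/2)\) almost surely along that subsequence. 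Dominated convergence and the subsequence argument already used in Lemma~\ref{lem:conditional_triangular_array_clt} then give \(W_N\xrightarrow{D}N(0,\sigma^2)\). The case \(\sigma^2=0\) follows directly from conditional Chebyshev, so I assume \(\sigma^2>0\).

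\emph{Orthogonality and normalization.} Distinct subsets \(A\neq A'\) give conditionally orthogonal products \(\prod_{i\in A}Y_{i,N}\) and \(\prod_{i\in A'}Y_{i,N}\), because each product contains a centered factor that is independent of the other. Hence the \(W_{r,N}\) are conditionally uncorrelated and \(\sigma_N^2=\sum_r\sigma_{r,N}^2\). Passing to a further subsequence, I may assume \(\sigma_{r,N}^2\to\sigma_r^2\) for each \(r\). Orders with \(\sigma_r^2=0\) are negligible in \(L^2\) and can be dropped, so all remaining orders have nondegenerate variance.

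\emph{Martingale CLT.} Order \(\mathcal I_N=\{1,\ldots,n_N\}\) and write \(W_N=\sum_{k}D_k\), where
\[
D_k:=\sum_{r}\sum_{A\in\mathcal A_{r,N}:\max A=k}a_{A,N}\prod_{i\in A}Y_{i,N}.
\]
Each \(D_k\) is a martingale difference for the filtration generated by \(Y_1,\ldots,Y_k\). By the martingale CLT it then suffices to verify two conditions. The first is \(\sum_kD_k^2\to\sigma^2\) in probability. The second is a Lindeberg-type condition, which I would replace by the stronger \(\sum_k\E D_k^4\to0\).

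Following de Jong, I would expand \(\E[(\sum_kD_k^2-\sigma_N^2)^2]\) and \(\sum_k\E D_k^4\) into sums over configurations of four index sets. The goal is to show that each expansion is bounded by a constant times the sum of two groups of terms. One group is \(\sum_r|\E W_{r,N}^4-3\sigma_{r,N}^4|\). The other consists of products of the influence maxima in \eqref{eq:mf-clt-max-condition} with \(\sigma_{r,N}^2\). Both groups tend to zero by the hypotheses.

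\emph{Main obstacle.} The hard step is the cross-order terms, where \(D_k\) mixes contributions of different orders \(r\neq r'\) in the fourth-moment bookkeeping. For the diagonal-in-\(r\) pieces I would reuse de Jong's identity, which writes \(\E W_r^4-3\sigma_r^4\) as a sum of nonnegative "non-Gaussian" configuration terms plus terms controlled by the maximal influence. For the mixed pieces I would first try Cauchy--Schwarz, bounding each by the square root of the corresponding same-order quantities. If that proves too lossy, I would instead show that each normalized component converges in law by de Jong's theorem. The required joint normality would then come from asymptotic independence: the influence condition \eqref{eq:mf-clt-max-condition} rules out any single index dominating, which is the analogue of the Peccati--Tudor step for chaos components of different orders. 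In either route, the limiting variance \(\sum_r\sigma_r^2=\sigma^2\) identifies the Gaussian limit.
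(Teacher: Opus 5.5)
Your reduction to fixed conditional laws, the orthogonality $\sigma_N^2=\sum_r\sigma_{r,N}^2$, and the removal of orders with vanishing limiting variance all match the paper. The paper picks points $\omega_k$ by a contradiction argument rather than extracting an a.s.\ convergent subsequence, but either device works. The gap is exactly the step you flag as the main obstacle, and neither of your two routes closes it as written.

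\emph{Martingale route.} Granting de Jong's same-order analysis, the diagonal pieces are fine. The Lindeberg part also reduces to the same-order case, since $\sum_k D_k^4\le R^3\sum_r\sum_k D_{k,r}^4$. What remains is to show $\sum_k D_{k,r}D_{k,r'}\to0$ in probability for $r\neq r'$. Cauchy--Schwarz applied to this sum is far too lossy: it only gives $\bigl|\sum_k D_{k,r}D_{k,r'}\bigr|\le(\sum_k D_{k,r}^2)^{1/2}(\sum_k D_{k,r'}^2)^{1/2}\to\sigma_r\sigma_{r'}\neq0$. To obtain zero, you would need to do the following:
\begin{itemize}
\item expand the second moment of the cross term into mixed configuration (contraction-type) sums over index sets of orders $r$ and $r'$;
\item bound each such sum by Cauchy--Schwarz against same-order contraction sums;
\item prove that the fourth-moment condition and the influence condition together force those same-order sums to vanish, including the diagonal configurations involving higher moments of the $Y_i$.
\end{itemize}
None of this bookkeeping is supplied, and it is essentially the entire content of the result.

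\emph{Componentwise-plus-independence route.} The assertion that vanishing influence yields asymptotic independence of components of different orders is a heuristic, not an argument. The statement you need is this: for completely degenerate U-statistics that satisfy de Jong's conditions and have converging covariances, componentwise asymptotic normality implies joint asymptotic normality. That is a nontrivial theorem, D\"obler and Peccati (2017, Theorem~1.7), proved there by Stein's method, and it uses the fourth-moment conditions as well as the influence condition.

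This theorem is precisely the ingredient the paper's proof invokes. Under the conditional law at $\omega_k$, the standardized components with $v_r>0$ are completely degenerate U-statistics of distinct orders. Conditional orthogonality gives them identity covariance. The standardized fourth-moment and influence bounds supply conditions (ii)--(iii). Condition (iv) is vacuous because each order appears at most once. The hypothesis $\lvert\mathcal I_N\rvert\to\infty$ gives $n_m\to\infty$. When only one order survives, the paper uses D\"obler--Peccati Theorem~1.2 (equivalently de Jong, 1990). If you cite Theorem~1.7 in place of the heuristic, your proposal becomes the paper's proof; otherwise the cross-order step remains open.
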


\begin{proof}
We first show that
\(\sigma_N^2=\sum_{r=1}^R\sigma_{r,N}^2\), which allows us to analyze
each $W_{r,N}$ separately. If
\(A\in\mathcal A_{r,N}\) and \(B\in\mathcal A_{s,N}\) are distinct, then
the symmetric difference
\(A\triangle B:=(A\setminus B)\cup(B\setminus A)\) is nonempty.
Conditional independence and \(\E[Y_{i,N}\mid\mathcal F_N]=0\) give
\[
\E\!\left[
\prod_{i\in A}Y_{i,N}\prod_{j\in B}Y_{j,N}
\;\middle|\;\mathcal F_N
\right]
=
\prod_{k\in A\cap B}\E[Y_{k,N}^2\mid\mathcal F_N]
\prod_{k\in A\triangle B}\E[Y_{k,N}\mid\mathcal F_N]
=0.
\]
Each product is also conditionally centered. Hence, for \(A\ne B\), the
conditional covariance satisfies
\(\Cov(\prod_{i\in A}Y_{i,N},\prod_{j\in B}Y_{j,N}\mid\mathcal F_N)=0\).
Since the coefficients \(a_{A,N}\) are \(\mathcal F_N\)-measurable,
\(\E[W_{r,N}\mid\mathcal F_N]=0\),
\(\Cov(W_{r,N},W_{s,N}\mid\mathcal F_N)=0\) for \(r\ne s\), and all
cross terms in \(\Var(W_{r,N}\mid\mathcal F_N)\) vanish. Hence
\begin{equation}
\label{eq:mf-clt-variance-decomposition}
\sigma_{r,N}^2
=\sum_{A\in\mathcal A_{r,N}}a_{A,N}^2
\prod_{j\in A}\E[Y_{j,N}^2\mid\mathcal F_N],
\qquad
\sigma_N^2=\sum_{r=1}^{R}\sigma_{r,N}^2.
\end{equation}

If \(\sigma^2=0\), conditional Chebyshev's inequality gives
\(\Pr\{\abs{W_N}>\delta\mid\mathcal F_N\}\le
\sigma_N^2/\delta^2\xrightarrow{P}0\) for every \(\delta>0\). These
conditional probabilities are bounded by one, so they converge to zero
in \(L^1\). Taking expectations gives \(W_N\xrightarrow{P}0\), which
proves the result when \(\sigma^2=0\).

Suppose \(\sigma^2>0\). Fix \(t\in\mathbb R\). We first show that
\(\E[\exp(\mathrm{i}tW_N)\mid\mathcal F_N]\xrightarrow{P}
\exp(-t^2\sigma^2/2)\). If this convergence fails, there exist
\(\varepsilon,\eta>0\) and a subsequence \(\{N_k\}_{k\ge1}\) such that,
for every \(k\),
\begin{equation}
\label{eq:mf-clt-bad-event}
\Pr\!\left\{
\left|\E[\exp(\mathrm{i}tW_{N_k})\mid\mathcal F_{N_k}]
-\exp(-t^2\sigma^2/2)\right|>\varepsilon
\right\}>\eta.
\end{equation}
The variance and fourth-moment conditions in the lemma, together with
\eqref{eq:mf-clt-max-condition}, continue to hold in probability along the subsequence
\(\{N_k\}_{k\ge1}\). Hence there is a deterministic sequence
\(\delta_k\downarrow0\) such that the following three inequalities hold
simultaneously with probability approaching one:
\begin{equation}
\label{eq:mf-clt-good-event}
\begin{gathered}
\abs{\sigma_{N_k}^2-\sigma^2}\le\delta_k,
\qquad
\max_{1\le r\le R}
\abs{\E[W_{r,N_k}^4\mid\mathcal F_{N_k}]-3\sigma_{r,N_k}^4}
\le\delta_k,\\
\max_{1\le r\le R}\max_{i\in\mathcal I_{N_k}}
\sum_{A\in\mathcal A_{r,N_k}:i\in A}
a_{A,N_k}^2\prod_{j\in A}\E[Y_{j,N_k}^2\mid\mathcal F_{N_k}]
\le\delta_k.
\end{gathered}
\end{equation}
For each \(k\), the finite real-valued vector
\((Y_{i,N_k}:i\in\mathcal I_{N_k})\) admits a regular conditional
distribution given \(\mathcal F_{N_k}\); see
\citet[Theorem~4.1.17]{durrett2019probability}. We take each conditional
moment in \eqref{eq:mf-clt-good-event} to be the version obtained by
integration with respect to this conditional distribution. By the
conditional-independence and centering assumptions in the lemma, there is
a probability-one set on which the variables
\(\{Y_{i,N_k}:i\in\mathcal I_{N_k}\}\) are independent and centered under
the conditional distribution. By the union bound,
\eqref{eq:mf-clt-bad-event}, and \eqref{eq:mf-clt-good-event}, for all
sufficiently large \(k\),
\[
\begin{aligned}
&\Pr\!\left\{
\begin{gathered}
\left|\E[\exp(\mathrm{i}tW_{N_k})\mid\mathcal F_{N_k}]
-\exp(-t^2\sigma^2/2)\right|>\varepsilon,\\
\abs{\sigma_{N_k}^2-\sigma^2}\le\delta_k,\quad
\max_{1\le r\le R}
\abs{\E[W_{r,N_k}^4\mid\mathcal F_{N_k}]-3\sigma_{r,N_k}^4}
\le\delta_k,\\
\max_{1\le r\le R}\max_{i\in\mathcal I_{N_k}}
\sum_{A\in\mathcal A_{r,N_k}:i\in A}
a_{A,N_k}^2\prod_{j\in A}\E[Y_{j,N_k}^2\mid\mathcal F_{N_k}]
\le\delta_k
\end{gathered}
\right\}\\
&\quad\ge \eta
-\Pr\!\left\{\abs{\sigma_{N_k}^2-\sigma^2}>\delta_k\right\}
-\Pr\!\left\{
\max_{1\le r\le R}
\abs{\E[W_{r,N_k}^4\mid\mathcal F_{N_k}]-3\sigma_{r,N_k}^4}
>\delta_k\right\}\\
&\qquad
-\Pr\!\left\{
\max_{1\le r\le R}\max_{i\in\mathcal I_{N_k}}
\sum_{A\in\mathcal A_{r,N_k}:i\in A}
a_{A,N_k}^2\prod_{j\in A}\E[Y_{j,N_k}^2\mid\mathcal F_{N_k}]
>\delta_k\right\}
=\eta-o(1)>\frac{\eta}{2}.
\end{aligned}
\]
Intersecting this event with the probability-one set on which
the variables \(\{Y_{i,N_k}:i\in\mathcal I_{N_k}\}\) are independent and
centered under the conditional distribution still leaves positive
probability. Hence, for every sufficiently large \(k\), we can choose
\(\omega_k\) such that \eqref{eq:mf-clt-good-event} holds, the absolute
difference in \eqref{eq:mf-clt-bad-event} exceeds \(\varepsilon\), and the
regular conditional distribution at \(\omega_k\) makes the variables
\(\{Y_{i,N_k}:i\in\mathcal I_{N_k}\}\) independent and centered.

Under the regular conditional distribution at \(\omega_k\), the coefficients
\(a_{A,N_k}(\omega_k)\) are deterministic. By
\eqref{eq:mf-clt-variance-decomposition},
\(\sum_{r=1}^R \sigma_{r,N_k}^2(\omega_k)=\sigma_{N_k}^2(\omega_k)\),
which converges to \(\sigma^2\) by \eqref{eq:mf-clt-good-event}. Because each
coordinate is nonnegative, the sequence
\((\sigma_{1,N_k}^2(\omega_k),\ldots,\sigma_{R,N_k}^2(\omega_k))\) is
bounded in \(\mathbb R^R\). The Bolzano--Weierstrass theorem therefore
provides a further subsequence of \(k\), without relabelling it, such that
\(\sigma_{r,N_k}^2(\omega_k)\to v_r\) for every \(r=1,\ldots,R\). Then
\(v_r\ge0\) and \(\sum_{r=1}^R v_r=\sigma^2\).

Fix \(r\) with \(v_r>0\). Under the selected conditional law,
\(W_{r,N_k}/\sqrt{\sigma_{r,N_k}^2(\omega_k)}\) has variance one for all
sufficiently large \(k\). For every \(A\in\mathcal A_{r,N_k}\) and every
\(B\subsetneq A\), independence and centering under the selected conditional
law give \(\E[\prod_{i\in A}Y_{i,N_k}\mid
\sigma(Y_{j,N_k}:j\in B)]=0\). Consequently,
\(W_{r,N_k}/\sqrt{\sigma_{r,N_k}^2(\omega_k)}\) is a completely degenerate,
possibly nonsymmetric U-statistic of fixed order \(r\). By
\eqref{eq:mf-clt-good-event},
\begin{equation}
\label{eq:mf-clt-standardized-conditions}
\begin{aligned}
&\abs{
\frac{\E[W_{r,N_k}^4\mid\mathcal F_{N_k}](\omega_k)}
{[\sigma_{r,N_k}^2(\omega_k)]^2}-3}=
\frac{\abs{\E[W_{r,N_k}^4\mid\mathcal F_{N_k}](\omega_k)
-3[\sigma_{r,N_k}^2(\omega_k)]^2}}
{[\sigma_{r,N_k}^2(\omega_k)]^2}\le \frac{\delta_k}{[\sigma_{r,N_k}^2(\omega_k)]^2}
\to0,\\
&\frac{1}{\sigma_{r,N_k}^2(\omega_k)}
\max_{i\in\mathcal I_{N_k}}
\sum_{A\in\mathcal A_{r,N_k}:i\in A}
a_{A,N_k}^2(\omega_k)
\prod_{j\in A}\E[Y_{j,N_k}^2\mid\mathcal F_{N_k}](\omega_k)\le \frac{\delta_k}{\sigma_{r,N_k}^2(\omega_k)}
\to0.
\end{aligned}
\end{equation}
By \eqref{eq:mf-clt-good-event},
\(\sigma_{r,N_k}^2(\omega_k)\to v_r>0\), and the arbitrariness of \(r\),
the two limits in \eqref{eq:mf-clt-standardized-conditions} hold for every
\(r\) satisfying \(v_r>0\).

If exactly one \(v_r\) is positive,
the variance-one and complete-degeneracy properties of
\(W_{r,N_k}/\sqrt{\sigma_{r,N_k}^2(\omega_k)}\),
\(\abs{\mathcal I_{N_k}}\to\infty\), and
\eqref{eq:mf-clt-standardized-conditions} verify the conditions of
\citet[Theorem~1.2]{dobler2017quantitative}; see also
\citet[Theorem~1]{dejong1990central}. Hence
\(W_{r,N_k}/\sqrt{\sigma_{r,N_k}^2(\omega_k)}\) converges in distribution
to a standard normal random variable.
Suppose instead that at least two \(v_r\)'s are positive. Since
\(\abs{\mathcal I_{N_k}}\to\infty\),
the condition \(n_m\to\infty\) in
\citet[Theorem~1.7]{dobler2017quantitative} holds after reindexing the
variables in each row. Consider the vector with coordinates \(W_{r,N_k}/\sqrt{\sigma_{r,N_k}^2(\omega_k)}\)
with \(v_r>0\). Each coordinate is a completely degenerate, possibly
nonsymmetric U-statistic of order \(r\), and
\eqref{eq:mf-clt-standardized-conditions} verifies conditions (ii) and
(iii) in \citet[Theorem~1.7]{dobler2017quantitative}. Since the
coordinates are standardized and
\(\Cov(W_{r,N_k},W_{s,N_k}\mid\mathcal F_{N_k})(\omega_k)=0\)
for \(r\ne s\), their
covariance matrix is the identity matrix, which verifies condition (i).
There is at most one coordinate of each order, so condition (iv) trivially holds.
\citet[Theorem~1.7]{dobler2017quantitative} therefore gives joint
convergence to independent standard normal variables. Hence, in either
case, the standardized components with \(v_r>0\) converge jointly to
independent standard normal variables. Since
\[
\sum_{\{r:v_r>0\}}W_{r,N_k}
=\sum_{\{r:v_r>0\}}\sqrt{\sigma_{r,N_k}^2(\omega_k)}
\frac{W_{r,N_k}}{\sqrt{\sigma_{r,N_k}^2(\omega_k)}},
\]
Slutsky's theorem gives convergence of this sum to
\(N(0,\sum_{\{r:v_r>0\}}v_r)\).

For the  degrees such that \(v_r=0\), conditional orthogonality gives
\[
\Var\!\left(\sum_{\{r:v_r=0\}}W_{r,N_k}
\middle|\mathcal F_{N_k}\right)(\omega_k)
=\sum_{\{r:v_r=0\}}\sigma_{r,N_k}^2(\omega_k)\to0.
\]
Conditional Chebyshev's inequality therefore gives, for every \(\delta>0\),
\[
\Pr\!\left\{\left|\sum_{\{r:v_r=0\}}W_{r,N_k}\right|>\delta
\,\middle|\,\mathcal F_{N_k}\right\}(\omega_k)\to0.
\]
Hence \(W_{N_k}\) converges under the regular conditional distributions
at \(\omega_k\) to \(N(0,\sum_{r=1}^R v_r)=N(0,\sigma^2)\). This
forces the conditional characteristic-function difference in
\eqref{eq:mf-clt-bad-event}, evaluated at \(\omega_k\), to converge to
zero, contradicting its being larger than \(\varepsilon\). Thus, for
every \(t\in\mathbb R\),
\(\E[\exp(\mathrm{i}tW_N)\mid\mathcal F_N]\xrightarrow{P}
\exp(-t^2\sigma^2/2)\). Because conditional characteristic functions
are bounded by one, this convergence also holds in \(L^1\). Iterated
expectations and L\'evy's continuity theorem therefore give
\(W_N\xrightarrow{D}N(0,\sigma^2)\).
\end{proof}

\section{Asymptotic Theory}\label{sec:regime_complete_asymptotic}
This appendix establishes the asymptotic properties of the sample moment and the pentad-GMM estimator.
It also proves the asymptotic equivalence of \(\widehat{\bs\beta}^{\dagger}\) and \(\widehat{\bs\beta}_{\mathrm{GMM}}\), as well as the consistency of the covariance estimator.
Recall that
\(\bs{\Psi}_s(\bs{\beta}):= \bs Z_s\psi_s(\bs{\beta})\) and
\(\bs h_N(\bs{\beta}):=\abs{\mathcal S_N}^{-1}\sum_{s\in\mathcal S_N}\bs{\Psi}_s(\bs{\beta})\).
Throughout the paper, \(\norm{\bs x}:=(\bs x^\top\bs x)^{1/2}\)
denotes the Euclidean norm of a vector \(\bs x\), and
\(\norm{\bf A}:=\sup_{\norm{\bs x}=1}\norm{\bf A\bs x}\)
denotes the operator norm of a matrix \(\bf A\).

\subsection{Setup and auxiliary lemmas}
We observe the undirected adjacency matrix \(\mathbf L\). The observed covariate
array is \(\bf X=(\bs X_1,\dots,\bs X_N)^\top\).
The individual fixed effects
\(\bs\Gamma=(\Gamma_1,\dots,\Gamma_N)^\top\) are unobserved.
Collect node \(i\)'s characteristics in \(\bs{U}_i:=(\bs X_i,\Gamma_i)\).

Let \(\bs X_{ij}:=w(\bs X_i,\bs X_j)\) and
\(W_{ij}(\bs{\beta}):=\exp(-\bs X_{ij}^\top\bs{\beta})\),
where $w$ is a measurable map that need not be symmetric. 
Here \(\mathcal S_N:=\{(i,j,k,l,m)\in[N]^5:i,j,k,l,m\text{ are all distinct}\}\).
For $s=(i,j,k,l,m)$, let
\(E(s):=\{(i,j),(i,k),(i,l),(i,m),(j,k),(j,l),(j,m)\}\) be the set of the
seven dyads entering configuration $s$.

Define the sigma-field \(\mathcal O_N:=\sigma(\bs{U}_1,\dots,\bs{U}_N)\).
For an unordered dyad \(e=(u,v)\) with \(u<v\), write
\(\mathcal E_N:=\{(u,v):1\le u<v\le N\}\),
\(P_e(\bs{\beta}_{0}):=\E[L_e\mid \mathcal O_N]\),
and \(\xi_e:=L_e-P_e(\bs{\beta}_{0})\).

The next two lemmas compare conditional link probabilities with
\(\rho_N\Lambda_{i,N}\Lambda_{j,N}\) and bound sums of products of
conditional link probabilities. These bounds are used in  the sparse
and ultra-sparse parts of the proof of
Lemma~\ref{lem:sparse_dyadset_fourth}.

\begin{lem}[Bounds for conditional link probabilities]
\label{lem:conditional_degree_comparison}
Under Assumptions~\ref{ass:dyad_ind}, \ref{ass:bounded_design}, and
\ref{ass:sparse_envelope}, for every \(i\ne j\),
\begin{equation}
\label{eq:conditional_degree_comparison}
e^{-8C_X\norm{\bs\beta_0}}\rho_N\Lambda_{i,N}\Lambda_{j,N}
\le P_{ij}(\bs\beta_0)\le
e^{8C_X\norm{\bs\beta_0}}\rho_N\Lambda_{i,N}\Lambda_{j,N}
\quad\text{a.s.}
\end{equation}
Moreover, for every integer \(m\ge2\),
\begin{equation}
\label{eq:centered_link_weighted_bound}
\E[\abs{\xi_{ij}}^m\mid\mathcal O_N]
\le P_{ij}(\bs\beta_0)
\le e^{8C_X\norm{\bs\beta_0}}
\rho_N\Lambda_{i,N}\Lambda_{j,N}
\quad\text{a.s.}
\end{equation}
\end{lem}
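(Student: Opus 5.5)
The plan is to compare \(P_{ij}(\bs\beta_0)=F(\Gamma_i+\bs X_{ij}^\top\bs\beta_0)F(\Gamma_j+\bs X_{ji}^\top\bs\beta_0)\) with a covariate-free product \(\pi_i\pi_j\), where \(\pi_u:=F(\Gamma_u)\). The key elementary fact is that the logistic cdf satisfies \(F(a+b)\le e^{\abs b}F(a)\) for all real \(a,b\). This follows because \(F(a+b)/F(a)=(1+e^{-a})/(1+e^{-a-b})\), which lies between \(e^{-\abs b}\) and \(e^{\abs b}\); both numerator and denominator are of the form \(1+e^{-x}\), and shifting \(x\) by \(b\) changes \(1+e^{-x}\) by at most a factor \(e^{\abs b}\). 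By Assumption~\ref{ass:bounded_design}, \(\abs{\bs X_{ij}^\top\bs\beta_0}\le C_X\norm{\bs\beta_0}=:c\). Hence
\[
e^{-2c}\pi_i\pi_j\le P_{ij}(\bs\beta_0)\le e^{2c}\pi_i\pi_j \quad\text{a.s.}
\]

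Next I will relate \(\pi_i\) to \(\Lambda_{i,N}\). Take conditional expectations given \((\bs X_i,\Gamma_i)\) in the sandwich above, using that \(\bs U_j\) is independent of \(\bs U_i\) under Assumption~\ref{ass:dyad_ind}. Writing \(\bar\pi_N:=\E[\pi_j]\), this gives
\[
e^{-2c}\pi_i\bar\pi_N\le\rho_N\Lambda_{i,N}\le e^{2c}\pi_i\bar\pi_N .
\]
Taking a further expectation yields \(e^{-2c}\bar\pi_N^2\le\rho_N\le e^{2c}\bar\pi_N^2\).

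Combining these three bounds then produces the claim. In the upper direction,
\[
P_{ij}\le e^{2c}\pi_i\pi_j\le e^{2c}\cdot\frac{e^{2c}\rho_N\Lambda_{i,N}}{\bar\pi_N}\cdot\frac{e^{2c}\rho_N\Lambda_{j,N}}{\bar\pi_N}=e^{6c}\frac{\rho_N^2}{\bar\pi_N^2}\Lambda_{i,N}\Lambda_{j,N}\le e^{8c}\rho_N\Lambda_{i,N}\Lambda_{j,N},
\]
where the last step uses \(\rho_N/\bar\pi_N^2\le e^{2c}\). The lower bound is symmetric, with all exponents negated. The constant \(e^{8c}=e^{8C_X\norm{\bs\beta_0}}\) then matches \eqref{eq:conditional_degree_comparison}. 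The one point needing care is that \(\bar\pi_N>0\), which holds because \(F>0\). This bookkeeping of exponents, and checking that the conditioning on \((\bs X_i,\Gamma_i)\) is legitimate by i.i.d.\ sampling, is the only delicate step; the rest is mechanical.

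For \eqref{eq:centered_link_weighted_bound}, I use that \(\xi_{ij}=L_{ij}-P_{ij}\) with \(L_{ij}\in\{0,1\}\), so \(\abs{\xi_{ij}}\le1\). Therefore \(\abs{\xi_{ij}}^m\le\xi_{ij}^2\) for \(m\ge2\), and
\[
\E[\xi_{ij}^2\mid\mathcal O_N]=P_{ij}(1-P_{ij})\le P_{ij}.
\]
Applying the upper bound from \eqref{eq:conditional_degree_comparison} then finishes the argument.
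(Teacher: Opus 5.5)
Your proposal is correct and takes essentially the same route as the paper. You sandwich \(P_{ij}(\bs\beta_0)\) between \(e^{\pm2C_X\norm{\bs\beta_0}}F(\Gamma_i)F(\Gamma_j)\), take conditional and unconditional expectations to bound \(\rho_N\Lambda_{i,N}\) and \(\rho_N\) by \(F(\Gamma_i)\E F(\Gamma_i)\) and \(\{\E F(\Gamma_i)\}^2\), and combine these bounds to obtain \(e^{\pm8C_X\norm{\bs\beta_0}}\). Your bound on \(\E[\abs{\xi_{ij}}^m\mid\mathcal O_N]\), via \(\abs{\xi_{ij}}\le1\) and the Bernoulli variance, is identical to the paper's.
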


\begin{proof}
For the logistic cdf, \(e^{-a}F(t)\le F(t+x)\le e^aF(t)\) whenever
\(\abs{x}\le a\). Assumption~\ref{ass:bounded_design} therefore gives
\(
e^{-2C_X\norm{\bs\beta_0}}F(\Gamma_i)F(\Gamma_j)
\le P_{ij}(\bs\beta_0)\le
e^{2C_X\norm{\bs\beta_0}}F(\Gamma_i)F(\Gamma_j)
\).
Since \(\bs{U}_i\) and \(\bs{U}_j\) are i.i.d. under Assumption~\ref{ass:dyad_ind},
unconditional expectations give the bounds for \(\rho_N\), whereas
conditional expectations given \(\bs{U}_i\) give the bounds for
\(\rho_N\Lambda_{i,N}\):
\[
\begin{aligned}
e^{-2C_X\norm{\bs\beta_0}}\{\E F(\Gamma_i)\}^2
&\le\rho_N\le
e^{2C_X\norm{\bs\beta_0}}\{\E F(\Gamma_i)\}^2,\\
e^{-2C_X\norm{\bs\beta_0}}F(\Gamma_i)\E F(\Gamma_i)
&\le\rho_N\Lambda_{i,N}\le
e^{2C_X\norm{\bs\beta_0}}F(\Gamma_i)\E F(\Gamma_i).
\end{aligned}
\]
In particular, the preceding inequalities give
\[
\begin{aligned}
P_{ij}(\bs\beta_0)
&\le e^{2C_X\norm{\bs\beta_0}}F(\Gamma_i)F(\Gamma_j)\le
e^{6C_X\norm{\bs\beta_0}}
\frac{\rho_N^2\Lambda_{i,N}\Lambda_{j,N}}
{\{\E F(\Gamma_i)\}^2}
\le e^{8C_X\norm{\bs\beta_0}}
\rho_N\Lambda_{i,N}\Lambda_{j,N}.
\end{aligned}
\]
The reverse inequalities give
\[
\begin{aligned}
P_{ij}(\bs\beta_0)
&\ge e^{-2C_X\norm{\bs\beta_0}}F(\Gamma_i)F(\Gamma_j)\ge
e^{-6C_X\norm{\bs\beta_0}}
\frac{\rho_N^2\Lambda_{i,N}\Lambda_{j,N}}
{\{\E F(\Gamma_i)\}^2}
\ge e^{-8C_X\norm{\bs\beta_0}}
\rho_N\Lambda_{i,N}\Lambda_{j,N}.
\end{aligned}
\]
This proves \eqref{eq:conditional_degree_comparison}. Finally,
\(\abs{\xi_{ij}}\le1\)
and the conditional Bernoulli variance imply
\(\E[\abs{\xi_{ij}}^m\mid\mathcal O_N]\le
\E[\xi_{ij}^2\mid\mathcal O_N]
=P_{ij}(\bs\beta_0)\{1-P_{ij}(\bs\beta_0)\}\le
P_{ij}(\bs\beta_0)\), which proves
\eqref{eq:centered_link_weighted_bound}.
\end{proof}

For a finite graph \((V,E)\), consider the product of link probabilities
\(\prod_{(a,b)\in E}P_{ab}(\bs\beta_0)^{r_{ab}}\),
where each \(r_{ab}\ge1\) is an integer.
We call \(\sum_{(a,b)\in E}r_{ab}\) the \emph{total exponent} of the
product and define the \emph{incidence multiplicity} at node \(v\in V\) as
\(\kappa_v:=\sum_{(a,b)\in E:\,v\in\{a,b\}}r_{ab}\).
The total exponent counts all factors with multiplicity; \(\kappa_v\)
counts only those incident to node \(v\).
For distinct nodes \(i,j,k\), the product
\(P_{ij}(\bs\beta_0)^2P_{jk}(\bs\beta_0)\) has total
exponent \(\sum_{(a,b)\in E}r_{ab}=3\) and incidence multiplicities
\(\kappa_i=2\), \(\kappa_j=3\), and \(\kappa_k=1\), while its support
graph is the path \(i-j-k\) with two distinct dyads.
In the following lemma, the total exponent determines the power of
\(\rho_N\), and the incidence multiplicities determine the required
moment orders of \(\Lambda_{i,N}\).

\begin{lem}[Bounds for products of conditional link probabilities]
\label{lem:weighted_probability_sums}
Suppose Assumptions~\ref{ass:dyad_ind}, \ref{ass:bounded_design}, and
\ref{ass:sparse_envelope} hold. Let \((V,E)\) be a fixed finite graph with
node set \(V=\{1,\ldots,m\}\), where \(m\le N\), whose dyads join distinct nodes and are written as
\((u,v)\). For each \((u,v)\in E\), let
\(r_{uv}\ge1\) be a fixed integer exponent, and let \(\kappa_v\) denote
the incidence multiplicity at each node \(v\in V\).
Suppose \(\kappa_v\le16\) for every \(v\in V\).
Then, for every ordered tuple \((i_1,\ldots,i_m)\) of distinct elements
of \([N]\),
\begin{equation}
\label{eq:weighted_fixed_label_bound}
\E\!\left[
\prod_{(u,v)\in E}P_{i_ui_v}(\bs\beta_0)^{r_{uv}}
\right]
\le C\rho_N^{\sum_{(u,v)\in E}r_{uv}},
\end{equation}
and
\begin{equation}
\label{eq:weighted_unrooted_sum}
\E\!\left[
\sum_{\substack{i_1,\ldots,i_m\in[N]\\\mathrm{distinct}}}
\prod_{(u,v)\in E}P_{i_ui_v}(\bs\beta_0)^{r_{uv}}
\right]
\le C N^{\abs V}\rho_N^{\sum_{(u,v)\in E}r_{uv}}.
\end{equation}
For every nonempty fixed subset \(V_1\subseteq V\) such that
\(\kappa_v\le8\) for every \(v\in V_1\), relabel \(V\) so that
\(V_1=\{1,\ldots,\ell\}\), where \(\ell=\abs{V_1}\). Then
\begin{equation}
\label{eq:weighted_rooted_sum}
\max_{\substack{i_1,\ldots,i_\ell\in[N]\\\mathrm{distinct}}}
\sum_{\substack{i_{\ell+1},\ldots,i_m\in[N]\\i_1,\ldots,i_m\ \mathrm{distinct}}}
\prod_{(u,v)\in E}P_{i_ui_v}(\bs\beta_0)^{r_{uv}}
=o_p\!\left(N^{\abs V}\rho_N^{\sum_{(u,v)\in E}r_{uv}}\right).
\end{equation}
\end{lem}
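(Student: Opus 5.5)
The plan is to reduce everything to moments of the relative expected degrees, using Lemma~\ref{lem:conditional_degree_comparison}. That lemma gives \(P_{i_ui_v}(\bs\beta_0)\le C\rho_N\Lambda_{i_u,N}\Lambda_{i_v,N}\) almost surely, so
\[
\prod_{(u,v)\in E}P_{i_ui_v}(\bs\beta_0)^{r_{uv}}
\le C\rho_N^{\sum_{(u,v)\in E} r_{uv}}\prod_{v\in V}\Lambda_{i_v,N}^{\kappa_v},
\]
since node \(v\) collects exactly \(\kappa_v\) factors of \(\Lambda_{i_v,N}\). The labels \(i_1,\ldots,i_m\) are distinct and the \(\bs U_i\) are i.i.d., so the expectation factorizes as \(\prod_v\E[\Lambda_{1,N}^{\kappa_v}]\). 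Because \(1\le\kappa_v\le16\), Lyapunov's inequality and Assumption~\ref{ass:sparse_envelope} bound each factor uniformly in \(N\); isolated nodes with \(\kappa_v=0\) contribute \(1\). This proves \eqref{eq:weighted_fixed_label_bound}. Summing over the at most \(N^{\abs V}\) ordered tuples then gives \eqref{eq:weighted_unrooted_sum} directly.

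For the rooted bound \eqref{eq:weighted_rooted_sum}, I again use the pointwise bound to obtain
\[
\sum_{i_{\ell+1},\ldots,i_m}\prod P^{r}\le C\rho_N^{\sum r}\prod_{v\le\ell}\Lambda_{i_v,N}^{\kappa_v}\prod_{v>\ell}\Bigl(\sum_{i\in[N]}\Lambda_{i,N}^{\kappa_v}\Bigr),
\]
where dropping the distinctness constraint only enlarges the sum. Each unrooted factor \(\sum_i\Lambda_{i,N}^{\kappa_v}\) is \(O_p(N)\) by Markov's inequality, since \(\E\Lambda^{\kappa_v}\) is bounded. The \(N-\ell\) unrooted nodes therefore contribute \(O_p(N^{m-\ell})\). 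The remaining factor is the maximum over rooted labels of \(\prod_{v\le\ell}\Lambda_{i_v,N}^{\kappa_v}\), which is at most \(\prod_{v\le\ell}\max_{i\in[N]}\Lambda_{i,N}^{\kappa_v}\).

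The crux is showing that this maximum is \(o_p(N^{\ell})\), which I would do via the uniform integrability bound \(\max_i\Lambda_{i,N}^{\kappa}=o_p(N)\) for \(\kappa\le8\). For this I use \(\Pr(\max_i\Lambda_{i,N}^{\kappa}>\varepsilon N)\le N\Pr(\Lambda_{1,N}^{\kappa}>\varepsilon N)\le \E[\Lambda_{1,N}^{2\kappa}]/(\varepsilon^2N)\to0\), which is valid because \(2\kappa\le16\). This is exactly where the restriction \(\kappa_v\le8\) on \(V_1\) enters. Multiplying the \(\ell\) factors, each \(o_p(N)\), gives \(o_p(N^\ell)\), and combining with the unrooted part yields \(o_p(N^{\abs V}\rho_N^{\sum r})\). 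The only delicate point is that the max and the sum interact, but bounding the max of a product by the product of maxima and decoupling the unrooted sums makes this harmless. A one-line \(N^{-1}\) remainder could be sharpened further, but \(o_p\) is all the statement requires.
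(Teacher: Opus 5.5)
Your proposal is correct and follows essentially the same route as the paper. The pointwise bound from Lemma~\ref{lem:conditional_degree_comparison} reduces everything to $\prod_{v}\Lambda_{i_v,N}^{\kappa_v}$, and independence across distinct labels gives the first two bounds. For the rooted sum, you bound the maximum of the product by the product of maxima, each $o_p(N)$ by a union bound and a tail bound using $\kappa_v\le 8$, and you decouple the unrooted sums, each $O_p(N)$. The differences are cosmetic: you apply Markov's inequality to $\Lambda_{1,N}^{2\kappa}$ where the paper uses the sixteenth moment directly (giving $N^{-1}$ instead of $N^{1-16/\kappa}$), and ``the $N-\ell$ unrooted nodes'' should read $m-\ell$.
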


\begin{proof}
Applying \eqref{eq:conditional_degree_comparison} to every link-probability
factor gives
\[
\prod_{(u,v)\in E}P_{i_ui_v}(\bs\beta_0)^{r_{uv}}
\le C\rho_N^{\sum_{(u,v)\in E}r_{uv}}
\prod_{v\in V}\Lambda_{i_v,N}^{\kappa_v}.
\]
The distinct labels \(i_1,\ldots,i_m\) correspond to independent node
characteristics. Since \(\kappa_v\le16\) for every \(v\in V\),
Assumption~\ref{ass:sparse_envelope} gives
\(
\E\!\left[
\prod_{v\in V}\Lambda_{i_v,N}^{\kappa_v}
\right]
=\prod_{v\in V}
\E\!\left[\Lambda_{i,N}^{\kappa_v}\right]
\le C
\).
This proves \eqref{eq:weighted_fixed_label_bound}. Summing over at most
\(N^{\abs V}\) ordered tuples of distinct labels proves
\eqref{eq:weighted_unrooted_sum}.

For every \(0<\kappa\le8\), \(\sup_N\E[\Lambda_{i,N}^{16}]<\infty\) and the
union bound give, for every \(\varepsilon>0\),
\[
\Pr\!\left\{\max_{i\le N}\Lambda_{i,N}^{\kappa}>\varepsilon N\right\}
\le N\Pr\!\left\{\Lambda_{i,N}^{\kappa}>\varepsilon N\right\}
\le\frac{N\E[\Lambda_{i,N}^{16}]}{(\varepsilon N)^{16/\kappa}}
\le C\varepsilon^{-16/\kappa}N^{1-16/\kappa}\to0,
\]
while \(N^{-1}\sum_{i=1}^N\Lambda_{i,N}^{\kappa}=O_p(1)\) for every
\(0<\kappa\le16\) by Markov's inequality. Applying
\eqref{eq:conditional_degree_comparison} and separating the factors
indexed by \(V_1\) gives
\begin{align*}
&\max_{\substack{i_1,\ldots,i_\ell\in[N]\\\mathrm{distinct}}}
\sum_{\substack{i_{\ell+1},\ldots,i_m\in[N]\\i_1,\ldots,i_m\ \mathrm{distinct}}}
\prod_{(u,v)\in E}P_{i_ui_v}(\bs\beta_0)^{r_{uv}}\\
&\quad\le C\rho_N^{\sum_{(u,v)\in E}r_{uv}}
\max_{\substack{i_1,\ldots,i_\ell\in[N]\\\mathrm{distinct}}}
\left\{
\prod_{v\in V_1}\Lambda_{i_v,N}^{\kappa_v}
\sum_{\substack{i_{\ell+1},\ldots,i_m\in[N]\\i_1,\ldots,i_m\ \mathrm{distinct}}}
\prod_{v\in V\setminus V_1}\Lambda_{i_v,N}^{\kappa_v}
\right\}.
\end{align*}
For each fixed tuple \((i_1,\ldots,i_\ell)\), all summands are nonnegative. Allowing the nodes
in \(V\setminus V_1\) to take arbitrary labels in \([N]\), including
repeated labels and the fixed labels \(i_1,\ldots,i_\ell\), therefore gives
\begin{align*}
\sum_{\substack{i_{\ell+1},\ldots,i_m\in[N]\\i_1,\ldots,i_m\ \mathrm{distinct}}}
\prod_{v\in V\setminus V_1}\Lambda_{i_v,N}^{\kappa_v}
&\le
\sum_{i_{\ell+1},\ldots,i_m\in[N]}
\prod_{v\in V\setminus V_1}\Lambda_{i_v,N}^{\kappa_v}
=\prod_{v\in V\setminus V_1}
\sum_{i=1}^N\Lambda_{i,N}^{\kappa_v},\\
\max_{\substack{i_1,\ldots,i_\ell\in[N]\\\mathrm{distinct}}}
\prod_{v\in V_1}\Lambda_{i_v,N}^{\kappa_v}
&\le\prod_{v\in V_1}\max_{i\le N}\Lambda_{i,N}^{\kappa_v}.
\end{align*}
Substituting these bounds into the left-hand side of
\eqref{eq:weighted_rooted_sum}, dividing by
\(N^{\abs V}\rho_N^{\sum_{(u,v)\in E}r_{uv}}\), and using
\(N^{\abs V}=N^{\abs{V_1}}N^{\abs{V\setminus V_1}}\) yields
\begin{align}
&\frac{1}{N^{\abs V}\rho_N^{\sum_{(u,v)\in E}r_{uv}}}
\max_{\substack{i_1,\ldots,i_\ell\in[N]\\\mathrm{distinct}}}
\sum_{\substack{i_{\ell+1},\ldots,i_m\in[N]\\i_1,\ldots,i_m\ \mathrm{distinct}}}
\prod_{(u,v)\in E}P_{i_ui_v}(\bs\beta_0)^{r_{uv}}\notag\\
&\quad\le C
\prod_{v\in V_1}\left(
\frac{1}{N}\max_{i\le N}\Lambda_{i,N}^{\kappa_v}\right)
\prod_{v\in V\setminus V_1}\left(
\frac{1}{N}\sum_{i=1}^N\Lambda_{i,N}^{\kappa_v}\right).
\label{eq:weighted_rooted_factorization}
\end{align}
Moreover,
\begin{align*}
\frac{1}{N}\max_{i\le N}\Lambda_{i,N}^{\kappa_v}
&=\begin{cases}
o_p(1),&0<\kappa_v\le8,\\
N^{-1}=o(1),&\kappa_v=0,
\end{cases}
&& v\in V_1,\\
\frac{1}{N}\sum_{i=1}^N\Lambda_{i,N}^{\kappa_v}
&=\begin{cases}
O_p(1),&0<\kappa_v\le16,\\
1,&\kappa_v=0,
\end{cases}
&& v\in V\setminus V_1.
\end{align*}
Since \(V_1\ne\varnothing\), the first product on the right-hand side of
\eqref{eq:weighted_rooted_factorization} is \(o_p(1)\), whereas the second
is \(O_p(1)\). This proves
\eqref{eq:weighted_rooted_sum}.
\end{proof}

For a multi-index \(\alpha=(\alpha_1,\ldots,\alpha_d)\in\mathbb N_0^d\),
where \(\mathbb N_0=\{0,1,2,\ldots\}\), write
\(\abs{\alpha}:=\sum_{j=1}^d\alpha_j\) and
\(\partial_{\bs\beta}^{\alpha}
:=\partial_{\beta_1}^{\alpha_1}\cdots\partial_{\beta_d}^{\alpha_d}\),
with zero-order differentiation leaving the function unchanged.

\begin{lem}[Bounded  polynomial kernel]
\label{lem:poly}
Let \(\mathcal B\subset\mathbb R^d\) be any fixed compact set.
Under Assumption~\ref{ass:bounded_design}, there are constants
$0<\underline W<\overline W<\infty$ such that
\(\underline W\le W_{ij}(\bs{\beta})=\exp(-\bs X_{ij}^\top\bs{\beta})
\le \overline W\)
uniformly over $i\neq j$ and $\bs{\beta}\in\mathcal B$.
Moreover, there exists a finite deterministic constant
$\nu_{\max}$, not depending on $N$, such that for every
$s\in\mathcal S_N$ and every \(\bs\beta\in\mathcal B\),
\[
\bs{\Psi}_s(\bs{\beta})
=\sum_{\nu=1}^{\nu_{\max}}\bs c_{s,\nu}(\bs\beta,\mathcal O_N)m_{s,\nu}({\bf L}),
\]
where each $\bs c_{s,\nu}(\bs\beta,\mathcal O_N)\in\mathbb{R}^{q}$ is
\(\sigma(\bf X_s)\)-measurable,
and each \(m_{s,\nu}({\bf L})\) has the square-free
form \(\prod_{e\in E_{s,\nu}}L_e\) for some \(E_{s,\nu}\subseteq E(s)\),
and every scalar monomial appearing in \(\bs{\Psi}_s(\bs\beta)\) has
degree between $4$ and $7$. Finally, for every fixed
integer \(r\), there is a finite constant \(C_r\), not depending on
\(N\), such that
\begin{equation}
\label{eq:poly-coefficient-derivative-bound}
\sup_N\sup_{s\in\mathcal S_N}\sup_{\bs\beta\in\mathcal B}
\sum_{\nu=1}^{\nu_{\max}}
\sum_{\abs{\alpha}\le r}
\norm{\partial_{\bs\beta}^{\alpha}
\bs c_{s,\nu}(\bs\beta,\mathcal O_N)}
\le C_r
\quad\text{a.s.}
\end{equation}
\end{lem}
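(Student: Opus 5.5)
The proof has three parts: bound the weights, expand the kernel into monomials, and bound the coefficients and their derivatives. The expansion is carried over from Lemma~\ref{lem:det-row-supports}, now at a generic \(\bs\beta\).

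\emph{Weight bounds.} Assumption~\ref{ass:bounded_design} gives \(\norm{\bs X_{ij}}\le C_X\). Since \(\mathcal B\) is compact, \(B:=\sup_{\bs\beta\in\mathcal B}\norm{\bs\beta}<\infty\). Cauchy--Schwarz then gives \(\abs{\bs X_{ij}^\top\bs\beta}\le C_XB\), so we may take \(\underline W=e^{-C_XB}\) and \(\overline W=e^{C_XB}\). Replacing them by \(e^{-C_XB-1}\) and \(e^{C_XB+1}\) makes the inequality strict.

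\emph{Polynomial representation.} For fixed \(s=(i,j,k,l,m)\), expand \(\widetilde b_{ijr}(\bs\beta)\), \(\widetilde c_{ijr}(\bs\beta)\) and \(\widetilde d_{ijr}(\bs\beta)\) from \eqref{eq:feasible-coefficients-app} into monomials in the link indicators. Terms such as \(L_{ir}(1-L_{jr})\) expand to \(L_{ir}-L_{ir}L_{jr}\). The resulting supports are exactly the families \(\mathcal M^B_r\), \(\mathcal M^C_r\) and \(\mathcal M^D_r\) of Lemma~\ref{lem:det-row-supports}, with coefficients that are rational functions of the \(W_{uv}(\bs\beta)\). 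Nothing in that support argument used \(\bs\beta=\bs\beta_0\); only the structure of the entries mattered.

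The same argument therefore applies verbatim. The determinant expansion multiplies one entry from each column, using three distinct peripheral rows. The three supports are pairwise disjoint, so every product monomial is square-free, has support \(E_{s,\nu}\subseteq E(s)\), and has degree in \(\{4,\dots,7\}\). Multiplying by \(\bs Z_s\) gives \(\bs c_{s,\nu}=\bs Z_s\,c_{s,\nu}(\bs\beta)\). The possible supports depend only on the five node roles, so \(\nu_{\max}\le 2^7\) uniformly in \(N\). Each coefficient is a function of \(\{\bs X_{uv}:u,v\in\{i,j,k,l,m\}\}\) and of \(\bs Z_s=z(\bf X_s)\), hence \(\sigma(\bf X_s)\)-measurable. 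It does not depend on \(\bs\Gamma\).

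\emph{Derivative bound.} Each scalar coefficient \(c_{s,\nu}(\bs\beta)\) is a finite signed sum, with a universal number of terms and integer multiplicities, of products of a bounded number of factors. Each factor has the form \(W_{uv}(\bs\beta)\) or \(1/W_{uv}(\bs\beta)\). The only divisions in \eqref{eq:feasible-coefficients-app} are by \(W_{ji}\), \(W_{ij}\) and \(W_{ij}W_{ji}\), and the \(\Delta_{ijr}\) factors are polynomials in the \(W\)'s. Each such product equals \(\exp(-\bs a^\top\bs\beta)\), where \(\bs a\) is an integer combination of at most a fixed number \(K\) of dyadic covariate vectors, so \(\norm{\bs a}\le KC_X\).

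For such a product, \(\partial^\alpha_{\bs\beta}\exp(-\bs a^\top\bs\beta)=(-\bs a)^\alpha\exp(-\bs a^\top\bs\beta)\). Its absolute value is at most \((KC_X)^{\abs\alpha}e^{KC_XB}\), uniformly in \(\bs\beta\in\mathcal B\), in \(s\) and in \(N\). Summing over the finitely many terms, over the \(\nu\le\nu_{\max}\), and over the multi-indices with \(\abs\alpha\le r\), and multiplying by \(\norm{\bs Z_s}\le C_Z\), gives the constant \(C_r\) in \eqref{eq:poly-coefficient-derivative-bound}. The bound holds almost surely because Assumption~\ref{ass:bounded_design} holds almost surely.

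The only real care needed is in the second step. One must check that the number of terms, the exponents of the \(W\)'s, and the set of admissible supports are bounded independently of \(N\) and \(s\). This holds because the determinant has a fixed symbolic form in terms of the \(5\) node roles.
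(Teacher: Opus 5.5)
Your proposal is correct and follows the paper's proof essentially step for step. The paper also bounds the weights via bounded covariates and compactness of \(\mathcal B\), reuses the support expansion of Lemma~\ref{lem:det-row-supports} at a generic \(\bs\beta\) multiplied by \(\bs Z_s\), and bounds derivatives of products of \(W_{uv}(\bs\beta)\) and \(W_{uv}(\bs\beta)^{-1}\). The only difference is cosmetic: you collapse each such product into a single exponential \(\exp(-\bs a^\top\bs\beta)\), where the paper applies the Leibniz rule factor by factor.
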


\begin{proof}
The boundedness of \(W_{ij}(\bs\beta)\) follows from the exponential
form \(W_{ij}(\bs\beta)=\exp(-\bs X_{ij}^{\top}\bs\beta)\), bounded
dyadic covariates, and compactness of \(\mathcal B\).
The row-wise expansion in Lemma~\ref{lem:det-row-supports} applies
to every \(\bs\beta\): replacing \(\bs\beta_0\) by \(\bs\beta\)
changes only the coefficients in \eqref{eq:det-support-expansion}.
Multiplying by \(\bs Z_s\) gives the stated expansion of
\(\bs\Psi_s(\bs\beta)\), with the same monomials
\(m_{s,\nu}({\bf L})\) and supports satisfying
\(4\le\abs{E_{s,\nu}}\le7\).

It remains to establish the uniform bound on
\(\norm{\partial_{\bs\beta}^{\alpha}\bs c_{s,\nu}(\bs\beta,\mathcal O_N)}\)
for \(\abs\alpha\le r\) in \eqref{eq:poly-coefficient-derivative-bound}.
Each coefficient in the expansion of
\(\bs\Psi_s(\bs\beta)\) is
\(\bs Z_s=z(\bf X_s)\) multiplied by a finite linear combination of
products of terms \(W_{uv}(\bs\beta)\) and
\(W_{uv}(\bs\beta)^{-1}\), where \(u,v\in V_s\), with the number of
factors bounded by an absolute constant. Hence every coefficient is
\(\sigma(\bf X_s)\)-measurable. For any multi-index \(\alpha\),
\(\partial_{\bs\beta}^{\alpha}W_{uv}(\bs\beta)
=(-1)^{\abs{\alpha}}\bs X_{uv}^{\alpha}W_{uv}(\bs\beta)\)
and
\(\partial_{\bs\beta}^{\alpha}W_{uv}(\bs\beta)^{-1}
=\bs X_{uv}^{\alpha}W_{uv}(\bs\beta)^{-1}\).
Assumption~\ref{ass:bounded_design} and compactness of \(\mathcal B\)
therefore bound these derivatives uniformly over \(u,v,N\), and
\(\bs\beta\in\mathcal B\), for each fixed \(\alpha\). Applying the
Leibniz rule to the finitely many products in each coefficient and
summing over \(\nu\le\nu_{\max}\) and \(\abs{\alpha}\le r\) gives
\eqref{eq:poly-coefficient-derivative-bound} for every fixed integer \(r\).
\end{proof}

\subsection{Hoeffding decomposition over dyads}
\label{app:hoeffding_decomposition_over_dyads}
Define  
\(\mathcal O_N^{B}:=\sigma\big(\mathcal O_N,\{L_e:e\in B\}\big)\). For every nonempty finite dyad set $A\subseteq\mathcal E_N$, define
\begin{equation}
\label{eq:projdef}
\begin{split}
\bs\phi_{A,N}
&:=\sum_{B\subseteq A}(-1)^{\abs{A}-\abs{B}}
\E[\bs h_N(\bs{\beta}_{0})\mid \mathcal O_N^{B}].
\end{split}
\end{equation}
Set
\(\bs\phi_{\varnothing,N}:=\E[\bs h_N(\bs{\beta}_{0})\mid \mathcal O_N]\).
The conditional moment restriction gives \(\bs\phi_{\varnothing,N}=0\).

\begin{lem}[Exact representation for a fixed dyad set]
\label{lem:exact_representation}
Under Assumptions~\ref{ass:dyad_ind} and \ref{ass:bounded_design}, fix a nonempty dyad set
$A\subseteq\mathcal E_N$ that is contained in at least one configuration.
Then
\begin{equation}
\label{eq:phi_exact_rep}
\bs\phi_{A,N}=\frac{1}{\abs{\mathcal S_N}}\bs C_{A,N}(\mathcal O_N)\prod_{e\in A}\xi_e,
\end{equation}
where
\begin{equation}
\label{eq:projection_coeff_exact}
\bs C_{A,N}(\mathcal O_N):=\sum_{s\in\mathcal S_N:\,A\subseteq E(s)}\sum_{\nu:A\subseteq E_{s,\nu}}
 \bs c_{s,\nu}(\bs\beta_0,\mathcal O_N)\prod_{f\in E_{s,\nu}\setminus A} P_f(\bs{\beta}_{0}),
\end{equation}
Recall that \(E_{s,\nu}\) is the dyad support of
\(m_{s,\nu}({\bf L})\).
\end{lem}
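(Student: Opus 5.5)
We use Lemma~\ref{lem:poly} to expand the sample moment as a finite sum of square-free monomials, and then compute each conditional expectation \(\E[\cdot\mid\mathcal O_N^B]\) in closed form using conditional dyad independence. By Lemma~\ref{lem:poly},
\[
\bs h_N(\bs\beta_0)=\abs{\mathcal S_N}^{-1}\sum_{s\in\mathcal S_N}\sum_{\nu}\bs c_{s,\nu}(\bs\beta_0,\mathcal O_N)\prod_{e\in E_{s,\nu}}L_e ,
\]
where the coefficients are \(\mathcal O_N\)-measurable. Since each monomial is square-free and the dyads are conditionally independent given \(\mathcal O_N\), for every \(B\) we have
\[
\E\Bigl[\prod_{e\in E_{s,\nu}}L_e\Bigm|\mathcal O_N^B\Bigr]=\prod_{e\in E_{s,\nu}\cap B}L_e\prod_{f\in E_{s,\nu}\setminus B}P_f(\bs\beta_0).
\]
This identity is the one place where Assumption~\ref{ass:dyad_ind} enters. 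It requires that the links in \(B\) be independent of the remaining dyads given \(\mathcal O_N\), so that conditioning on them leaves the other factors' conditional means unchanged.

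Next we substitute this identity into \eqref{eq:projdef} and interchange the finite sums. For a fixed monomial with support \(S=E_{s,\nu}\), write \(L_e=P_e+\xi_e\) and expand each product. This gives
\[
\prod_{e\in S\cap B}L_e\prod_{f\in S\setminus B}P_f=\sum_{D\subseteq S\cap B}\prod_{e\in D}\xi_e\prod_{f\in S\setminus D}P_f .
\]
The Möbius sum \(\sum_{B\subseteq A}(-1)^{\abs A-\abs B}\) is then applied term by term. A fixed \(D\) appears for every \(B\) with \(D\subseteq B\cap S\), and \(B\) also ranges over \(A\). Swapping the order of summation, the coefficient of \(\prod_{e\in D}\xi_e\prod_{f\in S\setminus D}P_f\) is \(\sum_{B:\,D\subseteq B\subseteq A}(-1)^{\abs A-\abs B}\). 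Here \(D\subseteq A\cap S\) is required for the term to appear at all. The standard inclusion–exclusion (pairing) argument, the same one used in the proof of Lemma~\ref{lem:local-moment-coefficients}, shows this inner sum equals \(1\) if \(D=A\) and \(0\) otherwise. So only \(D=A\) survives, and it requires \(A\subseteq S=E_{s,\nu}\), which in turn forces \(A\subseteq E(s)\). Each such pair \((s,\nu)\) therefore contributes \(\bs c_{s,\nu}\prod_{e\in A}\xi_e\prod_{f\in E_{s,\nu}\setminus A}P_f\), and all other monomials contribute zero.

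Summing these contributions over \(s\) and \(\nu\) and factoring out \(\prod_{e\in A}\xi_e\) gives exactly \eqref{eq:phi_exact_rep}, with \(\bs C_{A,N}\) as in \eqref{eq:projection_coeff_exact}. Assumption~\ref{ass:bounded_design}, through Lemma~\ref{lem:poly}, guarantees that the coefficients are finite and that all sums are finite, so the interchanges are legitimate. The hypothesis that \(A\) lies in at least one configuration only ensures the sum is not vacuous; if it were empty, the formula would simply read \(\bs\phi_{A,N}=0\).

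The main point needing care is the bookkeeping in the middle step. The expansion in \(\xi\) must be done over \(S\cap B\) rather than over \(B\), and the Möbius cancellation must be checked to kill every \(D\ne A\), including the terms with \(D\subsetneq A\) where \(A\not\subseteq S\). We would handle this by fixing \((s,\nu)\) and \(D\) first, and only then summing over \(B\).
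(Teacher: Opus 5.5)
Your proposal is correct and follows essentially the paper's argument. Both compute the conditional mean of each square-free monomial given $\mathcal O_N^B$ via conditional dyad independence, and both then apply the alternating-sum (pairing) identity.

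The only difference is bookkeeping. You expand in $\xi_e$ before applying the M\"obius sum, so the single identity $\sum_{B:\,D\subseteq B\subseteq A}(-1)^{\abs A-\abs B}=\1\{D=A\}$ covers both $A\subseteq E_{s,\nu}$ and $A\not\subseteq E_{s,\nu}$. The paper instead treats the second case by pairing $B$ with $B\cup\{e\}$ for some $e\in A\setminus E_{s,\nu}$. It handles the first case by recognizing the alternating sum as $\prod_{e\in A}(L_e-P_e(\bs\beta_0))$.
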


\begin{proof}
Fix a nonempty dyad set \(A\subseteq\mathcal E_N\). By
Lemma~\ref{lem:poly}, the coefficients in the finite polynomial
expansion of \(\bs\Psi_s(\bs\beta_0)\) are \(\mathcal O_N\)-measurable.
It therefore suffices to apply \eqref{eq:projdef} to a single monomial
\(m_{s,\nu}(L)=\prod_{f\in E_{s,\nu}}L_f\). For every \(B\subseteq A\),
conditional dyadic independence given \(\mathcal O_N\) gives
\begin{equation}
\label{eq:monomial_conditional_mean}
\E[m_{s,\nu}(L)\mid\mathcal O_N^B]
=
\left(\prod_{f\in E_{s,\nu}\cap B}L_f\right)
\left(\prod_{f\in E_{s,\nu}\setminus B}P_f(\bs\beta_0)\right).
\end{equation}
If \(A\not\subseteq E_{s,\nu}\), choose
\(e\in A\setminus E_{s,\nu}\). For each
\(B\subseteq A\setminus\{e\}\), the conditional expectations in
\eqref{eq:monomial_conditional_mean} for \(B\) and \(B\cup\{e\}\) are
equal, while their inclusion-exclusion signs in \eqref{eq:projdef} are opposite. Pairing
\(B\) with \(B\cup\{e\}\) makes the projection of \(m_{s,\nu}(L)\) zero.
If \(A\subseteq E_{s,\nu}\), then for every \(B\subseteq A\),
\(E_{s,\nu}\cap B=B\) and
\(E_{s,\nu}\setminus B=(A\setminus B)\cup(E_{s,\nu}\setminus A)\),
where the union is disjoint. Substituting
\eqref{eq:monomial_conditional_mean} and taking the factor
\(\prod_{f\in E_{s,\nu}\setminus A}P_f(\bs\beta_0)\), which does not
depend on \(B\), outside the sum yields
\[
\begin{aligned}
&\sum_{B\subseteq A}(-1)^{\abs A-\abs B}
\E[m_{s,\nu}(L)\mid\mathcal O_N^B]\\
&=\left[
\sum_{B\subseteq A}(-1)^{\abs A-\abs B}
\left(\prod_{e\in B}L_e\right)
\left(\prod_{e\in A\setminus B}P_e(\bs\beta_0)\right)
\right]
\prod_{f\in E_{s,\nu}\setminus A}P_f(\bs\beta_0)\\
&=\left(\prod_{e\in A}\xi_e\right)
\prod_{f\in E_{s,\nu}\setminus A}P_f(\bs\beta_0).
\end{aligned}
\]
For the second equality, expand
\(\prod_{e\in A}\{L_e-P_e(\bs\beta_0)\}\): choosing \(L_e\) for
\(e\in B\) and \(-P_e(\bs\beta_0)\) for \(e\in A\setminus B\)
produces the sign \((-1)^{\abs A-\abs B}\). Summing over all
\(B\subseteq A\) and using \(\xi_e=L_e-P_e(\bs\beta_0)\) gives
\(\prod_{e\in A}\xi_e\).
Multiplying by \(\bs c_{s,\nu}(\bs\beta_0,\mathcal O_N)\), summing over
\(s\in\mathcal S_N\) and \(\nu\le\nu_{\max}\), and dividing by
\(\abs{\mathcal S_N}\) proves \eqref{eq:phi_exact_rep} with the
coefficient in \eqref{eq:projection_coeff_exact}.
\end{proof}

\begin{lem}[Hoeffding decomposition over dyads and conditional orthogonality]
\label{lem:dyadic_hoeffding_orthogonality}
Suppose Assumptions~\ref{ass:dyad_ind} and
\ref{ass:bounded_design} hold. Then the following hold:
\begin{enumerate}[label=\textnormal{(\roman*)},leftmargin=2em]
\item if \(A\subseteq \mathcal E_N\) is not contained in \(E(s)\) for any
\(s\in\mathcal S_N\), or if \(\abs{A}>7\), then \(\bs\phi_{A,N}=0\);
\item the exact decomposition
\begin{equation}
\label{eq:exactdecomp}
\bs h_N(\bs{\beta}_{0})=\sum_{A\subseteq\mathcal E_N:1\le\abs{A}\le7}\bs\phi_{A,N}
\end{equation}
holds almost surely;
\item if $A\neq A'$, then
\(\E[\bs\phi_{A,N}\bs\phi_{A',N}^\top\mid \mathcal O_N]=0\).
Hence the family
\(\{\bs\phi_{A,N}:A\subseteq\mathcal E_N,1\le\abs{A}\le 7\}\) is
conditionally orthogonal.
\end{enumerate}
\end{lem}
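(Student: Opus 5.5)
The plan is to deduce all three parts from the exact representation in Lemma~\ref{lem:exact_representation}, together with the polynomial structure in Lemma~\ref{lem:poly}. Throughout, \(\bs h_N(\bs\beta_0)\) is treated as a finite linear combination, with \(\mathcal O_N\)-measurable coefficients, of square-free monomials \(m_{s,\nu}(\mathbf L)=\prod_{e\in E_{s,\nu}}L_e\) with \(E_{s,\nu}\subseteq E(s)\) and \(4\le\abs{E_{s,\nu}}\le7\).

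\emph{Part (i).} Since the projection in \eqref{eq:projdef} is linear, it suffices to treat one monomial at a time. The proof of Lemma~\ref{lem:exact_representation} already shows, by pairing \(B\) with \(B\cup\{e\}\) for some \(e\in A\setminus E_{s,\nu}\), that the projection of \(m_{s,\nu}\) onto \(A\) vanishes unless \(A\subseteq E_{s,\nu}\). I will apply this argument to an arbitrary nonempty \(A\), not only to sets contained in some configuration. If \(A\not\subseteq E(s)\) for every \(s\), then \(A\not\subseteq E_{s,\nu}\) for every \((s,\nu)\), because \(E_{s,\nu}\subseteq E(s)\). If \(\abs A>7\), then \(A\not\subseteq E_{s,\nu}\) because \(\abs{E_{s,\nu}}\le7\). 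In both cases every monomial contributes zero, so \(\bs\phi_{A,N}=0\).

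\emph{Part (ii).} The decomposition is a M\"obius-inversion identity. Consider a single monomial with support \(E_{s,\nu}=:S\). Expand each factor as \(L_e=\xi_e+P_e\) to get
\[
\prod_{e\in S}L_e=\sum_{A\subseteq S}\prod_{e\in A}\xi_e\prod_{f\in S\setminus A}P_f .
\]
For \(A=\varnothing\), the term is \(\prod_{f\in S}P_f\), which is the conditional mean. For nonempty \(A\), the term is exactly the monomial's contribution to \(\bs\phi_{A,N}\) computed in Lemma~\ref{lem:exact_representation}. Multiply by \(\bs c_{s,\nu}\), sum over \((s,\nu)\), and divide by \(\abs{\mathcal S_N}\). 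This gives \(\bs h_N(\bs\beta_0)=\bs\phi_{\varnothing,N}+\sum_{A\ne\varnothing}\bs\phi_{A,N}\). By Theorem~\ref{thm:moment_restriction}, \(\bs\phi_{\varnothing,N}=\E[\bs h_N(\bs\beta_0)\mid\mathcal O_N]=0\). By part (i), only sets with \(1\le\abs A\le7\) survive, which proves \eqref{eq:exactdecomp}.

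\emph{Part (iii).} By \eqref{eq:phi_exact_rep}, \(\bs\phi_{A,N}\bs\phi_{A',N}^\top\) equals an \(\mathcal O_N\)-measurable matrix times \(\prod_{e\in A}\xi_e\prod_{e'\in A'}\xi_{e'}\). This covers every \(A\) with nonzero projection; the remaining sets have \(\bs\phi_{A,N}=0\) by part (i) and are trivially orthogonal. When \(A\ne A'\), the symmetric difference is nonempty. Conditional independence of the dyads given \(\mathcal O_N\) (Assumption~\ref{ass:dyad_ind}) factorizes the conditional expectation into a product of terms \(\E[\xi_e^2\mid\mathcal O_N]\) over \(A\cap A'\) and terms \(\E[\xi_e\mid\mathcal O_N]=0\) over \(A\triangle A'\). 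The product is therefore zero. Finite second moments are not an issue, since \(\abs{\xi_e}\le1\) and the coefficients are a.s.\ bounded by Lemma~\ref{lem:poly}.

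No step is a serious obstacle. The only point needing care is part (ii): the projections defined abstractly in \eqref{eq:projdef} must coincide with the terms of the \(\xi\)-expansion for every \(A\), including sets with zero projection. This is handled by extending the pairing argument from Lemma~\ref{lem:exact_representation} to arbitrary \(A\), as in part (i).
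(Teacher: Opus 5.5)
Your proposal is correct and follows essentially the same route as the paper's proof. Part (i) uses the pairing cancellation from the proof of Lemma~\ref{lem:exact_representation}. Part (ii) expands each monomial via $L_e=P_e(\bs\beta_0)+\xi_e$ and uses $\bs\phi_{\varnothing,N}=0$ from the conditional moment restriction. Part (iii) uses the conditional-independence factorization over $A\cap A'$ and $A\triangle A'$.
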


\begin{proof}
For part (i), if \(A\) is not contained in any \(E(s)\), or if
\(\abs A>7\), then no monomial support \(E_{s,\nu}\subseteq E(s)\)
contains \(A\). The cancellation for \(A\not\subseteq E_{s,\nu}\)
established in the proof of Lemma~\ref{lem:exact_representation}
therefore makes the projection of each monomial zero. Since the coefficients
\(\bs c_{s,\nu}(\bs\beta_0,\mathcal O_N)\) are
\(\mathcal O_N\)-measurable, applying \eqref{eq:projdef} term by term
to the finite polynomial expansion of \(\bs h_N(\bs\beta_0)\) gives
\(\bs\phi_{A,N}=0\).

For part (ii), substituting \(L_f=P_f(\bs\beta_0)+\xi_f\) in each
monomial gives
\[
\prod_{f\in E_{s,\nu}}L_f
=
\sum_{A\subseteq E_{s,\nu}}
\left(\prod_{e\in A}\xi_e\right)
\prod_{f\in E_{s,\nu}\setminus A}P_f(\bs\beta_0).
\]
Multiply by \(\bs c_{s,\nu}(\bs\beta_0,\mathcal O_N)\), sum over
\(s\in\mathcal S_N\) and \(\nu\le\nu_{\max}\), and divide by
\(\abs{\mathcal S_N}\). Equations~\eqref{eq:phi_exact_rep} and
\eqref{eq:projection_coeff_exact} identify the terms with nonempty
\(A\) as \(\bs\phi_{A,N}\). The empty-set term is
\(\bs\phi_{\varnothing,N}=\E[\bs h_N(\bs\beta_0)\mid\mathcal O_N]=0\)
by the conditional moment restriction. Removing the empty-set term and
using \(\abs{E_{s,\nu}}\le7\) proves \eqref{eq:exactdecomp}.

For part (iii), the empty-set projection and the projections in part (i)
are zero. For all other \(A\ne A'\), the coefficients
\(\bs C_{A,N}(\mathcal O_N)\) and
\(\bs C_{A',N}(\mathcal O_N)\) are \(\mathcal O_N\)-measurable, and
the shocks \(\{\xi_e:e\in\mathcal E_N\}\) are conditionally independent
and centered given \(\mathcal O_N\). Hence \eqref{eq:phi_exact_rep} gives
\[
\begin{aligned}
&\E[\bs\phi_{A,N}\bs\phi_{A',N}^{\top}\mid\mathcal O_N]=
\frac{\bs C_{A,N}(\mathcal O_N)\bs C_{A',N}(\mathcal O_N)^{\top}}
{\abs{\mathcal S_N}^{2}}
\prod_{e\in A\cap A'}\E[\xi_e^2\mid\mathcal O_N]
\prod_{e\in A\triangle A'}\E[\xi_e\mid\mathcal O_N]
=0.
\end{aligned}
\]
The product over \(A\triangle A'\) is zero because
\(A\triangle A'\ne\varnothing\) and \(\E[\xi_e\mid\mathcal O_N]=0\).
\end{proof}

\subsection{Proof of Theorem~\ref{thm:regimeCLT}}
\label{app:projectionclasses}
Recall that, for a graph class \(g\) arising from dyad subsets
\(A\subseteq E_{s,\nu}\subseteq E(s)\), \(r(g)\) and \(v(g)\) denote
the common numbers of dyads and distinct nodes, respectively, and
\(m(g)\) is the minimal dyadic degree of a kernel monomial whose support
contains a labeled dyad set \(A\) with \([G_A]\in g\).
A graph class may contain multiple isomorphism classes sharing these
values of \(r(g)\), \(v(g)\), and \(m(g)\). The corresponding projection
sum is \(\bs\Pi_N(g)=\sum_{A\in\mathscr C_N(g)}\bs\phi_{A,N}\).

We first verify the graph classification in Table~\ref{tab:regimemap}.

\begin{lem}
\label{lem:kerneltaxonomy}
Fix \(s=(i,j,k,l,m)\in\mathcal S_N\), with pivot dyad \((i,j)\) and
peripheral nodes \(k,l,m\). Let \(E_{s,\nu}\) range over the determinant
monomial supports in \eqref{eq:det-support-expansion}, and let \(A\)
range over all nonempty subsets of these supports.
Table~\ref{tab:regimemap} lists every unlabeled graph shape \([G_A]\) that
can arise and groups these shapes into the graph classes \(g\) defined by
their common values of \(r(g)\), \(v(g)\), and \(m(g)\). Every shape
displayed in the table is attained by at least one such \(A\).
\end{lem}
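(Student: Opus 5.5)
The plan is a finite enumeration built directly on the support structure in Lemma~\ref{lem:det-row-supports}. Every monomial support has the form \(E_{s,\nu}=A_B\cup A_C\cup A_D\), with \(A_B\in\mathcal M^B_{r_B}\), \(A_C\in\mathcal M^C_{r_C}\), \(A_D\in\mathcal M^D_{r_D}\), and \((r_B,r_C,r_D)\) a permutation of \((k,l,m)\). The symmetries of Lemma~\ref{lem:pivsym} only relabel nodes and do not change shapes, so up to isomorphism I fix \((r_B,r_C,r_D)=(k,l,m)\). That leaves \(2\cdot2\cdot4=16\) candidate supports. For each I will record \(\abs{E_{s,\nu}}\in\{4,\dots,7\}\) and, for later use, all of its nonempty subsets \(A\). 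Lemma~\ref{lem:det-row-supports} is stated as a superset of possible supports, with coefficients that may vanish. I will not try to prove these coefficients are nonzero for the upper-bound direction. For attainment I will exhibit explicit supports whose coefficient is nonzero.

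The first step lists the possible shapes \([G_A]\). Every \(A\) is a subgraph of the pentad \(E(s)\): edge \((i,j)\) plus the two spokes \((i,r),(j,r)\) for each \(r\in\{k,l,m\}\). So \(G_A\) is a subgraph of \(K_2\) joined to three independent vertices. In such a subgraph each peripheral node has degree at most two and is adjacent only to \(i\) or \(j\). I will classify by two things: whether \((i,j)\in A\), and the number of peripheral nodes of each local type (none, only \(i\), only \(j\), both). This produces exactly the shapes in Table~\ref{tab:regimemap}: dyad, 2-star, two disjoint dyads, triangle, 3-path/3-star, 2-path plus dyad, 4-cycle/paw, 4-path/fork, 4-star, diamond, the 5-edge five-node shapes, \(K_{2,3}\) or diamond-plus-pendant, and the full pentad. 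Two configurations need separate checks. Disjoint components can arise only when \((i,j)\notin A\) and one component contains \(i\) but not \(j\). Three disjoint dyads cannot arise, since that would need six nodes. Counting \(r(g)\) and \(v(g)\) is immediate from each shape.

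The second step computes \(m(g)\). For each shape I minimize \(\abs{E_{s,\nu}}\) over the 16 supports containing a labeled copy of it. Lower bounds come from structural facts: any \(A\) containing a triangle or the pivot dyad together with both spokes at some \(r\) forces \(A_D\) of size three, or \(A_B\) or \(A_C\) of size two, which raises the minimal support to five. Upper bounds come from exhibiting a support. For example, \(\{(i,k),(j,l),(i,j),(i,m)\}\) has size four and contains the dyad, the 2-star, the 3-path/3-star, two disjoint dyads, and the fork. The 4-path \(k\!-\!i\!-\!m\!-\!j\!-\!l\) sits in \(\{(i,k),(j,l),(i,m),(j,m)\}\). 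This also shows that every shape in the table is attained. Attainment further needs the corresponding monomial coefficient \(c_{s,\nu}\) to be nonzero for generic weights. I will check this by expanding the relevant \(3\times3\) determinant term at a single explicit support, in the same way the proof of Lemma~\ref{lem:separated-weight-ranks}(b) exhibited a nonzero value. A cancellation across the six permutations can occur only when different permutations produce the same support, and those I will check by hand.

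I expect the main obstacle to be the exact values of \(m(g)\), specifically making sure that an apparent minimal support of size four actually survives. It could fail because coefficient cancellation among permutations produces the same support. This matters for the triangle and 4-cycle classes, whose listed \(m(g)=5\), and for the leading tree classes with \(m=4\). My first move will be to compute the degree-four part of \(\psi_{ijklm}\) symbolically from \eqref{eq:feasible-coefficients-app}, since degree-four monomials arise only from \(\abs{A_B}=\abs{A_C}=1\), \(\abs{A_D}=2\). I will verify that the terms are generically nonzero, by checking them at the weights used in Lemma~\ref{lem:five-node-directions}(a) if necessary. The remaining higher-degree classes only need some support to be attained, so they carry less risk.
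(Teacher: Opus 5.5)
Your plan is correct and is essentially the paper's proof. The paper also obtains Table~\ref{tab:regimemap} by enumerating the nonempty subsets of the finitely many supports \(A_B\cup A_C\cup A_D\) from Lemma~\ref{lem:det-row-supports} and grouping the shapes by \((r(g),v(g),m(g))\), and your explicit size-four fork and path supports, together with the minimization over the sixteen supports, are exactly the content of that enumeration.

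The extra check that the minimal supports have nonzero coefficients is not needed, because the lemma and the definition of \(m(g)\) refer to the formal supports in \eqref{eq:det-support-expansion}, whose coefficients are explicitly allowed to vanish. If you do carry it out, work at separated generic weights rather than at \(W\equiv1\) (the point around which Lemma~\ref{lem:five-node-directions} perturbs). For example, the monomial \(L_{ik}L_{jl}L_{ij}L_{im}\) arises from the two peripheral assignments \((k,l,m)\) and \((m,l,k)\), and its coefficient is \(-W_{jl}W_{lj}W_{ki}W_{mi}(W_{ik}-W_{im})\), which vanishes at \(W\equiv1\).
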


\begin{proof}
By Lemma~\ref{lem:det-row-supports}, each \(E_{s,\nu}\) is generated by
one row-wise support from each of the \(B\)-, \(C\)-, and \(D\)-columns
and contains between four and seven dyads.
Equation~\eqref{eq:projection_coeff_exact} shows that a nonzero projection
coefficient can occur only if \(A\) is a nonempty subset of at least one
such support. Enumerating all nonempty subsets \(A\subseteq E_{s,\nu}\) up
to graph isomorphism and grouping the resulting shapes by
\((r(g),v(g),m(g))\) gives Table~\ref{tab:regimemap}.
\end{proof}

We next verify the variance scales reported in Table~\ref{tab:regimemap}.

\begin{prop}
\label{prop:classbound}
Under Assumptions \ref{ass:dyad_ind}, \ref{ass:bounded_design}, and
\ref{ass:sparse_envelope}, every graph class \(g\) arising from dyad
subsets \(A\subseteq E_{s,\nu}\subseteq E(s)\) satisfies the following
bound:
\begin{equation}
\label{eq:classbound}
\E\big[\norm{\bs\Pi_N(g)}^2\big]=O\big(N^{-v(g)}\rho_N^{2m(g)-r(g)}\big).
\end{equation}
\end{prop}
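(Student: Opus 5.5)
The bound follows from the exact representation in Lemma~\ref{lem:exact_representation} together with conditional orthogonality from Lemma~\ref{lem:dyadic_hoeffding_orthogonality}(iii). Since the \(\bs\phi_{A,N}\) for distinct \(A\) are conditionally orthogonal given \(\mathcal O_N\),
\[
\E\big[\norm{\bs\Pi_N(g)}^2\big]
=\frac{1}{\abs{\mathcal S_N}^2}\sum_{A\in\mathscr C_N(g)}
\E\Big[\norm{\bs C_{A,N}(\mathcal O_N)}^2\prod_{e\in A}P_e(\bs\beta_0)\{1-P_e(\bs\beta_0)\}\Big],
\]
so it suffices to bound the right-hand side by \(C N^{-v(g)}\rho_N^{2m(g)-r(g)}\).

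\emph{Step 1 (bounding the coefficient).} For a fixed labeled \(A\) with \(v(g)\) nodes, the configurations \(s\) with \(A\subseteq E(s)\) are obtained by choosing the \(5-v(g)\) remaining nodes, giving at most \(CN^{5-v(g)}\) of them. Lemma~\ref{lem:poly} bounds the coefficients \(\bs c_{s,\nu}\) uniformly. Hence
\[
\norm{\bs C_{A,N}}\le C\sum_{s:\,A\subseteq E(s)}\ \sum_{\nu:\,A\subseteq E_{s,\nu}}\ \prod_{f\in E_{s,\nu}\setminus A}P_f .
\]
Squaring this gives a double sum over pairs \((s,\nu),(s',\nu')\) that both contain \(A\). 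Each summand is
\[
\prod_{f\in E_{s,\nu}\setminus A}P_f\prod_{f'\in E_{s',\nu'}\setminus A}P_{f'}\prod_{e\in A}P_e ,
\]
which is a product of link probabilities on the union graph of \(A\), \(E_{s,\nu}\) and \(E_{s',\nu'}\). Its total exponent is \(\abs{E_{s,\nu}}+\abs{E_{s',\nu'}}-r(g)\ge 2m(g)-r(g)\), and dyads shared outside \(A\) appear with exponent \(2\).

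\emph{Step 2 (summing over labels).} Summing over \(A\) and over the extra nodes of \(s\) and \(s'\) is a sum over distinct labels on the union node set, which has \(v(g)+(5-v(g))+(5-v(g))=10-v(g)\) nodes when \(s\) and \(s'\) share only the nodes of \(A\). Coincidences among the extra nodes reduce the node count and only help. Apply Lemma~\ref{lem:weighted_probability_sums}, display~\eqref{eq:weighted_unrooted_sum}, to each union shape. The incidence multiplicities are at most \(2\cdot 4=8\le16\), since each node has degree at most \(4\) in a pentad, so the lemma applies. This gives
\[
\E\Big[\sum_A\norm{\bs C_{A,N}}^2\prod_{e\in A}P_e\Big]\le C N^{10-v(g)}\rho_N^{\text{total exponent}} .
\]
Dividing by \(\abs{\mathcal S_N}^2\asymp N^{10}\) yields \(N^{-v(g)}\rho_N^{2m(g)-r(g)}\) whenever \(\rho_N\le1\), because a larger total exponent gives a smaller power of \(\rho_N\).

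\emph{Step 3 (coincident extra nodes).} When extra nodes of \(s\) and \(s'\) coincide, the node count drops by \(t\) while the exponent may drop, since dyads merge into squares. I would check that each lost node costs at most one factor \(N\) and saves at most one factor \(\rho_N\), so the contribution is bounded by \(N^{-v(g)}\rho_N^{2m(g)-r(g)}(N\rho_N)^{-t}\). This term is dominated in the dense regime but not when \(N\rho_N\to0\).

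\emph{Main obstacle.} This coincidence accounting is where I expect the difficulty. Merging two copies of a peripheral node \(r\) identifies dyads \((i,r)\), \((j,r)\) from both monomials, which turns \(P^2\) into \(P^2\) on the same dyad rather than losing exponent. The honest exponent loss must therefore be counted via the connectivity of the supports (Lemma~\ref{lem:det-row-supports}): every extra node is attached to the rest by at least one dyad outside \(A\) in each monomial. The plan is to show that each identified node retains at least as many \(\rho_N\) factors as the \(N\) factors lost, using that each support is connected on all five nodes. If this fails, I would fall back on the fact that the table already absorbs such overlaps as separate graph classes with their own \(m(g)\), so that the minimum defining \(m(g)\) already controls the merged shapes.
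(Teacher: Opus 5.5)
Your Steps 1--2 follow the paper's argument. You expand $\norm{\bs C_{A,N}}^2$ over pairs $(s,\nu),(t,\mu)$ with $A\subseteq E_{s,\nu}\cap E_{t,\mu}$, bound the coefficients by Lemma~\ref{lem:poly}, control the probability products with Lemma~\ref{lem:weighted_probability_sums}, and count labels.

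Step~3, however, leaves a genuine gap. For configurations in which extra nodes of $s$ and $t$ coincide, you propose the bound $N^{-v(g)}\rho_N^{2m(g)-r(g)}(N\rho_N)^{-t}$. This is not $O(N^{-v(g)}\rho_N^{2m(g)-r(g)})$ when $N\rho_N\to0$. The proposition is stated for arbitrary $\rho_N$ and is used in the proof of Theorem~\ref{thm:regimeCLT}(iii) precisely in that regime. For example, in the one-dyad class with $t=3$, your bound becomes $N^{-5}\rho_N^4$, the same order as the leading four-tree term, so the ultra-sparse remainder would no longer be negligible. Neither of your remedies closes this. The connectivity argument targets a loss that never occurs. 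The fallback to Table~\ref{tab:regimemap} is misplaced: the class $g$ is determined by $A$ alone, and overlaps of $s$ and $t$ outside $A$ are summed inside $\bs C_{A,N}$, never reclassified.

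The missing observation is that the summands are products of conditional probabilities $P_f(\bs\beta_0)$, not of link indicators. These come from $\bs C_{A,N}$ and from $\E[\xi_e^2\mid\mathcal O_N]\le P_e(\bs\beta_0)$, so there is no idempotence $L_f^2=L_f$ through which a factor could be lost.
\begin{itemize}
\item Lemma~\ref{lem:conditional_degree_comparison} gives $P_{ab}\le C\rho_N\Lambda_{a,N}\Lambda_{b,N}$ pointwise. Hence a merged factor satisfies $P_f^2\le C\rho_N^2\Lambda_{a,N}^2\Lambda_{b,N}^2$.
\item Consequently \eqref{eq:weighted_fixed_label_bound} yields $\rho_N$ raised to the total exponent counted with multiplicity, provided every incidence multiplicity is at most $16$.
\item For every overlap pattern the total exponent is exactly $\abs{E_{s,\nu}}+\abs{E_{t,\mu}}-r(g)\ge2m(g)-r(g)$.
\item The incidence multiplicities satisfy $\kappa_v=d_{E_{s,\nu}}(v)+d_{E_{t,\mu}}(v)-d_A(v)\le8$.
\end{itemize}
So coincidences cost no power of $\rho_N$. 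They only reduce the number of free labels, giving $O(N^{-v(g)-t}\rho_N^{2m(g)-r(g)})$; your remark in Step~2 that coincidences ``only help'' was correct.

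The paper uses this directly, with no case analysis of overlaps. It bounds each fixed-label expectation by $C\rho_N^{2m(g)-r(g)}$. It then counts crudely: $O(N^{2(5-v(g))})$ pairs $(s,t)$ per copy of $A$, overlapping pairs included, and $O(N^{v(g)})$ copies of $A$.
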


\begin{proof}
Fix one labeled dyad set \(A\in\mathscr C_N(g)\). Then
\(\abs{A}=r(g)\), and the nodes in \(A\) have cardinality
\(v(g)\). By Lemma~\ref{lem:exact_representation},
\(\bs\phi_{A,N}
=\abs{\mathcal S_N}^{-1}\bs C_{A,N}(\mathcal O_N)
\prod_{e\in A}\xi_e.\)
Hence
\[
\norm{\bs\phi_{A,N}}^2
=
\abs{\mathcal S_N}^{-2}
\norm{\bs C_{A,N}(\mathcal O_N)}^2
\prod_{e\in A}\xi_e^2 .
\]
Taking conditional expectation given \(\mathcal O_N\), and using that
\(\bs C_{A,N}(\mathcal O_N)\) is \(\mathcal O_N\)-measurable,
gives
\[
\E[\norm{\bs\phi_{A,N}}^2\mid\mathcal O_N]
=
\abs{\mathcal S_N}^{-2}
\norm{\bs C_{A,N}(\mathcal O_N)}^2
\E\!\left[\prod_{e\in A}\xi_e^2\mid\mathcal O_N\right].
\]
Under Assumption~\ref{ass:dyad_ind}, the dyads are conditionally
independent given \(\mathcal O_N\). Since
\(\xi_e=L_e-P_e(\bs\beta_0)\) is a measurable function of \(L_e\) and
\(\mathcal O_N\), \(\{\xi_e:e\in A\}\) are conditionally
independent given \(\mathcal O_N\). Therefore
\(\E\!\left[\prod_{e\in A}\xi_e^2\mid\mathcal O_N\right]
=
\prod_{e\in A}\E[\xi_e^2\mid\mathcal O_N].\)
For each dyad \(e\), \(L_e\mid\mathcal O_N\) is Bernoulli with success
probability \(P_e(\bs\beta_0)\), so
\[
\E[\xi_e^2\mid\mathcal O_N]
=
\E[(L_e-P_e(\bs\beta_0))^2\mid\mathcal O_N]
=
P_e(\bs\beta_0)\{1-P_e(\bs\beta_0)\}.
\]
Thus
\[
\E\!\left[\prod_{e\in A}\xi_e^2\mid\mathcal O_N\right]
=
\prod_{e\in A}P_e(\bs\beta_0)\{1-P_e(\bs\beta_0)\}.
\]
Taking expectations once more yields
\[
\E[\norm{\bs\phi_{A,N}}^2]
=
\abs{\mathcal S_N}^{-2}
\E\!\left[
\norm{\bs C_{A,N}(\mathcal O_N)}^2
\prod_{e\in A}P_e(\bs\beta_0)\{1-P_e(\bs\beta_0)\}
\right].
\]

By Lemma~\ref{lem:exact_representation},
\[
\bs C_{A,N}(\mathcal O_N)
=
\sum_{s\in\mathcal S_N:\,A\subseteq E(s)}
\sum_{\nu:A\subseteq E_{s,\nu}}
\bs c_{s,\nu}(\bs\beta_0,\mathcal O_N)
\prod_{f\in E_{s,\nu}\setminus A}P_f(\bs\beta_0).
\]
Expanding the square produces finitely many cross products. Each is
indexed by \((s,\nu,t,\mu)\), with \(s,t\in\mathcal S_N\),
\(A\subseteq E(s)\cap E(t)\), \(A\subseteq E_{s,\nu}\), and
\(A\subseteq E_{t,\mu}\), and has the form
\[
\left\langle
\bs c_{s,\nu}(\bs\beta_0,\mathcal O_N),
\bs c_{t,\mu}(\bs\beta_0,\mathcal O_N)
\right\rangle
\prod_{f\in E_{s,\nu}\setminus A}P_f(\bs\beta_0)
\prod_{h\in E_{t,\mu}\setminus A}P_h(\bs\beta_0).
\]
The number of monomials in each kernel is uniformly bounded, and
Lemma~\ref{lem:poly} gives a uniform bound on
\(\norm{\bs c_{s,\nu}(\bs\beta_0,\mathcal O_N)}\)
over \(s\in\mathcal S_N\) and \(\nu\le\nu_{\max}\).
After multiplying by
\(\prod_{e\in A}P_e(\bs\beta_0)\{1-P_e(\bs\beta_0)\}\) and using
\(1-P_e(\bs\beta_0)\le1\), the contribution of each cross product is
bounded up to a universal constant by
\[
\prod_{f\in E_{s,\nu}\setminus A}P_f(\bs\beta_0)
\prod_{h\in E_{t,\mu}\setminus A}P_h(\bs\beta_0)
\prod_{e\in A}P_e(\bs\beta_0).
\]
A dyad that appears in both \(E_{s,\nu}\setminus A\) and
\(E_{t,\mu}\setminus A\) contributes two to the total exponent of this
product. Since \([G_A]\in g\),
\(\abs{A}=r(g)\). Also, the two surviving monomial supports both
contain \(A\), and thus the definition of the
minimal dyadic degree gives \(\abs{E_{s,\nu}}\ge m(g)\) and
\(\abs{E_{t,\mu}}\ge m(g)\). Moreover,
\[
\abs{E_{s,\nu}\setminus A}=\abs{E_{s,\nu}}-r(g),
\qquad
\abs{E_{t,\mu}\setminus A}=\abs{E_{t,\mu}}-r(g),
\qquad
\abs{A}=r(g).
\]
Therefore the total exponent of the product of link probabilities is
\(
\{\abs{E_{s,\nu}}-r(g)\}
+\{\abs{E_{t,\mu}}-r(g)\}
+r(g)
=
\abs{E_{s,\nu}}+\abs{E_{t,\mu}}-r(g).
\)
The link-probability factors are drawn from the two monomial supports
\(E_{s,\nu}\) and \(E_{t,\mu}\), so their incidence multiplicity at
every node is at most eight. For the graph formed by
\(E_{s,\nu}\cup E_{t,\mu}\), use its distinct node labels as
\(i_1,\ldots,i_{\abs V}\) and let \(r_{uv}\) be the multiplicity of the
corresponding link-probability factor in this product. Applying
\eqref{eq:weighted_fixed_label_bound} in
Lemma~\ref{lem:weighted_probability_sums} gives
\[
\E\!\left[
\prod_{f\in E_{s,\nu}\setminus A}P_f(\bs\beta_0)
\prod_{h\in E_{t,\mu}\setminus A}P_h(\bs\beta_0)
\prod_{e\in A}P_e(\bs\beta_0)
\right]
\le C\rho_N^{\abs{E_{s,\nu}}+\abs{E_{t,\mu}}-r(g)}.
\]
Since
\(\abs{E_{s,\nu}}+\abs{E_{t,\mu}}-r(g)\ge 2m(g)-r(g)\) and
\(\rho_N\le1\), the expectation is bounded by a constant times
\(\rho_N^{2m(g)-r(g)}\).

It remains to count how many such cross products appear for the fixed
labeled dyad set \(A\). The set \(A\) contains exactly \(v(g)\) distinct
nodes. A configuration \(s=(i,j,k,l,m)\) satisfying \(A\subseteq E(s)\)
must place these \(v(g)\) fixed node labels into the five roles of the
configuration. The number of admissible role assignments depends only on
the graph shape represented by \(A\), not on \(N\). Once these roles
are fixed,
the remaining \(5-v(g)\) nodes can be filled by distinct labels outside
the node set of \(A\), giving \(O(N^{5-v(g)})\) possible configurations
\(s\). Hence the number of ordered pairs \((s,t)\) satisfying
\(A\subseteq E(s)\cap E(t)\) is \(O(N^{2(5-v(g))})\). The number of
monomial pairs \((\nu,\mu)\) is uniformly bounded by Lemma~\ref{lem:poly}.
Since \(\abs{\mathcal S_N}\asymp N^5\), we obtain
\[
\E[\norm{\bs\phi_{A,N}}^2]
=
O\!\left(
N^{-10}N^{2(5-v(g))}
\rho_N^{2m(g)-r(g)}
\right)
=
O\!\left(
N^{-2v(g)}
\rho_N^{2m(g)-r(g)}
\right).
\]

Finally, recall that \(\bs\Pi_N(g)=\sum_{A\in\mathscr C_N(g)}\bs\phi_{A,N}\). Therefore
\[
\norm{\bs\Pi_N(g)}^2
=
\sum_{A\in\mathscr C_N(g)}\norm{\bs\phi_{A,N}}^2
+
\sum_{A,A'\in\mathscr C_N(g):\,A\neq A'}
\bs\phi_{A,N}^{\top}\bs\phi_{A',N}.
\]
For \(A\neq A'\), Lemma~\ref{lem:dyadic_hoeffding_orthogonality} gives
\(\E[\bs\phi_{A,N}\bs\phi_{A',N}^{\top}\mid\mathcal O_N]=0.\)
Thus, \(\E[\norm{\bs\Pi_N(g)}^2]=\sum_{A\in\mathscr C_N(g)}\E[\norm{\bs\phi_{A,N}}^2]\).
There are \(O(N^{v(g)})\) labeled copies \(A\in\mathscr C_N(g)\).
Therefore
\[
\E[\norm{\bs\Pi_N(g)}^2]
=
O\!\left(
N^{v(g)}
N^{-2v(g)}
\rho_N^{2m(g)-r(g)}
\right)
=
O\!\left(
N^{-v(g)}
\rho_N^{2m(g)-r(g)}
\right),
\]
which is exactly \eqref{eq:classbound}. This proves the bound.
\end{proof}

To establish asymptotic normality in the sparse and ultra-sparse 
regimes, we apply Lemma~\ref{lem:dejong_multilinear_form}, and we verify
its condition~\eqref{eq:mf-clt-fourth-condition}
in the following lemma.
\begin{lem}
\label{lem:sparse_dyadset_fourth}
Suppose Assumptions~\ref{ass:dyad_ind}, \ref{ass:bounded_design}, and
\ref{ass:sparse_envelope} hold, and fix \(\bs{c}\in\mathbb R^q\). In the
sparse regime \(\rho_N\asymp N^{-1}\), let \(g\) be any one of
\(g_{\mathrm{dyad}},g_{\text{2-star}},g_{\mathrm{3tree}}\), and
\(g_{\mathrm{4tree}}\), set \(\mathcal A_N:=\mathscr C_N(g)\), and set
\(b_{A,N}:=N^{9/2}\abs{\mathcal S_N}^{-1}
\bs{c}^\top\bs C_{A,N}(\mathcal O_N)\). In the ultra-sparse regime
\(N\rho_N\to0\) and \(N^5\rho_N^4\to\infty\), set
\(\mathcal A_N:=\mathscr C_N(g_{\mathrm{4tree}})\) and
\(b_{A,N}:=N^{5/2}\rho_N^{-2}\abs{\mathcal S_N}^{-1}
\bs{c}^\top\bs C_{A,N}(\mathcal O_N)\). In either case, let
\(Q_N:=\sum_{A\in\mathcal A_N}b_{A,N}\prod_{e\in A}\xi_e\) and
\(V_N:=\Var(Q_N\mid\mathcal O_N)\). Then
\(\E[Q_N^4\mid\mathcal O_N]=3V_N^2+o_p(1)\).
\end{lem}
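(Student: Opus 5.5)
The plan is to expand the fourth conditional moment of the multilinear form \(Q_N\) over quadruples of dyad sets and show that only the perfectly paired quadruples survive. Write
\[
\E[Q_N^4\mid\mathcal O_N]=\sum_{A_1,A_2,A_3,A_4\in\mathcal A_N}\prod_{a=1}^4 b_{A_a,N}\,\E\Big[\prod_{a=1}^4\prod_{e\in A_a}\xi_e\,\Big|\,\mathcal O_N\Big].
\]
By conditional independence and centering, a term is nonzero only if every dyad in \(\bigcup_a A_a\) lies in at least two of the \(A_a\).

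I would split the quadruples into two families.
\begin{itemize}
\item[(a)] \emph{Paired quadruples.} These are quadruples with \(\{A_1,\dots,A_4\}\) consisting of two pairs of equal sets, \(A_1=A_2\), \(A_3=A_4\) up to the three pairings.
\item[(b)] \emph{Connected overlap quadruples.} These are all remaining quadruples with every dyad covered at least twice.
\end{itemize}

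For family (a) with \(A\neq A'\), the expectation factorizes as \(\prod_{e\in A}\E[\xi_e^2\mid\mathcal O_N]\prod_{e\in A'}\E[\xi_e^2\mid\mathcal O_N]\) when \(A\cap A'=\varnothing\). Summing over the three pairings gives exactly \(3V_N^2\), where \(V_N=\sum_A b_{A,N}^2\prod_{e\in A}\E[\xi_e^2\mid\mathcal O_N]\). Two corrections must be shown to be \(o_p(1)\):
\begin{itemize}
\item the diagonal \(A=A'\) together with the pairs \(A\cap A'\neq\varnothing\), where \(\E[\xi_e^4]\) replaces \(\E[\xi_e^2]^2\);
\item the subtracted terms needed to make \(3V_N^2\) exact on overlapping pairs.
\end{itemize}

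All remaining terms, from the corrections and from family (b), I would bound in expectation. Use \(\abs{b_{A,N}}\lesssim (\text{scale})\,N^{-5}\sum_{s,\nu}\prod_{f\in E_{s,\nu}\setminus A}P_f\) from \eqref{eq:projection_coeff_exact} and Lemma~\ref{lem:poly}, and \(\E[\abs{\xi_e}^m\mid\mathcal O_N]\le P_e\) from \eqref{eq:centered_link_weighted_bound}. Each term is then dominated by a sum, over labeled node configurations, of products of link probabilities supported on the union graph \(H\) of the four dyad sets and the four attached pentads (each \(s_a\) with its \(E_{s_a,\nu_a}\)). In that product each dyad of \(\bigcup A_a\) appears with exponent at least one, and every other dyad of \(E_{s_a,\nu_a}\setminus A_a\) appears with its multiplicity. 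Lemma~\ref{lem:weighted_probability_sums}, equation \eqref{eq:weighted_unrooted_sum}, bounds this by \(N^{\abs{V(H)}}\rho_N^{\text{total exponent}}\). Each node has incidence multiplicity at most \(4\times 4\le16\), since each pentad node has at most four incident dyads; this is exactly why the sixteenth moment in Assumption~\ref{ass:sparse_envelope} is needed.

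The counting step is the core. Let \(\tau\) be the regime scale (\(N^{9}\) in the sparse case, \(N^5\rho_N^{-4}\) in the ultra-sparse case), so the prefactor is \(\tau^2 N^{-20}\). I must show
\[
\tau^2N^{-20}N^{\abs{V(H)}}\rho_N^{\mathrm{exp}(H)}\to0
\]
whenever the overlap pattern is not a perfect pairing of disjoint sets. Compare with the paired case, where \(\abs{V(H)}=20-2v(g)+2v(g)\) in the nested sense, giving exactly \(\tau^2\) times the variance scale of order one. Any extra coincidence, either an \(A_a\) sharing a dyad with a non-partner or two paired sets sharing a node, loses at least one free node (a factor \(N^{-1}\)) while saving at most a bounded power of \(\rho_N\) in the exponent. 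With \(\rho_N\asymp N^{-1}\), or \(N\rho_N\to0\) together with \(N^5\rho_N^4\to\infty\), I would check that the net change is \(O(N^{-1})\) in the sparse case and \(O((N^5\rho_N^4)^{-1})+O(N^{-1}\rho_N^{-0})\) in the ultra-sparse case. Both vanish.

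\textbf{Main obstacle.} Doing this exponent bookkeeping uniformly across all overlap patterns is the hard part. The trees in \(g_{\mathrm{dyad}},\dots,g_{\mathrm{4tree}}\) are small, but the attached pentads may share nodes arbitrarily. I would first organize it as a case analysis on the connected components of the "overlap graph" on \(\{1,2,3,4\}\), where \(a\sim a'\) if \(A_a\cap A_{a'}\neq\varnothing\). Every component must have size at least two. A component of size three or four, or a size-two component whose two sets are unequal, forces a merge. I would bound each such merge by the Proposition~\ref{prop:classbound}-type argument applied to the merged graph, using \(m(g)\le\abs{E_{s,\nu}}\le7\) to control the \(\rho_N\) gain.
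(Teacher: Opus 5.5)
\textbf{Where the proposal has a gap.} Several parts of your proposal match the paper's proof:
\begin{itemize}
\item the expansion over quadruples and the requirement that every dyad be covered at least twice;
\item the exact matching of the dyad-disjoint pairings $(A,A,B,B)$, $(A,B,A,B)$, $(A,B,B,A)$ with the disjoint part of $3V_N^2$;
\item the envelope bounds, namely the coefficient expansion \eqref{eq:projection_coeff_exact}, $\E[\abs{\xi_e}^m\mid\mathcal O_N]\le P_e$, and \eqref{eq:weighted_unrooted_sum} with incidence multiplicity at most $16$.
\end{itemize}
The gap is the counting step, which is the actual content of the lemma.

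Your rule that each extra coincidence ``loses at least one free node while saving at most a bounded power of $\rho_N$'' does not yield decay. Losing $\ell$ nodes while saving $k$ powers of $\rho_N$ multiplies the bound by $N^{-\ell}\rho_N^{-k}$. When $\rho_N\asymp N^{-1}$ this is $\asymp N^{k-\ell}$, which is $O(N^{-1})$ only if $\ell\ge k+1$. When $N\rho_N\to0$ and $\ell=1$, $k\ge1$, it is at least $(N\rho_N)^{-1}\to\infty$.

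Exponent savings do occur in the patterns that matter. For example, $(A,A,A,A)$ with $A\in\mathscr C_N(g_{\mathrm{4tree}})$ saves four powers of $\rho_N$ relative to the paired baseline and loses five nodes. So the rates $O(N^{-1})$ and $O((N^5\rho_N^4)^{-1})$ you state are asserted, not derived. Appealing to a Proposition~\ref{prop:classbound}-type bound with $\abs{E_{s,\nu}}\le 7$ controls the wrong direction: you need a lower bound on the exponent in terms of the node count. Separately, the paired baseline has $20-2v(g)$ nodes, not $20-2v(g)+2v(g)$.

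\textbf{The missing idea: connectivity gives $\abs{V}\le p+1$.} A single global inequality makes the bookkeeping uniform.
\begin{itemize}
\item Your overlap-graph argument already shows that a nonzero quadruple outside the dyad-disjoint pairings has a connected dyad union $A_1\cup\cdots\cup A_4$.
\item Each support $E_{s_a,\nu_a}$ is connected, spans the five nodes of $s_a$ (Lemma~\ref{lem:det-row-supports}), and contains $A_a$. Hence the union of the four supports is connected on all nodes of the four pentads.
\item Every distinct dyad in this union contributes at least one probability factor. So if $p$ is the total exponent, the union has at most $p$ dyads and therefore at most $p+1$ nodes.
\end{itemize}
With your scale $\tau$, \eqref{eq:weighted_unrooted_sum} then bounds every such pattern by $C\tau^2N^{-20}N^{p+1}\rho_N^{p}$.
\begin{itemize}
\item In the sparse regime this is $CN^{-1}(N\rho_N)^p=O(N^{-1})$.
\item In the ultra-sparse regime it is $C(N\rho_N)^{p-4}(N^5\rho_N^4)^{-1}\le C(N^5\rho_N^4)^{-1}$, because $p\ge\abs{A_1}=4$ and $N\rho_N\to0$.
\end{itemize}
The overlapping part $3\sum_{A\cap B\neq\varnothing}v_{A,N}v_{B,N}$ of the target is covered by the same bound. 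There $A\cup B$ is connected, and you use $P_e^2\le P_e$ on $A\cap B$.

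In other words, compared with a two-component configuration of the same total exponent (at most $p+2$ nodes), connectivity removes one free node. That is the quantitative fact your per-coincidence heuristic needed but did not supply.
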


\begin{figure}[!ht]
\centering
\begin{subfigure}[t]{0.30\textwidth}
\centering
\begin{tikzpicture}[
  x=0.50cm,y=0.50cm,
  every node/.style={font=\scriptsize},
  dot/.style={circle,fill=black,inner sep=1.45pt},
  edge/.style={line width=0.75pt}
]
  \foreach \x/\lab in {0/1,1/2,2/3,3/4,4/5,5/6} {
    \node[dot,label=below:{$\lab$}] (c\lab) at (\x,0) {};
  }
  \draw[edge] (c1)--(c2);
  \draw[edge] (c2)--(c3);
  \draw[edge] (c3)--(c4);
  \draw[edge] (c4)--(c5);
  \draw[edge] (c5)--(c6);
\end{tikzpicture}
\caption{Connected union}
\end{subfigure}
\hfill
\begin{subfigure}[t]{0.34\textwidth}
\centering
\begin{tikzpicture}[
  x=0.50cm,y=0.50cm,
  every node/.style={font=\scriptsize},
  dot/.style={circle,fill=black,inner sep=1.45pt},
  edge/.style={line width=0.75pt}
]
  \foreach \x/\lab in {0/1,1/2,2/3,3/4,4/5} {
    \node[dot,label=above:{$\lab$}] (d\lab) at (\x,0) {};
  }
  \node[dot,label=below:{$6$}] (d6) at (0,-1.35) {};
  \node[dot,label=below:{$7$}] (d7) at (1,-1.35) {};
  \node[dot,label=below:{$8$}] (d8) at (2,-1.35) {};
  \node[dot,label=below:{$9$}] (d9) at (3,-1.35) {};
  \node[dot,label=below:{$10$}] (d10) at (4,-1.35) {};
  \draw[edge] (d1)--(d2);
  \draw[edge] (d2)--(d3);
  \draw[edge] (d3)--(d4);
  \draw[edge] (d4)--(d5);
  \draw[edge] (d6)--(d7);
  \draw[edge] (d7)--(d8);
  \draw[edge] (d8)--(d9);
  \draw[edge] (d9)--(d10);
\end{tikzpicture}
\caption{Disconnected union}
\end{subfigure}
\hfill
\begin{subfigure}[t]{0.30\textwidth}
\centering
\begin{tikzpicture}[
  x=0.45cm,y=0.50cm,
  every node/.style={font=\scriptsize},
  dot/.style={circle,fill=black,inner sep=1.45pt},
  edge/.style={line width=0.75pt}
]
  \node[dot,label=above:{$1$}] (b1) at (0,0) {};
  \node[dot,label=above:{$2$}] (b2) at (1,0) {};
  \node[dot,label=above:{$3$}] (b3) at (2,0) {};
  \node[dot,label=above:{$4$}] (b4) at (3,0) {};
  \node[dot,label=above:{$5$}] (b5) at (4,0) {};
  \node[dot,label=below:{$6$}] (b6) at (3,-1.35) {};
  \node[dot,label=below:{$7$}] (b7) at (4,-1.35) {};
  \node[dot,label=below:{$8$}] (b8) at (5,-1.35) {};
  \node[dot,label=below:{$9$}] (b9) at (6,-1.35) {};
  \draw[edge] (b1)--(b2);
  \draw[edge] (b2)--(b3);
  \draw[edge] (b3)--(b4);
  \draw[edge] (b4)--(b5);
  \draw[edge] (b4)--(b6);
  \draw[edge] (b6)--(b7);
  \draw[edge] (b7)--(b8);
  \draw[edge] (b8)--(b9);
\end{tikzpicture}
\caption{Connected union}
\end{subfigure}
\caption[Connectedness of dyad-set unions]{Connectedness of the union of
four dyad sets, using copies of the four-edge path on five nodes from
Table~\ref{tab:regimemap}. Here \(A_j\) denotes the \(j\)th dyad set in
the quadruple \((A_1,A_2,A_3,A_4)\), and \(12\) denotes the dyad between
nodes \(1\) and \(2\). In panel (a), take
\(A_1=A_3=\{12,23,34,45\}\) and
\(A_2=A_4=\{23,34,45,56\}\); each \(A_j\) is a four-edge path on five
nodes, and the union is connected. In panel (b), take
\(A_1=A_2=\{12,23,34,45\}\) and
\(A_3=A_4=\{67,78,89,9\,10\}\); the two paths use disjoint node sets,
so the union has two connected components. In panel (c), take
\(A_1=A_2=\{12,23,34,45\}\) and
\(A_3=A_4=\{46,67,78,89\}\). The dyad \(46\) attaches the second
four-edge path to node \(4\) in the first path, so the union is
connected.}
\label{fig:sparse-fourth-unions}
\end{figure}

\begin{proof}
Set \(Q_{A,N}:=b_{A,N}\prod_{e\in A}\xi_e\), so
\(Q_N=\sum_{A\in\mathcal A_N}Q_{A,N}\).
Expand \(\E[Q_N^4\mid\mathcal O_N]\) as a sum over
\((A_1,A_2,A_3,A_4)\in\mathcal A_N^4\) with
\(Q_N^4=\sum_{(A_1,A_2,A_3,A_4)\in\mathcal A_N^4}
\prod_{j=1}^4 Q_{A_j,N}\).
For a fixed quadruple, write \(k=\abs{A_1\cup A_2\cup A_3\cup A_4}\) and
let \(n_e:=\abs{\{j:e\in A_j\}}\) be the number of times dyad \(e\)
appears in the quadruple. Since the coefficients \(b_{A,N}\) are
\(\mathcal O_N\)-measurable,
\[
\E\!\left[
\prod_{j=1}^4 Q_{A_j,N}
\mid \mathcal O_N
\right]
=
\left(\prod_{j=1}^4 b_{A_j,N}\right)
\E\!\left[
\prod_{j=1}^4\prod_{e\in A_j}\xi_e
\mid \mathcal O_N
\right].
\]
Since each dyad \(e\) in \(A_1\cup A_2\cup A_3\cup A_4\) appears
\(n_e\) times across the four sets,
\(\prod_{j=1}^4\prod_{e\in A_j}\xi_e
=\prod_{e\in A_1\cup A_2\cup A_3\cup A_4}\xi_e^{n_e}\).
The conditional independence of the dyad shocks given \(\mathcal O_N\)
therefore yields
\begin{equation}
\label{eq:sparse-fourth-prod-moment}
\E\!\left[
\prod_{j=1}^4 Q_{A_j,N}
\mid \mathcal O_N
\right]
=
\left(\prod_{j=1}^4 b_{A_j,N}\right)
\prod_{e\in A_1\cup A_2\cup A_3\cup A_4}\E[\xi_e^{n_e}\mid\mathcal O_N].
\end{equation}

If some dyad appears exactly once \(n_e =1\), then one factor is
\(\E[\xi_e\mid\mathcal O_N]=0\), and the whole term in \eqref{eq:sparse-fourth-prod-moment} vanishes.
Hence every nonzero quadruple must satisfy
\begin{equation}
\label{eq:sparse-fourth-no-singleton}
n_e\ge 2
\qquad\text{for all }e\in A_1\cup A_2\cup A_3\cup A_4.
\end{equation}
Moreover, since \(\abs{\xi_e}\le 1\), for every \(m\ge 2\) we have
\(\abs{\xi_e}^m\le \xi_e^2\). Lemma
\ref{lem:conditional_degree_comparison} therefore gives
\begin{equation}
\label{eq:sparse-fourth-moment-envelope}
\E[\abs{\xi_e}^m\mid\mathcal O_N]\le
\E[\xi_e^2\mid\mathcal O_N]\le P_e(\bs\beta_0).
\end{equation}
For every nonzero quadruple with \(k\) distinct dyads,
\eqref{eq:sparse-fourth-no-singleton} and
\eqref{eq:sparse-fourth-moment-envelope} bound the product of conditional
moments in \eqref{eq:sparse-fourth-prod-moment} as follows:
\begin{equation}
\label{eq:sparse-fourth-moment-bound}
\left|
\prod_{e\in A_1\cup A_2\cup A_3\cup A_4}\E[\xi_e^{n_e}\mid\mathcal O_N]
\right|
\le
\prod_{e\in A_1\cup A_2\cup A_3\cup A_4}P_e(\bs\beta_0).
\end{equation}

Next, we compute the conditional variance. If \(A\ne B\), then
\(A\triangle B\ne\varnothing\), and
\[
\begin{aligned}
&\E[Q_{A,N}Q_{B,N}\mid\mathcal O_N]
=b_{A,N}b_{B,N}
\E\!\left[\prod_{e\in A\cap B}\xi_e^2
\prod_{e\in A\triangle B}\xi_e\mid\mathcal O_N\right]\\
&=
b_{A,N}b_{B,N}
\prod_{e\in A\cap B}\E[\xi_e^2\mid\mathcal O_N]
\prod_{e\in A\triangle B}\E[\xi_e\mid\mathcal O_N]=0.
\end{aligned}
\]
Therefore \(Q_{A,N}\) are conditionally orthogonal, and
\begin{equation}
\label{eq:sparse-fourth-vdecomp}
V_N=\sum_{A\in\mathcal A_N}v_{A,N},\qquad
v_{A,N}:=\E[Q_{A,N}^2\mid\mathcal O_N]
=\abs{b_{A,N}}^2\E\!\left[\prod_{e\in A}\xi_e^2\mid\mathcal O_N\right].
\end{equation}

We now separate the fourth-moment expansion according to whether the
four dyad sets form two identical pairs. The leading terms have one of the
three forms \((A,A,B,B)\), \((A,B,A,B)\), or \((A,B,B,A)\), with
\(A\cap B=\varnothing\); see Figure~\ref{fig:sparse-fourth-unions} for examples.
Using \(v_{A,N}\) from
\eqref{eq:sparse-fourth-vdecomp}, the contribution of the ordered
quadruple \((A,A,B,B)\) is
\[
\begin{aligned}
&\E[Q_{A,N}Q_{A,N}Q_{B,N}Q_{B,N}\mid\mathcal O_N]
=b_{A,N}^2b_{B,N}^2
\E\!\left[\prod_{e\in A}\xi_e^2
\prod_{f\in B}\xi_f^2\mid\mathcal O_N\right]\\
&=
\abs{b_{A,N}}^2
\E\!\left[\prod_{e\in A}\xi_e^2\mid\mathcal O_N\right]
\abs{b_{B,N}}^2
\E\!\left[\prod_{f\in B}\xi_f^2\mid\mathcal O_N\right]
=v_{A,N}v_{B,N},
\end{aligned}
\]
where the second equality uses \(A\cap B=\varnothing\) and conditional
independence of the dyad shocks. The same value is obtained from the
other two placements, \((A,B,A,B)\) and \((A,B,B,A)\). Hence the
disjoint two-pair terms contribute
\(3\sum_{A,B\in\mathcal A_N:\,A\cap B=\varnothing}
v_{A,N}v_{B,N}\).
By \eqref{eq:sparse-fourth-vdecomp}, this is the disjoint part of
\(3V_N^2=3\sum_{A,B\in\mathcal A_N}v_{A,N}v_{B,N}\).
The remaining, overlapping part of the target \(3V_N^2\) is
\begin{equation}
\label{eq:sparse-fourth-overlap-pairs}
3\sum_{A,B\in\mathcal A_N:\,A\cap B\ne\varnothing}
v_{A,N}v_{B,N}.
\end{equation}
This target term need not equal the contribution of the corresponding
overlapping quadruples to the actual fourth-moment expansion. We therefore
show separately that \eqref{eq:sparse-fourth-overlap-pairs} and the sum of
all actual nonzero quadruple terms outside the disjoint two-pair class are
both \(o_p(1)\).

We first claim that every actual nonzero quadruple outside the disjoint
two-pair class has connected
union. Suppose to the contrary that some remaining nonzero quadruple has
disconnected union. Fix such a quadruple and write it as
\((A_1,A_2,A_3,A_4)\). Write
\(E:=A_1\cup A_2\cup A_3\cup A_4\) for the edge set of its
union graph. Then \(E\) can be decomposed into the edge sets of the
connected components of that graph, \(E=E^{(1)}\cup\cdots\cup E^{(q)}\)
with \(q\ge 2\).
Since every \(A_j\in\mathcal A_N\) is connected, each \(A_j\) must be
contained in exactly one component. For each component define
\(J_a:=\{j\in\{1,2,3,4\}: A_j\subseteq E^{(a)}\}\).
Each nonempty \(J_a\) must contain at least two indices. Indeed, if
\(J_a=\{j\}\),
then every dyad \(e\in A_j\subseteq E^{(a)}\) appears only in \(A_j\),
so \(n_e=1\), contradicting \eqref{eq:sparse-fourth-no-singleton}.
Since there are only four indices, this forces \(q=2\), and after
relabeling \(J_1=\{i,j\}\) and \(J_2=\{k,\ell\}\).

Now take any dyad \(e\in A_i\cup A_j\). Because no dyad of \(E\) joins
two different connected components, \(e\) cannot belong to \(A_k\) or
\(A_\ell\). Hence \(e\) appears only through \(A_i\) and \(A_j\). Since
the quadruple is nonzero, \(n_e\ge2\), so \(e\) must belong to both
\(A_i\) and \(A_j\). Therefore \(A_i=A_j\). Similarly, \(A_k=A_\ell\).
Since \(E^{(1)}\) and \(E^{(2)}\) are different connected components, we
also have \(A_i\cap A_k=\varnothing\). Thus the quadruple is exactly a
disjoint two-pair leading term, contradicting that it belongs to the
remainder. This proves the claim.

Using the definitions of \(b_{A_j,N}\), we now expand each
\(\bs C_{A_j,N}(\mathcal O_N)\), \(j=1,\ldots,4\), by
\eqref{eq:projection_coeff_exact}. For each \(j=1,\ldots,4\), an
expanded coefficient term selects an ordered pentad \(s_j\) and a
monomial support \(E_{s_j,\nu_j}\) containing \(A_j\). After bounding
\(\norm{\bs c_{s_j,\nu_j}(\bs\beta_0,\mathcal O_N)}\) by
Lemma~\ref{lem:poly} and applying \eqref{eq:sparse-fourth-moment-bound},
the probability factors in each expanded term are
\begin{equation}
\label{eq:sparse-fourth-probability-product}
\left\{\prod_{j=1}^4
\prod_{e\in E_{s_j,\nu_j}\setminus A_j}P_e(\bs\beta_0)\right\}
\prod_{e\in A_1\cup\cdots\cup A_4}P_e(\bs\beta_0),
\end{equation}
where \(p:=\sum_{j=1}^4\abs{E_{s_j,\nu_j}\setminus A_j}
+\abs{A_1\cup\cdots\cup A_4}\) is the total exponent.
The support graph has edge set \(\bigcup_{j=1}^4 E_{s_j,\nu_j}\).
Each \(E_{s_j,\nu_j}\) is connected by
Lemma~\ref{lem:det-row-supports} and contains the nonempty set \(A_j\).
Since \(A_1\cup\cdots\cup A_4\) is connected, the union of the four
monomial supports is connected. Each distinct dyad contributes at least
one factor to \eqref{eq:sparse-fourth-probability-product}, so the union
has at most \(p\) dyads and hence at most \(p+1\) nodes.
Each pentad support has node degree at most four, so the incidence
multiplicity in \eqref{eq:sparse-fourth-probability-product} is at most
\(4\cdot4=16\) at each node. For each fixed pattern
of node overlaps and monomial supports, list the distinct nodes of the
four pentads as \(i_1,\ldots,i_{\abs V}\). The sum for that pattern is
bounded, up to a constant independent of \(N\), by the sum over
distinct labels in \eqref{eq:weighted_unrooted_sum}, with \(r_{uv}\) equal to
the multiplicity of the corresponding probability factor in
\eqref{eq:sparse-fourth-probability-product}, total exponent \(p\), and
\(\abs V\le p+1\). Thus \eqref{eq:weighted_unrooted_sum}
bounds the expected sum for each pattern by \(C N^{p+1}\rho_N^p\).

In the sparse regime, the product of the four coefficient
normalizations is
\(
\left(\frac{N^{9/2}}{\abs{\mathcal S_N}}\right)^4=O(N^{-2}).
\)
Thus, for each fixed pattern of node overlaps and monomial supports
in the remainder, the expectation of the sum of absolute contributions
is at most
\(
C N^{-2}N^{p+1}\rho_N^p
=C N^{-1}(N\rho_N)^p=O(N^{-1}),
\)
because \(\rho_N\asymp N^{-1}\). Only finitely many unlabeled joint
patterns arise for \(A_1,\ldots,A_4\), the overlaps among their containing
pentads \(s_1,\ldots,s_4\), and the selected monomial supports. Markov's
inequality therefore gives
\[
\sum_{(A_1,\ldots,A_4)\in\mathcal A_N^4}^{*}
\left|\E\!\left[\prod_{j=1}^4Q_{A_j,N}\mid\mathcal O_N\right]\right|
=o_p(1),
\]
where the star excludes \((A,A,B,B)\), \((A,B,A,B)\), and
\((A,B,B,A)\) with \(A,B\in\mathcal A_N\) and
\(A\cap B=\varnothing\).

We use the coefficient expansion in \eqref{eq:projection_coeff_exact}
to bound \eqref{eq:sparse-fourth-overlap-pairs}. If \(A\cap B\ne\varnothing\),
then \(A\cup B\) is connected. Substitute the definitions of
\(b_{A,N}\) and \(b_{B,N}\) into \(v_{A,N}v_{B,N}\), expand
\(\bs C_{A,N}(\mathcal O_N)\) and \(\bs C_{B,N}(\mathcal O_N)\), and use
\(P_e(\bs\beta_0)\{1-P_e(\bs\beta_0)\}\le P_e(\bs\beta_0)\) for each
variance factor and \(P_e(\bs\beta_0)^2\le P_e(\bs\beta_0)\) for
\(e\in A\cap B\). The product of the variance factors over \(A\) and
\(B\) is then bounded by \(\prod_{e\in A\cup B}P_e(\bs\beta_0)\).
With \(A_1=A_2=A\) and \(A_3=A_4=B\), the \(O(N^{-1})\) bound for
connected unions of four monomial supports therefore applies to every
overlap pattern in
\eqref{eq:sparse-fourth-overlap-pairs}. Hence
\(
\E\!\left[
\left|\E[Q_N^4\mid\mathcal O_N]-3V_N^2\right|
\right]
\le C N^{-1}\to 0
\).
Markov's inequality gives
\begin{equation}
\label{eq:sparse-fourth-fourthmoment-regular}
\E[Q_N^4\mid\mathcal O_N]=3V_N^2+o_p(1).
\end{equation}

In the ultra-sparse regime, \(\mathcal A_N=\mathscr C_N(g_{\mathrm{4tree}})\),
so every \(A\in\mathcal A_N\) satisfies \(\abs A=4\) and \(\abs{V(A)}=5\).
The product of the four coefficient normalizations is
\(
\left(
\frac{N^{5/2}\rho_N^{-2}}{\abs{\mathcal S_N}}
\right)^4
=O(N^{-10}\rho_N^{-8}).
\)
For an expanded connected pattern, let
\(p:=\sum_{j=1}^4\abs{E_{s_j,\nu_j}\setminus A_j}
+\abs{A_1\cup\cdots\cup A_4}\) denote the total exponent of the product
of link probabilities used to bound the terms with that pattern.
Applying \eqref{eq:weighted_unrooted_sum} to the graphs formed by the
four monomial supports, with \(\abs V\le p+1\),
total exponent \(p\), and
\(\kappa_v\le16\), and multiplying by the coefficient normalization
therefore gives the bound
\begin{equation}
\label{eq:sparse-fourth-remainder-ultrasparse}
C N^{-10}\rho_N^{-8}N^{p+1}\rho_N^p
=C N^{p-9}\rho_N^{p-8}.
\end{equation}
A nonzero quadruple contains at least the four distinct dyads of
\(A_1\), so \(p\ge4\). Since \(N\rho_N\to0\), the bound in
\eqref{eq:sparse-fourth-remainder-ultrasparse} equals
\(C(N\rho_N)^{p-4}/(N^5\rho_N^4)\le C/(N^5\rho_N^4)=o(1)\)
for all sufficiently large \(N\).
The bound in \eqref{eq:sparse-fourth-remainder-ultrasparse} also applies
to every pair \(A,B\) summed in \eqref{eq:sparse-fourth-overlap-pairs},
after using \(P_e(\bs\beta_0)^2\le P_e(\bs\beta_0)\) on \(A\cap B\) again.
Since \(\abs A=\abs B=4\), their union contains at least four distinct
dyads, so \(p\ge4\) also holds for each expanded term.
Summing the expectation bounds \(C/(N^5\rho_N^4)\) for the
absolute contributions of the finitely many patterns gives
\(
\E\!\left[
\left|\E[Q_N^4\mid\mathcal O_N]-3V_N^2\right|
\right]
\to0.
\)
Markov's inequality now gives
\(
\E[Q_N^4\mid\mathcal O_N]=3V_N^2+o_p(1),
\) which, combined with
equation~\eqref{eq:sparse-fourth-fourthmoment-regular}, proves the lemma.
\end{proof}

Now, we can prove Theorem~\ref{thm:regimeCLT}.
\begin{proof}[Proof of Theorem~\ref{thm:regimeCLT}]
Set
\(\bs R_N^{\mathrm D}:=\bs h_N(\bs\beta_0)-\bs\Pi_N^{\mathrm D}\),
\(\bs R_N^{\mathrm S}:=\bs h_N(\bs\beta_0)-\bs\Pi_N^{\mathrm S}\), and
\(\bs R_N^{\mathrm{US}}:=\bs h_N(\bs\beta_0)-\bs\Pi_N^{\mathrm{US}}\).
By \eqref{eq:exactdecomp}, each remainder is the sum of the
projection sums in Table~\ref{tab:regimemap} that are not
included in the corresponding leading term. Lemma~\ref{lem:dyadic_hoeffding_orthogonality}
removes the cross terms, and Proposition~\ref{prop:classbound}, with the
values of \((r(g),v(g),m(g))\) in Table~\ref{tab:regimemap} and
\(\rho_N\le1\), gives
\begin{equation}
\label{eq:regime_remainder_second_moments}
\begin{aligned}
N^2\rho_N^{-7}\E[\norm{\bs R_N^{\mathrm{D}}}^2]
&=O((N\rho_N)^{-1})+O((N\rho_N)^{-2})
  +O((N\rho_N)^{-3})=o(1),\\
N^9\E[\norm{\bs R_N^{\mathrm{S}}}^2]
&=O((N\rho_N)^5\rho_N)+O((N\rho_N)^6\rho_N)
  +O((N\rho_N)^4\rho_N)=o(1),\\
N^5\rho_N^{-4}\E[\norm{\bs R_N^{\mathrm{US}}}^2]
&=O((N\rho_N)^3)+O((N\rho_N)^2)+O(N\rho_N)+O(\rho_N)=o(1).
\end{aligned}
\end{equation}
The three lines use \(N\rho_N\to\infty\),
\(\rho_N\asymp N^{-1}\), and \(N\rho_N\to0\), respectively.
Chebyshev's inequality therefore gives
\(N\rho_N^{-7/2}\bs R_N^{\mathrm D}=o_p(1)\),
\(N^{9/2}\bs R_N^{\mathrm S}=o_p(1)\), and
\(N^{5/2}\rho_N^{-2}\bs R_N^{\mathrm{US}}=o_p(1)\).
It remains to prove the Gaussian limits for the leading projection terms in
the three cases.
Fix \(\bs{c}\in\mathbb R^q\). We prove convergence for each fixed \(\bs{c}\) and then invoke
the Cram\'er--Wold device. Let \(a_N=N\rho_N^{-7/2}\), \(N^{9/2}\), or
\(N^{5/2}\rho_N^{-2}\) in the dense, sparse, or ultra-sparse regime,
respectively, and set
\(b_{A,N}(\bs{c};\mathcal O_N):=
a_N\abs{\mathcal S_N}^{-1}\bs{c}^\top\bs C_{A,N}(\mathcal O_N)\).
Lemma~\ref{lem:exact_representation} gives, for each graph class \(g\)
in Table~\ref{tab:regimemap},
\[
a_N\bs{c}^\top\bs\Pi_N(g)
=\sum_{A\in\mathscr C_N(g)}
b_{A,N}(\bs{c};\mathcal O_N)\prod_{e\in A}\xi_e.
\]
The coefficients \(b_{A,N}(\bs{c};\mathcal O_N)\) are
\(\mathcal O_N\)-measurable. Under Assumption~\ref{ass:dyad_ind}, the
shocks \(\{\xi_e:e\in\mathcal E_N\}\) are conditionally independent
given \(\mathcal O_N\), with \(\E[\xi_e\mid\mathcal O_N]=0\),
\(\abs{\xi_e}\le1\), and
\(\E[\xi_e^2\mid\mathcal O_N]=P_e(\bs\beta_0)\{1-P_e(\bs\beta_0)\}\).

In the dense or mildly sparse regime, the leading term is
\(\bs\Pi_N^{\mathrm{D}}=\bs\Pi_N(g_{\mathrm{dyad}})\).
The normalized projection \(N\rho_N^{-7/2}\bs{c}^\top\bs\Pi_N^{\mathrm D}\)
is therefore \(\sum_{e\in\mathcal E_N}b_{\{e\},N}(\bs{c};\mathcal O_N)\xi_e\).
The summands \(b_{\{e\},N}(\bs{c};\mathcal O_N)\xi_e\) are therefore
conditionally independent and centered given \(\mathcal O_N\).
To verify the conditional Lindeberg condition in
Lemma~\ref{lem:conditional_triangular_array_clt}, fix \(\varepsilon>0\) and use
\begin{equation}
\label{eq:regime_dense_lindeberg_bound}
\begin{aligned}
&\sum_{e\in\mathcal E_N}
\E\!\left[
b_{\{e\},N}(\bs{c};\mathcal O_N)^2\xi_e^2
\1\{|b_{\{e\},N}(\bs{c};\mathcal O_N)\xi_e|>\varepsilon\}
\mid\mathcal O_N\right]\le
\varepsilon^{-2}
\sum_{e\in\mathcal E_N}
b_{\{e\},N}(\bs{c};\mathcal O_N)^4\E[\xi_e^4\mid\mathcal O_N].
\end{aligned}
\end{equation}
It therefore suffices to prove
\(\sum_{e\in\mathcal E_N}b_{\{e\},N}(\bs{c};\mathcal O_N)^4
\E[\xi_e^4\mid\mathcal O_N]=o_p(1)\).
For a fixed dyad \(e\), substitute \eqref{eq:projection_coeff_exact}
into \(b_{\{e\},N}(\bs{c};\mathcal O_N)\) and expand its fourth power.
Each term selects \(s_j\in\mathcal S_N\) and a
monomial index \(\nu_j\in\{1,\ldots,\nu_{\max}\}\) from each of the
four factors, with \(e\in E_{s_j,\nu_j}\) for \(j=1,\ldots,4\).
 Thus
\[
\begin{aligned}
&b_{\{e\},N}(\bs{c};\mathcal O_N)^4=\left(\frac{N\rho_N^{-7/2}}{\abs{\mathcal S_N}}\right)^4
\sum_{s_1,\ldots,s_4\in\mathcal S_N}
\sum_{\substack{\nu_1,\ldots,\nu_4\\
e\in\bigcap_{j=1}^4 E_{s_j,\nu_j}}}
\prod_{j=1}^4
\bigl\{\bs{c}^\top\bs c_{s_j,\nu_j}(\bs\beta_0,\mathcal O_N)\bigr\}\\
&\qquad\qquad\qquad\qquad\times
\prod_{j=1}^4\prod_{f\in E_{s_j,\nu_j}\setminus\{e\}}
P_f(\bs\beta_0).
\end{aligned}
\]
By Lemma~\ref{lem:poly} and \(\abs{\xi_e}\le1\),
\[
\begin{aligned}
\prod_{j=1}^4\abs{\bs{c}^\top
\bs c_{s_j,\nu_j}(\bs\beta_0,\mathcal O_N)}
&\le \norm{\bs c}^4
\prod_{j=1}^4\norm{\bs c_{s_j,\nu_j}(\bs\beta_0,\mathcal O_N)}
\le C,\\
\E[\xi_e^4\mid\mathcal O_N]
&\le \E[\xi_e^2\mid\mathcal O_N]
=P_e(\bs\beta_0)\{1-P_e(\bs\beta_0)\}
\le P_e(\bs\beta_0).
\end{aligned}
\]
The  bound on \(\prod_{j=1}^4\abs{\bs{c}^\top
\bs c_{s_j,\nu_j}(\bs\beta_0,\mathcal O_N)}\) holds almost surely, uniformly over \(N\) and \(s_j,\nu_j\); \(C\) may depend on the fixed vector \(\bs c\).
Thus
\[
\begin{aligned}
&b_{\{e\},N}(\bs c;\mathcal O_N)^4
\E[\xi_e^4\mid\mathcal O_N]\\
&\quad=\left(\frac{N\rho_N^{-7/2}}{\abs{\mathcal S_N}}\right)^4
\sum_{s_1,\ldots,s_4\in\mathcal S_N}
\sum_{\substack{\nu_1,\ldots,\nu_4\\
e\in\bigcap_{j=1}^4E_{s_j,\nu_j}}}
\prod_{j=1}^4
\{\bs c^\top\bs c_{s_j,\nu_j}(\bs\beta_0,\mathcal O_N)\}
\E[\xi_e^4\mid\mathcal O_N]
\prod_{j=1}^4\prod_{f\in E_{s_j,\nu_j}\setminus\{e\}}P_f(\bs\beta_0)\\
&\quad\le\left(\frac{N\rho_N^{-7/2}}{\abs{\mathcal S_N}}\right)^4
\sum_{s_1,\ldots,s_4\in\mathcal S_N}
\sum_{\substack{\nu_1,\ldots,\nu_4\\
e\in\bigcap_{j=1}^4E_{s_j,\nu_j}}}
\left|\prod_{j=1}^4
\{\bs c^\top\bs c_{s_j,\nu_j}(\bs\beta_0,\mathcal O_N)\}\right|
\E[\xi_e^4\mid\mathcal O_N]
\prod_{j=1}^4\prod_{f\in E_{s_j,\nu_j}\setminus\{e\}}P_f(\bs\beta_0)\\
&\quad\le C\left(\frac{N\rho_N^{-7/2}}{\abs{\mathcal S_N}}\right)^4
\sum_{s_1,\ldots,s_4\in\mathcal S_N}
\sum_{\substack{\nu_1,\ldots,\nu_4\\
e\in\bigcap_{j=1}^4E_{s_j,\nu_j}}}
P_e(\bs\beta_0)
\prod_{j=1}^4\prod_{f\in E_{s_j,\nu_j}\setminus\{e\}}P_f(\bs\beta_0).
\end{aligned}
\]
For each selected \(e,s_1,\ldots,s_4,\nu_1,\ldots,\nu_4\), the
four supports $E_{s_j,\nu_j}$ share the two endpoints of \(e\), and each support has
five nodes. Each support therefore adds at most three nodes outside
\(e\), so their union has at most \(2+4(5-2)=14\) nodes.
By Lemma~\ref{lem:det-row-supports},
\(4\le\abs{E_{s_j,\nu_j}}\le7\), so the probability product has
total exponent
\(13\le1+\sum_{j=1}^4(\abs{E_{s_j,\nu_j}}-1)\le25\).
In the probability product, \(P_e(\bs\beta_0)\) has exponent one.
For any other dyad \(f\) in the union graph, the exponent of
\(P_f(\bs\beta_0)\) equals the number of supports
\(E_{s_j,\nu_j}\), \(j=1,\ldots,4\), containing \(f\).
Since each of the four graphs has node degrees at most four,
the incidence multiplicity at each node is at most \(4\cdot4=16\).
Applying \eqref{eq:weighted_fixed_label_bound} to the union graph,
with these exponents,
and using \(\rho_N\le1\), we obtain
\[
\begin{aligned}
&
\E\!\left[
P_e(\bs\beta_0)
\prod_{j=1}^4\prod_{f\in E_{s_j,\nu_j}\setminus\{e\}}
P_f(\bs\beta_0)
\right]
\le
C\rho_N^{\,1+\sum_{j=1}^4(\abs{E_{s_j,\nu_j}}-1)}
\le C\rho_N^{13}.
\end{aligned}
\]
The bound holds uniformly over \(N\) and all selected indices
\(e,s_1,\ldots,s_4,\nu_1,\ldots,\nu_4\).
For a fixed dyad \(e\), its endpoints can occupy any of
the seven dyad positions in \(E(s)\), in either order; the other
three positions receive distinct labels outside \(e\). Hence
\[
\abs{\{s\in\mathcal S_N:e\in E(s)\}}
=2\cdot7\,(N-2)(N-3)(N-4)\le14N^3.
\]
Since \(E_{s,\nu}\subseteq E(s)\) and each pentad has at most
\(\nu_{\max}\) monomial indices,
\[
\begin{aligned}
&\sum_{e\in\mathcal E_N}\sum_{s_1,\ldots,s_4\in\mathcal S_N}
\sum_{\substack{\nu_1,\ldots,\nu_4\\
e\in\bigcap_{j=1}^4E_{s_j,\nu_j}}}1\le
\binom N2\{14(N-2)(N-3)(N-4)\nu_{\max}\}^4
\le C N^{14}.
\end{aligned}
\]
Using
\(\{N\rho_N^{-7/2}/\abs{\mathcal S_N}\}^4
=O(N^{-16}\rho_N^{-14})\), we therefore obtain
\[
\begin{aligned}
&\E\!\left[
\sum_{e\in\mathcal E_N}
b_{\{e\},N}(\bs{c};\mathcal O_N)^4
\E[\xi_e^4\mid\mathcal O_N]
\right]\\
&\quad\le C\left(\frac{N\rho_N^{-7/2}}{\abs{\mathcal S_N}}\right)^4
\sum_{e\in\mathcal E_N}\sum_{s_1,\ldots,s_4\in\mathcal S_N}
\sum_{\substack{\nu_1,\ldots,\nu_4\\
e\in\bigcap_{j=1}^4E_{s_j,\nu_j}}}
\E\!\left[
P_e(\bs\beta_0)
\prod_{j=1}^4\prod_{f\in E_{s_j,\nu_j}\setminus\{e\}}
P_f(\bs\beta_0)\right]\\
&\quad\le C N^{-16}\rho_N^{-14}N^{14}\rho_N^{13}
=\frac{C}{N^2\rho_N}=o(1),
\end{aligned}
\]
where \(N^2\rho_N=N(N\rho_N)\to\infty\). Markov's inequality and
\eqref{eq:regime_dense_lindeberg_bound} verify the conditional Lindeberg condition in Lemma~\ref{lem:conditional_triangular_array_clt}.
Assumption~\ref{ass:regimeSigma}(i) gives
\(\Var(N\rho_N^{-7/2}\bs{c}^\top\bs\Pi_N^{\mathrm D}\mid\mathcal O_N)
\xrightarrow{P}\bs{c}^\top\bf\Sigma_{\mathrm D}\bs{c}\).
Lemma~\ref{lem:conditional_triangular_array_clt} therefore gives
\(N\rho_N^{-7/2}\bs{c}^\top\bs\Pi_N^{\mathrm D}
\xrightarrow{D}N(0,\bs{c}^\top\bf\Sigma_{\mathrm D}\bs{c})\).

In the sparse regime, \(\rho_N\asymp N^{-1}\) and \(a_N=N^{9/2}\).
For \(r=1,\ldots,4\), let \(\mathcal A_{r,N}\) denote
\(\mathscr C_N(g_{\mathrm{dyad}})\),
\(\mathscr C_N(g_{\text{2-star}})\),
\(\mathscr C_N(g_{\mathrm{3tree}})\), and
\(\mathscr C_N(g_{\mathrm{4tree}})\), respectively. Define
\[
Q_{r,N}(\bs{c}):=\sum_{A\in\mathcal A_{r,N}}
b_{A,N}(\bs{c};\mathcal O_N)\prod_{e\in A}\xi_e,
\]
and set \(V_{r,N}(\bs{c}):=\Var(Q_{r,N}(\bs{c})\mid\mathcal O_N)\).
Then \(N^{9/2}\bs{c}^\top\bs\Pi_N^{\mathrm{S}}=\sum_{r=1}^{4}Q_{r,N}(\bs{c})\).
To apply Lemma~\ref{lem:dejong_multilinear_form}, take \(R=4\),
\(\mathcal F_N=\mathcal O_N\), \(\mathcal I_N=\mathcal E_N\),
\(Y_{e,N}=\xi_e\), and \(a_{A,N}=b_{A,N}(\bs{c};\mathcal O_N)\).

To verify \eqref{eq:mf-clt-max-condition}, it suffices to show that
\begin{equation}
\label{eq:regime_sparse_maximal_influence}
\max_{e\in\mathcal E_N}\sum_{r=1}^4
\sum_{A\in\mathcal A_{r,N}:\,e\in A}
\abs{b_{A,N}(\bs{c};\mathcal O_N)}^2
\E\!\left[\prod_{f\in A}\xi_f^2\mid\mathcal O_N\right]
=o_p(1).
\end{equation}
We now establish \eqref{eq:regime_sparse_maximal_influence}. We first bound
the contribution of each dyad set \(A\). Expanding the square of
\(\bs{c}^\top\bs C_{A,N}(\mathcal O_N)\) in
\eqref{eq:projection_coeff_exact} gives a sum over all ordered pairs of summands,
indexed by \((s,\nu)\) and \((t,\mu)\), where \(s,t\in\mathcal S_N\)
and \(\nu,\mu\in\{1,\ldots,\nu_{\max}\}\) satisfy
\(A\subseteq E_{s,\nu}\cap E_{t,\mu}\).
Then we have 
\[
\begin{aligned}
&\bigl\{\bs{c}^\top\bs C_{A,N}(\mathcal O_N)\bigr\}^2=\sum_{s,t\in\mathcal S_N}
\sum_{\substack{\nu,\mu\\A\subseteq E_{s,\nu}\cap E_{t,\mu}}}
\bigl\{\bs{c}^\top\bs c_{s,\nu}(\bs\beta_0,\mathcal O_N)\bigr\}
\bigl\{\bs{c}^\top\bs c_{t,\mu}(\bs\beta_0,\mathcal O_N)\bigr\}\\
&\qquad\qquad\times
\prod_{f\in E_{s,\nu}\setminus A}P_f(\bs\beta_0)
\prod_{f\in E_{t,\mu}\setminus A}P_f(\bs\beta_0).
\end{aligned}
\]
By conditional independence and
\eqref{eq:poly-coefficient-derivative-bound},
\[
\begin{aligned}
\E\!\left[\prod_{f\in A}\xi_f^2\mid\mathcal O_N\right]
&=\prod_{f\in A}P_f(\bs\beta_0)\{1-P_f(\bs\beta_0)\}
\le\prod_{f\in A}P_f(\bs\beta_0),\\
\left|\{\bs c^\top\bs c_{s,\nu}(\bs\beta_0,\mathcal O_N)\}
\{\bs c^\top\bs c_{t,\mu}(\bs\beta_0,\mathcal O_N)\}\right|
&\le \norm{\bs c}^2
\norm{\bs c_{s,\nu}(\bs\beta_0,\mathcal O_N)}
\norm{\bs c_{t,\mu}(\bs\beta_0,\mathcal O_N)}
\le C.
\end{aligned}
\]
Hence
\begin{equation}
\label{eq:regime_influence_expansion}
\begin{aligned}
&\abs{b_{A,N}(\bs{c};\mathcal O_N)}^2
\E\!\left[\prod_{f\in A}\xi_f^2\mid\mathcal O_N\right]\\
&\quad\le \left(\frac{a_N}{\abs{\mathcal S_N}}\right)^2
\sum_{s,t\in\mathcal S_N}
\sum_{\substack{\nu,\mu\\A\subseteq E_{s,\nu}\cap E_{t,\mu}}}
\left|\{\bs c^\top\bs c_{s,\nu}(\bs\beta_0,\mathcal O_N)\}
\{\bs c^\top\bs c_{t,\mu}(\bs\beta_0,\mathcal O_N)\}\right|\\
&\qquad\qquad\times
\E\!\left[\prod_{f\in A}\xi_f^2\mid\mathcal O_N\right]
\prod_{f\in E_{s,\nu}\setminus A}P_f(\bs\beta_0)
\prod_{f\in E_{t,\mu}\setminus A}P_f(\bs\beta_0)\\
&\quad\le C\left(\frac{a_N}{\abs{\mathcal S_N}}\right)^2
\sum_{\substack{s,t\in\mathcal S_N\\A\subseteq E(s)\cap E(t)}}
\sum_{\substack{\nu,\mu\\A\subseteq E_{s,\nu}\cap E_{t,\mu}}}
\prod_{f\in E_{s,\nu}\setminus A}P_f(\bs\beta_0)
\prod_{f\in E_{t,\mu}\setminus A}P_f(\bs\beta_0)
\prod_{f\in A}P_f(\bs\beta_0).
\end{aligned}
\end{equation}
For \(A\subseteq E_{s,\nu}\cap E_{t,\mu}\),
\[
\begin{aligned}
&\prod_{f\in E_{s,\nu}\setminus A}P_f(\bs\beta_0)
\prod_{f\in E_{t,\mu}\setminus A}P_f(\bs\beta_0)
\prod_{f\in A}P_f(\bs\beta_0)\\
&\quad=
\prod_{f\in(E_{s,\nu}\cap E_{t,\mu})\setminus A}P_f(\bs\beta_0)^2
\prod_{f\in E_{s,\nu}\setminus E_{t,\mu}}P_f(\bs\beta_0)
\prod_{f\in E_{t,\mu}\setminus E_{s,\nu}}P_f(\bs\beta_0)
\prod_{f\in A}P_f(\bs\beta_0).
\end{aligned}
\]
For each \(A\subseteq E_{s,\nu}\cap E_{t,\mu}\), let
\(p:=\abs{E_{s,\nu}\setminus A}+\abs{E_{t,\mu}\setminus A}+\abs A\)
be the total exponent of the probability product.
Let \(V:=V_s\cup V_t\) and \(E:=E_{s,\nu}\cup E_{t,\mu}\).
For each \((u,v)\in E\), let \(r_{uv}\) be the exponent
of \(P_{uv}(\bs\beta_0)\) in the final bound in
\eqref{eq:regime_influence_expansion}.
Then \(r_{uv}=2\) on \((E_{s,\nu}\cap E_{t,\mu})\setminus A\)
and \(r_{uv}=1\) on the remaining dyads in \(E\).
By Lemma~\ref{lem:det-row-supports}, the graphs
\((V_s,E_{s,\nu})\) and \((V_t,E_{t,\mu})\) are connected, each on five nodes.
Since \(A\ne\varnothing\) is contained in both edge sets, their union
\((V,E)\) is connected, so \(\abs E\ge\abs V-1\). Thus
\(p=\abs E+\abs{(E_{s,\nu}\cap E_{t,\mu})\setminus A}\ge\abs E\),
which gives \(\abs V\le p+1\). Also, \(\sum_{(u,v)\in E}r_{uv}=p\).
Each of the graphs \((V_s,E_{s,\nu})\) and \((V_t,E_{t,\mu})\)
has five nodes, so a node can be joined to at most four others within
each graph.
Hence the node degrees \(d_{E_{s,\nu}}(v)\) and \(d_{E_{t,\mu}}(v)\)
are at most four for every \(v\in V\). Since \(\kappa_v\) sums the probability exponents on
dyads incident to \(v\),
\(\kappa_v=d_{E_{s,\nu}}(v)+d_{E_{t,\mu}}(v)-d_A(v)\le4+4=8\),
where \(d_A(v)\) counts the dyads in \(A\) incident to \(v\).
Relabel the nodes of \((V,E)\) as \(\{1,\ldots,\abs V\}\), with
\(V_1=\{1,2\}\) representing the endpoints of \(e\). Applying
\eqref{eq:weighted_rooted_sum} gives
\[
\begin{aligned}
&\left(\frac{N^{9/2}}{\abs{\mathcal S_N}}\right)^2
\max_{i_1\ne i_2}
\sum_{\substack{i_3,\ldots,i_{\abs V}\in[N]\\i_1,\ldots,i_{\abs V}\ \mathrm{distinct}}}
\prod_{(u,v)\in E}P_{i_ui_v}(\bs\beta_0)^{r_{uv}}\\
&\quad=O(N^{-1})\,o_p(N^{\abs V}\rho_N^p)
=o_p(N^{\abs V-1}\rho_N^p)=o_p(1),
\end{aligned}
\]
where \(N^{\abs V-1}\rho_N^p\le(N\rho_N)^p=O(1)\).
Up to relabeling, the number of union graphs \((V,E)\) together with
assignments of the edge exponents \(r_{uv}\in\{1,2\}\) is bounded
independently of \(N\), since \(\abs V\le10\).
For each fixed graph and assignment of its edge exponents, any assignment
of distinct node labels admits at most a constant number of possible tuples
\((A,s,t,\nu,\mu)\).
Summing \eqref{eq:regime_influence_expansion} over
\(A\in\mathcal A_{r,N}\) containing \(e\), for \(r=1,\ldots,4\),
and taking the maximum over \(e\) therefore proves
\eqref{eq:regime_sparse_maximal_influence}. Since all
summands are nonnegative, \eqref{eq:regime_sparse_maximal_influence} implies
\eqref{eq:mf-clt-max-condition} for every \(r=1,\ldots,4\).

Lemma~\ref{lem:sparse_dyadset_fourth} gives
\(\E[Q_{r,N}(\bs{c})^4\mid\mathcal O_N]=3V_{r,N}(\bs{c})^2+o_p(1)\)
for \(r=1,\ldots,4\), verifying \eqref{eq:mf-clt-fourth-condition} in
Lemma~\ref{lem:dejong_multilinear_form}.
Lemma~\ref{lem:dyadic_hoeffding_orthogonality} and
Assumption~\ref{ass:regimeSigma}(ii) give
\[
\Var(N^{9/2}\bs{c}^\top\bs\Pi_N^{\mathrm S}\mid\mathcal O_N)
=\sum_{r=1}^4V_{r,N}(\bs{c})
\xrightarrow{P}\bs{c}^\top\bf\Sigma_{\mathrm S}\bs{c}.
\]
Lemma~\ref{lem:dejong_multilinear_form} therefore gives
\(N^{9/2}\bs{c}^\top\bs\Pi_N^{\mathrm S}
\xrightarrow{D}N(0,\bs{c}^\top\bf\Sigma_{\mathrm S}\bs{c})\).

In the ultra-sparse regime, \(N\rho_N\to0\) and
\(N^5\rho_N^4\to\infty\). Take \(a_N=N^{5/2}\rho_N^{-2}\).
The leading term is \(\bs\Pi_N^{\mathrm{US}}=\bs\Pi_N(g_{\mathrm{4tree}})\).
Let \(\mathcal A^U_{4,N}:=\mathscr C_N(g_{\mathrm{4tree}})\), and set
\[
Q^U_{4,N}(\bs{c}):=\sum_{A\in\mathcal A^U_{4,N}}
b_{A,N}(\bs{c};\mathcal O_N)\prod_{e\in A}\xi_e.
\]
Write \(V^U_{4,N}(\bs{c}):=\Var(Q^U_{4,N}(\bs{c})\mid\mathcal O_N)\).
Then \(Q^U_{4,N}(\bs{c})=N^{5/2}\rho_N^{-2}\bs{c}^\top\bs\Pi_N^{\mathrm{US}}\).
To apply Lemma~\ref{lem:dejong_multilinear_form}, take \(R=4\),
\(\mathcal F_N=\mathcal O_N\), \(\mathcal I_N=\mathcal E_N\),
\(Y_{e,N}=\xi_e\), \(\mathcal A_{r,N}=\varnothing\) for \(r<4\),
\(\mathcal A_{4,N}=\mathcal A^U_{4,N}\), and
\(a_{A,N}=b_{A,N}(\bs{c};\mathcal O_N)\).
To verify \eqref{eq:mf-clt-max-condition}, it suffices to show that
\begin{equation}
\label{eq:regime_ultrasparse_maximal_influence}
\max_{e\in\mathcal E_N}
\sum_{A\in\mathscr C_N(g_{\mathrm{4tree}}):\,e\in A}
\abs{b_{A,N}(\bs{c};\mathcal O_N)}^2
\E\!\left[\prod_{f\in A}\xi_f^2\mid\mathcal O_N\right]
=o_p(1).
\end{equation}
We adapt the argument for \eqref{eq:regime_sparse_maximal_influence}
to establish \eqref{eq:regime_ultrasparse_maximal_influence}.
In \eqref{eq:regime_influence_expansion}, we now take
\(a_N=N^{5/2}\rho_N^{-2}\) and restrict \(A\) to sets in
\(\mathcal A^U_{4,N}\) containing \(e\).
Here \(A\) is a four-edge tree, so \(\abs A=4\) and \(\abs{V(A)}=5\).
The inclusion \(A\subseteq E_{s,\nu}\cap E_{t,\mu}\) implies
\(V(A)\subseteq V_s\cap V_t\). Since \(V_s\) and \(V_t\) each
contain exactly five nodes, \(V_s=V_t=V(A)\), so their union
\(V=V_s\cup V_t\) satisfies \(\abs V=5\).
The probability product in \eqref{eq:regime_influence_expansion} has total exponent
\(p=\abs{E_{s,\nu}\setminus A}+\abs{E_{t,\mu}\setminus A}+4\ge4\).
The incidence bound \(\kappa_v\le8\) established for \(E_{s,\nu}\) and \(E_{t,\mu}\)
also applies here.
Taking \(V_1=\{1,2\}\) for the endpoints of \(e\),
\eqref{eq:weighted_rooted_sum} yields
\[
\begin{aligned}
&\left(\frac{N^{5/2}\rho_N^{-2}}{\abs{\mathcal S_N}}\right)^2
\max_{i_1\ne i_2}
\sum_{\substack{i_3,i_4,i_5\in[N]\\i_1,\ldots,i_5\ \mathrm{distinct}}}
\prod_{(u,v)\in E}P_{i_ui_v}(\bs\beta_0)^{r_{uv}}\\
&\quad=O(N^{-5}\rho_N^{-4})\,o_p(N^5\rho_N^p)
=o_p(\rho_N^{p-4})=o_p(1),
\end{aligned}
\]
where \(\rho_N^{p-4}\le1\). The argument in the sparse regime again gives
finitely many union graphs \((V,E)\) and assignments of their edge exponents
\(r_{uv}\), up to relabeling, and a uniform bound on the number of possible tuples
\((A,s,t,\nu,\mu)\) for each fixed graph, assignment of edge exponents,
and assignment of distinct node labels.
Summing \eqref{eq:regime_influence_expansion} over
\(A\in\mathcal A^U_{4,N}\) containing \(e\) and maximizing over \(e\)
therefore proves \eqref{eq:regime_ultrasparse_maximal_influence}, which verifies
\eqref{eq:mf-clt-max-condition} for \(r=4\); for \(r<4\), the sums in
\eqref{eq:mf-clt-max-condition} are zero because \(\mathcal A_{r,N}\) is empty.
Lemma~\ref{lem:sparse_dyadset_fourth} gives
\(\E[Q^U_{4,N}(\bs{c})^4\mid\mathcal O_N]=3V^U_{4,N}(\bs{c})^2+o_p(1)\),
verifying \eqref{eq:mf-clt-fourth-condition} for \(r=4\).
Assumption~\ref{ass:regimeSigma}(iii) gives
\(V^U_{4,N}(\bs{c})\xrightarrow{P}\bs{c}^\top\bf\Sigma_{\mathrm{US}}\bs{c}\).
Lemma~\ref{lem:dejong_multilinear_form} therefore gives
\(N^{5/2}\rho_N^{-2}\bs{c}^\top\bs\Pi_N^{\mathrm{US}}
\xrightarrow{D}N(0,\bs{c}^\top\bf\Sigma_{\mathrm{US}}\bs{c})\).
Both Lemmas~\ref{lem:conditional_triangular_array_clt} and
\ref{lem:dejong_multilinear_form} allow a zero limiting variance, so
the limits hold for every \(\bs c\in\mathbb R^q\).
The Cram\'er--Wold device gives the multivariate Gaussian limits for the
three leading projection sums. The remainder bounds in
\eqref{eq:regime_remainder_second_moments} and Slutsky's theorem prove
the three assertions of Theorem~\ref{thm:regimeCLT}.
\end{proof}

\subsection{Proof of Theorem~\ref{thm:gmm} and Corollary~\ref{cor:generic_gmm_inference}}

We first state an analytic lemma on uniform convergence of functions
and their first derivatives.

\begin{lem}
\label{lem:uniform-convergence-differentiation}
Let \(\mathcal K\subset\mathbb R^d\) be nonempty and compact, and let
\(\bs f:\mathcal K\to\mathbb R^q\) be deterministic and continuous,
with \(d\) and \(q\) fixed.

\textnormal{(i)} Suppose random maps
\(\bs f_N:\mathcal K\to\mathbb R^q\) satisfy
\(\bs f_N(\bs x)\xrightarrow{P}\bs f(\bs x)\) for every fixed
\(\bs x\in\mathcal K\). If
\(\norm{\bs f_N(\bs x)-\bs f_N(\bs y)}
\le L_N\norm{\bs x-\bs y}\) holds almost surely for all
\(\bs x,\bs y\in\mathcal K\), with nonnegative \(L_N=O_p(1)\),
then \(\sup_{\bs x\in\mathcal K}
\norm{\bs f_N(\bs x)-\bs f(\bs x)}\xrightarrow{P}0\).

\textnormal{(ii)} Suppose \(\mathcal K\) is convex
with nonempty interior. Let \(\bs f_N\) be deterministic twice continuously differentiable maps
on an open neighborhood of \(\mathcal K\), converging pointwise to
\(\bs f\) on \(\mathcal K\). If all first and second partial
derivatives of \(\bs f_N\) are uniformly bounded on \(\mathcal K\),
independently of \(N\), then \(\bs f\) is continuously differentiable
on \(\operatorname{int}(\mathcal K)\), the set of interior points of
\(\mathcal K\). For every compact
\(\mathcal K_0\subset\operatorname{int}(\mathcal K)\) and every
\(j=1,\ldots,d\), \(\sup_{\bs x\in\mathcal K_0}
\left\|\frac{\partial\bs f_N(\bs x)}{\partial x_j}
-\frac{\partial\bs f(\bs x)}{\partial x_j}\right\|\to 0\).
\end{lem}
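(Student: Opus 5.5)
For part (i), I will use the standard finite-net argument. Fix \(\varepsilon>0\) and \(\eta>0\). Since \(L_N=O_p(1)\), I choose \(M<\infty\) such that \(\Pr(L_N>M)<\eta\) for all large \(N\). The limit \(\bs f\) is continuous on the compact set \(\mathcal K\), so it is uniformly continuous there. I can therefore pick \(\delta>0\) small enough that \(M\delta<\varepsilon/3\) and \(\norm{\bs f(\bs x)-\bs f(\bs y)}<\varepsilon/3\) whenever \(\norm{\bs x-\bs y}\le\delta\). Compactness then supplies a finite \(\delta\)-net \(\{\bs x_1,\ldots,\bs x_K\}\subset\mathcal K\). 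For any \(\bs x\in\mathcal K\), take the nearest net point \(\bs x_k\) and write
\[
\norm{\bs f_N(\bs x)-\bs f(\bs x)}\le L_N\delta+\max_{k\le K}\norm{\bs f_N(\bs x_k)-\bs f(\bs x_k)}+\varepsilon/3 .
\]
On the event \(\{L_N\le M\}\), the first term is below \(\varepsilon/3\). The middle maximum is over finitely many points, so pointwise convergence and a union bound make it \(o_p(1)\). Combining these, \(\Pr(\sup_{\bs x\in\mathcal K}\norm{\bs f_N(\bs x)-\bs f(\bs x)}>\varepsilon)\le\eta+o(1)\). Since \(\eta\) is arbitrary, this proves (i).

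For part (ii), I will use Arzelà–Ascoli on the derivatives. Fix a coordinate \(j\) and set \(\bs g_N:=\partial\bs f_N/\partial x_j\).

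\emph{Equicontinuity and boundedness.} The \(\bs g_N\) are uniformly bounded on \(\mathcal K\) by assumption. Because \(\mathcal K\) is convex, the mean value theorem along segments in \(\mathcal K\), together with the uniform bound on second derivatives, shows that the \(\bs g_N\) are uniformly Lipschitz on \(\mathcal K\). The family is therefore equicontinuous.

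\emph{Identifying subsequential limits.} By Arzelà–Ascoli, every subsequence of \((\bs g_N)\) has a further subsequence converging uniformly on \(\mathcal K\) to some continuous \(\bs g\). To identify \(\bs g\), take \(\bs x\in\operatorname{int}(\mathcal K)\) and \(t\) small enough that the segment from \(\bs x\) to \(\bs x+t\bs e_j\) stays in \(\mathcal K\). The fundamental theorem of calculus gives
\[
\bs f_N(\bs x+t\bs e_j)-\bs f_N(\bs x)=\int_0^t\bs g_N(\bs x+u\bs e_j)\,du .
\]
Passing to the limit along the subsequence (uniform convergence of \(\bs g_N\) and pointwise convergence of \(\bs f_N\)) yields the same identity with \(\bs f\) and \(\bs g\). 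Differentiating in \(t\) at \(t=0\), using continuity of \(\bs g\), shows \(\bs g=\partial\bs f/\partial x_j\) on \(\operatorname{int}(\mathcal K)\).

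\emph{Conclusion.} Every subsequential limit equals the same function \(\partial\bs f/\partial x_j\) on the interior. Hence \(\partial\bs f/\partial x_j\) exists and is continuous on \(\operatorname{int}(\mathcal K)\), so \(\bs f\) is continuously differentiable there. By the subsequence principle, the whole sequence \(\bs g_N\) converges uniformly on any compact \(\mathcal K_0\subset\operatorname{int}(\mathcal K)\). Note that \(\mathcal K_0\) is needed because \(\bs g\) is identified with \(\partial\bs f/\partial x_j\) only on the interior; on the boundary, \(\bs g\) is a limit of derivatives but need not be a derivative of \(\bs f\).

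The only delicate step is this identification of the subsequential limit \(\bs g\) as the partial derivative of \(\bs f\). It requires that the segments used stay in \(\mathcal K\), which is why the argument works on interior points and why the uniform convergence claim is restricted to compact subsets \(\mathcal K_0\) of the interior.
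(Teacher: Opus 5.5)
Your part (i) is the same finite-net argument as the paper's. Your part (ii) is correct but argues differently.

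The paper's route for (ii) has three steps:
\begin{enumerate}
\item It applies part (i), with a deterministic Lipschitz constant from the first-derivative bound, to upgrade pointwise convergence of \(\bs f_N\) to uniform convergence on \(\mathcal K\).
\item It uses Taylor's formula for central difference quotients to obtain \(\sup_{\bs x\in\mathcal K_0}\norm{\partial\bs f_N(\bs x)/\partial x_j-\partial\bs f_M(\bs x)/\partial x_j}\le t^{-1}\sup_{\bs x\in\mathcal K}\norm{\bs f_N(\bs x)-\bs f_M(\bs x)}+2Ct\), so the derivatives are uniformly Cauchy on \(\mathcal K_0\).
\item It identifies the limit and its continuity through Rudin's Theorems~7.17 and~7.12.
\end{enumerate}
You instead get relative compactness of \(\{\partial\bs f_N/\partial x_j\}\) in the sup norm on \(\mathcal K\) from Arzel\`a--Ascoli. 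You then identify each subsequential limit through the integral identity along \(\bs x+u\bs e_j\) and finish with the subsequence principle.

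The paper's route is quantitative. Letting \(M\to\infty\) and taking \(t\) of order the square root of \(\sup_{\mathcal K}\norm{\bs f_N-\bs f}\) gives an explicit bound on the derivative error over \(\mathcal K_0\). It does, however, need uniform convergence of \(\bs f_N\). Its difference quotients also have a fixed width \(t\) and must stay inside \(\mathcal K\), which confines the conclusion to \(\mathcal K_0\).

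Your route gives no rate but uses only pointwise convergence of \(\bs f_N\). It also proves more than you claim. A closed convex set with nonempty interior satisfies \(\mathcal K=\overline{\operatorname{int}(\mathcal K)}\). Two subsequential limits are continuous on \(\mathcal K\) and both equal \(\partial\bs f/\partial x_j\) on \(\operatorname{int}(\mathcal K)\), so they agree on all of \(\mathcal K\). Hence \(\partial\bs f_N/\partial x_j\) converges uniformly on the whole of \(\mathcal K\) to the continuous extension of \(\partial\bs f/\partial x_j\). Your closing remark that \(\mathcal K_0\) is needed is therefore too cautious. The restriction matters only if you want to read the limit as a partial derivative of \(\bs f\) at boundary points.
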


\begin{proof}
For (i), fix \(\varepsilon,\eta>0\) and choose \(M>0\) such that
\(\limsup_N\Pr(L_N>M)\le\eta\). Uniform continuity of \(\bs f\)
on \(\mathcal K\) allows us to choose \(t>0\) with
\(tM\le\varepsilon/3\) and
\(\norm{\bs f(\bs x)-\bs f(\bs y)}\le\varepsilon/3\) whenever
\(\bs x,\bs y\in\mathcal K\) and \(\norm{\bs x-\bs y}\le t\).
Cover \(\mathcal K\) by finitely many balls of radius \(t\), with
centers \(\bs x_1,\ldots,\bs x_m\in\mathcal K\). The triangle
inequality gives
\[
\sup_{\bs x\in\mathcal K}\norm{\bs f_N(\bs x)-\bs f(\bs x)}
\le\max_{1\le\ell\le m}\norm{\bs f_N(\bs x_\ell)-\bs f(\bs x_\ell)}
+tL_N+\varepsilon/3.
\]
For fixed \(t\), pointwise convergence and a finite union bound make \(\max_{1\le\ell\le m}\norm{\bs f_N(\bs x_\ell)-\bs f(\bs x_\ell)}\) converge to zero in probability.
On \(\{L_N\le M\}\),   \(\sup_{\bs x\in\mathcal K}\norm{\bs f_N(\bs x)-\bs f(\bs x)}\) can exceed \(\varepsilon\) only
if \(\max_{1\le\ell\le m}\norm{\bs f_N(\bs x_\ell)-\bs f(\bs x_\ell)}\) exceeds \(\varepsilon/3\). Consequently,
\(\limsup_N\Pr(\sup_{\bs x\in\mathcal K}
\norm{\bs f_N(\bs x)-\bs f(\bs x)}>\varepsilon)\le\eta\).
Letting \(\eta\downarrow0\) proves (i).

For (ii), the first derivative bounds and convexity of \(\mathcal K\)
give \(\bs f_N\) a common Lipschitz constant. Part (i), applied to
deterministic maps, therefore gives
\(\sup_{\bs x\in\mathcal K}\norm{\bs f_N(\bs x)-\bs f(\bs x)}\to0\).
Fix a compact
\(\mathcal K_0\subset\operatorname{int}(\mathcal K)\), and let
\(\bs e_j\) be the \(j\)th coordinate vector. For all sufficiently
small \(t>0\), the segment from \(\bs x-t\bs e_j\) to
\(\bs x+t\bs e_j\) lies in \(\mathcal K\) for every
\(\bs x\in\mathcal K_0\). The integral remainder in the first-order
Taylor formula and the second derivative bound give
\[
\sup_{\bs x\in\mathcal K_0}
\left\|\frac{\bs f_N(\bs x+t\bs e_j)-\bs f_N(\bs x-t\bs e_j)}{2t}
-\frac{\partial\bs f_N(\bs x)}{\partial x_j}\right\|\le Ct,
\]
where \(C\) is independent of \(N\) and \(t\). The triangle
inequality gives, for any \(N,M\),
\[
\begin{aligned}
&\sup_{\bs x\in\mathcal K_0}
\left\|\frac{\partial\bs f_N(\bs x)}{\partial x_j}
-\frac{\partial\bs f_M(\bs x)}{\partial x_j}\right\|\\
&\quad\le
\sup_{\bs x\in\mathcal K_0}
\left\|
\frac{\bs f_N(\bs x+t\bs e_j)-\bs f_N(\bs x-t\bs e_j)}{2t}
-\frac{\bs f_M(\bs x+t\bs e_j)-\bs f_M(\bs x-t\bs e_j)}{2t}
\right\|+2Ct\\
&\quad\le t^{-1}\sup_{\bs x\in\mathcal K}
\norm{\bs f_N(\bs x)-\bs f_M(\bs x)}+2Ct.
\end{aligned}
\]
The triangle inequality and
\(\sup_{\bs x\in\mathcal K}\norm{\bs f_N(\bs x)-\bs f(\bs x)}\to0\) give
\(\sup_{\bs x\in\mathcal K}\norm{\bs f_N(\bs x)-\bs f_M(\bs x)}\to0\)
as \(N,M\to\infty\). Taking \(N,M\to\infty\) and then
\(t\downarrow0\) shows that
\(\partial\bs f_N/\partial x_j\) is uniformly Cauchy on
\(\mathcal K_0\), and hence converges uniformly there.
For any \(\bs x\in\operatorname{int}(\mathcal K)\), choose \(t>0\)
small enough that \(\bs x+s\bs e_j\in\operatorname{int}(\mathcal K)\)
for all \(\abs s\le t\). On \([-t,t]\), the maps
\(s\mapsto\bs f_N(\bs x+s\bs e_j)\) converge pointwise to
\(s\mapsto\bs f(\bs x+s\bs e_j)\), and their derivatives
\(\partial\bs f_N(\bs x+s\bs e_j)/\partial x_j\) converge uniformly,
since \(\mathcal K_0\) was arbitrary.
By \citet[Theorem~7.17]{rudin1976principles}, applied componentwise
on \([-t,t]\),
\(\partial\bs f(\bs x)/\partial x_j
=\lim_{N\to\infty}\partial\bs f_N(\bs x)/\partial x_j\).
Since each \(\partial\bs f_N/\partial x_j\) is continuous,
\citet[Theorem~7.12]{rudin1976principles}, applied componentwise on
closed balls contained in \(\operatorname{int}(\mathcal K)\), implies
that \(\partial\bs f/\partial x_j\) is continuous on
\(\operatorname{int}(\mathcal K)\). This proves (ii) for every
\(j=1,\ldots,d\).
\end{proof}

We now establish the uniform convergence of \(\rho_N^{-4}\bs h_N(\bs\beta)\) and  its Jacobian.

\begin{lem}
\label{lem:uniform-convergence-population}
Suppose Assumptions~\ref{ass:dyad_ind}, \ref{ass:bounded_design},
\ref{ass:sparse_envelope}, and \ref{ass:consistency}(i) hold.
Let \(\mathcal B\) be a compact subset of \(\mathbb R^d\),
with \(\bs\beta_0\) in its interior, and suppose
\(N^5\rho_N^4\to\infty\).
Then \(\bs h_0\) is continuously differentiable on a neighborhood of
\(\bs\beta_0\). Let
\(
\dot{\bs h}_0:=
\frac{\partial\bs h_0(\bs\beta_0)}{\partial\bs\beta^\top}
\)
denote the Jacobian of \(\bs h_0\) evaluated at \(\bs\beta_0\).
Then
\(\sup_{\bs\beta\in\mathcal B}\allowbreak\left\|
\rho_N^{-4}\bs h_N(\bs\beta)-\bs h_0(\bs\beta)
\right\|\xrightarrow{P}0\). Moreover, for any deterministic sequence
\(\delta_N\downarrow0\),
\(\sup_{\bs\beta\in\mathcal B:\norm{\bs\beta-\bs\beta_0}\le\delta_N}\allowbreak
\left\|\rho_N^{-4}\dot{\bs h}_N(\bs\beta)-\dot{\bs h}_0
\right\|\xrightarrow{P}0\).
\end{lem}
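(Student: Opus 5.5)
The proof splits \(\rho_N^{-4}\bs h_N(\bs\beta)-\bs h_0(\bs\beta)\) into a stochastic part \(\rho_N^{-4}\{\bs h_N(\bs\beta)-\E\bs h_N(\bs\beta)\}\) and a deterministic part \(\rho_N^{-4}\E\bs h_N(\bs\beta)-\bs h_0(\bs\beta)\). The deterministic part is handled by Lemma~\ref{lem:uniform-convergence-differentiation}(ii). Set \(\bs f_N(\bs\beta):=\rho_N^{-4}\E[\bs h_N(\bs\beta)]=\rho_N^{-4}\E[\bs\Psi_s(\bs\beta)]\), and expand \(\bs\Psi_s(\bs\beta)\) as in Lemma~\ref{lem:poly}. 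Every monomial has degree at least four, and its coefficients have \(\bs\beta\)-derivatives of every order bounded uniformly by \eqref{eq:poly-coefficient-derivative-bound}. Differentiation may be passed under the expectation by dominated convergence, so
\[
\norm{\partial^\alpha_{\bs\beta}\bs f_N(\bs\beta)}\le C\rho_N^{-4}\sum_\nu\E\Big[\prod_{e\in E_{s,\nu}}P_e(\bs\beta_0)\Big]\le C,\qquad \abs{\alpha}\le2 .
\]
The last inequality uses \eqref{eq:weighted_fixed_label_bound} with total exponent \(\abs{E_{s,\nu}}\ge4\) and \(\rho_N\le1\). Assumption~\ref{ass:consistency}(i) gives \(\bs f_N\to\bs h_0\) pointwise. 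Apply Lemma~\ref{lem:uniform-convergence-differentiation}(ii) on a closed ball \(\mathcal K\subset\mathcal B\) around \(\bs\beta_0\); if \(\mathcal B\) itself is not convex, use a compact convex set containing \(\mathcal B\), since the bounds extend to any compact set. This yields continuous differentiability of \(\bs h_0\) near \(\bs\beta_0\), uniform convergence \(\bs f_N\to\bs h_0\) on \(\mathcal B\), and uniform convergence of \(\partial\bs f_N\to\partial\bs h_0\) on a small ball \(\mathcal K_0\) around \(\bs\beta_0\).

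The stochastic part is controlled pointwise in \(\bs\beta\) and then made uniform by Lemma~\ref{lem:uniform-convergence-differentiation}(i). Fix \(\bs\beta\). Bound \(\E\norm{\bs h_N(\bs\beta)-\E\bs h_N(\bs\beta)}^2\) by decomposing the variance into its conditional part given \(\mathcal O_N\) and the variance of \(\E[\bs h_N(\bs\beta)\mid\mathcal O_N]\). Away from \(\bs\beta_0\) the conditional mean is not zero, so both parts are needed.

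For the conditional part, the Hoeffding decomposition of Lemma~\ref{lem:dyadic_hoeffding_orthogonality} still holds at \(\bs\beta\) apart from the empty-set term, because Lemma~\ref{lem:exact_representation} used only the polynomial structure. Proposition~\ref{prop:classbound} then bounds it by \(\sum_g N^{-v(g)}\rho_N^{2m(g)-r(g)}\). Dividing by \(\rho_N^8\) leaves terms such as \(N^{-5}\rho_N^{-4}\), \(N^{-2}\rho_N^{-1}\) and \(N^{-3}\rho_N^{-2}\). Each is \(o(1)\) once \(N^5\rho_N^4\to\infty\), since \(N^2\rho_N\ge(N^5\rho_N^4)^{1/4}N^{3/4}\) and similarly for the others.

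For the unconditional part, \(\E[\bs h_N(\bs\beta)\mid\mathcal O_N]\) is a U-statistic of order five in the i.i.d. \(\bs U_i\). Its variance is a sum over pairs of pentads sharing \(c\ge1\) nodes, each of order \(N^{-c}\) times \(\E\) of a product of probabilities over the union of the two supports. The union has total exponent at least \(8\) minus shared edges, and \eqref{eq:weighted_fixed_label_bound} bounds each such term. Showing that each term is \(o(\rho_N^8)\) is the step I expect to be the main obstacle. A node overlap without shared edges costs \(N^{-1}\rho_N^8\), which is fine. Overlaps that share \(c\) nodes and many edges give contributions like \(N^{-c}\rho_N^{8-e}\), and these must be compared with \(\rho_N^8\) using \(N^5\rho_N^4\to\infty\). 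In the worst case \(c=5\) and the supports share four edges; this gives \(N^{-5}\rho_N^{4}\), which is \(o(\rho_N^8)\) exactly by the effective sample size condition. I would tabulate the edges shared per shared node, as in Table~\ref{tab:regimemap}.

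Lipschitz control then upgrades the pointwise limits to uniform ones. The bound
\[
\sup_{\bs\beta}\norm{\rho_N^{-4}\dot{\bs h}_N(\bs\beta)}\le C\rho_N^{-4}\abs{\mathcal S_N}^{-1}\sum_{s,\nu}m_{s,\nu}(\mathbf L)
\]
has expectation \(O(1)\), so it is \(O_p(1)\) by Markov's inequality. Together with convexity of the domain, this gives \(L_N=O_p(1)\), and Lemma~\ref{lem:uniform-convergence-differentiation}(i) yields the uniform convergence of \(\rho_N^{-4}\bs h_N\).

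The Jacobian is handled the same way, with \(\bs\Psi\) replaced by \(\partial_{\beta_j}\bs\Psi\). This is again a polynomial of degree at least four in \(\mathbf L\) with bounded coefficients, and it is Lipschitz via the second-derivative bound. This gives \(\sup_{\mathcal K_0}\norm{\rho_N^{-4}\dot{\bs h}_N-\partial\bs f_N}\xrightarrow{P}0\). Combined with the uniform convergence \(\partial\bs f_N\to\partial\bs h_0\) on \(\mathcal K_0\) and continuity of \(\partial\bs h_0\) at \(\bs\beta_0\), this gives the claim on shrinking balls \(\norm{\bs\beta-\bs\beta_0}\le\delta_N\).
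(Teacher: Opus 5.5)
Your argument is correct in outline, but it organizes the variance bound differently from the paper.

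\textbf{Your route.} You split $\Var\{\bs h_N(\bs\beta)\}$ by the law of total variance. The conditional part goes through the dyadic Hoeffding decomposition and Proposition~\ref{prop:classbound} re-run at a generic $\bs\beta$. That is legitimate, since only the polynomial structure and the uniform coefficient bounds \eqref{eq:poly-coefficient-derivative-bound} enter. The part $\Var\{\E[\bs h_N(\bs\beta)\mid\mathcal O_N]\}$ you treat as a node-level U-statistic.

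\textbf{The paper's route.} The paper does not split the variance. It bounds $\operatorname{tr}\operatorname{Cov}(\partial^\alpha_{\bs\beta}\bs\Psi_s,\partial^\alpha_{\bs\beta}\bs\Psi_t)$ directly. This covariance vanishes when $V_s\cap V_t=\varnothing$. Otherwise it is $O(\rho_N^{9-v})$ with $v=\abs{V_s\cap V_t}$, for two reasons:
\begin{itemize}
\item $L_e^2=L_e$ collapses shared dyads;
\item the union of two connected five-node supports sharing $v$ nodes is connected on $10-v$ nodes, hence has at least $9-v$ dyads.
\end{itemize}
With $O(N^{10-v})$ such pairs, this gives $\sum_{v=1}^5N^{-v}\rho_N^{1-v}=o(1)$ in one line. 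It holds for every $\abs\alpha\le1$ and every $\bs\beta$, with no use of the graph-class enumeration behind Table~\ref{tab:regimemap}. Your route recovers exactly the same orders (dyad, two-star, three-tree, four-tree rows, plus $N^{-1}$) with heavier machinery. What it buys is an attribution of the variance to graph classes, which this lemma does not need.

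\textbf{First correction: the step you flag as the main obstacle is not one.} In $\Var\{\E[\bs h_N(\bs\beta)\mid\mathcal O_N]\}$, a dyad shared by the two supports contributes $P_e(\bs\beta_0)^2$, not $P_e(\bs\beta_0)$. So the total exponent is $\abs{E_{s,\nu}}+\abs{E_{t,\mu}}\ge8$. Then \eqref{eq:weighted_fixed_label_bound} gives $O(N^{-1}\rho_N^8)$ with no appeal to $N^5\rho_N^4\to\infty$. The loss from shared dyads lives in your conditional part; the $g_{\mathrm{4tree}}$ row $N^{-5}\rho_N^4$ is where the effective-sample-size condition binds.

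If you instead bound by the product over the union of supports, ``$8$ minus shared edges'' is too weak by itself. Two seven-dyad supports sharing all seven dyads would give exponent $1$. What closes the estimate is the connectivity bound $\abs{E_{s,\nu}\cup E_{t,\mu}}\ge 9-c$, not a tabulation.

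\textbf{Second correction: Lemma~\ref{lem:uniform-convergence-differentiation}(ii) on a larger convex set.} Part (ii) cannot be applied on a convex set strictly larger than $\mathcal B$, because Assumption~\ref{ass:consistency}(i) gives pointwise convergence only on $\mathcal B$. Uniform convergence on $\mathcal B$ comes from part (i), which needs no convexity. Use an $N$-free Lipschitz constant obtained along segments in a convex rectangle containing $\mathcal B$, as the paper does. You only need part (ii) on a ball around $\bs\beta_0$, because you apply part (i) directly to $\rho_N^{-4}\bs h_N$.
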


\begin{proof}
Write
\begin{equation}
\label{eq:moment-population-decomposition}
\rho_N^{-4}\bs h_N(\bs\beta)-\bs h_0(\bs\beta)
=
\rho_N^{-4}\{\bs h_N(\bs\beta)-\E[\bs h_N(\bs\beta)]\}
+\rho_N^{-4}\E[\bs h_N(\bs\beta)]-\bs h_0(\bs\beta).
\end{equation}
We first control
\(\rho_N^{-4}\{\bs h_N(\bs\beta)-\E[\bs h_N(\bs\beta)]\}\)
and its first derivatives with respect to \(\bs\beta\), uniformly on
\(\mathcal B\). We then prove uniform convergence of
\(\rho_N^{-4}\E[\bs h_N]\) to \(\bs h_0\). Applying
Lemma~\ref{lem:uniform-convergence-differentiation}(ii) to
\(\rho_N^{-4}\E[\bs h_N]\) will show that \(\bs h_0\) is continuously
differentiable near \(\bs\beta_0\) and that
\(\rho_N^{-4}\E[\dot{\bs h}_N(\bs\beta_0)]\to\dot{\bs h}_0\).
We then show that
\(\rho_N^{-4}\{\dot{\bs h}_N(\bs\beta)
-\E[\dot{\bs h}_N(\bs\beta)]\}\xrightarrow{P}0\) and
\(\rho_N^{-4}\E[\dot{\bs h}_N(\bs\beta)]-\dot{\bs h}_0\to0\),
both uniformly over \(\bs\beta\in\mathcal B\) with
\(\norm{\bs\beta-\bs\beta_0}\le\delta_N\).

Choose a compact rectangle \(\mathcal R\) whose interior contains
\(\mathcal B\). The polynomial expansion in Lemma~\ref{lem:poly},
applied on \(\mathcal R\), gives, for \(\abs\alpha\le2\),
\[
\partial_{\bs\beta}^{\alpha}\bs h_N(\bs\beta)=\frac{1}{\abs{\mathcal S_N}}
\sum_{s\in\mathcal S_N} \partial_{\bs\beta}^{\alpha}\bs\Psi_s(\bs\beta)
=\frac{1}{\abs{\mathcal S_N}}
\sum_{s\in\mathcal S_N}\sum_{\nu=1}^{\nu_{\max}}
\partial_{\bs\beta}^{\alpha}
\bs c_{s,\nu}(\bs\beta,\mathcal O_N)
\prod_{e\in E_{s,\nu}}L_e.
\]
Differentiation acts only on the coefficients. Their derivatives are
uniformly bounded by \eqref{eq:poly-coefficient-derivative-bound}, and
the supports remain connected on the five nodes of \(s\), with
\(4\le\abs{E_{s,\nu}}\le7\), by Lemma~\ref{lem:det-row-supports}.
To bound the expectation of each product, first use iterated
expectations and conditional independence of the dyads given
\(\mathcal O_N=\sigma(\bs{U}_1,\ldots,\bs{U}_N)\):
\[
\E\!\left[\prod_{e\in E_{s,\nu}}L_e\right]
=\E\!\left[\E\!\left[\prod_{e\in E_{s,\nu}}L_e
\,\middle|\,\mathcal O_N\right]\right]
=\E\!\left[\prod_{e\in E_{s,\nu}}P_e(\bs\beta_0)\right]
\le C\rho_N^{\abs{E_{s,\nu}}}\le C\rho_N^4.
\]
Here \eqref{eq:weighted_fixed_label_bound} applies to
\((V_s,E_{s,\nu})\), with the five distinct labels in \(s\), exponent
\(r_{uv}=1\) on each dyad, and incidence multiplicity at most four.
The bound on
\(\norm{\partial_{\bs\beta}^{\alpha}\bs c_{s,\nu}(\bs\beta,\mathcal O_N)}\)
in \eqref{eq:poly-coefficient-derivative-bound} and the finite sum over
\(\nu\) therefore give
\begin{equation}
\label{eq:moment-derivative-expectation-bound}
\sup_{s\in\mathcal S_N}
\E\!\left[\sup_{\bs\beta\in\mathcal R}
\norm{\partial_{\bs\beta}^{\alpha}\bs\Psi_s(\bs\beta)}\right]
\le C\rho_N^4,
\qquad
\E\!\left[\sup_{\bs\beta\in\mathcal R}
\norm{\partial_{\bs\beta}^{\alpha}\bs h_N(\bs\beta)}\right]
\le C\rho_N^4,
\quad \abs\alpha\le2.
\end{equation}
For each \(N\), smoothness of the coefficients and the integrable
derivative envelopes in \eqref{eq:moment-derivative-expectation-bound}
make \(\E[\bs h_N]\) twice continuously differentiable on
\(\operatorname{int}(\mathcal R)\), with
\(\partial_{\bs\beta}^{\alpha}\E[\bs h_N]
=\E[\partial_{\bs\beta}^{\alpha}\bs h_N]\)
for \(\abs\alpha\le2\). The bound
\(\norm{\E[\partial_{\bs\beta}^{\alpha}\bs h_N]}
\le\E[\norm{\partial_{\bs\beta}^{\alpha}\bs h_N}]\)
and Markov's inequality applied to
\eqref{eq:moment-derivative-expectation-bound} yield,
with \(C\) independent of \(N\),
\begin{equation}
\label{eq:cov_equicont_moment_derivative_bound}
\sup_{\bs\beta\in\mathcal R}
\rho_N^{-4}\norm{\E[\partial_{\bs\beta}^{\alpha}\bs h_N(\bs\beta)]}\le C,
\qquad
\sup_{\bs\beta\in\mathcal R}
\norm{\partial_{\bs\beta}^{\alpha}\bs h_N(\bs\beta)}
=O_p(\rho_N^4),
\quad \abs\alpha\le2.
\end{equation}

We now bound the second moment in
\eqref{eq:moment-centered-covariance-expansion}.
For fixed \(\bs\beta\in\mathcal R\) and \(\abs\alpha\le1\), expanding
the squared norm gives
\begin{equation}
\label{eq:moment-centered-covariance-expansion}
\begin{aligned}
&\E\!\left[\left\|
\rho_N^{-4}\{\partial_{\bs\beta}^{\alpha}\bs h_N(\bs\beta)
-\E[\partial_{\bs\beta}^{\alpha}\bs h_N(\bs\beta)]\}
\right\|^2\right]=\frac{\rho_N^{-8}}{\abs{\mathcal S_N}^2}
\sum_{s,t\in\mathcal S_N}
\operatorname{tr}\operatorname{Cov}\!\left(
\partial_{\bs\beta}^{\alpha}\bs\Psi_s(\bs\beta),
\partial_{\bs\beta}^{\alpha}\bs\Psi_t(\bs\beta)\right).
\end{aligned}
\end{equation}
Each coefficient of \(\partial_{\bs\beta}^{\alpha}\bs\Psi_s\) is
measurable in \(\bf X_s\), and the conditional law of the dyads in
\(E(s)\) depends only on \(\{\bs{U}_i:i\in V_s\}\). Thus the i.i.d. node
sampling and conditional dyadic independence in
Assumption~\ref{ass:dyad_ind} make
\(\partial_{\bs\beta}^{\alpha}\bs\Psi_s(\bs\beta)\) and
\(\partial_{\bs\beta}^{\alpha}\bs\Psi_t(\bs\beta)\) independent when
\(V_s\cap V_t=\varnothing\); their covariance in
\eqref{eq:moment-centered-covariance-expansion} is zero.

For \(v:=\abs{V_s\cap V_t}\ge1\), fix any
\(\nu,\mu\in\{1,\ldots,\nu_{\max}\}\), indexing the monomials in the
polynomial expansions of \(\bs\Psi_s(\bs\beta)\) and
\(\bs\Psi_t(\bs\beta)\), respectively. By
Lemma~\ref{lem:det-row-supports}, the corresponding dyad supports
\(E_{s,\nu}\) and \(E_{t,\mu}\) are connected on \(V_s\) and \(V_t\),
respectively. Since \(V_s\) and \(V_t\) share at least one node, the
union \(E_{s,\nu}\cup E_{t,\mu}\) is connected on \(10-v\) nodes and
therefore contains at least \(9-v\) distinct dyads. Using \(L_e^2=L_e\), iterated
expectations, and conditional dyadic independence gives
\[
\E\!\left[
\prod_{e\in E_{s,\nu}}L_e\prod_{f\in E_{t,\mu}}L_f\right]
=\E\!\left[\prod_{e\in E_{s,\nu}\cup E_{t,\mu}}
P_e(\bs\beta_0)\right]
\le C\rho_N^{\abs{E_{s,\nu}\cup E_{t,\mu}}}
\le C\rho_N^{9-v},
\]
where we apply \eqref{eq:weighted_fixed_label_bound} to
\((V_s\cup V_t,E_{s,\nu}\cup E_{t,\mu})\). Each distinct dyad contributes
one probability factor, so \(r_{uv}=1\). Each node is incident to at most
four dyads from each support, so its incidence multiplicity is at most
\(4+4=8\). By Lemma~\ref{lem:poly},
\[
\begin{aligned}
\partial_{\bs\beta}^{\alpha}\bs\Psi_s(\bs\beta)
&=\sum_{\nu=1}^{\nu_{\max}}
\partial_{\bs\beta}^{\alpha}\bs c_{s,\nu}(\bs\beta,\mathcal O_N)
\prod_{e\in E_{s,\nu}}L_e, \ \ 
\partial_{\bs\beta}^{\alpha}\bs\Psi_t(\bs\beta)
&=\sum_{\mu=1}^{\nu_{\max}}
\partial_{\bs\beta}^{\alpha}\bs c_{t,\mu}(\bs\beta,\mathcal O_N)
\prod_{f\in E_{t,\mu}}L_f.
\end{aligned}
\]
The triangle inequality, \eqref{eq:poly-coefficient-derivative-bound},
and \(L_e^2=L_e\) therefore give
\[
\begin{aligned}
&\E\!\left[\norm{\partial_{\bs\beta}^{\alpha}\bs\Psi_s(\bs\beta)}
\norm{\partial_{\bs\beta}^{\alpha}\bs\Psi_t(\bs\beta)}\right]\le C\sum_{\nu,\mu=1}^{\nu_{\max}}
\E\!\left[\prod_{e\in E_{s,\nu}\cup E_{t,\mu}}L_e\right]
\le C\rho_N^{9-v},
\end{aligned}
\]
where the final inequality uses \eqref{eq:weighted_fixed_label_bound},
\(\abs{E_{s,\nu}\cup E_{t,\mu}}\ge9-v\), and the fact that
\(\nu_{\max}\) does not depend on \(N\).
Moreover, \eqref{eq:moment-derivative-expectation-bound} gives
\(\E[\norm{\partial_{\bs\beta}^{\alpha}\bs\Psi_s(\bs\beta)}]
\E[\norm{\partial_{\bs\beta}^{\alpha}\bs\Psi_t(\bs\beta)}]
\le C\rho_N^8\). Consequently,
\[
\left|\operatorname{tr}\operatorname{Cov}\!\left(
\partial_{\bs\beta}^{\alpha}\bs\Psi_s(\bs\beta),
\partial_{\bs\beta}^{\alpha}\bs\Psi_t(\bs\beta)\right)\right|
\le C(\rho_N^{9-v}+\rho_N^8)
\le C\rho_N^{9-v}.
\]
For each ordered pentad \(s\), choose the \(v\) shared nodes from
\(V_s\), the other \(5-v\) nodes from its complement, and then order
the five nodes of \(t\). Hence
\[
\#\{(s,t)\in\mathcal S_N^2:\abs{V_s\cap V_t}=v\}
=\abs{\mathcal S_N}\binom5v\binom{N-5}{5-v}5!
\le C N^{10-v}.
\]
Substituting the covariance bound and the pair count into
\eqref{eq:moment-centered-covariance-expansion}, with
\(\abs{\mathcal S_N}\asymp N^5\), proves
\begin{equation}
\label{eq:moment_centered_second_moment}
\begin{aligned}
&\sup_{\bs\beta\in\mathcal R}
\E\!\left[\left\|
\rho_N^{-4}\{\partial_{\bs\beta}^{\alpha}\bs h_N(\bs\beta)
-\E[\partial_{\bs\beta}^{\alpha}\bs h_N(\bs\beta)]\}
\right\|^2\right]\\
&\qquad\le C\rho_N^{-8}N^{-10}
\sum_{v=1}^5N^{10-v}\rho_N^{9-v}
=C\sum_{v=1}^5N^{-v}\rho_N^{1-v},
\qquad \abs\alpha\le1.
\end{aligned}
\end{equation}
The \(v=1\) summand is \(N^{-1}\). For \(2\le v\le5\),
\(N^{-v}\rho_N^{1-v}
=N^{(v-5)/4}(N^5\rho_N^4)^{-(v-1)/4}=o(1)\)
under  \(N^5\rho_N^4\to \infty\). Chebyshev's inequality thus gives
\(\rho_N^{-4}\{\partial_{\bs\beta}^{\alpha}\bs h_N(\bs\beta)
-\E[\partial_{\bs\beta}^{\alpha}\bs h_N(\bs\beta)]\}
\xrightarrow{P}0\) for every fixed \(\bs\beta\in\mathcal R\) and
\(\abs\alpha\le1\).

By the triangle inequality, the cases \(\abs\alpha=1\) and
\(\abs\alpha=2\) of \eqref{eq:cov_equicont_moment_derivative_bound}
bound all first and second partial derivatives of
\(\rho_N^{-4}\{\bs h_N-\E[\bs h_N]\}\) by \(O_p(1)\), uniformly
on \(\mathcal R\). Since \(\mathcal R\) is convex and \(d,q\) are
fixed, applying the mean-value theorem componentwise along line
segments in \(\mathcal R\) gives Lipschitz constants of order
\(O_p(1)\) on \(\mathcal B\) for
\(\rho_N^{-4}\{\bs h_N-\E[\bs h_N]\}\) and for each column of
\(\rho_N^{-4}\{\dot{\bs h}_N-\E[\dot{\bs h}_N]\}\).
Using \eqref{eq:moment_centered_second_moment} and Chebyshev's
inequality for pointwise convergence, we apply
Lemma~\ref{lem:uniform-convergence-differentiation}(i) to
\(\rho_N^{-4}\{\bs h_N-\E[\bs h_N]\}\) and to each column of
\(\rho_N^{-4}\{\dot{\bs h}_N-\E[\dot{\bs h}_N]\}\).
Since \(d\) is fixed, summing the column norms gives
\begin{equation}
\label{eq:moment-centered-uniform-convergence}
\begin{aligned}
\sup_{\bs\beta\in\mathcal B}
\left\|\rho_N^{-4}
\{\bs h_N(\bs\beta)-\E[\bs h_N(\bs\beta)]\}\right\|
&\xrightarrow{P}0, \ \ 
\sup_{\bs\beta\in\mathcal B}
\left\|\rho_N^{-4}
\{\dot{\bs h}_N(\bs\beta)-\E[\dot{\bs h}_N(\bs\beta)]\}\right\|
&\xrightarrow{P}0.
\end{aligned}
\end{equation}

We next control
\(\rho_N^{-4}\E[\bs h_N(\bs\beta)]-\bs h_0(\bs\beta)\)
in \eqref{eq:moment-population-decomposition}.
Assumption~\ref{ass:consistency}(i) gives pointwise convergence of
\(\rho_N^{-4}\E[\bs h_N]\) to the continuous map \(\bs h_0\) on
\(\mathcal B\). Applying the mean-value theorem componentwise to
\(\rho_N^{-4}\E[\bs h_N]\) along line segments in \(\mathcal R\)
gives a Lipschitz constant independent of \(N\), by the cases
\(\abs\alpha=1\) of \eqref{eq:cov_equicont_moment_derivative_bound}.
Lemma~\ref{lem:uniform-convergence-differentiation}(i),
applied to \(\rho_N^{-4}\E[\bs h_N]\) on \(\mathcal B\), therefore gives
\begin{equation}
\label{eq:population-uniform-convergence}
\sup_{\bs\beta\in\mathcal B}
\left\|\rho_N^{-4}\E[\bs h_N(\bs\beta)]-\bs h_0(\bs\beta)\right\|
\to0.
\end{equation}
Substituting \eqref{eq:moment-centered-uniform-convergence} and
\eqref{eq:population-uniform-convergence} into
\eqref{eq:moment-population-decomposition} gives
\(\sup_{\bs\beta\in\mathcal B}
\norm{\rho_N^{-4}\bs h_N(\bs\beta)-\bs h_0(\bs\beta)}\xrightarrow{P}0\).

We now prove that \(\bs h_0\) is continuously differentiable near
\(\bs\beta_0\) and that
\(\rho_N^{-4}\E[\dot{\bs h}_N(\bs\beta_0)]\to\dot{\bs h}_0\).
Choose a closed ball
\(\mathcal K\) centered at \(\bs\beta_0\), with positive radius and
contained in \(\operatorname{int}(\mathcal B)\). The maps
\(\rho_N^{-4}\E[\bs h_N]\) are twice continuously differentiable on a neighborhood of
\(\mathcal K\) by \eqref{eq:moment-derivative-expectation-bound}.
Equation~\eqref{eq:cov_equicont_moment_derivative_bound} bounds the
first and second partial derivatives of \(\rho_N^{-4}\E[\bs h_N]\)
uniformly on \(\mathcal K\), and Assumption~\ref{ass:consistency}(i)
gives \(\rho_N^{-4}\E[\bs h_N(\bs\beta)]\to\bs h_0(\bs\beta)\)
for every \(\bs\beta\in\mathcal K\). Applying
Lemma~\ref{lem:uniform-convergence-differentiation}(ii) to
\(\rho_N^{-4}\E[\bs h_N]\) on \(\mathcal K\) shows that
\(\bs h_0\) is continuously differentiable on the interior
\(\operatorname{int}(\mathcal K)\) of \(\mathcal K\), and in particular that
\(\rho_N^{-4}\E[\dot{\bs h}_N(\bs\beta_0)]\to\dot{\bs h}_0\).

Finally, for \(\bs\beta\in\mathcal B\) with
\(\norm{\bs\beta-\bs\beta_0}\le\delta_N\), write
\[
\begin{aligned}
\rho_N^{-4}\dot{\bs h}_N(\bs\beta)-\dot{\bs h}_0
&=\rho_N^{-4}\{\dot{\bs h}_N(\bs\beta)
-\E[\dot{\bs h}_N(\bs\beta)]\}\\
&\quad+\rho_N^{-4}\{\E[\dot{\bs h}_N(\bs\beta)]
-\E[\dot{\bs h}_N(\bs\beta_0)]\}+\rho_N^{-4}\E[\dot{\bs h}_N(\bs\beta_0)]-\dot{\bs h}_0.
\end{aligned}
\]
The line segment from \(\bs\beta_0\) to \(\bs\beta\) lies in
\(\mathcal R\). Applying the mean-value theorem componentwise to
\(\rho_N^{-4}\E[\dot{\bs h}_N]\) along this segment, with the bounds
for \(\abs\alpha=2\) in \eqref{eq:cov_equicont_moment_derivative_bound},
gives \(\norm{\rho_N^{-4}\{\E[\dot{\bs h}_N(\bs\beta)]
-\E[\dot{\bs h}_N(\bs\beta_0)]\}}
\le C\norm{\bs\beta-\bs\beta_0}\). Hence
\[
\begin{aligned}
&\sup_{\bs\beta\in\mathcal B:\norm{\bs\beta-\bs\beta_0}\le\delta_N}
\norm{\rho_N^{-4}\dot{\bs h}_N(\bs\beta)-\dot{\bs h}_0}\le\sup_{\bs\beta\in\mathcal B}
\left\|\rho_N^{-4}\{\dot{\bs h}_N(\bs\beta)
-\E[\dot{\bs h}_N(\bs\beta)]\}\right\|+C\delta_N\\
&\qquad\qquad\qquad\qquad\qquad\qquad\qquad\qquad+\norm{\rho_N^{-4}\E[\dot{\bs h}_N(\bs\beta_0)]-\dot{\bs h}_0}.
\end{aligned}
\]
Equation~\eqref{eq:moment-centered-uniform-convergence} gives
\(\rho_N^{-4}\{\dot{\bs h}_N(\bs\beta)
-\E[\dot{\bs h}_N(\bs\beta)]\}\xrightarrow{P}0\),
uniformly over \(\bs\beta\in\mathcal B\).
Also, \(C\delta_N\to0\) and
\(\norm{\rho_N^{-4}\E[\dot{\bs h}_N(\bs\beta_0)]-\dot{\bs h}_0}\to0\).
\end{proof}

Now we can prove Theorem~\ref{thm:gmm}.

\begin{proof}[Proof of Theorem~\ref{thm:gmm}]
The condition \(N^5\rho_N^4\to\infty\) in Theorem~\ref{thm:gmm}, together
with Assumptions~\ref{ass:dyad_ind}, \ref{ass:bounded_design}, and
\ref{ass:sparse_envelope}, allows us to apply
Lemma~\ref{lem:uniform-convergence-population}. Together with
Assumption~\ref{ass:consistency}(i), it gives
\(\sup_{\bs\beta\in\mathcal B}
\norm{\rho_N^{-4}\bs h_N(\bs\beta)-\bs h_0(\bs\beta)}
\xrightarrow{P}0\)
and, for every deterministic sequence \(\delta_N\downarrow0\),
\(\sup_{\bs\beta\in\mathcal B:\norm{\bs\beta-\bs\beta_0}\le\delta_N}
\norm{\rho_N^{-4}\dot{\bs h}_N(\bs\beta)-\dot{\bs h}_0}
\xrightarrow{P}0\).
Combining \(\sup_{\bs\beta\in\mathcal B}
\norm{\rho_N^{-4}\bs h_N(\bs\beta)-\bs h_0(\bs\beta)}\xrightarrow{P}0\)
with Assumption~\ref{ass:consistency}(iii) gives
\[
\sup_{\bs\beta\in\mathcal B}
\left|\rho_N^{-8}\bs h_N(\bs\beta)^\top\widehat{\bf W}_N\bs h_N(\bs\beta)
-\bs h_0(\bs\beta)^\top\bf W\bs h_0(\bs\beta)\right|
\xrightarrow{P}0.
\]
Multiplication by \(\rho_N^{-8}>0\) does not change the minimizers in
\eqref{eq:beta_gmm}. By Assumption~\ref{ass:consistency}(i) and positive
definiteness of \(\bf W\), the continuous limiting criterion
\(\bs h_0(\bs\beta)^\top\bf W\bs h_0(\bs\beta)\) has its unique minimum
at \(\bs\beta_0\). Compactness of \(\mathcal B\) and the argmin theorem
of \citet[Theorem~5.7, p.~45]{vandervaart1998asymptotic} therefore give
\(\widehat{\bs\beta}_{\mathrm{GMM}}\xrightarrow{P}\bs\beta_0\). Since
\(\bs{\beta}_{0}\in\operatorname{int}(\mathcal B)\), consistency implies that
\(\widehat{\bs{\beta}}_{\mathrm{GMM}}\) is an interior point of \(\mathcal B\) with probability
approaching one. On that event the first-order condition is
\begin{equation}
\label{eq:gmm_first_order_condition}
\dot{\bs h}_N(\widehat{\bs{\beta}}_{\mathrm{GMM}})^\top
\widehat{\bf W}_N\bs h_N(\widehat{\bs{\beta}}_{\mathrm{GMM}})=0.
\end{equation}
For the mean-value expansion, define
\(\bar{\mathbf{D}}_N:=\int_0^1
\dot{\bs h}_N\!\left(\bs\beta_0
+t(\widehat{\bs{\beta}}_{\mathrm{GMM}}-\bs\beta_0)\right)dt\).
Then
\begin{equation}
\label{eq:gmm_integral_expansion}
\bs h_N(\widehat{\bs{\beta}}_{\mathrm{GMM}})
=
\bs h_N(\bs{\beta}_{0})+\bar{\mathbf{D}}_N(\widehat{\bs{\beta}}_{\mathrm{GMM}}-\bs{\beta}_{0}).
\end{equation}
Lemma~\ref{lem:uniform-convergence-population} gives
\(\sup_{\bs\beta\in\mathcal B:\norm{\bs\beta-\bs\beta_0}\le\delta_N}\allowbreak
\left\|\rho_N^{-4}\dot{\bs h}_N(\bs\beta)-\dot{\bs h}_0\right\|
\xrightarrow{P}0\) for every deterministic sequence \(\delta_N\downarrow0\).
Since \(\widehat{\bs\beta}_{\mathrm{GMM}}\xrightarrow{P}\bs\beta_0\),
we obtain
\begin{equation}
\label{eq:gmm_jacobian_limits}
\rho_N^{-4}\dot{\bs h}_N(\widehat{\bs{\beta}}_{\mathrm{GMM}})
\xrightarrow{P}\dot{\bs h}_0,
\qquad
\rho_N^{-4}\bar{\mathbf{D}}_N\xrightarrow{P}\dot{\bs h}_0.
\end{equation}
Equation~\eqref{eq:gmm_jacobian_limits} and
Assumption~\ref{ass:consistency}(ii)--(iii) give
\(\rho_N^{-4}\dot{\bs h}_N(\widehat{\bs\beta}_{\mathrm{GMM}})^\top
\widehat{\bf W}_N\rho_N^{-4}\bar{\mathbf{D}}_N
\xrightarrow{P}\dot{\bs h}_0^\top\bf W\dot{\bs h}_0\).
Since \(\dot{\bs h}_0^\top\bf W\dot{\bs h}_0\) is positive definite,
\(\rho_N^{-4}\dot{\bs h}_N(\widehat{\bs\beta}_{\mathrm{GMM}})^\top
\widehat{\bf W}_N\rho_N^{-4}\bar{\mathbf{D}}_N\) is nonsingular with probability approaching one.
Combining \eqref{eq:gmm_integral_expansion}
with \eqref{eq:gmm_first_order_condition} gives, with probability approaching one,
\begin{equation}
\label{eq:gmm_linearization}
a_N\rho_N^4(\widehat{\bs{\beta}}_{\mathrm{GMM}}-\bs{\beta}_{0})
 =
-\Big[\rho_N^{-4}\dot{\bs h}_N(\widehat{\bs{\beta}}_{\mathrm{GMM}})^\top
\widehat{\bf W}_N\rho_N^{-4}\bar{\mathbf{D}}_N\Big]^{-1}
\Big[\rho_N^{-4}\dot{\bs h}_N(\widehat{\bs{\beta}}_{\mathrm{GMM}})^\top\widehat{\bf W}_N\Big]
\big[a_N \bs h_N(\bs{\beta}_{0})\big].
\end{equation}
With \(a_N\) and \(\bf\Sigma\) corresponding to the regime in
Theorem~\ref{thm:gmm}, Theorem~\ref{thm:regimeCLT} gives
\(a_N\bs h_N(\bs\beta_0)\allowbreak\xrightarrow{D}N(0,\bf\Sigma)\), and in
particular \(a_N \bs h_N(\bs{\beta}_{0})=O_p(1)\).

Equation~\eqref{eq:gmm_jacobian_limits} and
\(\widehat{\bf W}_N\xrightarrow{P}\bf W\), together with
Assumption~\ref{ass:consistency}(ii)--(iii), imply that the inverse
matrix in \eqref{eq:gmm_linearization} converges in probability to
\((\dot{\bs h}_0^\top\bf W\dot{\bs h}_0)^{-1}\) and is \(O_p(1)\).
Substituting into \eqref{eq:gmm_linearization}, using
\eqref{eq:gmm_jacobian_limits} and
\(a_N\bs h_N(\bs\beta_0)=O_p(1)\), gives the rate
\begin{equation}
\label{eq:gmm_rate_for_cov_equicont}
a_N\rho_N^4
\norm{\widehat{\bs{\beta}}_{\mathrm{GMM}}-\bs\beta_0}
=O_p(1).
\end{equation}
Finally, applying Slutsky's theorem to \eqref{eq:gmm_linearization},
using Theorem~\ref{thm:regimeCLT}, \eqref{eq:gmm_jacobian_limits}, and
\(\widehat{\bf W}_N\xrightarrow{P}\bf W\), gives
\[
a_N\rho_N^4(\widehat{\bs{\beta}}_{\mathrm{GMM}}-\bs\beta_0)
\xrightarrow{D}
-\big(\dot{\bs h}_0^\top\bf W\dot{\bs h}_0\big)^{-1}
\dot{\bs h}_0^\top\bf W \bs{Z},
\qquad
\bs{Z}\sim N(0,\bf\Sigma),
\]
with covariance matrix \(\mathbf{V}_0\).
\end{proof}

\begin{proof}[Proof of Corollary~\ref{cor:generic_gmm_inference}]
Equation~\eqref{eq:gmm_jacobian_limits} gives
\(
\rho_N^{-4}\widehat{\dot{\bs h}}_N
\xrightarrow{P}\dot{\bs h}_0\).
Under Assumption~\ref{ass:consistency}(ii)--(iii),
\(\rho_N^{-4}\widehat{\dot{\bs h}}_N\xrightarrow{P}\dot{\bs h}_0\)
and \(a_N^2\widehat{\bf\Omega}_N\xrightarrow{P}\bf\Sigma\)
imply \(a_N^2\rho_N^8\widehat{\mathbf{V}}_N\xrightarrow{P}\mathbf{V}_0\) by the definition of
\(\widehat{\mathbf{V}}_N\) in Corollary~\ref{cor:generic_gmm_inference}
and the continuous mapping theorem.
Theorem~\ref{thm:gmm} and Slutsky's theorem give
\(\bs c^\top(\widehat{\bs\beta}_{\mathrm{GMM}}-\bs\beta_0)
/\sqrt{\bs c^\top\widehat{\mathbf{V}}_N\bs c}
\xrightarrow{D}N(0,1)\).
\end{proof}

\subsection{\texorpdfstring
  {Asymptotic equivalence of \(\widehat{\bs\beta}^{\dagger}\)
   and \(\widehat{\bs\beta}_{\mathrm{GMM}}\)}
  {Asymptotic equivalence of the pair-normalized and GMM estimators}}

In this section, we show that \(\widehat{\bs\beta}^{\dagger}\) and \(\widehat{\bs\beta}_{\mathrm{GMM}}\) are asymptotically equivalent.

\begin{lem}
\label{lem:pair_normalization}
Suppose the conditions of Theorem~\ref{thm:gmm} hold. If
\(\widehat{\bs\beta}^{\dagger}\in\arg\min_{\bs\beta\in\mathcal B}
Q_N^\dagger(\bs\beta)\), then, under each of the three regimes in
Theorem~\ref{thm:gmm},
\(\widehat{\bs\beta}^{\dagger}\xrightarrow{P}\bs\beta_0\) and
\(a_N\rho_N^4
(\widehat{\bs\beta}^{\dagger}-\widehat{\bs\beta}_{\mathrm{GMM}})
\xrightarrow{P}0\).
\end{lem}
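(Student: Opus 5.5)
The plan is to write \(Q_N^\dagger(\bs\beta)=\rho_N^{-8}\bs h_N(\bs\beta)^\top\widehat{\bf W}_N\bs h_N(\bs\beta)/D_N(\bs\beta)\), where
\[
D_N(\bs\beta):=\rho_N^{-8}\Big\{\tbinom{N}{2}^{-1}\sum_{i<j}\bs m_{ij,N}(\bs\beta)^\top\widehat{\bf W}_N\bs m_{ij,N}(\bs\beta)+\widehat c_N\Big\},
\]
and to show that \(D_N\) converges uniformly on \(\mathcal B\) to a deterministic continuous limit \(D_0\) that is bounded away from zero on \(\mathcal B\). Given that, consistency of \(\widehat{\bs\beta}^\dagger\) follows exactly as in the proof of Theorem~\ref{thm:gmm}. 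By Lemma~\ref{lem:uniform-convergence-population} and Assumption~\ref{ass:consistency}(iii), \(Q_N^\dagger\) converges uniformly to \(\bs h_0^\top\bf W\bs h_0/D_0\). This limit is continuous and has a unique zero at \(\bs\beta_0\), so \citet[Theorem~5.7]{vandervaart1998asymptotic} gives \(\widehat{\bs\beta}^\dagger\xrightarrow{P}\bs\beta_0\).

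For the denominator, I would decompose \(\bs m_{ij,N}(\bs\beta)=\E[\bs m_{ij,N}(\bs\beta)\mid\bs U_i,\bs U_j]+\text{remainder}\). The pair average of squared conditional means is a U-statistic of order two in the \(\bs U\)'s. Scaled by \(\rho_N^{-8}\), it converges to \(D_0(\bs\beta):=\lim\rho_N^{-8}\E\|\E[\bs m_{12}\mid\bs U_1,\bs U_2]\|^2_{\bf W}\). Convergence holds pointwise, using variance bounds of the type in \eqref{eq:moment_centered_second_moment}, and then uniformly via Lemma~\ref{lem:uniform-convergence-differentiation}(i) with Lipschitz constants from \eqref{eq:poly-coefficient-derivative-bound}. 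This will likely need an added convergence condition that I would state as part of the lemma's maintained assumptions.

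The remainder terms and \(\widehat c_N\) need to be handled in each regime. The ridge \(\widehat c_N\) is calibrated so that \(\widehat c_N/\rho_N^{8}\) is of the same order as the fluctuation part of the pair moments. The per-pair fluctuation variance is of order \(\rho_N^7+N^{-1}\rho_N^6+N^{-2}\rho_N^5+N^{-3}\rho_N^4\), by the counting in Proposition~\ref{prop:classbound} restricted to a fixed pivot pair. Since \(\widehat\rho_N/\rho_N\xrightarrow{P}1\) when \(N^2\rho_N\to\infty\), the ratio \(\rho_N^{-8}\widehat c_N\) is \(O_p\) of a deterministic sequence \(c_N^*\). The denominator is therefore \(D_0(\bs\beta)+\rho_N^{-8}\times\)(fluctuation, of order \(c_N^*\))\(+o_p(1)\).

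The main obstacle is that \(c_N^*\rho_N^{-8}\) is of order \(\rho_N^{-1}\), which may diverge. I would therefore not require the denominator to converge. It is enough that \(D_N(\bs\beta)\ge\underline{d}>0\) uniformly with probability approaching one, and that \(D_N\) is smooth near \(\bs\beta_0\) with \(D_N(\widehat{\bs\beta})/D_N(\bs\beta_0)\to1\) along consistent sequences. The lower bound holds because the ridge and fluctuation terms are nonnegative; the lower bound on \(D_0\) away from the origin is supplied by \(\mathbf 0\notin\mathcal B\) together with Assumption~\ref{ass:consistency}(i). I would then rescale the whole criterion by \(D_N(\bs\beta_0)\). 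The minimizer of \(Q_N^\dagger\) is also the minimizer of \(\rho_N^{-8}\bs h_N^\top\widehat{\bf W}_N\bs h_N\cdot\{D_N(\bs\beta_0)/D_N(\bs\beta)\}\), and the weight factor \(\omega_N(\bs\beta):=D_N(\bs\beta_0)/D_N(\bs\beta)\) is bounded and converges to one uniformly on shrinking neighborhoods. Its derivative is \(O_p(1)\) there by the derivative bounds of Lemma~\ref{lem:poly}.

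For the rate, the first-order condition of \(Q_N^\dagger\) at an interior minimizer reads
\[
\dot{\bs h}_N^\top\widehat{\bf W}_N\bs h_N\,\omega_N+\tfrac12\,\bs h_N^\top\widehat{\bf W}_N\bs h_N\,\nabla\omega_N=0.
\]
The second term is \(O_p(\|\bs h_N(\widehat{\bs\beta}^\dagger)\|^2)\). At a root-\(a_N\rho_N^4\) consistent point, this is \(O_p(\rho_N^{8}a_N^{-2})\), which is \(o_p(\rho_N^{8}a_N^{-1})\). Linearizing as in \eqref{eq:gmm_linearization} then gives the same leading expansion \(-(\dot{\bs h}_0^\top\bf W\dot{\bs h}_0)^{-1}\dot{\bs h}_0^\top\bf W a_N\bs h_N(\bs\beta_0)+o_p(1)\) for both estimators, so their difference is \(o_p(1)\) after scaling by \(a_N\rho_N^4\). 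A preliminary \(O_p\) rate for \(\widehat{\bs\beta}^\dagger\) comes from comparing \(Q_N^\dagger(\widehat{\bs\beta}^\dagger)\le Q_N^\dagger(\bs\beta_0)\) and using the local quadratic lower bound supplied by full rank of \(\dot{\bs h}_0\).
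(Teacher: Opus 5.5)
Your revised argument is the paper's argument, but your consistency step as written does not work.

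\textbf{What matches the paper.} Write \(D_N\) for the paper's unscaled pair average; your \(D_N\) is \(\rho_N^{-8}\{D_N+\widehat c_N\}\). The denominator \(D_N(\bs\beta)+\widehat c_N\) is held at exact order \(c_N=\rho_N^7+N^{-1}\rho_N^6+N^{-2}\rho_N^5+N^{-3}\rho_N^4\), uniformly on \(\mathcal B\). The lower bound comes from the ridge, since \(\widehat c_N/c_N\xrightarrow{P}1\). The upper bound comes from counting pentad pairs that share the pivots and \(r\) peripheral nodes: their union is connected on \(8-r\) nodes with at least \(7-r\) dyads, giving a contribution \(N^{-r}\rho_N^{7-r}\). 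The same count gives \(\nabla D_N=O_p(c_N)\), so the extra first-order-condition term is negligible. Your \(\omega_N\) is a repackaging of this.

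\textbf{The gap: consistency.} Your \(D_0\) vanishes at \(\bs\beta_0\). By Theorem~\ref{thm:moment_restriction}, \(\E[\bs m_{ij,N}(\bs\beta_0)\mid\mathcal O_N]=0\), hence \(\E[\bs m_{ij,N}(\bs\beta_0)\mid\bs U_i,\bs U_j]=0\). So \(D_0\) is not bounded away from zero on \(\mathcal B\); Assumption~\ref{ass:consistency}(i) and \(\mathbf 0\notin\mathcal B\) give no such bound. The proposed limit \(\bs h_0^\top\mathbf W\bs h_0/D_0\) is \(0/0\) at \(\bs\beta_0\), so the argmin theorem cannot be applied to it. In addition, \(\rho_N^{-8}\{D_N+\widehat c_N\}\ge\rho_N^{-8}\widehat c_N\ge\{1+o_p(1)\}\rho_N^{-1}\) diverges whenever \(\rho_N\to0\), so your normalized denominator has no finite limit in any case. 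What keeps the denominator away from zero is the ridge, not \(D_0\).

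\textbf{How to fix it.} Consistency has to come from a two-sided bound. Your \(\omega_N\) must be bounded above \emph{and} bounded away from zero uniformly over all of \(\mathcal B\), not only near \(\bs\beta_0\). That requires the uniform upper bound \(\sup_{\bs\beta\in\mathcal B}D_N(\bs\beta)=O_p(c_N)\). Combined with Lemma~\ref{lem:uniform-convergence-population}, this gives the paper's ratio bound \(Q_N^\dagger(\bs\beta_0)/\inf_{\norm{\bs\beta-\bs\beta_0}\ge\varepsilon}Q_N^\dagger(\bs\beta)=O_p\{(a_N^2\rho_N^8)^{-1}\}=o_p(1)\).

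\textbf{Two smaller corrections.} First, the extra first-order-condition term is \(O_p(\norm{\bs h_N(\widehat{\bs\beta}^{\dagger})}^2)=O_p(a_N^{-2})\), not \(O_p(\rho_N^8a_N^{-2})\). It is negligible relative to the leading term, which is of order \(\rho_N^4a_N^{-1}\), because \(a_N\rho_N^4\to\infty\) in all three regimes. Second, \(\nabla\omega_N=O_p(1)\) does not follow from Lemma~\ref{lem:poly} alone. The crude bound \(\nabla D_N=O_p(1)\) would only give \(\nabla\omega_N=O_p(c_N^{-1})\); you need the pentad-pair count applied to \(\nabla D_N\) to get \(O_p(c_N)\).

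Your preliminary-rate step is harmless but unnecessary. The inequality \(Q_N^\dagger(\widehat{\bs\beta}^{\dagger})\le Q_N^\dagger(\bs\beta_0)\) together with the two-sided bound gives \(\norm{\bs h_N(\widehat{\bs\beta}^{\dagger})}=O_p(a_N^{-1})\) directly. That is all the first-order-condition argument needs before linearizing as in \eqref{eq:gmm_linearization}.
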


\begin{proof}
Assumption~\ref{ass:consistency}(iii) gives
\(\norm{\widehat{\bf W}_N}=O_p(1)\) and
\(\lambda_{\min}(\widehat{\bf W}_N)^{-1}=O_p(1)\).
Write
\[
D_N(\bs\beta):=\binom{N}{2}^{-1}\sum_{i<j}
\bs m_{ij,N}(\bs\beta)^\top\widehat{\bf W}_N
\bs m_{ij,N}(\bs\beta),
\qquad
c_N:=\rho_N^7+N^{-1}\rho_N^6+N^{-2}\rho_N^5+N^{-3}\rho_N^4.
\]
For multi-indices \(\alpha\) and
\(\gamma=(\gamma_1,\ldots,\gamma_d)\in\mathbb N_0^d\),
\(\gamma\le\alpha\) means \(0\le\gamma_j\le\alpha_j\) for every
\(j=1,\ldots,d\), and
\(\binom{\alpha}{\gamma}:=\prod_{j=1}^d\binom{\alpha_j}{\gamma_j}\).
For \(\abs{\alpha}\le2\), the product rule and
\eqref{eq:pivotal_pair_moment} give
\[
\begin{aligned}
\partial_{\bs\beta}^{\alpha}D_N(\bs\beta)
&=\frac{1}{\binom{N}{2}\{(N-2)(N-3)(N-4)\}^2}
\sum_{i<j}
\sum_{k,l,m\in[N]\setminus\{i,j\}:\,k,l,m\ \mathrm{distinct}}\\
&\quad{}\times
\sum_{k',l',m'\in[N]\setminus\{i,j\}:\,k',l',m'\ \mathrm{distinct}}
\sum_{\gamma\le\alpha}\binom{\alpha}{\gamma}\\[-1mm]
&\qquad\times
\{\partial_{\bs\beta}^{\gamma}
\bs\Psi_{(i,j,k,l,m)}(\bs\beta)\}^{\top}
\widehat{\bf W}_N
\{\partial_{\bs\beta}^{\alpha-\gamma}
\bs\Psi_{(i,j,k',l',m')}(\bs\beta)\}.
\end{aligned}
\]
The multi-indices \(\gamma\) and \(\alpha-\gamma\) specify the
coordinate derivatives applied to the first and second kernel factors,
respectively; the sum includes both \(\gamma=0\) and \(\gamma=\alpha\).
We will apply \eqref{eq:weighted_unrooted_sum} to control
\(\partial_{\bs\beta}^{\alpha}D_N(\bs\beta)\).
To do so, we identify the graphs generated by pairs of monomials and
determine their node counts, dyad counts, and maximum degrees.

Consider the two pentads \((i,j,k,l,m)\) and \((i,j,k',l',m')\),
which share the pivot nodes \(i,j\), and suppose their peripheral
triples share exactly \(r\in\{0,1,2,3\}\) nodes.
Take one monomial from the polynomial expansion of
\(\partial_{\bs\beta}^{\gamma}\bs\Psi_{(i,j,k,l,m)}(\bs\beta)\)
and one from that of
\(\partial_{\bs\beta}^{\alpha-\gamma}\bs\Psi_{(i,j,k',l',m')}(\bs\beta)\).
The dyad support of each monomial is the set of dyads \((u,v)\) whose
link indicators \(L_{uv}\) appear in that monomial.
Differentiation with respect to \(\bs\beta\) acts on the coefficients
and leaves each product of \(L_{uv}\) unchanged.
Equation~\eqref{eq:poly-coefficient-derivative-bound} gives uniform
bounds on
\(\norm{\partial_{\bs\beta}^{\gamma}\bs c_{s,\nu}(\bs\beta,\mathcal O_N)}\)
for \(\abs\gamma\le2\). By Lemma~\ref{lem:det-row-supports},
each monomial's dyad support forms a connected graph containing all
five nodes of its pentad. Since the two monomial supports share
\(i,j\) and exactly \(r\) peripheral nodes, the graph formed by their
union is connected on \(5+5-(2+r)=8-r\) nodes and contains at least
\((8-r)-1=7-r\) distinct dyads.
Using \(L_{uv}^2=L_{uv}\) and Assumption~\ref{ass:dyad_ind}, the
conditional expectation of the product of the two monomials given
\((\bf X,\bs\Gamma)\) is a product of link probabilities over the
distinct dyads in their union. Each link-probability factor has exponent
one, and the graph has maximum degree at most \(7-r\le7\), satisfying
the bound of \(16\) required for \eqref{eq:weighted_unrooted_sum}.

For each \(r\), only finitely many ways of matching the shared peripheral
nodes and pairs of monomial supports arise up to relabeling,
independently of \(N\). Fix one such matching and support pair, and list
the \(8-r\) distinct nodes of the two pentads in a fixed order as
\(i_1,\ldots,i_{8-r}\). Dropping the restriction \(i<j\) bounds the
corresponding nonnegative sum by the sum over all distinct node labels
in \eqref{eq:weighted_unrooted_sum}. Applying \eqref{eq:weighted_unrooted_sum} with
\(\abs V=8-r\), \(r_{uv}=1\), and \(\abs E\ge7-r\), using
\(\rho_N\le1\), and summing over the finitely many matchings and support
pairs therefore bounds the expected sum of link-indicator products by
\(C N^{8-r}\rho_N^{7-r}\).
The normalization in \(\partial_{\bs\beta}^{\alpha}D_N(\bs\beta)\)
is \(O(N^{-8})\). Using the bounds on
\(\norm{\partial_{\bs\beta}^{\gamma}\bs c_{s,\nu}(\bs\beta,\mathcal O_N)}\)
for \(\abs\gamma\le2\) in \eqref{eq:poly-coefficient-derivative-bound},
\(\norm{\widehat{\bf W}_N}=O_p(1)\), and Markov's inequality, the
contribution of pairs with exactly \(r\) shared peripheral nodes is
\(O_p(N^{-8}N^{8-r}\rho_N^{7-r})=O_p(N^{-r}\rho_N^{7-r})\),
uniformly over \(\bs\beta\in\mathcal B\) and \(\abs{\alpha}\le2\).
The case \(\alpha=0\) controls \(D_N\), while
\(1\le\abs{\alpha}\le2\) controls its first and second derivatives.
Summing over \(r=0,1,2,3\) gives
\[
\sup_{\bs\beta\in\mathcal B}D_N(\bs\beta)
+\max_{1\le k\le2}\sup_{\bs\beta\in\mathcal B}
\norm{\nabla_{\bs\beta}^kD_N(\bs\beta)}
=O_p\!\left(\sum_{r=0}^3N^{-r}\rho_N^{7-r}\right)=O_p(c_N).
\]

We next show that \(\widehat c_N/c_N\xrightarrow{P}1\). Link indicators for dyads
that do not share a node are independent. For dyads sharing node \(1\),
we have
\[
\begin{aligned}
\E[L_{12}L_{13}\mid\bs X_1,\Gamma_1]
&=\E[P_{12}(\bs\beta_0)P_{13}(\bs\beta_0)\mid\bs X_1,\Gamma_1]=\E[L_{12}\mid\bs X_1,\Gamma_1]\,
\E[L_{13}\mid\bs X_1,\Gamma_1],
\end{aligned}
\]
where the first equality uses iterated expectations and the independence
of \(L_{12}\) and \(L_{13}\) conditional on \((\bf X,\bs\Gamma)\) in
Assumption~\ref{ass:dyad_ind}. The second equality uses the fact that,
conditional on \((\bs X_1,\Gamma_1)\),
\(P_{12}(\bs\beta_0)\) and \(P_{13}(\bs\beta_0)\) are functions of
the independent node characteristics \((\bs X_2,\Gamma_2)\) and
\((\bs X_3,\Gamma_3)\), respectively, together with
\(\E[P_{1j}(\bs\beta_0)\mid\bs X_1,\Gamma_1]
=\E[L_{1j}\mid\bs X_1,\Gamma_1]\) for \(j\in\{2,3\}\)
by iterated expectations.
Thus \(\Cov(L_{12},L_{13}\mid\bs X_1,\Gamma_1)=0\).
Since \((\bs X_2,\Gamma_2)\) and \((\bs X_3,\Gamma_3)\) are identically
distributed, iterated expectations and the definition of \(\Lambda_{1,N}\) give
\(\E[L_{1j}\mid\bs X_1,\Gamma_1]=\rho_N\Lambda_{1,N}\) for
\(j\in\{2,3\}\). The law of total covariance and the moment bound in
Assumption~\ref{ass:sparse_envelope} therefore yield
\[
\begin{aligned}
&\Cov(L_{12},L_{13})
=\E[\Cov(L_{12},L_{13}\mid\bs X_1,\Gamma_1)]+\Cov\!\left(\E[L_{12}\mid\bs X_1,\Gamma_1],
\E[L_{13}\mid\bs X_1,\Gamma_1]\right)\\
&=\Var\{\E[L_{12}\mid\bs X_1,\Gamma_1]\}
=\rho_N^2\Var(\Lambda_{1,N})\le\rho_N^2\E[\Lambda_{1,N}^2]
\le\rho_N^2\bigl(\E[\Lambda_{1,N}^{16}]\bigr)^{1/8}
=O(\rho_N^2).
\end{aligned}
\]
Recall that \(\widehat{\rho}_N=\binom{N}{2}^{-1}\sum_{i<j}L_{ij}+N^{-2}\).
Since \(\E[L_{ij}]=\rho_N\), expanding the variance gives
\[
\begin{aligned}
&\E\!\left[
\left\{\frac{\widehat{\rho}_N-N^{-2}}{\rho_N}-1\right\}^2
\right]
=\frac{\Var\!\left(\sum_{i<j}L_{ij}\right)}
{\binom{N}{2}^{2}\rho_N^2}=\frac{\binom{N}{2}\Var(L_{12})
+6\binom{N}{3}\Cov(L_{12},L_{13})}
{\binom{N}{2}^{2}\rho_N^2}\\
&\le\frac{\binom{N}{2}\rho_N+6\binom{N}{3}C\rho_N^2}
{\binom{N}{2}^{2}\rho_N^2}
\le C\left(\frac{1}{N^2\rho_N}+\frac1N\right)=o(1),
\end{aligned}
\]
where the second equality counts \(\binom{N}{2}\) individual dyads and
\(3\binom{N}{3}\) unordered pairs of distinct dyads sharing one node;
each covariance appears twice in the variance expansion. Distinct dyads
with disjoint endpoints are independent and contribute zero covariance.
The first inequality uses \(\Var(L_{12})=\rho_N(1-\rho_N)\le\rho_N\)
and \(\Cov(L_{12},L_{13})=O(\rho_N^2)\).
The last step follows from \(N^5\rho_N^4\to\infty\), which implies
\(N^2\rho_N=N^{3/4}(N^5\rho_N^4)^{1/4}\to\infty\). Hence \(\widehat{\rho}_N/\rho_N\xrightarrow{P}1\), and
\[
\left|\frac{\widehat c_N}{c_N}-1\right|
\le\max_{4\le j\le7}
\left|\left(\frac{\widehat{\rho}_N}{\rho_N}\right)^j-1\right|
\xrightarrow{P}0.
\]
Since \(D_N(\bs\beta)\ge0\) and \(\widehat c_N/c_N\xrightarrow{P}1\),
the bound \(\sup_{\bs\beta\in\mathcal B}D_N(\bs\beta)=O_p(c_N)\) gives
\[
\begin{aligned}
&\sup_{\bs\beta\in\mathcal B}\{D_N(\bs\beta)+\widehat c_N\}
=O_p(c_N), \ \sup_{\bs\beta\in\mathcal B}\{D_N(\bs\beta)+\widehat c_N\}^{-1}
\le\widehat c_N^{-1}=O_p(c_N^{-1}).
\end{aligned}
\]
For every fixed \(\varepsilon>0\), Lemma~\ref{lem:uniform-convergence-population}
and Assumption~\ref{ass:consistency}(i) give
\[
\inf_{\bs\beta\in\mathcal B:\,\norm{\bs\beta-\bs\beta_0}\ge\varepsilon}
\rho_N^{-8}\bs h_N(\bs\beta)^\top\widehat{\bf W}_N
\bs h_N(\bs\beta)
\xrightarrow{P}
\inf_{\bs\beta\in\mathcal B:\,\norm{\bs\beta-\bs\beta_0}\ge\varepsilon}
\bs h_0(\bs\beta)^\top\bf W\bs h_0(\bs\beta)>0.
\]
Since \(D_N(\bs\beta)\ge0\), \eqref{eq:pair_normalized_criterion} gives
\[
Q_N^\dagger(\bs\beta_0)
=\frac{\bs h_N(\bs\beta_0)^\top\widehat{\bf W}_N\bs h_N(\bs\beta_0)}
{D_N(\bs\beta_0)+\widehat c_N}
\le\frac{\bs h_N(\bs\beta_0)^\top\widehat{\bf W}_N\bs h_N(\bs\beta_0)}
{\widehat c_N},
\]
and
\[
\inf_{\substack{\bs\beta\in\mathcal B\\
\norm{\bs\beta-\bs\beta_0}\ge\varepsilon}}
Q_N^\dagger(\bs\beta)
\ge
\frac{\displaystyle\rho_N^8
\inf_{\substack{\bs\beta\in\mathcal B\\
\norm{\bs\beta-\bs\beta_0}\ge\varepsilon}}
\rho_N^{-8}\bs h_N(\bs\beta)^\top\widehat{\bf W}_N\bs h_N(\bs\beta)}
{\displaystyle\sup_{\bs\beta\in\mathcal B}
\{D_N(\bs\beta)+\widehat c_N\}}.
\]
By Theorem~\ref{thm:regimeCLT},
\(a_N\bs h_N(\bs\beta_0)=O_p(1)\), so
\(\norm{\bs h_N(\bs\beta_0)}=O_p(a_N^{-1})\). Consequently,
\[
\begin{aligned}
&\frac{Q_N^\dagger(\bs\beta_0)}
{\displaystyle\inf_{\bs\beta\in\mathcal B:\,\norm{\bs\beta-\bs\beta_0}\ge\varepsilon}
Q_N^\dagger(\bs\beta)}\le
\frac{\bs h_N(\bs\beta_0)^\top\widehat{\bf W}_N
\bs h_N(\bs\beta_0)
\displaystyle\sup_{\bs\beta\in\mathcal B}
\{D_N(\bs\beta)+\widehat c_N\}}
{\widehat c_N\rho_N^8
\displaystyle\inf_{\bs\beta\in\mathcal B:\,\norm{\bs\beta-\bs\beta_0}\ge\varepsilon}
\rho_N^{-8}\bs h_N(\bs\beta)^\top\widehat{\bf W}_N
\bs h_N(\bs\beta)}
=O_p\!\left(\frac{1}{a_N^2\rho_N^8}\right)=o_p(1),
\end{aligned}
\]
because \(a_N^2\rho_N^8\to\infty\) by the definition of \(a_N\).
Since \(Q_N^\dagger(\widehat{\bs\beta}^{\dagger})
\le Q_N^\dagger(\bs\beta_0)\),
\[
\Pr\{\norm{\widehat{\bs\beta}^{\dagger}-\bs\beta_0}\ge\varepsilon\}
\le \Pr\!\left\{
  \frac{Q_N^\dagger(\widehat{\bs\beta}^{\dagger})}
  {\displaystyle\inf_{\bs\beta\in\mathcal B:\,\norm{\bs\beta-\bs\beta_0}\ge\varepsilon}
  Q_N^\dagger(\bs\beta)}\ge1\right\} \le
\Pr\!\left\{
\frac{Q_N^\dagger(\bs\beta_0)}
{\displaystyle\inf_{\bs\beta\in\mathcal B:\,\norm{\bs\beta-\bs\beta_0}\ge\varepsilon}
Q_N^\dagger(\bs\beta)}\ge1\right\}
\to 0.
\]
The minimizing property
\(Q_N^\dagger(\widehat{\bs\beta}^{\dagger})
\le Q_N^\dagger(\bs\beta_0)\) also gives
\[
\begin{aligned}
\norm{\bs h_N(\widehat{\bs\beta}^{\dagger})}^2
&\le
\frac{\lambda_{\min}(\widehat{\bf W}_N)^{-1}
\sup_{\bs\beta\in\mathcal B}\{D_N(\bs\beta)+\widehat c_N\}}
{D_N(\bs\beta_0)+\widehat c_N}
\bs h_N(\bs\beta_0)^\top\widehat{\bf W}_N\bs h_N(\bs\beta_0)
=O_p(a_N^{-2}).
\end{aligned}
\]
Here the order follows from
\(\lambda_{\min}(\widehat{\bf W}_N)^{-1}=O_p(1)\),
\(\norm{\widehat{\bf W}_N}=O_p(1)\),
\(\sup_{\bs\beta\in\mathcal B}\{D_N(\bs\beta)+\widehat c_N\}=O_p(c_N)\),
\(\{D_N(\bs\beta_0)+\widehat c_N\}^{-1}=O_p(c_N^{-1})\), and
\(\norm{\bs h_N(\bs\beta_0)}=O_p(a_N^{-1})\).

Since \(\bs\beta_0\in\operatorname{int}(\mathcal B)\) and
\(\widehat{\bs\beta}^{\dagger}\xrightarrow{P}\bs\beta_0\), choose a
deterministic sequence \(\delta_N\downarrow0\) such that
\(\{\bs\beta:\norm{\bs\beta-\bs\beta_0}\le\delta_N\}
\subset\operatorname{int}(\mathcal B)\) and
\(\Pr\{\norm{\widehat{\bs\beta}^{\dagger}-\bs\beta_0}\le\delta_N\}\to1\).
On the event
\(\norm{\widehat{\bs\beta}^{\dagger}-\bs\beta_0}\le\delta_N\),
the first-order condition for \(Q_N^\dagger\), using that
\(\widehat c_N\) does not depend on \(\bs\beta\), gives
\begin{equation}
\label{eq:pair_normalized_foc}
\dot{\bs h}_N(\widehat{\bs\beta}^{\dagger})^\top
\widehat{\bf W}_N\bs h_N(\widehat{\bs\beta}^{\dagger})
=\frac{\bs h_N(\widehat{\bs\beta}^{\dagger})^\top
\widehat{\bf W}_N\bs h_N(\widehat{\bs\beta}^{\dagger})}
{2\{D_N(\widehat{\bs\beta}^{\dagger})+\widehat c_N\}}
\nabla_{\bs\beta}D_N(\widehat{\bs\beta}^{\dagger})
=O_p(a_N^{-2}),
\end{equation}
using \(\norm{\bs h_N(\widehat{\bs\beta}^{\dagger})}=O_p(a_N^{-1})\),
\(\sup_{\bs\beta\in\mathcal B}
\{D_N(\bs\beta)+\widehat c_N\}^{-1}=O_p(c_N^{-1})\), and
\(\sup_{\bs\beta\in\mathcal B}
\norm{\nabla_{\bs\beta}D_N(\bs\beta)}=O_p(c_N)\).
Thus \(a_N\rho_N^{-4}\dot{\bs h}_N(\widehat{\bs\beta}^{\dagger})^\top
\widehat{\bf W}_N\bs h_N(\widehat{\bs\beta}^{\dagger})
=O_p((a_N\rho_N^4)^{-1})=o_p(1)\).

The integral mean-value expansion along the segment from \(\bs\beta_0\)
to \(\widehat{\bs\beta}^{\dagger}\) is
\begin{equation}
\label{eq:pair_normalized_moment_expansion}
\begin{aligned}
\bs h_N(\widehat{\bs\beta}^{\dagger})
&=\bs h_N(\bs\beta_0)+\left[\int_0^1\dot{\bs h}_N\!\left(
\bs\beta_0+t(\widehat{\bs\beta}^{\dagger}-\bs\beta_0)\right)dt\right]
(\widehat{\bs\beta}^{\dagger}-\bs\beta_0).
\end{aligned}
\end{equation}
Lemma~\ref{lem:uniform-convergence-population} and the choice of
\(\delta_N\) imply
\(\sup_{0\le t\le1}
\norm{\rho_N^{-4}\dot{\bs h}_N\!\left(
\bs\beta_0+t(\widehat{\bs\beta}^{\dagger}-\bs\beta_0)\right)
-\dot{\bs h}_0}\xrightarrow{P}0\).
In particular, \(\rho_N^{-4}\dot{\bs h}_N(\widehat{\bs\beta}^{\dagger})
\xrightarrow{P}\dot{\bs h}_0\). The triangle inequality and
\(\int_0^1dt=1\) give
\begin{equation}
\label{eq:pair_normalized_jacobian_convergence}
\begin{aligned}
&\left\|\rho_N^{-4}\int_0^1\dot{\bs h}_N\!\left(
\bs\beta_0+t(\widehat{\bs\beta}^{\dagger}-\bs\beta_0)\right)\,dt
-\dot{\bs h}_0\right\|\\
&\quad\le\sup_{0\le t\le1}
\norm{\rho_N^{-4}\dot{\bs h}_N\!\left(
\bs\beta_0+t(\widehat{\bs\beta}^{\dagger}-\bs\beta_0)\right)
-\dot{\bs h}_0}\xrightarrow{P}0.
\end{aligned}
\end{equation}
Substituting the integral expansion~\eqref{eq:pair_normalized_moment_expansion}
into the first-order condition~\eqref{eq:pair_normalized_foc},
multiplying by \(a_N\rho_N^{-4}\), and using
\eqref{eq:pair_normalized_jacobian_convergence} together with
\(\widehat{\bf W}_N\xrightarrow{P}\bf W\), we obtain
\[
\{\dot{\bs h}_0^\top\bf W\dot{\bs h}_0+o_p(1)\}
a_N\rho_N^4(\widehat{\bs\beta}^{\dagger}-\bs\beta_0)
=-\{\dot{\bs h}_0^\top\bf W+o_p(1)\}
\{a_N\bs h_N(\bs\beta_0)\}+o_p(1).
\]
Assumption~\ref{ass:consistency}(ii)--(iii) makes
\(\dot{\bs h}_0^\top\bf W\dot{\bs h}_0\) nonsingular.
Since \(a_N\bs h_N(\bs\beta_0)=O_p(1)\), inverting
\(\dot{\bs h}_0^\top\bf W\dot{\bs h}_0+o_p(1)\) gives
\[
a_N\rho_N^4(\widehat{\bs\beta}^{\dagger}-\bs\beta_0)
=-(\dot{\bs h}_0^\top\bf W\dot{\bs h}_0)^{-1}
\dot{\bs h}_0^\top\bf W
\{a_N\bs h_N(\bs\beta_0)\}+o_p(1).
\]
Equations~\eqref{eq:gmm_linearization} and
\eqref{eq:gmm_jacobian_limits}, together with
\(\widehat{\bf W}_N\xrightarrow{P}\bf W\) and
\(a_N\bs h_N(\bs\beta_0)=O_p(1)\), give
\[
a_N\rho_N^4(\widehat{\bs\beta}_{\mathrm{GMM}}-\bs\beta_0)
=-(\dot{\bs h}_0^\top\bf W\dot{\bs h}_0)^{-1}
\dot{\bs h}_0^\top\bf W
\{a_N\bs h_N(\bs\beta_0)\}+o_p(1).
\]
Subtracting the two expansions yields
\(a_N\rho_N^4(\widehat{\bs\beta}^{\dagger}
-\widehat{\bs\beta}_{\mathrm{GMM}})\xrightarrow{P}0\).
\end{proof}

\subsection{Proofs for Theorem~\ref{thm:tractable_gmm}}
\label{app:covariance_proofs}

This section establishes consistency of the covariance estimator in
\eqref{eq:universalOmega} and proves Theorem~\ref{thm:tractable_gmm}. 
We begin with some lemmas.

\begin{lem}
\label{lem:graph_count_covariance_approximation}
Suppose Assumptions~\ref{ass:dyad_ind}, \ref{ass:bounded_design},
and \ref{ass:sparse_envelope} hold.
Let \(a_N\) be defined as in Theorem~\ref{thm:gmm} under each of the three regimes. Then
\(\norm{\bOmegaUniv(\bs\beta_0)
-\Var\{\bs h_N(\bs\beta_0)\mid\mathcal O_N\}}
=o_p(a_N^{-2})\).
\end{lem}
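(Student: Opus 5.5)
The plan is to write the estimator as a weighted double sum over ordered pentad pairs and compare it term by term with the target \eqref{eq:target_covariance}. Let
\[
w_N(s,t):=\zeta_{G_{\mathrm{node}}}-\zeta_{G_{\mathrm{dyad}}}+\zeta_{G_{\mathrm{triangle}}}+\zeta_{G_{\text{4-cycle}}}-\zeta_{G_{\mathrm{diamond}}},
\]
with every count evaluated at \((s,t)\). By \eqref{eq:graph_count_component},
\[
\bOmegaUniv(\bs\beta_0)=\abs{\mathcal S_N}^{-2}\sum_{s,t}\bigl\{w_N(s,t)+(10/N)\textstyle\sum_G\zeta_G(s,t)\bigr\}\widehat{\bs\Psi}_s\widehat{\bs\Psi}_t^\top.
\]
I would split the error into four pieces.

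\begin{enumerate}
\item[(a)] \emph{Centering.} Replace \(\widehat{\bs\Psi}_s\) by \(\bs\Psi_s\). The correction terms involve \(\bs h_N(\bs\beta_0)\), which is \(O_p(a_N^{-1})\) by Theorem~\ref{thm:regimeCLT}. Each is multiplied by a count-weighted sum of \(\bs\Psi\). A second-moment bound of the type used in the proof of Proposition~\ref{prop:classbound} should then show that these pieces are \(o_p(a_N^{-2})\).
\item[(b)] \emph{Pairs sharing no dyad.} Collect the pairs with \(E(s)\cap E(t)=\varnothing\) but \(V_s\cap V_t\neq\varnothing\). Only \(\zeta_{G_{\mathrm{node}}}\) is nonzero for them. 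Their conditional mean vanishes, because conditional independence gives \(\E[\bs\Psi_s\bs\Psi_t^\top\mid\mathcal O_N]=0\). So this piece is a conditionally centered sum, and I need its conditional second moment to be \(o(a_N^{-4})\).
\item[(c)] \emph{Pairs sharing a dyad.} For pairs with \(E(s)\cap E(t)\neq\varnothing\), compare \(w_N(s,t)\bs\Psi_s\bs\Psi_t^\top\) with its conditional mean. When the common graph is connected, the claimed identity \(w_N=1\) makes the summed conditional mean exactly \(\Var\{\bs h_N\mid\mathcal O_N\}\) restricted to those pairs. The disconnected common graphs produce a bias term.
\item[(d)] The \((10/N)\) correction.
\end{enumerate}

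The combinatorial step is to verify \(w_N(s,t)=1\) whenever the common graph \((V_s\cap V_t,E(s)\cap E(t))\) is connected and contains a dyad. The common graph is a subgraph of the pentad's seven-edge structure: two pivots plus peripheral nodes, each peripheral node adjacent only to pivots. So the possible connected common graphs are few, with at most five nodes. For a tree the weight is \(\abs V-\abs E=1\), and the remaining counts vanish. Otherwise there are cycles of length at most four, and in a pentad these are triangles through \((i,j)\) and \(4\)-cycles \(i\!-\!r\!-\!j\!-\!r'\). I would enumerate by the number \(c\) of shared peripheral nodes attached to both pivots and whether \((i,j)\) is shared, and check Euler-type identities of the form \(V-E+T+C_4-D=1\). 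For example, for \(K_{2,c}\) plus the pivot edge one gets \(V=2+c\), \(E=2c+1\), \(T=c\), \(C_4=\binom c2\), \(D=\binom c2\), which gives \(1\).

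For the disconnected common graphs in (c), and for the variance pieces in (b) and (c), I would expand everything into monomials via Lemma~\ref{lem:poly}. Each expectation is then bounded by \eqref{eq:weighted_unrooted_sum} applied to the union graph of the supports involved, with incidence multiplicities at most \(16\). A disconnected common graph contains a dyad but also a node away from it, so the union of the two supports has one more node relative to its number of shared dyads. This gains a factor \(N^{-1}\) or \((N\rho_N)^{-1}\) relative to the leading class scales in Table~\ref{tab:regimemap}. The conditional variance pieces involve four pentads, with the union connected by the same argument as in Lemma~\ref{lem:sparse_dyadset_fourth}. They should be bounded by \(N^{-20}N^{p+1}\rho_N^p\) times the appropriate power of \(a_N^4\), handled regime by regime as in that lemma. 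Piece (d) is \((10/N)\) times a quantity of order \(a_N^{-2}\), since each \(\widehat{\bf\Omega}_N(G)\) is \(O_p(a_N^{-2})\) by the same bounds, so it is \(o_p(a_N^{-2})\).

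The main obstacle is the regime-uniform bookkeeping in (b) and (c), especially the ultra-sparse regime. There the target is only \(N^{-5}\rho_N^4\), while pairs sharing nodes but few dyads carry low powers of \(\rho_N\). I would first try to show that every such term has total exponent at least \(v-1\) on a connected union with \(v\) nodes, and then reduce each case to \((N\rho_N)^{a}/(N^5\rho_N^4)\to0\) or \(N^{-1}\) bounds, as in \eqref{eq:sparse-fourth-remainder-ultrasparse}.
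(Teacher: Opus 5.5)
Your decomposition is the paper's: centering, bias from pairs whose common graph is disconnected, conditional fluctuation, and the $(10/N)$ term, all bounded through union graphs and Lemma~\ref{lem:weighted_probability_sums}. However, two steps fail as you set them up.

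\textbf{The identity $w_N(s,t)=1$.} It cannot be checked by treating the common graph as an arbitrary connected subgraph of one pentad. Take a connected subgraph containing both pivots, with $c$ peripheral nodes joined to both pivots and $\delta$ the indicator of the pivot dyad. Your Euler count gives $w_N=2-c-\delta+\delta c+(1-\delta)\binom{c}{2}$. This equals $1$ when $\delta=1$. For $\delta=0$ and $c=3$, however, the graph is $K_{2,3}$ on $\{i,j\}\times\{k,l,m\}$, and the count is $5-6+0+3-0=2$.

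The missing input is that the common graph is an intersection of two pentads. By \eqref{eq:common_dyad_membership}, a dyad among shared nodes is shared if and only if it meets both pivot dyads. Enumerating by how $\{i,j\}$ and $\{i',j'\}$ overlap then leaves only three shapes: $K_{2,c}$ plus the pivot dyad; a star at the common pivot, possibly with $\{j,j'\}$; or $K_{a,b}$ with $a,b\le2$ plus isolated nodes. So $K_{2,3}$ never arises; equivalently, two pentads on the same five nodes share $7$, $5$, or $4$ dyads, never $6$.

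\textbf{The bias from disconnected common graphs.} This needs a precise statement rather than a heuristic. When the common graph is disconnected, $v\ge3$, and the union of the two connected spanning supports contains a cycle, so it has at least $10-v$ dyads. This gives $\sum_{v=3}^5N^{-v}\rho_N^{10-v}=O(N^{-1}a_N^{-2})$, a gain of exactly $N^{-1}$. A factor $(N\rho_N)^{-1}$ would be a loss in the ultra-sparse regime.

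\textbf{The fluctuation bound in the dense regime.} Your bound $a_N^4N^{-20}N^{p+1}\rho_N^{p}$, with $p$ the total exponent, is borrowed from Lemma~\ref{lem:sparse_dyadset_fourth}, which only covers $\rho_N\lesssim N^{-1}$. In the dense regime $a_N^4=N^4\rho_N^{-14}$, so the bound becomes $N^{-1}(N\rho_N)^{p-14}$. This is not $o(1)$ once $p\ge15$, and $p$ can reach $28$.

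You must count distinct nodes $n$ instead, using total exponent at least $n-1$. This gives $N^{-1}(N\rho_N)^{n-15}$, so every nonvanishing term must have $n\le15$, and this cap is sharp. Two pairs that each share one node and no dyad, linked by a single cross dyad, have $n=16$ and would be bounded only by $\rho_N$. The paper shows their covariance is zero: one of the four pentads is then conditionally independent of the other three and centered.

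In your (b)/(c) split the needed observation is as follows:
\begin{itemize}
\item In (c), each pair shares at least two nodes, and a nonzero covariance needs a cross-shared dyad, so $n\le14$.
\item In (b), each of $s$ and $t$ must share a dyad with $s'$ or $t'$. These two distinct dyads force at least three cross-shared nodes, so $n\le15$.
\end{itemize}
The binding regime for this step is therefore the dense one, not the ultra-sparse one. The ultra-sparse case is immediate, since $N^{-1}(N\rho_N)^{n-9}\le(N^5\rho_N^4)^{-1}$ for $n\ge5$.

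\textbf{A smaller point.} Obtain $\bs h_N(\bs\beta_0)=O_p(a_N^{-1})$ from the direct bound $\E\norm{\bs h_N(\bs\beta_0)}^2=O(a_N^{-2})$. That bound needs only Assumptions~\ref{ass:dyad_ind}, \ref{ass:bounded_design}, and \ref{ass:sparse_envelope}. Theorem~\ref{thm:regimeCLT} also requires Assumption~\ref{ass:regimeSigma}, which the lemma does not assume.
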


\begin{proof}
Unless an argument is displayed,
pentad moments, sample moments, and covariance estimators in this proof
are evaluated at \(\bs\beta_0\). Theorem~\ref{thm:moment_restriction} and the
\(\mathcal O_N\)-measurability of \(\bs Z_s\) give
\(\E[\bs\Psi_s\mid\mathcal O_N]=\boldsymbol0\).
Conditional dyadic independence therefore implies
\(\E[\bs\Psi_s\bs\Psi_t^\top\mid\mathcal O_N]=\mathbf0\)
whenever \(E(s)\cap E(t)=\varnothing\).
Recall that \(\zeta_G(s,t)\) counts the labeled copies of
\(G\in\mathcal G\) contained in the common graph
\((V_s\cap V_t,E(s)\cap E(t))\).
Each copy is counted once by its node and dyad sets, regardless of node
ordering, and additional dyads among the selected nodes are allowed.
Define the signed pair weight
\begin{equation}
\label{eq:pair_weight}
w_{st}:=
\zeta_{G_{\mathrm{node}}}(s,t)
-\zeta_{G_{\mathrm{dyad}}}(s,t)
+\zeta_{G_{\mathrm{triangle}}}(s,t)
+\zeta_{G_{\text{4-cycle}}}(s,t)
-\zeta_{G_{\mathrm{diamond}}}(s,t)
\end{equation}
and the uncentered matrix
\(\widehat{\bf\Omega}_N^\circ:=\abs{\mathcal S_N}^{-2}
\sum_{s,t\in\mathcal S_N}w_{st}\bs\Psi_s\bs\Psi_t^\top\).
Each count in \eqref{eq:pair_weight} is uniformly bounded, so
\(\abs{w_{st}}\le C\); also, \(w_{st}=0\) when
\(V_s\cap V_t=\varnothing\).
The sum \(\sum_{t\in\mathcal S_N}w_{st}\) is unchanged by
relabeling the nodes and therefore has the same value,
denoted by \(\varsigma_N\), for every \(s\in\mathcal S_N\).
Since \(w_{st}=w_{ts}\),
\(\sum_{s\in\mathcal S_N}w_{st}=\varsigma_N\) for every
\(t\in\mathcal S_N\).
By \eqref{eq:graph_count_component} and \eqref{eq:pair_weight}, the
first five terms of \eqref{eq:universalOmega} equal
\(\abs{\mathcal S_N}^{-2}\sum_{s,t\in\mathcal S_N}
w_{st}\widehat{\bs\Psi}_s\widehat{\bs\Psi}_t^\top\).
Expand \(\widehat{\bs\Psi}_s=\bs\Psi_s-\bs h_N\) in this sum and use
\(\sum_{s\in\mathcal S_N}\bs\Psi_s=\abs{\mathcal S_N}\bs h_N\).
Each of the two cross products sums to
\(-\varsigma_N\bs h_N\bs h_N^\top/\abs{\mathcal S_N}\), and the product of the
sample means sums to \(\varsigma_N\bs h_N\bs h_N^\top/\abs{\mathcal S_N}\).
Thus
\begin{equation}
\label{eq:signed_centering_identity}
\bOmegaUniv-\frac{10}{N}\sum_{G\in\mathcal G}\widehat{\bf\Omega}_N(G)
=\widehat{\bf\Omega}_N^\circ-\frac{\varsigma_N}{\abs{\mathcal S_N}}\bs h_N\bs h_N^\top.
\end{equation}
Subtracting the target covariance from
\eqref{eq:signed_centering_identity} gives
\begin{equation}
\label{eq:universal_four_term_decomposition}
\begin{aligned}
\bOmegaUniv-\Var\{\bs h_N\mid\mathcal O_N\}
&=\{\E[\widehat{\bf\Omega}_N^\circ\mid\mathcal O_N]
-\Var\{\bs h_N\mid\mathcal O_N\}\}
+\{\widehat{\bf\Omega}_N^\circ-\E[\widehat{\bf\Omega}_N^\circ\mid\mathcal O_N]\}\\
&\quad-\frac{\varsigma_N}{\abs{\mathcal S_N}}\bs h_N\bs h_N^\top
+\frac{10}{N}\sum_{G\in\mathcal G}\widehat{\bf\Omega}_N(G).
\end{aligned}
\end{equation}
We bound the four terms in \eqref{eq:universal_four_term_decomposition}
in their order of appearance.

For \(\E[\widehat{\bf\Omega}_N^\circ\mid\mathcal O_N]
-\Var\{\bs h_N\mid\mathcal O_N\}\), equation~\eqref{eq:target_covariance}
gives
\[
\E[\widehat{\bf\Omega}_N^\circ\mid\mathcal O_N]-\Var\{\bs h_N\mid\mathcal O_N\}
=\frac{1}{\abs{\mathcal S_N}^2}\sum_{s,t\in\mathcal S_N}
(w_{st}-1)\E[\bs\Psi_s\bs\Psi_t^\top\mid\mathcal O_N].
\]
Pairs with \(E(s)\cap E(t)=\varnothing\) contribute zero.
We next verify that \(w_{st}=1\) whenever the common graph
\((V_s\cap V_t,E(s)\cap E(t))\) is connected and contains a dyad.
Write \(s=(i,j,k,l,m)\), \(t=(i',j',k',l',m')\), and
\(v:=\abs{V_s\cap V_t}\).
For the following argument, we temporarily identify each unordered
dyad \((a,b)\) with its endpoint set \(\{a,b\}\).
By definition, \(E(s)\) consists of
the pivot dyad \(\{i,j\}\) and the six dyads joining \(i\) or \(j\)
to \(k,l,m\). Thus, for any dyad \(e\subseteq V_s\),
\(e\in E(s)\) if and only if \(e\cap\{i,j\}\ne\varnothing\).
For any dyad \(e\subseteq V_t\), the definition of \(E(t)\) likewise
gives \(e\in E(t)\) if and only if
\(e\cap\{i',j'\}\ne\varnothing\).
For a dyad \(e\subseteq V_s\cap V_t\), both characterizations yield:
\begin{equation}
\label{eq:common_dyad_membership}
e\in E(s)\cap E(t)
\quad\Longleftrightarrow\quad
e\cap\{i,j\}\ne\varnothing
\ \text{and}\ e\cap\{i',j'\}\ne\varnothing.
\end{equation}

First suppose that the two pivot dyads coincide. Without loss of generality, assume that \(i=i'\) and \(j=j'\).
By \eqref{eq:common_dyad_membership}, the common graph consists of
the pivot dyad \(\{i,j\}\) and \(2(v-2)\) dyads connecting the
common peripheral nodes to the pivots: each peripheral node is joined
to both \(i\) and \(j\). There are no dyads between peripheral nodes.
Figure~\ref{fig:common_pivot_graphs}(a) shows an
example with \(v=4\).
There are therefore \(1+2(v-2)=2v-3\) dyads and \(v-2\) triangles,
one triangle for each common peripheral node together with \(i,j\).
Every four-cycle or diamond must contain both pivot nodes and two
common peripheral nodes, since each peripheral node is adjacent only
to \(i\) and \(j\). Each pair of common peripheral nodes gives exactly
one \mbox{four-cycle
(\taxgraph[baseline=-0.5ex]{
  \node[taxNode] (a) at (-.55,-.4) {};
  \node[taxNode] (b) at (-.55,.4) {};
  \node[taxNode] (c) at (.55,.4) {};
  \node[taxNode] (d) at (.55,-.4) {};
  \draw[taxEdge] (a)--(b)--(c)--(d)--(a);
})}, using the four dyads joining the peripheral nodes to \(i,j\),
and one diamond, which additionally includes the pivot dyad \(\{i,j\}\).
Thus \(\zeta_{G_{\text{4-cycle}}}(s,t)
=\zeta_{G_{\mathrm{diamond}}}(s,t)=\binom{v-2}{2}\).
The four-cycle and diamond counts cancel in \eqref{eq:pair_weight},
leaving \(w_{st}=v-(2v-3)+(v-2)=1\).

Next suppose that the two pivot dyads share exactly one endpoint,
and relabel their endpoints so that \(i=i'\) and \(j\ne j'\).
By \eqref{eq:common_dyad_membership}, \(i\) is joined to every other
node in the common graph \((V_s\cap V_t,E(s)\cap E(t))\).
A dyad in this common graph not containing \(i\) must contain both
\(j\) and \(j'\) to have an endpoint in each pivot dyad, so the only
candidate is \(\{j,j'\}\).
If \(j,j'\in V_s\cap V_t\), then \(\{j,j'\}\) is a dyad among
the common nodes, with endpoint \(j\) in \(\{i,j\}\) and endpoint
\(j'\) in \(\{i,j'\}\); hence
\(\{j,j'\}\in E(s)\cap E(t)\) by \eqref{eq:common_dyad_membership}.
If either endpoint is not a common node, \(\{j,j'\}\) cannot belong
to the common graph.
Thus the common graph is a star centered at \(i\), possibly
with the additional dyad \(\{j,j'\}\).
Figure~\ref{fig:common_pivot_graphs}(b) shows an example with
\(\{j,j'\}\), and panel (c) shows an example without it.
Without \(\{j,j'\}\), the graph has \(v-1\) dyads and no cycles.
Adding \(\{j,j'\}\) creates exactly one triangle, on \(i,j,j'\),
and no other cycle: every common node outside \(\{i,j,j'\}\)
is joined only to \(i\) and cannot lie on a cycle.
Consequently, there are no four-cycles or diamonds, and
\eqref{eq:pair_weight} gives \(w_{st}=v-(v-1)=1\) without \(\{j,j'\}\), and \(w_{st}=v-v+1=1\) with \(\{j,j'\}\).

\begin{figure}[!htbp]
\centering
\captionsetup[subfigure]{justification=centering}
\begin{subfigure}[t]{0.32\textwidth}
\centering
\begin{tikzpicture}[
  x=0.68cm,y=0.68cm,
  every node/.style={font=\small},
  dot/.style={circle,fill=black,inner sep=1.45pt},
  edge/.style={draw=black,line width=0.75pt}
]
  \path[use as bounding box] (-2.2,-1.7) rectangle (2.2,1.8);
  \node[dot,label=left:{$i(i')$}] (a) at (-1.35,0) {};
  \node[dot,label=right:{$j(j')$}] (b) at (1.35,0) {};
  \node[dot,label=above:{$k(k')$}] (c) at (0,1.25) {};
  \node[dot,label=below:{$l(l')$}] (d) at (0,-1.10) {};
  \draw[edge] (a)--(b);
  \draw[edge] (a)--(c);
  \draw[edge] (a)--(d);
  \draw[edge] (b)--(c);
  \draw[edge] (b)--(d);
\end{tikzpicture}
\caption{Coincident pivot dyads:\\ \(j=j'\).}
\end{subfigure}
\hfill
\begin{subfigure}[t]{0.32\textwidth}
\centering
\begin{tikzpicture}[
  x=0.68cm,y=0.68cm,
  every node/.style={font=\small},
  dot/.style={circle,fill=black,inner sep=1.45pt},
  edge/.style={draw=black,line width=0.75pt}
]
  \path[use as bounding box] (-2.2,-1.7) rectangle (2.2,1.8);
  \node[dot,label=right:{$i(i')$}] (a) at (0,0.20) {};
  \node[dot,label=below:{$j(k')$}] (b) at (-1.15,-1.10) {};
  \node[dot,label=below:{$k(j')$}] (c) at (1.15,-1.10) {};
  \node[dot,label=above:{$l(l')$}] (d) at (0,1.25) {};
  \draw[edge] (a)--(b);
  \draw[edge] (a)--(c);
  \draw[edge] (a)--(d);
  \draw[edge] (b)--(c);
\end{tikzpicture}
\caption{Distinct pivot dyads:\\ with \(\{j,j'\}\).}
\end{subfigure}
\hfill
\begin{subfigure}[t]{0.32\textwidth}
\centering
\begin{tikzpicture}[
  x=0.68cm,y=0.68cm,
  every node/.style={font=\small},
  dot/.style={circle,fill=black,inner sep=1.45pt},
  edge/.style={draw=black,line width=0.75pt}
]
  \path[use as bounding box] (-2.2,-1.7) rectangle (2.2,1.8);
  \node[dot,label=right:{$i(i')$}] (a) at (0,0.20) {};
  \node[dot,label=below:{$j(k')$}] (b) at (-1.15,-1.10) {};
  \node[dot,label=below:{$k(l')$}] (c) at (1.15,-1.10) {};
  \node[dot,label=above:{$l(m')$}] (d) at (0,1.25) {};
  \draw[edge] (a)--(b);
  \draw[edge] (a)--(c);
  \draw[edge] (a)--(d);
\end{tikzpicture}
\caption{Distinct pivot dyads:\\ without \(\{j,j'\}\).}
\end{subfigure}
\caption{Examples of the common graph
\((V_s\cap V_t,E(s)\cap E(t))\), with four common nodes in each panel.
Parentheses give the node labels in \(t\).
In (a), the common graph has five dyads, two triangles, one four-cycle,
and one diamond. In (b), \(k=j'\), and the common graph is a star
centered at \(i=i'\) with the additional dyad \(\{j,j'\}\);
it has four dyads, one triangle, and no four-cycles or diamonds.
In (c), \(j'\notin V_s\), so the common graph is a star centered
at \(i=i'\), with three dyads and no cycles.}
\label{fig:common_pivot_graphs}
\end{figure}

Finally, suppose that the two pivot dyads \(\{i,j\}\) and
\(\{i',j'\}\) have no common endpoint.
By \eqref{eq:common_dyad_membership}, every dyad in the common graph
has one endpoint in \(\{i,j\}\) and the other in \(\{i',j'\}\).
A node in \(V_s\cap V_t\) but outside \(\{i,j,i',j'\}\) cannot
belong to any common dyad, because the other endpoint of such a
dyad would have to belong to both \(\{i,j\}\) and \(\{i',j'\}\),
which are disjoint.
Such a node would therefore be isolated in the common graph.
Since the common graph is connected and contains a dyad, every
common node must instead belong to \(\{i,j,i',j'\}\).

The four pivot endpoints need not all be common nodes: for example,
\(j\) belongs to \(V_s\), but may not belong to \(V_t\).
However, the common graph contains at least one dyad, with one
endpoint in \(\{i,j\}\) and the other in \(\{i',j'\}\).
Thus each pivot dyad has at least one endpoint in \(V_s\cap V_t\),
and may have both endpoints there.
Any two common nodes, one from each pivot dyad, are joined in the
common graph by \eqref{eq:common_dyad_membership}, because the dyad
between them has an endpoint in each pivot dyad.
Two common nodes from the same pivot dyad are not joined in the
common graph, because the dyad between them has no endpoint in the
other pivot dyad.

There are therefore three possibilities, illustrated in
Figure~\ref{fig:disjoint_pivot_graphs}.
If each pivot dyad has exactly one common endpoint, the common graph
consists of these two nodes and the single dyad joining them, as in
panel (a).
If one pivot dyad has one common endpoint and the other has two,
the single endpoint is joined to each of the other two nodes,
which are not joined to each other. The common graph is therefore
a two-star, as in panel (b); exchanging the roles of the two pivot
dyads gives the same graph shape.
If both endpoints of each pivot dyad are common nodes, the common
graph has all four nodes and exactly the dyads
\(\{i,i'\}\), \(\{i,j'\}\), \(\{j,i'\}\), and \(\{j,j'\}\).
These four dyads form a four-cycle, as in panel (c).
Equation~\eqref{eq:pair_weight} gives
\(w_{st}=2-1=1\), \(w_{st}=3-2=1\), or \(w_{st}=4-4+1=1\),
respectively.

\begin{figure}[!htbp]
\centering
\captionsetup[subfigure]{justification=centering}
\begin{subfigure}[t]{0.32\textwidth}
\centering
\begin{tikzpicture}[
  x=0.68cm,y=0.68cm,
  every node/.style={font=\small},
  dot/.style={circle,fill=black,inner sep=1.45pt},
  edge/.style={draw=black,line width=0.75pt}
]
  \path[use as bounding box] (-2.2,-1.7) rectangle (2.2,1.8);
  \node[dot,label=above:{$i(k')$}] (a) at (-1.10,0) {};
  \node[dot,label=above:{$k(i')$}] (b) at (1.10,0) {};
  \draw[edge] (a)--(b);
\end{tikzpicture}
\caption{A single dyad.}
\end{subfigure}
\hfill
\begin{subfigure}[t]{0.32\textwidth}
\centering
\begin{tikzpicture}[
  x=0.68cm,y=0.68cm,
  every node/.style={font=\small},
  dot/.style={circle,fill=black,inner sep=1.45pt},
  edge/.style={draw=black,line width=0.75pt}
]
  \path[use as bounding box] (-2.2,-1.7) rectangle (2.2,1.8);
  \node[dot,label=above:{$i(k')$}] (a) at (0,0.80) {};
  \node[dot,label=below:{$k(i')$}] (b) at (-1.10,-0.75) {};
  \node[dot,label=below:{$l(j')$}] (c) at (1.10,-0.75) {};
  \draw[edge] (a)--(b);
  \draw[edge] (a)--(c);
\end{tikzpicture}
\caption{A two-star.}
\end{subfigure}
\hfill
\begin{subfigure}[t]{0.32\textwidth}
\centering
\begin{tikzpicture}[
  x=0.68cm,y=0.68cm,
  every node/.style={font=\small},
  dot/.style={circle,fill=black,inner sep=1.45pt},
  edge/.style={draw=black,line width=0.75pt}
]
  \path[use as bounding box] (-2.2,-1.7) rectangle (2.2,1.8);
  \node[dot,label=above:{$i(k')$}] (a) at (-0.95,0.85) {};
  \node[dot,label=above:{$k(i')$}] (b) at (0.95,0.85) {};
  \node[dot,label=below:{$j(l')$}] (c) at (0.95,-0.85) {};
  \node[dot,label=below:{$l(j')$}] (d) at (-0.95,-0.85) {};
  \draw[edge] (a)--(b)--(c)--(d)--(a);
\end{tikzpicture}
\caption{A four-cycle.}
\end{subfigure}
\caption{The three possible connected common graphs
\((V_s\cap V_t,E(s)\cap E(t))\) containing a dyad when the
pivot dyads \(\{i,j\}\) and \(\{i',j'\}\) have no common endpoint.
Only common nodes are shown; parentheses give their labels in \(t\).
In (a), \(V_s\cap V_t=\{i,k\}\), so each pivot dyad has one common
endpoint. In (b), \(V_s\cap V_t=\{i,k,l\}\), so \(\{i,j\}\) has
one common endpoint and \(\{i',j'\}\) has two.
In (c), \(V_s\cap V_t=\{i,j,k,l\}\), so both endpoints of each
pivot dyad are common nodes.}
\label{fig:disjoint_pivot_graphs}
\end{figure}

Thus only pairs whose common graph is disconnected and contains a
dyad contribute to \(\E[\widehat{\bf\Omega}_N^\circ\mid\mathcal O_N]-\Var\{\bs h_N\mid\mathcal O_N\}\), which yields:
\begin{equation}
\label{eq:conditional_mean_disconnected}
\begin{aligned}
&\E[\widehat{\bf\Omega}_N^\circ\mid\mathcal O_N]-\Var\{\bs h_N\mid\mathcal O_N\}=\frac{1}{\abs{\mathcal S_N}^2}
\sum_{\substack{s,t\in\mathcal S_N:\ E(s)\cap E(t)\ne\varnothing,\\
(V_s\cap V_t,E(s)\cap E(t))\ \mathrm{disconnected}}}
(w_{st}-1)\E[\bs\Psi_s\bs\Psi_t^\top\mid\mathcal O_N].
\end{aligned}
\end{equation}

For \(\E[\bs\Psi_s\bs\Psi_t^\top\mid\mathcal O_N]\) in
\eqref{eq:conditional_mean_disconnected}, write
\(\bs\Psi_s=\sum_{\nu=1}^{\nu_{\max}}
\bs c_{s,\nu}(\bs\beta_0,\mathcal O_N)
\prod_{e\in E_{s,\nu}}L_e\) by Lemma~\ref{lem:poly}.
The coefficient vectors are \(\mathcal O_N\)-measurable.
Using \(L_e^2=L_e\) and conditional dyadic independence gives
\begin{equation}
\label{eq:cov_pair_monomial_expansion}
\begin{aligned}
\E[\bs\Psi_s\bs\Psi_t^\top\mid\mathcal O_N]
&=\sum_{\nu=1}^{\nu_{\max}}\sum_{\mu=1}^{\nu_{\max}}
\bs c_{s,\nu}(\bs\beta_0,\mathcal O_N)
\bs c_{t,\mu}(\bs\beta_0,\mathcal O_N)^\top
\prod_{e\in E_{s,\nu}\cup E_{t,\mu}}P_e(\bs\beta_0).
\end{aligned}
\end{equation}
To bound the expectations of the probability products in
\eqref{eq:cov_pair_monomial_expansion} using
\eqref{eq:weighted_fixed_label_bound}, we first obtain a lower bound
for \(\abs{E_{s,\nu}\cup E_{t,\mu}}\), the number of dyads in each
product.
Fix a pair \((s,t)\) in \eqref{eq:conditional_mean_disconnected} and
write \(v:=\abs{V_s\cap V_t}\). Since the common graph is disconnected
and contains a dyad, \(3\le v\le5\).
By Lemma~\ref{lem:det-row-supports}, both
\((V_s,E_{s,\nu})\) and \((V_t,E_{t,\mu})\) are connected.
They share the \(v\) common nodes, so their union is connected on
\(10-v\) nodes. We show that the union contains a cycle and hence
has at least \(10-v\) dyads.
Suppose the graph \((V_s\cup V_t,E_{s,\nu}\cup E_{t,\mu})\)
were a tree. For any two distinct nodes in \(V_s\cap V_t\),
connectedness provides a simple path between them in
\((V_s,E_{s,\nu})\) and another in \((V_t,E_{t,\mu})\).
Both paths lie in \((V_s\cup V_t,E_{s,\nu}\cup E_{t,\mu})\),
so they must coincide because a tree has a unique simple path
between any two distinct nodes. The common path therefore lies in
\((V_s\cap V_t,E_{s,\nu}\cap E_{t,\mu})\).
Hence \((V_s\cap V_t,E_{s,\nu}\cap E_{t,\mu})\) would be connected.
Since \(E_{s,\nu}\cap E_{t,\mu}\subseteq E(s)\cap E(t)\),
every dyad on a path in \((V_s\cap V_t,E_{s,\nu}\cap E_{t,\mu})\)
also belongs to \(E(s)\cap E(t)\).
Thus any two distinct nodes in \(V_s\cap V_t\) would remain joined
by that path in \((V_s\cap V_t,E(s)\cap E(t))\).
The graph \((V_s\cap V_t,E(s)\cap E(t))\) would therefore be connected,
contrary to the restriction in \eqref{eq:conditional_mean_disconnected}.
Thus \((V_s\cup V_t,E_{s,\nu}\cup E_{t,\mu})\) contains a cycle.
Since it is connected on
\(\abs{V_s\cup V_t}=5+5-v=10-v\) nodes,
it has at least \(10-v\) dyads.
Equation~\eqref{eq:weighted_fixed_label_bound} therefore gives, for every pair
in \eqref{eq:conditional_mean_disconnected},
\begin{equation}
\label{eq:disconnected_pair_support_bound}
\E\!\left[\prod_{e\in E_{s,\nu}\cup E_{t,\mu}}P_e(\bs\beta_0)\right]
\le C\rho_N^{\abs{E_{s,\nu}\cup E_{t,\mu}}}
\le C\rho_N^{10-v}.
\end{equation}
For each fixed \(s\) and \(v\in\{1,\ldots,5\}\), there are
\(\binom5v\binom{N-5}{5-v}\) choices for \(V_t\) with
\(\abs{V_s\cap V_t}=v\), and each choice admits \(5!\)
orderings of \(t\). Summing over \(s\in\mathcal S_N\) gives
\begin{equation}
\label{eq:cov_pentad_pair_count}
\#\{(s,t)\in\mathcal S_N^2:\abs{V_s\cap V_t}=v\}
=\abs{\mathcal S_N}\binom5v\binom{N-5}{5-v}5!
\le C N^{10-v}.
\end{equation}
Lemma~\ref{lem:poly} and the monomial expansion in
\eqref{eq:cov_pair_monomial_expansion}, applied to
\eqref{eq:conditional_mean_disconnected}, give
\[
\begin{aligned}
&\E\norm{\E[\widehat{\bf\Omega}_N^\circ\mid\mathcal O_N]
-\Var\{\bs h_N\mid\mathcal O_N\}}
\\
&\le\frac{C}{\abs{\mathcal S_N}^2}
\sum_{v=3}^5
\sum_{\substack{s,t\in\mathcal S_N:\ \abs{V_s\cap V_t}=v,\\
E(s)\cap E(t)\ne\varnothing,\\
(V_s\cap V_t,E(s)\cap E(t))\ \mathrm{disconnected}}}
\sum_{\nu=1}^{\nu_{\max}}\sum_{\mu=1}^{\nu_{\max}}
\E\!\left[\prod_{e\in E_{s,\nu}\cup E_{t,\mu}}P_e(\bs\beta_0)\right]\\
&\le\frac{C}{\abs{\mathcal S_N}^2}
\sum_{v=3}^5
\abs{\mathcal S_N}\binom5v\binom{N-5}{5-v}5!\,\rho_N^{10-v}\le C\sum_{v=3}^5N^{-v}\rho_N^{10-v}
=\frac{C}{N}\sum_{v=2}^4N^{-v}\rho_N^{9-v}.
\end{aligned}
\]
Here the first inequality uses
\(\norm{\bs c_{s,\nu}(\bs\beta_0,\mathcal O_N)}\le C\) almost surely,
uniformly in \(N,s,\nu\), by
\eqref{eq:poly-coefficient-derivative-bound}, and
\(\abs{w_{st}-1}\le C\); the second uses
\eqref{eq:disconnected_pair_support_bound}, the finite number of
monomials, and \eqref{eq:cov_pentad_pair_count}.
The factor \(N^{-v}\) comes from dividing the pair count by
\(\abs{\mathcal S_N}^2\asymp N^{10}\).
By the definition of \(a_N\) in Theorem~\ref{thm:gmm},
\begin{equation}
\label{eq:cov_pair_scale_bound}
\sum_{v=2}^5N^{-v}\rho_N^{9-v}=O(a_N^{-2}).
\end{equation}
Since \(\sum_{v=2}^4N^{-v}\rho_N^{9-v}\) is bounded by the sum in
\eqref{eq:cov_pair_scale_bound}, we obtain
\[
\E\norm{\E[\widehat{\bf\Omega}_N^\circ\mid\mathcal O_N]
-\Var\{\bs h_N\mid\mathcal O_N\}}=O(N^{-1}a_N^{-2}).
\]
Markov's inequality then yields
\begin{equation}
\label{eq:signed_sum_conditional_mean_orders}
\norm{\E[\widehat{\bf\Omega}_N^\circ\mid\mathcal O_N]
-\Var\{\bs h_N\mid\mathcal O_N\}}
=O_p(N^{-1}a_N^{-2})=o_p(a_N^{-2}).
\end{equation}

We next bound the term
\(\widehat{\bf\Omega}_N^\circ-\E[\widehat{\bf\Omega}_N^\circ\mid\mathcal O_N]\)
in \eqref{eq:universal_four_term_decomposition}
by expanding its conditional variance entry by entry.
For \(a,b\in\{1,\ldots,q\}\), write \(\Psi_{s,a}\) for coordinate
\(a\) of \(\bs\Psi_s\). Then
\begin{equation}
\label{eq:signed_sum_entry_variance}
\begin{aligned}
\Var\{(\widehat{\bf\Omega}_N^\circ)_{ab}\mid\mathcal O_N\}
&=\frac{1}{\abs{\mathcal S_N}^4}
\sum_{s,t,s',t'\in\mathcal S_N}w_{st}w_{s't'}\operatorname{Cov}
(\Psi_{s,a}\Psi_{t,b},\Psi_{s',a}\Psi_{t',b}\mid\mathcal O_N).
\end{aligned}
\end{equation}
Consider a summand with \(w_{st}w_{s't'}\ne0\), and set
\(p:=\abs{V_s\cup V_t\cup V_{s'}\cup V_{t'}}\).
Because \(w_{st}\ne0\), the pentads \(s\) and \(t\) share at least
one node; because \(w_{s't'}\ne0\), the pentads \(s'\) and \(t'\)
also share at least one node. If
\((E(s)\cup E(t))\cap(E(s')\cup E(t'))=\varnothing\),
the two products in \eqref{eq:signed_sum_entry_variance} are
conditionally independent given \(\mathcal O_N\), so their conditional
covariance is zero. Otherwise, there is a dyad belonging to both
\(E(s)\cup E(t)\) and \(E(s')\cup E(t')\).
The two distinct endpoints of that dyad belong to both
\(V_s\cup V_t\) and \(V_{s'}\cup V_{t'}\), so
\(\abs{(V_s\cup V_t)\cap(V_{s'}\cup V_{t'})}\ge2\).
By the inclusion--exclusion principle,
\[
\begin{aligned}
p&=20-\abs{V_s\cap V_t}-\abs{V_{s'}\cap V_{t'}}
-\abs{(V_s\cup V_t)\cap(V_{s'}\cup V_{t'})}\\
&\le20-1-1-2=16.
\end{aligned}
\]
If \(p=16\), then \(\abs{V_s\cap V_t}=1\) and
\(\abs{V_{s'}\cap V_{t'}}=1\), and the two pairs share exactly two
nodes. Hence \(E(s)\cap E(t)=\varnothing\)
and \(E(s')\cap E(t')=\varnothing\).
The sets \(E(s)\cup E(t)\) and \(E(s')\cup E(t')\) share exactly
one dyad, because they have a common dyad and their node sets
intersect in only two nodes.
Since \(E(s)\cap E(t)=\varnothing\), the shared dyad cannot belong
to both \(E(s)\) and \(E(t)\).
If it belongs to \(E(s)\), then \(E(t)\) is disjoint from
\(E(s)\), \(E(s')\), and \(E(t')\).
If it belongs to \(E(t)\), then \(E(s)\) is disjoint from
\(E(t)\), \(E(s')\), and \(E(t')\).
Conditional dyadic independence therefore makes \(\bs\Psi_t\)
independent of the joint collection
\((\bs\Psi_s,\bs\Psi_{s'},\bs\Psi_{t'})\) in the first case,
and makes \(\bs\Psi_s\) independent of
\((\bs\Psi_t,\bs\Psi_{s'},\bs\Psi_{t'})\) in the second case,
conditional on \(\mathcal O_N\).
Since both \(\bs\Psi_s\) and \(\bs\Psi_t\) have conditional mean
zero given \(\mathcal O_N\), in either case
\(\E[\Psi_{s,a}\Psi_{t,b}\Psi_{s',a}\Psi_{t',b}\mid\mathcal O_N]=0\).
Also \(\E[\Psi_{s,a}\Psi_{t,b}\mid\mathcal O_N]=0\), because
\(E(s)\cap E(t)=\varnothing\). Hence
\(\operatorname{Cov}(\Psi_{s,a}\Psi_{t,b},\Psi_{s',a}\Psi_{t',b}
\mid\mathcal O_N)=0\) when \(p=16\).
Thus every nonzero summand in \eqref{eq:signed_sum_entry_variance}
must have \(p\le15\).
Also, \(p\ge\abs{V_s}=5\), since the union
\(V_s\cup V_t\cup V_{s'}\cup V_{t'}\) contains all five nodes of
the pentad \(s\). Only quadruples with \(5\le p\le15\) can
therefore contribute.

For each contributing quadruple \((s,t,s',t')\), expand the four moments using
Lemma~\ref{lem:poly} and fix monomial indices \(\nu,\mu,\nu',\mu'\), respectively.
Each selected support is connected and spans its pentad's five nodes.
The supports \(E_{s,\nu},E_{t,\mu}\) intersect in nodes, as do
\(E_{s',\nu'},E_{t',\mu'}\). The two unions
\(E_{s,\nu}\cup E_{t,\mu}\) and \(E_{s',\nu'}\cup E_{t',\mu'}\)
also intersect in nodes, so the union of the four supports is
connected on \(p\) nodes. In particular,
\[
\begin{aligned}
\abs{E_{s,\nu}\cup E_{t,\mu}\cup E_{s',\nu'}\cup E_{t',\mu'}}
&\ge p-1,\ \ 
\abs{E_{s,\nu}\cup E_{t,\mu}}+
\abs{E_{s',\nu'}\cup E_{t',\mu'}}\ge p-1.
\end{aligned}
\]
Using \(L_e^2=L_e\) and conditional dyadic independence, each term in
\(\E[\Psi_{s,a}\Psi_{t,b}\Psi_{s',a}\Psi_{t',b}\mid\mathcal O_N]\)
contains one factor \(P_e(\bs\beta_0)\) per dyad in
\(E_{s,\nu}\cup E_{t,\mu}\cup E_{s',\nu'}\cup E_{t',\mu'}\).
Each term in
\(\E[\Psi_{s,a}\Psi_{t,b}\mid\mathcal O_N]
\E[\Psi_{s',a}\Psi_{t',b}\mid\mathcal O_N]\)
contains one factor per dyad in \(E_{s,\nu}\cup E_{t,\mu}\) and one
per dyad in \(E_{s',\nu'}\cup E_{t',\mu'}\), counting common dyads
twice. Both probability products have total exponent at least \(p-1\).
Their incidence multiplicity at each node is at most \(16\),
because each of the four pentad supports contributes at most
four dyads incident to that node.
Equation~\eqref{eq:weighted_fixed_label_bound}, the uniform bound
\(\norm{\bs c_{s,\nu}(\bs\beta_0,\mathcal O_N)}\le C_0\)
from \eqref{eq:poly-coefficient-derivative-bound}, and the finite number
of monomials therefore give
\[
\E\!\left[\left|
\operatorname{Cov}(\Psi_{s,a}\Psi_{t,b},
\Psi_{s',a}\Psi_{t',b}\mid\mathcal O_N)\right|\right]
\le C\rho_N^{p-1}.
\]
For each \(5\le p\le15\), there are at most
\(\binom{N}{p}p^{20}=O(N^p)\) quadruples \((s,t,s',t')\) with
\(p\) distinct nodes, because their set of nodes can be chosen in
\(\binom{N}{p}\) ways and, for each chosen set, each of the twenty
positions in the four pentads has at most \(p\) choices of node.
For a real matrix \(\bf A=(A_{ab})\), the Cauchy--Schwarz inequality gives
\(\norm{\bf A}^2\le\sum_{a,b}A_{ab}^2\).
Taking expectations in \eqref{eq:signed_sum_entry_variance},
using \(\abs{w_{st}w_{s't'}}\le C\) and
\(\abs{\mathcal S_N}^{-4}=O(N^{-20})\), and summing over the fixed
number \(q^2\) of matrix entries yields
\begin{equation}
\label{eq:signed_sum_concentration}
\E\norm{\widehat{\bf\Omega}_N^\circ-
\E[\widehat{\bf\Omega}_N^\circ\mid\mathcal O_N]}^2
\le C\sum_{p=5}^{15}N^{p-20}\rho_N^{p-1}.
\end{equation}
Note that \[
 a_N^4\sum_{p=5}^{15}N^{p-20}\rho_N^{p-1}
 =\begin{cases}
 N^{-1}\sum_{p=5}^{15}(N\rho_N)^{p-15}\le C/N\to 0,
 &N\rho_N\to\infty,\\
 N^{-1}\sum_{p=5}^{15}(N\rho_N)^{p-1}\le C/N\to 0,
 &\rho_N\asymp N^{-1},\\
 N^{-1}\sum_{p=5}^{15}(N\rho_N)^{p-9}
 \le C/(N^5\rho_N^4)\to 0,&N\rho_N\to0.
 \end{cases}
\]
Chebyshev's inequality applied to
\eqref{eq:signed_sum_concentration} thus gives
\begin{equation}
\label{eq:signed_sum_stochastic_orders}
\norm{\widehat{\bf\Omega}_N^\circ-
\E[\widehat{\bf\Omega}_N^\circ\mid\mathcal O_N]}
=o_p(a_N^{-2}).
\end{equation}

Now we control \(-\frac{\varsigma_N}{\abs{\mathcal S_N}}\bs h_N\bs h_N^\top\) in
\eqref{eq:universal_four_term_decomposition}. Only pentads \(t\)
sharing a node with a fixed \(s\) can contribute to \(\varsigma_N\).
For each fixed \(s\), there are \(O(N^4)\) such pentads \(t\).
Since \(\abs{w_{st}}\le C\), we have
\(\abs{\varsigma_N}/\abs{\mathcal S_N}=O(N^{-1})\).
To bound \(\E\norm{\bs h_N}^2\), we now consider arbitrary pairs (\(s,t\))
with \(v:=\abs{V_s\cap V_t}\ge1\).
Fix monomial indices \(\nu,\mu\in\{1,\ldots,\nu_{\max}\}\).
By Lemma~\ref{lem:det-row-supports}, the supports \(E_{s,\nu}\) and
\(E_{t,\mu}\) form connected graphs on \(V_s\) and \(V_t\),
respectively. Since \(v\ge1\), their union is connected
on \(10-v\) nodes and has at least \(9-v\) dyads.
Conditional dyadic independence and
\eqref{eq:weighted_fixed_label_bound} therefore give
\begin{equation}
\label{eq:cov_pair_support_bound}
\E\!\left[\prod_{e\in E_{s,\nu}\cup E_{t,\mu}}L_e\right]
=\E\!\left[\prod_{e\in E_{s,\nu}\cup E_{t,\mu}}P_e(\bs\beta_0)\right]
\le C\rho_N^{\abs{E_{s,\nu}\cup E_{t,\mu}}}
\le C\rho_N^{9-v}.
\end{equation}
We obtain
\begin{equation}
\label{eq:h_second_moment_orders}
\E\norm{\bs h_N}^2
=\frac{1}{\abs{\mathcal S_N}^2}
\sum_{\substack{s,t\in\mathcal S_N:\ E(s)\cap E(t)\ne\varnothing}}
\E[\bs\Psi_t^\top\bs\Psi_s]
\le C\sum_{v=2}^5N^{-v}\rho_N^{9-v}=O(a_N^{-2}),
\end{equation}
where the first equality follows by iterated expectations from
\(\E[\bs\Psi_t^\top\bs\Psi_s\mid\mathcal O_N]=0\) whenever
\(E(s)\cap E(t)=\varnothing\).
For the inequality, we take traces and expectations in
\eqref{eq:cov_pair_monomial_expansion} and use the  bound
\(\norm{\bs c_{s,\nu}(\bs\beta_0,\mathcal O_N)}\le C_0\)
from \eqref{eq:poly-coefficient-derivative-bound},
together with \eqref{eq:cov_pair_support_bound} and
\eqref{eq:cov_pentad_pair_count}.
The final order follows from \eqref{eq:cov_pair_scale_bound}.
By Markov's inequality in \eqref{eq:h_second_moment_orders} and
\(\norm{\bs h_N\bs h_N^\top}=\norm{\bs h_N}^2\),
\begin{equation}
\label{eq:sample_centering_orders}
\norm{\frac{\varsigma_N}{\abs{\mathcal S_N}}\bs h_N\bs h_N^\top}
=O_p(N^{-1}a_N^{-2})=o_p(a_N^{-2}).
\end{equation}

To bound \((10/N)\sum_{G\in\mathcal G}\widehat{\bf\Omega}_N(G)\) in
\eqref{eq:universal_four_term_decomposition},
fix \(G\in\mathcal G\). The matrix \([\zeta_G(s,t)]_{s,t\in\mathcal S_N}\)
is symmetric, has nonnegative entries, and has a constant row sum by
symmetry across node labels. The similar calculation in
\eqref{eq:signed_centering_identity}, with \(w_{st}\) replaced by
\(\zeta_G(s,t)\), applies to the moments at every \(\bs\beta\in\mathcal B\)
and gives
\begin{equation}
\label{eq:graph_count_centering_identity}
\widehat{\bf\Omega}_N(G;\bs\beta)
=\frac{1}{\abs{\mathcal S_N}^2}\sum_{s,t\in\mathcal S_N}
\zeta_G(s,t)\left\{
\bs\Psi_s(\bs\beta)\bs\Psi_t(\bs\beta)^\top
-\bs h_N(\bs\beta)\bs h_N(\bs\beta)^\top\right\}.
\end{equation}
The Gram representation \eqref{eq:graph_count_gram} implies
\(\widehat{\bf\Omega}_N(G;\bs\beta)\succeq0\).
The subtracted matrix in \eqref{eq:graph_count_centering_identity} is
also positive semidefinite because \(\sum_{s,t}\zeta_G(s,t)\ge0\).
Taking traces in \eqref{eq:graph_count_centering_identity}, dropping
the nonnegative trace of the subtracted matrix, and using
\(\E[\bs\Psi_t^\top\bs\Psi_s\mid\mathcal O_N]=0\) whenever
\(E(s)\cap E(t)=\varnothing\), together with
\eqref{eq:cov_pentad_pair_count}, \eqref{eq:cov_pair_scale_bound},
and \eqref{eq:cov_pair_support_bound} gives
\begin{equation}
\label{eq:graph_count_trace_bound}
\begin{aligned}
\E\operatorname{tr}\{\widehat{\bf\Omega}_N(G;\bs\beta_0)\}
&\le\frac{1}{\abs{\mathcal S_N}^2}\sum_{s,t\in\mathcal S_N}
\zeta_G(s,t)\left|\E[\bs\Psi_t^\top\bs\Psi_s]\right|\le C\sum_{v=2}^5N^{-v}\rho_N^{9-v}=O(a_N^{-2}).
\end{aligned}
\end{equation}
Since the operator norm of a positive semidefinite matrix is at most
its trace, Markov's inequality and the five graphs in \(\mathcal G\)
yield
\begin{equation}
\label{eq:ten_over_N_adjustment}
a_N^2\norm{\frac{10}{N}\sum_{G\in\mathcal G}
\widehat{\bf\Omega}_N(G;\bs\beta_0)}
\le\frac{10}{N}\sum_{G\in\mathcal G}
a_N^2\operatorname{tr}\{\widehat{\bf\Omega}_N(G;\bs\beta_0)\}
=O_p(N^{-1})=o_p(1).
\end{equation}
Substituting \eqref{eq:signed_sum_conditional_mean_orders},
\eqref{eq:signed_sum_stochastic_orders},
\eqref{eq:sample_centering_orders}, and
\eqref{eq:ten_over_N_adjustment} into
\eqref{eq:universal_four_term_decomposition} and applying the triangle
inequality gives
\(a_N^2\norm{\bOmegaUniv(\bs\beta_0)
-\Var\{\bs h_N(\bs\beta_0)\mid\mathcal O_N\}}=o_p(1)\),
which is the asserted approximation in each regime.
\end{proof}

\begin{lem}
\label{lem:cov_consistency}
Under Assumptions~\ref{ass:dyad_ind}, \ref{ass:bounded_design},
\ref{ass:sparse_envelope}, and \ref{ass:regimeSigma},
the following hold: \textnormal{(i)} if \(N\rho_N\to\infty\), then
\(N^2\rho_N^{-7}\bOmegaUniv(\bs\beta_0)
\xrightarrow{P}\bf\Sigma_{\mathrm{D}}\); \textnormal{(ii)} if
\(\rho_N\asymp N^{-1}\), then
\(N^9\bOmegaUniv(\bs\beta_0)
\xrightarrow{P}\bf\Sigma_{\mathrm{S}}\); and \textnormal{(iii)} if
\(N\rho_N\to0\) and \(N^5\rho_N^4\to\infty\), then
\(N^5\rho_N^{-4}\bOmegaUniv(\bs\beta_0)
\xrightarrow{P}\bf\Sigma_{\mathrm{US}}\).
\end{lem}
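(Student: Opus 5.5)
The plan is to combine Lemma~\ref{lem:graph_count_covariance_approximation} with the convergence of the normalized conditional variance of the leading projection. In each regime, write \(a_N\) for the normalization in Theorem~\ref{thm:gmm}. Then \(a_N^2=N^2\rho_N^{-7}\), \(N^9\), or \(N^5\rho_N^{-4}\), which are exactly the scalings in (i)--(iii). I would use the triangle inequality to write
\[
\bigl\|a_N^2\bOmegaUniv(\bs\beta_0)-\bf\Sigma\bigr\|
\le a_N^2\bigl\|\bOmegaUniv(\bs\beta_0)-\Var\{\bs h_N(\bs\beta_0)\mid\mathcal O_N\}\bigr\|
+\bigl\|a_N^2\Var\{\bs h_N(\bs\beta_0)\mid\mathcal O_N\}-\bf\Sigma\bigr\|.
\]
Lemma~\ref{lem:graph_count_covariance_approximation} shows that the first term is \(o_p(1)\) in every regime. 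The first term needs no further work.

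For the second term, I would use the exact Hoeffding decomposition \eqref{eq:exactdecomp} together with the conditional orthogonality in Lemma~\ref{lem:dyadic_hoeffding_orthogonality}(iii). Since \(\mathcal O_N=\sigma(\bf X,\bs\Gamma)\), these give
\[
\Var\{\bs h_N(\bs\beta_0)\mid\mathcal O_N\}
=\Var(\bs\Pi_N^{\bullet}\mid\mathcal O_N)+\Var(\bs R_N^{\bullet}\mid\mathcal O_N),
\]
where \(\bullet\in\{\mathrm D,\mathrm S,\mathrm{US}\}\) and \(\bs R_N^{\bullet}\) is the remainder defined in the proof of Theorem~\ref{thm:regimeCLT}. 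The cross term vanishes because the leading and remainder parts are sums of \(\bs\phi_{A,N}\) over disjoint collections of dyad sets. Each projection has conditional mean zero, so both conditional variances equal conditional second moments.

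Assumption~\ref{ass:regimeSigma} gives \(a_N^2\Var(\bs\Pi_N^{\bullet}\mid\mathcal O_N)\xrightarrow{P}\bf\Sigma\) directly. For the remainder, \(\Var(\bs R_N^{\bullet}\mid\mathcal O_N)\) is positive semidefinite, so its operator norm is at most its trace \(\E[\norm{\bs R_N^{\bullet}}^2\mid\mathcal O_N]\). The expectation of that trace is \(\E\norm{\bs R_N^{\bullet}}^2\). By \eqref{eq:regime_remainder_second_moments}, \(a_N^2\E\norm{\bs R_N^{\bullet}}^2=o(1)\) in each regime. Markov's inequality then gives \(a_N^2\norm{\Var(\bs R_N^{\bullet}\mid\mathcal O_N)}=o_p(1)\).

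Adding the three pieces proves (i)--(iii). The step needing the most care is the remainder-variance argument. The point is that \eqref{eq:regime_remainder_second_moments} controls unconditional second moments, whereas we need conditional ones, so we must pass through the trace and Markov's inequality as above. Part (iii) also requires \(N^5\rho_N^4\to\infty\), but only so that Lemma~\ref{lem:graph_count_covariance_approximation} applies; all the heavy lifting is in that lemma, which may be assumed.
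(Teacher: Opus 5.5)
Your proposal is correct and matches the paper's proof essentially step for step. Both split the conditional variance orthogonally into the leading projection plus the remainder, and both use Assumption~\ref{ass:regimeSigma} for the leading part. The remainder is controlled by the trace bound and Markov's inequality applied to \eqref{eq:regime_remainder_second_moments}, and Lemma~\ref{lem:graph_count_covariance_approximation} transfers the result from $\Var\{\bs h_N(\bs\beta_0)\mid\mathcal O_N\}$ to $\bOmegaUniv(\bs\beta_0)$.
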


\begin{proof}[Proof of Lemma~\ref{lem:cov_consistency}]
Let \((a_N,\bf\Sigma)\) correspond to the regime in
Theorem~\ref{thm:gmm}, and let
\(\mathrm r\in\{\mathrm D,\mathrm S,\mathrm{US}\}\) denote that regime.
The exact decomposition and conditional orthogonality in
Lemma~\ref{lem:dyadic_hoeffding_orthogonality} imply that
\(\bs\Pi_N^{\mathrm r}\) and
\(\bs h_N(\bs\beta_0)-\bs\Pi_N^{\mathrm r}\) are conditionally
centered and orthogonal given \(\mathcal O_N\). Hence
\begin{equation}
\label{eq:cov_target_orthogonal_decomposition}
\Var\{\bs h_N(\bs\beta_0)\mid\mathcal O_N\}
=\Var\{\bs\Pi_N^{\mathrm r}\mid\mathcal O_N\}
+\Var\{\bs h_N(\bs\beta_0)-\bs\Pi_N^{\mathrm r}\mid\mathcal O_N\}.
\end{equation}
The trace bound for positive semidefinite matrices and the conditional
centering of \(\bs h_N(\bs\beta_0)-\bs\Pi_N^{\mathrm r}\) give
\[
\norm{\Var\{\bs h_N(\bs\beta_0)-\bs\Pi_N^{\mathrm r}\mid\mathcal O_N\}}
\le\E[\norm{\bs h_N(\bs\beta_0)-\bs\Pi_N^{\mathrm r}}^2\mid\mathcal O_N].
\]
Equation~\eqref{eq:regime_remainder_second_moments} gives
\(a_N^2\E\norm{\bs h_N(\bs\beta_0)-\bs\Pi_N^{\mathrm r}}^2
\allowbreak=o(1)\).
For every \(\varepsilon>0\), Markov's inequality therefore gives
\[
\Pr\!\left\{a_N^2
\E[\norm{\bs h_N(\bs\beta_0)-\bs\Pi_N^{\mathrm r}}^2
\mid\mathcal O_N]>\varepsilon\right\}
\le\frac{a_N^2}{\varepsilon}
\E\norm{\bs h_N(\bs\beta_0)-\bs\Pi_N^{\mathrm r}}^2=o(1).
\]
Substituting into \eqref{eq:cov_target_orthogonal_decomposition} and applying
Assumption~\ref{ass:regimeSigma} yields
\begin{equation}
\label{eq:cov_target_three_regimes}
a_N^2\Var\{\bs h_N(\bs\beta_0)\mid\mathcal O_N\}
=a_N^2\Var\{\bs\Pi_N^{\mathrm r}\mid\mathcal O_N\}+o_p(1)
\xrightarrow{P}\bf\Sigma.
\end{equation}
Lemma~\ref{lem:graph_count_covariance_approximation} and
\eqref{eq:cov_target_three_regimes} now give
\(a_N^2\bOmegaUniv(\bs\beta_0)
=a_N^2\Var\{\bs h_N(\bs\beta_0)\mid\mathcal O_N\}+o_p(1)
\xrightarrow{P}\bf\Sigma\).
\end{proof}

We next record the covariance equicontinuity step used to replace
\(\bs\beta_0\) by \(\widehat{\bs{\beta}}_{\mathrm{GMM}}\) in the
tractable covariance estimator.

\begin{lem}
\label{lem:cov_equicont}
Suppose Assumptions~\ref{ass:dyad_ind}, \ref{ass:bounded_design}, and
\ref{ass:sparse_envelope} hold. Let \(\bOmegaUniv(\bs\beta)\) be defined
as in \eqref{eq:universalOmega}, and suppose \(N^5\rho_N^4\to\infty\).
Let \(a_N\) be defined as in Theorem~\ref{thm:gmm} under each of the three regimes.
If a random sequence \(\widetilde{\bs\beta}_N\in\mathcal B\)
satisfies
\(a_N\rho_N^4\norm{\widetilde{\bs\beta}_N-\bs\beta_0}=O_p(1)\),
then \(a_N^2\norm{\bOmegaUniv(\widetilde{\bs\beta}_N)
-\bOmegaUniv(\bs\beta_0)}\xrightarrow{P}0\).
\end{lem}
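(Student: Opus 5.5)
We reduce to a mean-value bound for each graph component and control it with the same pair-counting estimates used in Lemma~\ref{lem:graph_count_covariance_approximation}. By \eqref{eq:universalOmega}, \(\bOmegaUniv(\bs\beta)\) is a fixed finite signed combination of \(\widehat{\bf\Omega}_N(G;\bs\beta)\), \(G\in\mathcal G\). It therefore suffices to show, for each \(G\in\mathcal G\),
\[
a_N^2\norm{\widehat{\bf\Omega}_N(G;\widetilde{\bs\beta}_N)-\widehat{\bf\Omega}_N(G;\bs\beta_0)}\xrightarrow{P}0.
\]

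The first step is to expand with \eqref{eq:graph_count_centering_identity}:
\[
\widehat{\bf\Omega}_N(G;\bs\beta)=\abs{\mathcal S_N}^{-2}\sum_{s,t}\zeta_G(s,t)\bs\Psi_s(\bs\beta)\bs\Psi_t(\bs\beta)^\top-\frac{\varsigma_{G,N}}{\abs{\mathcal S_N}}\bs h_N(\bs\beta)\bs h_N(\bs\beta)^\top .
\]
Here \(\varsigma_{G,N}/\abs{\mathcal S_N}=O(N^{-1})\).

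For the centering term, \eqref{eq:h_second_moment_orders} gives \(\norm{\bs h_N(\bs\beta_0)}=O_p(a_N^{-1})\). The second-derivative bounds in \eqref{eq:cov_equicont_moment_derivative_bound} give
\[
\norm{\bs h_N(\widetilde{\bs\beta}_N)-\bs h_N(\bs\beta_0)}\le \sup_{\mathcal R}\norm{\dot{\bs h}_N}\,\norm{\widetilde{\bs\beta}_N-\bs\beta_0}=O_p(\rho_N^4)\,O_p\bigl((a_N\rho_N^4)^{-1}\bigr)=O_p(a_N^{-1}).
\]
Hence \(\norm{\bs h_N(\widetilde{\bs\beta}_N)}=O_p(a_N^{-1})\), and the centering term at either point is \(O_p(N^{-1}a_N^{-2})=o_p(a_N^{-2})\).

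For the main quadratic term \(\bf A_N(\bs\beta):=\abs{\mathcal S_N}^{-2}\sum_{s,t}\zeta_G(s,t)\bs\Psi_s(\bs\beta)\bs\Psi_t(\bs\beta)^\top\), apply the mean-value theorem along the segment from \(\bs\beta_0\) to \(\widetilde{\bs\beta}_N\). This gives
\[
\norm{\bf A_N(\widetilde{\bs\beta}_N)-\bf A_N(\bs\beta_0)}\le \norm{\widetilde{\bs\beta}_N-\bs\beta_0}\,\sup_{\bs\beta\in\mathcal R}\norm{\nabla_{\bs\beta}\bf A_N(\bs\beta)},
\]
with
\[
\norm{\nabla_{\bs\beta}\bf A_N}\le 2\abs{\mathcal S_N}^{-2}\sum_{s,t}\zeta_G(s,t)\sup_{\mathcal R}\norm{\partial\bs\Psi_s}\sup_{\mathcal R}\norm{\bs\Psi_t}.
\]
Bound the expectation of this envelope with Lemma~\ref{lem:poly}: the coefficient derivatives are uniformly bounded by \eqref{eq:poly-coefficient-derivative-bound}, and the link products do not depend on \(\bs\beta\). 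As in \eqref{eq:cov_pair_support_bound} and \eqref{eq:cov_pentad_pair_count}, pairs with \(\zeta_G(s,t)\neq0\) share \(v\ge1\) nodes and contribute \(O(N^{10-v}\rho_N^{9-v})\). The envelope is therefore \(O_p\bigl(\sum_{v=1}^5N^{-v}\rho_N^{9-v}\bigr)\).

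The main obstacle is the \(v=1\) term. At \(\bs\beta_0\) it vanishes in expectation because \(E(s)\cap E(t)=\varnothing\) and the moments are centered. Away from \(\bs\beta_0\), the derivatives \(\partial\bs\Psi_s\) are not centered, so this term survives. For \(v\ge2\), \eqref{eq:cov_pair_scale_bound} gives order \(a_N^{-2}\); multiplied by \(\norm{\widetilde{\bs\beta}_N-\bs\beta_0}=O_p((a_N\rho_N^4)^{-1})\), this is \(o_p(a_N^{-2})\), since \(a_N\rho_N^4\to\infty\).

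For \(v=1\), I would not rely on the crude bound \(N^{-1}\rho_N^8\). Instead, write the \(v=1\) part as a product of sums over the pentads through a shared node \(u\):
\[
\sum_{u}\Bigl(\abs{\mathcal S_N}^{-1}\sum_{s\ni u}\bs\Psi_s(\bs\beta)\Bigr)\Bigl(\abs{\mathcal S_N}^{-1}\sum_{t\ni u}\bs\Psi_t(\bs\beta)\Bigr)^\top,
\]
up to bounded multiplicities, which is exactly the \(G_{\mathrm{node}}\) Gram form \eqref{eq:graph_count_gram}. Each node-level sum is \(\bs m_u(\bs\beta)=\bs m_u(\bs\beta_0)+O_p(N^{-1}\rho_N^4)\norm{\bs\beta-\bs\beta_0}\), where the derivative bound holds uniformly in \(u\) by the rooted estimate \eqref{eq:weighted_rooted_sum}.

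Expanding the product, the cross terms are bounded by Cauchy--Schwarz. The \(\bs\beta_0\) part is \(O_p(a_N^{-1})\) in the Gram sense, by \eqref{eq:graph_count_trace_bound}. The drift part squared is
\[
N\cdot N^{-2}\rho_N^8\norm{\widetilde{\bs\beta}_N-\bs\beta_0}^2=O_p\bigl(N^{-1}a_N^{-2}\bigr).
\]
The cross term is then \(O_p(a_N^{-2}N^{-1/2})=o_p(a_N^{-2})\). Combining the bounds over the five graphs proves the lemma.
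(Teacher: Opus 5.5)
Your argument reaches the right conclusion, and its decisive step is the same mechanism as the paper's: controlling the node-level term through the Gram form and Cauchy--Schwarz, so that the increment enters quadratically. You get there by a longer hybrid route, however, and one step is misjustified.

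\textbf{Comparison of routes.} The paper applies that mechanism to every $G\in\mathcal G$ from the start. Subtracting \eqref{eq:graph_count_gram} at $\bs\beta$ and $\bs\beta_0$, with centered $\widehat{\bs\Psi}_s$, gives $\norm{\widehat{\bf\Omega}_N(G;\bs\beta)-\widehat{\bf\Omega}_N(G;\bs\beta_0)}\le 2[\operatorname{tr}\widehat{\bf\Omega}_N(G;\bs\beta_0)]^{1/2}I_N^{1/2}+I_N$. The pair count in \eqref{eq:cov_equicont_pair_sum_orders} then gives $I_N\le C\norm{\bs\beta-\bs\beta_0}^2\,O_p(a_N^{-2}+N^{-1}\rho_N^8)$. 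Because the increment is squared, the $v=1$ pairs contribute only $a_N^2\norm{\widetilde{\bs\beta}_N-\bs\beta_0}^2N^{-1}\rho_N^8=O_p(N^{-1})$. So the paper needs no linear mean-value step, no split by $v$, and no separate treatment of the centering term.

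Your linear bound is valid for all pairs with $v\ge2$, hence for the four graphs whose counts require a common dyad. You also correctly diagnose that it fails for $v=1$ pairs when $\rho_N\not\to0$, where it gives only $O_p(\rho_N^{1/2})$ after scaling. The reason is that the conditional mean of such a pair is a product of two factors, each vanishing at $\bs\beta_0$. Your route therefore shows why the Gram structure is essential; the paper's is shorter and uniform in $G$.

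\textbf{Repairs needed in your patch.}

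First, the $v=1$ part of $\mathbf A_N$ is not itself a Gram form. $\sum_u\bs m_u\bs m_u^\top$ is the entire $G_{\mathrm{node}}$ term, including pairs with $v\ge2$ weighted by $v$. Apply Cauchy--Schwarz to the whole node term, or subtract those pairs, which your linear bound already controls.

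Second, the claim $\bs m_u(\bs\beta)-\bs m_u(\bs\beta_0)=O_p(N^{-1}\rho_N^4)\norm{\bs\beta-\bs\beta_0}$ uniformly in $u$ does not follow from \eqref{eq:weighted_rooted_sum}. That result concerns sums of $P_e$ rather than $L_e$. With one root and four free nodes it gives only $o_p(\rho_N^4)$ for the maximum, a factor of order $N$ too weak. Inserted into $N\max_u(\cdot)^2$, it yields $o_p(Na_N^{-2})$, which is useless.

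Replace the maximum by the sum. With $\Delta_u:=\bs m_u(\widetilde{\bs\beta}_N)-\bs m_u(\bs\beta_0)$, \eqref{eq:cov_pair_support_bound} and \eqref{eq:cov_pentad_pair_count} give
\[
\sum_u\norm{\Delta_u}^2\le C\norm{\widetilde{\bs\beta}_N-\bs\beta_0}^2\abs{\mathcal S_N}^{-2}\sum_{s,t}\abs{V_s\cap V_t}\sum_{\nu,\mu}\prod_{e\in E_{s,\nu}\cup E_{t,\mu}}L_e
=\norm{\widetilde{\bs\beta}_N-\bs\beta_0}^2\,O_p\bigl(N^{-1}\rho_N^8+a_N^{-2}\bigr).
\]
This is $O_p(N^{-1}a_N^{-2})+o_p(a_N^{-2})$, which is exactly the paper's estimate. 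With it, your drift and cross terms are $o_p(a_N^{-2})$ as claimed, and the argument closes.
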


\begin{proof}
We first treat each
\(\widehat{\bf\Omega}_N(G;\bs\beta)\), \(G\in\mathcal G\), separately.
Fix \(G\). All sums over \(H\) range over its labeled copies, and
all sums over \(s,t\) range over \(\mathcal S_N\).
For every \(\bs\beta\in\mathcal B\), subtracting the
representations \eqref{eq:graph_count_gram} at \(\bs\beta\) and
\(\bs\beta_0\) and expanding gives
\begin{equation}
\label{eq:cov_equicont_gram_difference_expansion}
\begin{aligned}
&\widehat{\bf\Omega}_N(G;\bs\beta)
-\widehat{\bf\Omega}_N(G;\bs\beta_0)\\
&=\frac{1}{\abs{\mathcal S_N}^2}\sum_H\Biggl[
\left(\sum_{s:\,H\subseteq(V_s,E(s))}
\widehat{\bs\Psi}_s(\bs\beta_0)\right)
\left(\sum_{t:\,H\subseteq(V_t,E(t))}
\{\widehat{\bs\Psi}_t(\bs\beta)
-\widehat{\bs\Psi}_t(\bs\beta_0)\}\right)^\top\\
&\qquad+
\left(\sum_{s:\,H\subseteq(V_s,E(s))}
\{\widehat{\bs\Psi}_s(\bs\beta)
-\widehat{\bs\Psi}_s(\bs\beta_0)\}\right)
\left(\sum_{t:\,H\subseteq(V_t,E(t))}
\widehat{\bs\Psi}_t(\bs\beta_0)\right)^\top\\
&\qquad+
\left(\sum_{s:\,H\subseteq(V_s,E(s))}
\{\widehat{\bs\Psi}_s(\bs\beta)
-\widehat{\bs\Psi}_s(\bs\beta_0)\}\right)
\left(\sum_{t:\,H\subseteq(V_t,E(t))}
\{\widehat{\bs\Psi}_t(\bs\beta)
-\widehat{\bs\Psi}_t(\bs\beta_0)\}\right)^\top
\Biggr].
\end{aligned}
\end{equation}
For the first term in \eqref{eq:cov_equicont_gram_difference_expansion},
the triangle inequality and the Cauchy--Schwarz inequality give
\begin{equation}
\label{eq:cov_equicont_cross_product_bound}
\begin{aligned}
&\left\|
\frac{1}{\abs{\mathcal S_N}^2}\sum_H
\left(\sum_{s:\,H\subseteq(V_s,E(s))}
\widehat{\bs\Psi}_s(\bs\beta_0)\right)
\left(\sum_{t:\,H\subseteq(V_t,E(t))}
\{\widehat{\bs\Psi}_t(\bs\beta)
-\widehat{\bs\Psi}_t(\bs\beta_0)\}\right)^\top
\right\|\\
&\quad\le\frac{1}{\abs{\mathcal S_N}^2}\sum_H
\left\|\sum_{s:\,H\subseteq(V_s,E(s))}
\widehat{\bs\Psi}_s(\bs\beta_0)\right\|
\left\|\sum_{t:\,H\subseteq(V_t,E(t))}
\{\widehat{\bs\Psi}_t(\bs\beta)
-\widehat{\bs\Psi}_t(\bs\beta_0)\}\right\|\\
&\quad\le\frac{1}{\abs{\mathcal S_N}^2}
\left[\sum_H
\left\|\sum_{s:\,H\subseteq(V_s,E(s))}
\widehat{\bs\Psi}_s(\bs\beta_0)\right\|^2\right]^{1/2}
\left[\sum_H
\left\|\sum_{t:\,H\subseteq(V_t,E(t))}
\{\widehat{\bs\Psi}_t(\bs\beta)
-\widehat{\bs\Psi}_t(\bs\beta_0)\}\right\|^2\right]^{1/2}\\
&\quad=
\big[\operatorname{tr}\{\widehat{\bf\Omega}_N(G;\bs\beta_0)\}\big]^{1/2}
\left[
\frac{1}{\abs{\mathcal S_N}^2}\sum_H
\left\|\sum_{t:\,H\subseteq(V_t,E(t))}
\{\widehat{\bs\Psi}_t(\bs\beta)
-\widehat{\bs\Psi}_t(\bs\beta_0)\}\right\|^2
\right]^{1/2},
\end{aligned}
\end{equation}
where the equality follows from \eqref{eq:graph_count_gram}.
The second term in \eqref{eq:cov_equicont_gram_difference_expansion}
is the transpose of the first, so it has the same operator norm.
The third term is positive semidefinite, so its operator norm is
at most its trace:
\[
\begin{aligned}
&\left\|
\frac{1}{\abs{\mathcal S_N}^2}\sum_H
\left(\sum_{s:\,H\subseteq(V_s,E(s))}
\{\widehat{\bs\Psi}_s(\bs\beta)
-\widehat{\bs\Psi}_s(\bs\beta_0)\}\right)
\left(\sum_{t:\,H\subseteq(V_t,E(t))}
\{\widehat{\bs\Psi}_t(\bs\beta)
-\widehat{\bs\Psi}_t(\bs\beta_0)\}\right)^\top
\right\|\\
&\quad\le
\frac{1}{\abs{\mathcal S_N}^2}\sum_H
\left\|\sum_{s:\,H\subseteq(V_s,E(s))}
\{\widehat{\bs\Psi}_s(\bs\beta)
-\widehat{\bs\Psi}_s(\bs\beta_0)\}\right\|^2.
\end{aligned}
\]
Applying the triangle inequality to the three terms in
\eqref{eq:cov_equicont_gram_difference_expansion} and using
\eqref{eq:cov_equicont_cross_product_bound} for the first two gives
\begin{equation}
\label{eq:cov_equicont_gram_difference_bound}
\begin{aligned}
&\norm{\widehat{\bf\Omega}_N(G;\bs\beta)
-\widehat{\bf\Omega}_N(G;\bs\beta_0)}\\
&\quad\le
2\big[\operatorname{tr}\{\widehat{\bf\Omega}_N(G;\bs\beta_0)\}\big]^{1/2}
\left[
\frac{1}{\abs{\mathcal S_N}^2}\sum_H
\left\|\sum_{s:\,H\subseteq(V_s,E(s))}
\{\widehat{\bs\Psi}_s(\bs\beta)
-\widehat{\bs\Psi}_s(\bs\beta_0)\}\right\|^2
\right]^{1/2}\\
&\qquad+
\frac{1}{\abs{\mathcal S_N}^2}\sum_H
\left\|\sum_{s:\,H\subseteq(V_s,E(s))}
\{\widehat{\bs\Psi}_s(\bs\beta)
-\widehat{\bs\Psi}_s(\bs\beta_0)\}\right\|^2.
\end{aligned}
\end{equation}
Also,
\(a_N^2\operatorname{tr}\{\widehat{\bf\Omega}_N(G;\bs\beta_0)\}=O_p(1)\)
by \eqref{eq:graph_count_trace_bound} and Markov's inequality.
For the sum of squared norms in
\eqref{eq:cov_equicont_gram_difference_bound}, we have
\begin{equation}
\label{eq:cov_equicont_centering_identity}
\begin{aligned}
&\frac{1}{\abs{\mathcal S_N}^2}\sum_H
\left\|\sum_{s:\,H\subseteq(V_s,E(s))}
\{\widehat{\bs\Psi}_s(\bs\beta)-\widehat{\bs\Psi}_s(\bs\beta_0)\}\right\|^2\\
&\quad=\frac{1}{\abs{\mathcal S_N}^2}\sum_{s,t}\zeta_G(s,t)
\{\bs\Psi_s(\bs\beta)-\bs\Psi_s(\bs\beta_0)\}^\top
\{\bs\Psi_t(\bs\beta)-\bs\Psi_t(\bs\beta_0)\}\\
&\qquad-
\frac{\sum_{s,t}\zeta_G(s,t)}{\abs{\mathcal S_N}^2}
\norm{\bs h_N(\bs\beta)-\bs h_N(\bs\beta_0)}^2\\
&\quad\le\frac{1}{\abs{\mathcal S_N}^2}\sum_{s,t}\zeta_G(s,t)
\norm{\bs\Psi_s(\bs\beta)-\bs\Psi_s(\bs\beta_0)}
\norm{\bs\Psi_t(\bs\beta)-\bs\Psi_t(\bs\beta_0)},
\end{aligned}
\end{equation}
where the equality follows by applying
\eqref{eq:graph_count_centering_identity} to
\(\bs\Psi_s(\bs\beta)-\bs\Psi_s(\bs\beta_0)\), whose sample mean
is \(\bs h_N(\bs\beta)-\bs h_N(\bs\beta_0)\), and taking traces.
Since \(\zeta_G(s,t)\ge0\), the term subtracted in
\eqref{eq:cov_equicont_centering_identity} is nonnegative.
The inequality follows by omitting that subtraction and applying
the Cauchy--Schwarz inequality to each inner product.

To control the right-hand side of
\eqref{eq:cov_equicont_centering_identity}, we bound
\(\norm{\bs\Psi_s(\bs\beta)-\bs\Psi_s(\bs\beta_0)}\)
in terms of \(\norm{\bs\beta-\bs\beta_0}\). Fix a compact rectangle \(\mathcal R\) whose interior contains
\(\mathcal B\). The line segment between \(\bs\beta_0\) and any
\(\bs\beta\in\mathcal B\) lies in \(\mathcal R\). Applying the
bound on
\(\norm{\partial_{\bs\beta}^{\alpha}\bs c_{s,\nu}(\bs\beta,\mathcal O_N)}\)
for \(\abs\alpha=1\) in \eqref{eq:poly-coefficient-derivative-bound}
on \(\mathcal R\) to the representation
\(\bs\Psi_s(\bs\beta)=\sum_{\nu=1}^{\nu_{\max}}
\bs c_{s,\nu}(\bs\beta,\mathcal O_N)\prod_{e\in E_{s,\nu}}L_e\)
in Lemma~\ref{lem:poly} gives
\[
\norm{\bs\Psi_s(\bs\beta)-\bs\Psi_s(\bs\beta_0)}
\le C\norm{\bs\beta-\bs\beta_0}
\sum_{\nu=1}^{\nu_{\max}}\prod_{e\in E_{s,\nu}}L_e,
\qquad \bs\beta\in\mathcal B.
\]
Substituting into \eqref{eq:cov_equicont_centering_identity} and using
\(L_e^2=L_e\) gives the bound
\begin{equation}
\label{eq:cov_equicont_increment_bound}
\begin{aligned}
&\frac{1}{\abs{\mathcal S_N}^2}\sum_H
\left\|\sum_{s:\,H\subseteq(V_s,E(s))}
\{\widehat{\bs\Psi}_s(\bs\beta)-\widehat{\bs\Psi}_s(\bs\beta_0)\}\right\|^2\le\frac{C\norm{\bs\beta-\bs\beta_0}^2}{\abs{\mathcal S_N}^2}
\sum_{s,t}\zeta_G(s,t)
\sum_{\nu,\mu=1}^{\nu_{\max}}\prod_{e\in E_{s,\nu}\cup E_{t,\mu}}L_e.
\end{aligned}
\end{equation}
The sum over \((s,t,\nu,\mu)\) is nonnegative and does not depend on
\(\bs\beta\). Since \(\zeta_G(s,t)=0\) for disjoint node sets,
only pairs \((s,t)\) for which \(v:=\abs{V_s\cap V_t}\ge1\) contribute.
Equations~\eqref{eq:cov_pair_support_bound} and
\eqref{eq:cov_pentad_pair_count}, the uniformly bounded counts
\(\zeta_G(s,t)\), and the finite number of monomials give
\[
\begin{aligned}
\E\!\left[\frac{1}{\abs{\mathcal S_N}^2}\sum_{s,t}\zeta_G(s,t)
\sum_{\nu,\mu=1}^{\nu_{\max}}
\prod_{e\in E_{s,\nu}\cup E_{t,\mu}}L_e\right]
&\le C\sum_{v=1}^5N^{-v}\rho_N^{9-v}=O(a_N^{-2}+N^{-1}\rho_N^8),
\end{aligned}
\]
where the equality uses \eqref{eq:cov_pair_scale_bound} for
\(2\le v\le5\), retaining the \(v=1\) term explicitly.
Markov's inequality consequently gives
\begin{equation}
\label{eq:cov_equicont_pair_sum_orders}
\frac{1}{\abs{\mathcal S_N}^2}\sum_{s,t}\zeta_G(s,t)
\sum_{\nu,\mu=1}^{\nu_{\max}}\prod_{e\in E_{s,\nu}\cup E_{t,\mu}}L_e
=O_p(a_N^{-2}+N^{-1}\rho_N^8).
\end{equation}
The bound \eqref{eq:cov_equicont_increment_bound} holds simultaneously for
every \(\bs\beta\in\mathcal B\), and the sum in
\eqref{eq:cov_equicont_pair_sum_orders} does not depend on \(\bs\beta\).
We may therefore substitute
\(\bs\beta=\widetilde{\bs\beta}_N\). Moreover, by the definition of \(a_N\), we have 
\(a_N\rho_N^4\to\infty\). Using
\(a_N\rho_N^4\norm{\widetilde{\bs\beta}_N-\bs\beta_0}=O_p(1)\) gives
\begin{equation}
\label{eq:cov_equicont_gram_increment}
\begin{aligned}
&\frac{a_N^2}{\abs{\mathcal S_N}^2}\sum_H
\left\|\sum_{s:\,H\subseteq(V_s,E(s))}
\{\widehat{\bs\Psi}_s(\widetilde{\bs\beta}_N)
-\widehat{\bs\Psi}_s(\bs\beta_0)\}\right\|^2\\
&\quad\le
a_N^2\norm{\widetilde{\bs\beta}_N-\bs\beta_0}^2
O_p(a_N^{-2}+N^{-1}\rho_N^8)
=O_p\!\left((a_N\rho_N^4)^{-2}+N^{-1}\right)=o_p(1).
\end{aligned}
\end{equation}
Multiplying \eqref{eq:cov_equicont_gram_difference_bound} by \(a_N^2\)
and using \eqref{eq:cov_equicont_gram_increment} together with
\(a_N^2\operatorname{tr}\{\widehat{\bf\Omega}_N(G;\bs\beta_0)\}=O_p(1)\)
gives
\(a_N^2\norm{\widehat{\bf\Omega}_N(G;\widetilde{\bs\beta}_N)
-\widehat{\bf\Omega}_N(G;\bs\beta_0)}
=o_p(1)\).
There are five graphs in \(\mathcal G\), and each coefficient in
\eqref{eq:universalOmega} has absolute value at most \(1+10/N\).
The triangle inequality thus gives
\[
\begin{aligned}
&a_N^2\norm{\bOmegaUniv(\widetilde{\bs\beta}_N)
-\bOmegaUniv(\bs\beta_0)}\le\left(1+\frac{10}{N}\right)
\sum_{G\in\mathcal G}a_N^2
\norm{\widehat{\bf\Omega}_N(G;\widetilde{\bs\beta}_N)
-\widehat{\bf\Omega}_N(G;\bs\beta_0)}=o_p(1).
\end{aligned}
\]
The proof is completed.
\end{proof}

Now, we can prove Theorem~\ref{thm:tractable_gmm}.

\begin{proof}[Proof of Theorem~\ref{thm:tractable_gmm}]
Lemma~\ref{lem:cov_consistency} gives
\(a_N^2\bOmegaUniv(\bs\beta_0)\xrightarrow{P}\bf\Sigma\).
Together with Theorem~\ref{thm:regimeCLT} and Slutsky's theorem, this proves
\(\bs{c}^\top\bs h_N(\bs\beta_0)
/\sqrt{\bs{c}^\top\bOmegaUniv(\bs\beta_0)\bs{c}}
\xrightarrow{D}N(0,1)\).
The rate \eqref{eq:gmm_rate_for_cov_equicont} in the proof of
Theorem~\ref{thm:gmm} allows us to apply
Lemma~\ref{lem:cov_equicont} with
\(\widetilde{\bs\beta}_N=\widehat{\bs\beta}_{\mathrm{GMM}}\), which gives
\(a_N^2\norm{\bOmegaUniv(\widehat{\bs\beta}_{\mathrm{GMM}})
-\bOmegaUniv(\bs\beta_0)}
\xrightarrow{P}0\).
Hence \(a_N^2\bOmegaUniv(\widehat{\bs\beta}_{\mathrm{GMM}})
\xrightarrow{P}\bf\Sigma\).
Corollary~\ref{cor:generic_gmm_inference} then gives
\(a_N^2\rho_N^8\widehat{\mathbf{V}}_N\xrightarrow{P}\mathbf{V}_0\) and
\(\bs c^\top(\widehat{\bs\beta}_{\mathrm{GMM}}-\bs\beta_0)
/\sqrt{\bs c^\top\widehat{\mathbf{V}}_N\bs c}
\xrightarrow{D}N(0,1)\).
\end{proof}

\bibliographystyle{ecta}
\bibliography{sample}

@unpublished{dano2023transition,
  title={Transition Probabilities and Moment Restrictions in Dynamic Fixed Effects Logit Models},
  author={Dano, Kevin},
  note={Working paper, revised November 5, 2025},
  url={https://kevindano.github.io/assets/files/JMP_KevinDano.pdf},
  year={2025}
}

@article{honore2024moment,
  title={Moment Conditions for Dynamic Panel Logit Models with Fixed Effects},
  author={Honor{\'e}, Bo E and Weidner, Martin},
  journal={The Review of Economic Studies},
  volume={92},
  number={5},
  pages={3112--3137},
  year={2025},
  doi={10.1093/restud/rdae097},
  publisher={Oxford University Press}
}

@article{bonhomme2023functional,
  title={Functional Differencing in Networks},
  author={Bonhomme, St{\'e}phane and Dano, Kevin},
  journal={Revue {\'e}conomique},
  volume={75},
  number={1},
  pages={147--175},
  doi={10.3917/e.reco.751.0147},
  year={2024}
}

@techreport{bonhomme2025feedback,
  title={Moment Restrictions for Nonlinear Panel Data Models with Feedback},
  author={Bonhomme, St{\'e}phane and Dano, Kevin and Graham, Bryan S.},
  year={2025},
  institution={National Bureau of Economic Research},
  type={Working Paper},
  number={33966},
  doi={10.3386/w33966},
  url={https://kevindano.github.io/assets/files/BDG_feedback.pdf},
  note={Revised September 1, 2026}
}

@article{gao2020robust,
  title={Identification of Semiparametric Panel Multinomial Choice Models with Infinite-Dimensional Fixed Effects},
  author={Gao, Wayne Yuan and Li, Ming},
  journal={The Review of Economics and Statistics},
  note={Advance online publication, February 4, 2026},
  doi={10.1162/rest.a.1708},
  year={2026}
}

@article{gao2023logical,
  title={Logical Differencing in Dyadic Network Formation Models with Nontransferable Utilities},
  author={Gao, Wayne Yuan and Li, Ming and Xu, Sheng},
  journal={Journal of Econometrics},
  volume={235},
  number={1},
  pages={302--324},
  year={2023},
  doi={10.1016/j.jeconom.2022.03.008},
  publisher={Elsevier}
}

@article{bonhomme2012functional,
  title={Functional Differencing},
  author={Bonhomme, St{\'e}phane},
  journal={Econometrica},
  volume={80},
  number={4},
  pages={1337--1385},
  year={2012},
  publisher={Wiley Online Library}
}

@unpublished{toth2017semiparametric,
  title={Semiparametric Estimation in Network Formation Models with Homophily and Degree Heterogeneity},
  author={Toth, Peter},
  note={Available at SSRN 2988698},
  year={2017}
}

@incollection{honore2019panel,
  title={Panel Vector Autoregressions with Binary Data},
  author={Honor{\'e}, Bo E and Kyriazidou, Ekaterini},
  booktitle={Panel Data Econometrics: Theory},
  pages={197--223},
  year={2019},
  publisher={Elsevier}
}

@article{honore2021identification,
  title={Identification in Simple Binary Outcome Panel Data Models},
  author={Honor{\'e}, Bo E and De Paula, {\'A}ureo},
  journal={The Econometrics Journal},
  volume={24},
  number={2},
  pages={C78--C93},
  year={2021},
  publisher={Oxford University Press}
}

@unpublished{dobronyi2021identification,
  title={Identification of Dynamic Panel Logit Models with Fixed Effects},
  author={Dobronyi, Christopher and Gu, Jiaying and Kim, Kyoo il and Russell, Thomas M.},
  note={{arXiv} preprint arXiv:2104.04590v4},
  url={https://arxiv.org/abs/2104.04590v4},
  year={2026}
}

@article{chatterjee2011random,
  title={Random Graphs with a Given Degree Sequence},
  author={Chatterjee, Sourav and Diaconis, Persi and Sly, Allan},
  journal={The Annals of Applied Probability},
  volume={21},
  number={4},
  pages={1400--1435},
  year={2011},
  doi={10.1214/10-AAP728},
  publisher={Institute of Mathematical Statistics}
}

@article{yan2013central,
  title={A Central Limit Theorem in the {$\beta$}-Model for Undirected Random Graphs with a Diverging Number of Vertices},
  author={Yan, Ting and Xu, Jinfeng},
  journal={Biometrika},
  volume={100},
  number={2},
  pages={519--524},
  year={2013},
  doi={10.1093/biomet/ass084},
  publisher={Oxford University Press}
}

@unpublished{zeleneev2020identification,
  title={Identification and Estimation of Network Models with Nonparametric Unobserved Heterogeneity},
  author={Zeleneev, Andrei},
  note={Working paper, revised February 5, 2026},
  url={https://www.homepages.ucl.ac.uk/~uctpzel/network_ID.pdf},
  year={2026}
}

@unpublished{candelaria2024semiparametric,
  title={A Semiparametric Network Formation Model with Unobserved Linear Heterogeneity},
  author={Candelaria, Luis E},
  note={Working paper, January 2024},
  url={https://lecandelaria.github.io/papers/SemNet_2024.pdf},
  year={2024}
}

@article{charbonneau2017multiple,
  title={Multiple Fixed Effects in Binary Response Panel Data Models},
  author={Charbonneau, Karyne B},
  journal={The Econometrics Journal},
  volume={20},
  number={3},
  pages={S1--S13},
  year={2017},
  publisher={Oxford University Press}
}

@article{graham2017econometric,
  title={An Econometric Model of Network Formation with Degree Heterogeneity},
  author={Graham, Bryan S},
  journal={Econometrica},
  volume={85},
  number={4},
  pages={1033--1063},
  year={2017},
  doi={10.3982/ECTA12679},
  publisher={Wiley}
}

@misc{mauldin2020peer,
  title={Peer Relationships among Master of Social Work Students: Social Network Data},
  author={Mauldin, Rebecca L.},
  year={2021},
  doi={10.18738/T8/5ELWBV},
  note={Dataset. Texas Data Repository Dataverse, Social Networks for Social Good Dataverse}
}

@article{manski1987semiparametric,
  title={Semiparametric Analysis of Random Effects Linear Models from Binary Panel Data},
  author={Manski, Charles F},
  journal={Econometrica},
  volume={55},
  number={2},
  pages={357--362},
  year={1987},
  publisher={JSTOR}
}

@article{neyman1948consistent,
  title={Consistent Estimates Based on Partially Consistent Observations},
  author={Neyman, Jerzy and Scott, Elizabeth L},
  journal={Econometrica},
  volume={16},
  number={1},
  pages={1--32},
  year={1948},
  publisher={JSTOR}
}

@article{hughes2022estimating,
  title={A Jackknife Bias Correction for Nonlinear Network Data Models with Fixed Effects},
  author={Hughes, David W},
  journal={Journal of Econometrics},
  volume={253},
  pages={106130},
  doi={10.1016/j.jeconom.2025.106130},
  year={2026}
}

@unpublished{yan2026penalized,
  title={Penalized Likelihood for Dyadic Network Formation Models with Degree Heterogeneity},
  author={Yan, Zizhong and Li, Jingrong and Zhang, Yi},
  note={{arXiv} preprint arXiv:2605.00771},
  year={2026}
}

@article{fernandez2016individual,
  title={Individual and Time Effects in Nonlinear Panel Models with Large {N}, {T}},
  author={Fern{\'a}ndez-Val, Iv{\'a}n and Weidner, Martin},
  journal={Journal of Econometrics},
  volume={192},
  number={1},
  pages={291--312},
  year={2016},
  publisher={Elsevier}
}

@unpublished{li2024estimation,
  title={Bagging the Network},
  author={Li, Ming and Shi, Zhentao and Zheng, Yapeng},
  note={{arXiv} preprint arXiv:2410.23852},
  doi={10.48550/arXiv.2410.23852},
  year={2026}
}

@article{dzemski2019empirical,
  title={An Empirical Model of Dyadic Link Formation in a Network with Unobserved Heterogeneity},
  author={Dzemski, Andreas},
  journal={The Review of Economics and Statistics},
  volume={101},
  number={5},
  pages={763--776},
  year={2019},
  publisher={MIT Press}
}

@article{jochmans2018semiparametric,
  title={Semiparametric Analysis of Network Formation},
  author={Jochmans, Koen},
  journal={Journal of Business \& Economic Statistics},
  volume={36},
  number={4},
  pages={705--713},
  year={2018},
  publisher={Taylor \& Francis}
}

@article{de2020econometric,
  title={Econometric Models of Network Formation},
  author={De Paula, {\'A}ureo},
  journal={Annual Review of Economics},
  volume={12},
  number={1},
  pages={775--799},
  year={2020},
  publisher={Annual Reviews}
}

@incollection{graham2020network,
  title={Network Data},
  author={Graham, Bryan S},
  booktitle={Handbook of Econometrics},
  volume={7A},
  pages={111--218},
  year={2020},
  doi={10.1016/bs.hoe.2020.05.001},
  publisher={Elsevier}
}

@article{kitazawa2022transformations,
  title={Transformations and Moment Conditions for Dynamic Fixed Effects Logit Models},
  author={Kitazawa, Yoshitsugu},
  journal={Journal of Econometrics},
  volume={229},
  number={2},
  pages={350--362},
  year={2022},
  publisher={Elsevier}
}

@article{miyauchi2016structural,
  title={Structural Estimation of Pairwise Stable Networks with Nonnegative Externality},
  author={Miyauchi, Yuhei},
  journal={Journal of Econometrics},
  volume={195},
  number={2},
  pages={224--235},
  year={2016},
  publisher={Elsevier}
}

@article{de2018identifying,
  title={Identifying Preferences in Networks with Bounded Degree},
  author={De Paula, {\'A}ureo and Richards-Shubik, Seth and Tamer, Elie},
  journal={Econometrica},
  volume={86},
  number={1},
  pages={263--288},
  year={2018},
  publisher={Wiley Online Library}
}

@article{mele2017structural,
  title={A Structural Model of Dense Network Formation},
  author={Mele, Angelo},
  journal={Econometrica},
  volume={85},
  number={3},
  pages={825--850},
  year={2017},
  publisher={Wiley Online Library}
}

@article{menzel2024strategic,
  title={Strategic Network Formation with Many Agents},
  author={Menzel, Konrad},
  journal={Journal of Econometrics},
  volume={253},
  pages={106174},
  doi={10.1016/j.jeconom.2025.106174},
  year={2026}
}

@article{sheng2020structural,
  title={A Structural Econometric Analysis of Network Formation Games through Subnetworks},
  author={Sheng, Shuyang},
  journal={Econometrica},
  volume={88},
  number={5},
  pages={1829--1858},
  year={2020},
  publisher={Wiley Online Library}
}

@article{jackson1996strategic,
  title={A Strategic Model of Social and Economic Networks},
  author={Jackson, Matthew O and Wolinsky, Asher},
  journal={Journal of Economic Theory},
  volume={71},
  number={1},
  pages={44--74},
  year={1996},
  doi={10.1006/jeth.1996.0108},
  publisher={Elsevier}
}

@article{bloch2007formation,
  title={The Formation of Networks with Transfers among Players},
  author={Bloch, Francis and Jackson, Matthew O},
  journal={Journal of Economic Theory},
  volume={133},
  number={1},
  pages={83--110},
  year={2007},
  doi={10.1016/j.jet.2005.10.003},
  publisher={Elsevier}
}

@article{chamberlain1980analysis,
  title={Analysis of Covariance with Qualitative Data},
  author={Chamberlain, Gary},
  journal={The Review of Economic Studies},
  volume={47},
  number={1},
  pages={225--238},
  year={1980},
  publisher={Wiley-Blackwell}
}

@article{chamberlain1987asymptotic,
  title={Asymptotic Efficiency in Estimation with Conditional Moment Restrictions},
  author={Chamberlain, Gary},
  journal={Journal of Econometrics},
  volume={34},
  number={3},
  pages={305--334},
  year={1987},
  doi={10.1016/0304-4076(87)90015-7}
}

@article{hansen1982large,
  title={Large Sample Properties of Generalized Method of Moments Estimators},
  author={Hansen, Lars Peter},
  journal={Econometrica},
  volume={50},
  number={4},
  pages={1029--1054},
  year={1982},
  doi={10.2307/1912775}
}

@incollection{newey1994large,
  title={Large Sample Estimation and Hypothesis Testing},
  author={Newey, Whitney K. and McFadden, Daniel},
  booktitle={Handbook of Econometrics},
  editor={Engle, Robert F. and McFadden, Daniel L.},
  volume={4},
  chapter={36},
  pages={2111--2245},
  publisher={Elsevier},
  year={1994},
  doi={10.1016/S1573-4412(05)80005-4}
}

@article{honore2000panel,
  title={Panel Data Discrete Choice Models with Lagged Dependent Variables},
  author={Honor{\'e}, Bo E and Kyriazidou, Ekaterini},
  journal={Econometrica},
  volume={68},
  number={4},
  pages={839--874},
  year={2000},
  publisher={Wiley Online Library}
}

@book{myerson1991game,
  title={Game Theory: Analysis of Conflict},
  author={Myerson, Roger B},
  year={1991},
  publisher={Harvard University Press}
}

@article{gao2020nonparametric,
  title={Nonparametric Identification in Index Models of Link Formation},
  author={Gao, Wayne Yuan},
  journal={Journal of Econometrics},
  volume={215},
  number={2},
  pages={399--413},
  year={2020},
  publisher={Elsevier}
}

@article{hahn2004jackknife,
  title={Jackknife and Analytical Bias Reduction for Nonlinear Panel Models},
  author={Hahn, Jinyong and Newey, Whitney},
  journal={Econometrica},
  volume={72},
  number={4},
  pages={1295--1319},
  year={2004},
  publisher={Wiley Online Library}
}

@article{dhaene2015split,
  title={Split-Panel Jackknife Estimation of Fixed-Effect Models},
  author={Dhaene, Geert and Jochmans, Koen},
  journal={The Review of Economic Studies},
  volume={82},
  number={3},
  pages={991--1030},
  year={2015},
  doi={10.1093/restud/rdv007},
  publisher={Oxford University Press}
}

@article{davezies2023fixed,
  title={Fixed-Effects Binary Choice Models with Three or More Periods},
  author={Davezies, Laurent and D'Haultf{\oe}uille, Xavier and Mugnier, Martin},
  journal={Quantitative Economics},
  volume={14},
  number={3},
  pages={1105--1132},
  year={2023},
  doi={10.3982/QE1991},
  publisher={Wiley Online Library}
}

@article{hoeffding1948class,
  title={A Class of Statistics with Asymptotically Normal Distribution},
  author={Hoeffding, Wassily},
  journal={The Annals of Mathematical Statistics},
  volume={19},
  number={3},
  pages={293--325},
  year={1948},
  doi={10.1214/aoms/1177730196}
}

@book{serfling1980approximation,
  title={Approximation Theorems of Mathematical Statistics},
  author={Serfling, Robert J.},
  publisher={John Wiley \& Sons},
  address={New York},
  year={1980},
  doi={10.1002/9780470316481}
}

@article{rota1964foundations,
  title={On the Foundations of Combinatorial Theory {I}. Theory of {M{\"o}bius} Functions},
  author={Rota, Gian-Carlo},
  journal={Zeitschrift f{\"u}r Wahrscheinlichkeitstheorie und Verwandte Gebiete},
  volume={2},
  pages={340--368},
  year={1964},
  doi={10.1007/BF00531932}
}

@article{dejong1990central,
  title={A Central Limit Theorem for Generalized Multilinear Forms},
  author={de Jong, Peter},
  journal={Journal of Multivariate Analysis},
  volume={34},
  number={2},
  pages={275--289},
  year={1990}
}

@article{dobler2017quantitative,
  title={Quantitative {de Jong} Theorems in Any Dimension},
  author={D{\"o}bler, Christian and Peccati, Giovanni},
  journal={Electronic Journal of Probability},
  volume={22},
  number={2},
  pages={1--35},
  year={2017},
  doi={10.1214/16-EJP19}
}

@book{vandervaart1998asymptotic,
  title={Asymptotic Statistics},
  author={van der Vaart, Aad W.},
  publisher={Cambridge University Press},
  address={Cambridge},
  year={1998}
}

@book{durrett2019probability,
  title={Probability: Theory and Examples},
  author={Durrett, Rick},
  edition={5th},
  series={Cambridge Series in Statistical and Probabilistic Mathematics},
  number={49},
  publisher={Cambridge University Press},
  address={Cambridge},
  year={2019},
  doi={10.1017/9781108591034}
}

@article{graham2024sparse,
  title={Sparse Network Asymptotics for Logistic Regression under Possible Misspecification},
  author={Graham, Bryan S.},
  journal={Econometrica},
  volume={92},
  number={6},
  pages={1837--1868},
  year={2024},
  doi={10.3982/ECTA19051},
  publisher={Wiley Online Library}
}

@article{rucinski1988when,
  title={When Are Small Subgraphs of a Random Graph Normally Distributed?},
  author={Ruci{\'n}ski, Andrzej},
  journal={Probability Theory and Related Fields},
  volume={78},
  number={1},
  pages={1--10},
  year={1988},
  doi={10.1007/BF00718031},
  publisher={Springer}
}

@article{chandrasekhar2025network,
  title={A Network Formation Model Based on Subgraphs},
  author={Chandrasekhar, Arun G. and Jackson, Matthew O.},
  journal={The Review of Economic Studies},
  volume={92},
  number={6},
  pages={3741--3787},
  year={2025},
  doi={10.1093/restud/rdaf013},
  publisher={Oxford University Press}
}

@unpublished{gao2026tractable,
  title={Tractable Identification of Strategic Network Formation Models with Unobserved Heterogeneity},
  author={Gao, Wayne Yuan and Li, Ming and Xu, Zhengyan},
  note={{arXiv} preprint arXiv:2603.08634},
  year={2026}
}

@article{fuertes2000factor,
  title={Factor Structure and Short Form of the {Miville--Guzman Universality--Diversity Scale}},
  author={Fuertes, Jairo N. and Miville, Marie L. and Mohr, Jonathan J. and Sedlacek, William E. and Gretchen, Denise},
  journal={Measurement and Evaluation in Counseling and Development},
  volume={33},
  number={3},
  pages={157--169},
  year={2000},
  doi={10.1080/07481756.2000.12069007}
}

@article{kottke2011additional,
  title={Additional Evidence for the Short Form of the {Universality--Diversity Scale}},
  author={Kottke, Janet L.},
  journal={Personality and Individual Differences},
  volume={50},
  number={4},
  pages={464--469},
  year={2011},
  doi={10.1016/j.paid.2010.11.008}
}

@article{mityagin2020zero,
  title={The Zero Set of a Real Analytic Function},
  author={Mityagin, Boris S.},
  journal={Mathematical Notes},
  volume={107},
  number={3-4},
  pages={529--530},
  year={2020},
  doi={10.1134/S0001434620030189}
}

@article{karadja2017richer,
  title={Richer (and Holier) than Thou? The Effect of Relative Income Improvements on Demand for Redistribution},
  author={Karadja, Mounir and Mollerstrom, Johanna and Seim, David},
  journal={The Review of Economics and Statistics},
  volume={99},
  number={2},
  pages={201--212},
  year={2017},
  doi={10.1162/REST_a_00623}
}

@book{rudin1976principles,
  title={Principles of Mathematical Analysis},
  author={Rudin, Walter},
  edition={3rd},
  publisher={McGraw-Hill},
  address={New York},
  year={1976}
}

@article{cameron2011robust,
  title={Robust Inference with Multiway Clustering},
  author={Cameron, A. Colin and Gelbach, Jonah B. and Miller, Douglas L.},
  journal={Journal of Business \& Economic Statistics},
  volume={29},
  number={2},
  pages={238--249},
  year={2011},
  doi={10.1198/jbes.2010.07136}
}

@article{politis2011higher,
  title={Higher-Order Accurate, Positive Semidefinite Estimation of Large-Sample Covariance and Spectral Density Matrices},
  author={Politis, Dimitris N.},
  journal={Econometric Theory},
  volume={27},
  number={4},
  pages={703--744},
  year={2011},
  doi={10.1017/S0266466610000484}
}

\end{document}